\newtheorem{theorem}{Theorem}[section]
\newtheorem{lemma}[theorem]{Lemma}
\newtheorem{prop}[theorem]{Proposition}
\theoremstyle{definition}
\newtheorem{defi}[theorem]{Definition}
\theoremstyle{remark}
\newtheorem{remark}[theorem]{Remark}
\numberwithin{equation}{section}
\def\cE{\mathcal E}
\def\cM{\mathcal M}
\def\cS{\mathcal S}
\def\Me{\mathcal M}
\def\Ne{\mathcal N}
\def\cF{\mathcal{F}}
\def\cR{\mathcal{R}}
\def\Tr{\mathrm{tr}}
\def\<{\langle}
\def\>{\rangle}
\def\ffi{\varphi}
\def\1{\mathbf{1}}
\def\eps{\varepsilon}
\def\bN{\mathbb{N}}
\def\bR{\mathbb{R}}
\def\bZ{\mathbb{Z}}
\def\bC{\mathbb{C}}
\def\Re{\mathrm{Re}\,}
\title{
$\alpha$-$z$-R\'enyi divergences in von Neumann algebras: \\
data processing inequality, reversibility, and \\
monotonicity properties in $\alpha,z$}
\author{Fumio Hiai\footnote{Graduate School of Information Sciences, Tohoku University,
Aoba-ku, Sendai 980-8579, Japan}
\footnote{{\it E-mail:} hiai.fumio@gmail.com} \
and Anna Jen\v cov\'a\footnote{Mathematical Institute, Slovak Academy
of Sciences, Bratislava, Slovakia}
\footnote{{\it E-mail:} jencaster@gmail.com}}
\begin{document}

\maketitle

\begin{abstract}
{We study the $\alpha$-$z$-R\'enyi divergences $D_{\alpha,z}(\psi\|\ffi)$ where $\alpha,z>0$ ($\alpha\ne1$)
for normal positive functionals $\psi,\ffi$ on general von Neumann
algebras, introduced in [S.~Kato and Y.~Ueda, arXiv:2307.01790] and [S.~Kato, arXiv:2311.01748].
We prove the variational expressions and the data processing inequality (DPI)
for the $\alpha$-$z$-R\'enyi divergences. We establish the sufficiency theorem for $D_{\alpha,z}(\psi\|\ffi)$,
saying that for $(\alpha,z)$ inside the DPI bounds, the equality
$D_{\alpha,z}(\psi\circ\gamma\|\ffi\circ\gamma)=D_{\alpha,z}(\psi\|\ffi)<\infty$ in the DPI under
a quantum channel (or a normal $2$-positive unital map) $\gamma$ implies the reversibility of
$\gamma$ with respect to $\psi,\ffi$. Moreover, we show the monotonicity properties of
$D_{\alpha,z}(\psi\|\ffi)$ in the parameters $\alpha,z$ and their limits to the normalized relative entropy as
$\alpha\nearrow1$ and $\alpha\searrow1$.}

\bigskip\noindent
{\it 2020 Mathematics Subject Classification:}
81P45, 81P16, 46L10, 46L53, 94A17

\medskip\noindent
{\it Keywords and phrases:}
$\alpha$-$z$-R\'enyi divergence,
relative entropy,
Petz-type R\'enyi divergence,
sandwiched R\'enyi divergence,
von Neumann algebra,
Haagerup $L^p$-space,
Kosaki's interpolation $L^p$-space,
quantum channel,
data processing inequality,
reversibility,
sufficiency

\end{abstract}

%\tableofcontents

\section{Introduction}\label{Sec-1}

The $\alpha$-$z$-R\'enyi divergences were {first introduced in \cite{jaksic2012entropic}, and
were further studied} in \cite{audenaert2015alpha} as a two-parameter family of quantum generalizations
of the classical R\'enyi divergences. For a pair of density operators (or more generally positive
operators) $\rho$ and $\sigma$ on a finite dimensional Hilbert space, the divergence
$D_{\alpha,z}(\rho\|\sigma)$ is defined by 
\[
D_{\alpha,z}(\rho\|\sigma)=\frac{1}{\alpha-1}\log
{\frac{\mathrm{Tr}\bigl(\rho^{\frac{\alpha}{2z}}\sigma^{\frac{1-\alpha}{z}}
\rho^{\frac{\alpha}{2z}}\bigr)^z}{\mathrm{Tr}\,\rho},}
\]
where $0<\alpha\ne 1$ and $z>0$. This family includes the {Umegaki} relative entropy in the
limit $\alpha\to 1$ and unifies the two most important quantum
versions of R\'enyi divergences, which have already an established operational
significance. Namely, the standard or Petz-type R\'enyi divergences,  based on the Petz
quasi-entropy \cite{petz1985quasi,petz1986quasi}, are obtained for $z=1$,  whereas $z=\alpha$ gives
the sandwiched R\'enyi divergences introduced in \cite{mullerlennert2013onquantum, wilde2014strong}.
These two quantities have important applications, in particular, in describing asymptotic error bounds in
binary quantum state discrimination problems; see \cite{audenaert2007discriminating,jaksic2012quantum,
mosonyi2015quantum,mosonyi2015twoapproaches,nussbaum2009thechernoff} and references therein.

A fundamental property of any information measure is the data processing inequality (DPI), that
is, such a quantity must be non-increasing under quantum channels that are defined as completely
positive and trace-preserving maps. For the {$\alpha$-$z$-R\'enyi} divergences, this means
that we require
\[
D_{\alpha,z}(\Phi(\rho)\|\Phi(\sigma))\le D_{\alpha,z}(\rho\|\sigma)
\]
for any quantum channel $\Phi$ and any pair of states $\rho,\sigma$. This requirement
leads to a restriction on the values of the parameters. For the Petz-type divergences,
 the convexity theorem of Ando \cite{ando1979concavity} shows that  $\alpha$ is restricted to the interval
$(0,1)\cup (1,2]$. The DPI in the sandwiched case holds for $\alpha\in
[1/2,1)\cup(1,\infty)$, as was proved in  \cite{frank2013monotonicity}  and independently in
\cite{beigi2013sandwiched}. The proof in \cite{beigi2013sandwiched} was given only for
the case $\alpha>1$, but is interesting for us, because it utilizes complex interpolation
methods and, as later observed in \cite{mullerhermes2017monotonicity}, it shows that DPI holds also for
maps that are trace-preserving and positive but not necessarily completely positive. Remarkably, by taking
the limit $\alpha\searrow 1$, this proves the same property also for the relative entropy.

The problem of finding the values of $(\alpha,z)$ for which the DPI holds was posed and studied
already in \cite{audenaert2015alpha}. The conjecture stated in
\cite{carlen2018inequalities} suggests the bounds
\[
0<\alpha<1, \quad \max\{\alpha,1-\alpha\}\le z\qquad \text{or}\qquad \alpha>1,\quad
\max\{{\alpha/2},\alpha-1\}\le z\le \alpha.
\]
This conjecture was finally proved in \cite{zhang2020fromwyd}.

Quantum divergences and their properties were mostly studied in the finite dimensional
case, but there is a growing interest in extensions beyond this setting. Von Neumann
algebras form a suitable framework for such extensions, since they are general enough to
cover many finite or infinite dimensional situations and there are also strong mathematical
tools and techniques available to deal with them. Some important quantum divergences were
already extended to this setting, notably the Araki relative entropy \cite{araki1976relative}. The
quasi-entropies were first introduced for states of von Neumann algebras in
\cite{petz1985quasi} and studied in detail in \cite{hiai2018quantum}. The definition of sandwiched
R\'enyi divergences in this setting, based on non-commutative $L^p$-spaces,  was
introduced independently in \cite{berta2018renyi}  using Araki--Masuda's $L^p$-spaces and
in \cite{jencova2018renyi,jencova2021renyi} using Kosaki's $L^p$-spaces. A number of
properties of the divergences including DPI was proved in these works and it was shown in
\cite{jencova2018renyi} that the two definitions are indeed equivalent. See \cite{hiai2021quantum} for
an overview of both the Petz-type and the sandwiched R\'enyi divergences in the
von Neumann algebra setting.

Sufficiency of quantum channels with respect to a set of states was introduced by Petz
\cite{petz1986sufficient,petz1988sufficiency}. This definition was inspired by sufficient statistics
originated in \cite{halmos1949application,kullback1951oninformation}
in classical statistical models. The quantum counterpart in Petz's definition is that all
the states in the given set can be recovered by another channel. It is an important result
that this property is characterized by equality in the DPI for the relative entropy, as well
as for the transition probability (that is basically the Petz-type R\'enyi divergence for
$\alpha=1/2$). This result has been generalized and extended to a large class of
divergences and other quantities satisfying the DPI, in both finite and infinite
dimensional settings. In particular, this property 
for the standard R\'enyi divergence was proved in \cite{hiai2017different} for $\alpha$ in the
interval $(0,1)\cup(1,2)$ and in \cite{jencova2017preservation, jencova2018renyi, jencova2021renyi} for
the sandwiched version with $\alpha\in(1/2,1)\cup(1,\infty)$. See also \cite{hiai2021quantum} for a
summary of these results in von Neumann algebras.

The $\alpha$-$z$-R\'enyi divergences $D_{\alpha,z}$ in von Neumann algebras were very recently
introduced in \cite{kato2023aremark}, {based on the Haagerup $L^p$-spaces}, and a number of
their properties was subsequently proved in \cite{kato2023onrenyi}. In particular, most of the important
properties including the DPI (with respect to normal positive unital maps), monotonicity in the parameter $z$
and a variational expression were proved in the case $0<\alpha<1$. In the infinite dimensional
$\mathcal{B}(\mathcal{H})$ case, $D_{\alpha,z}$ and their properties were studied also in
\cite{mosonyi2023thestrong}.  In the von Neumann algebra setting, an alternative definition for $\alpha\in
[0,1]$ and $z\ge 1/2$ was given in \cite{furuya2023monotonic}, where the DPI with respect
to completely positive normal unital maps was also proved for $z>1$.

Inspired by the papers \cite{kato2023aremark, kato2023onrenyi}, our aim in the present
work is to complete their results on the properties of the $\alpha$-$z$-R\'enyi divergences.
Section \ref{Sec-2} of the present paper is a preliminary on $D_{\alpha,z}$, where we improve the
variational expression for $0<\alpha<1$ in \cite{kato2023onrenyi} and finish the proof of that for $\alpha>1$.
In Sec.~\ref{Sec-3} we prove the DPI, in the same bounds as in the finite dimensional case. Similarly to
Zhang's approach in \cite{zhang2020fromwyd}, we apply the variational expressions to do this,
but, not having the techniques of matrix analysis at our disposal,
we use their relation to the sandwiched R\'enyi divergences for the proof. This allows us
to show the DPI with respect to normal positive unital maps, similarly to the case $0<\alpha<1$
in \cite{kato2023onrenyi} and for the sandwiched R\'enyi divergences in
\cite{jencova2018renyi,jencova2021renyi}. As a consequence, we prove martingale convergence
theorems for $D_{\alpha,z}$. In Sec.~\ref{Sec-4} we next show the most significant result that for the
values of $(\alpha,z)$ inside the DPI bounds, equality in the DPI implies sufficiency of the channel,
where by channel we here mean a normal 2-positive unital map. The proofs again use the variational
expressions and their relation to the sandwiched R\'enyi divergences, together with the properties of
conditional expectations extended to the Haagerup $L^p$-spaces.

We then turn to monotonicity in the parameters $z$ and $\alpha$. In Sec.~\ref{Sec-5} we first show that
in the case of finite von Neumann algebras,  monotonicity in $z$ holds for all values of the parameter. 
In the case of general von Neumann algebras, for $\alpha>1$ we prove monotonicity in $z$
on the interval $[\alpha/2,\infty)$ using a complex interpolation method
based on Kosaki's $L^p$-spaces. In Sec.~\ref{Sec-6} we further prove monotonicity in the parameter
$\alpha$, which seems to be new even in the finite dimensional case, using a real analysis method
for $0<\alpha<1$ and a complex interpolation method for $1<\alpha\le2z$. Finally, we show the limits
to the normalized relative entropy as $\alpha\searrow 1$ and $\alpha\nearrow 1$.

Appendices A--C are brief summaries on the Haagerup $L^p$-spaces, Haagerup's reduction theorem,
and Kosaki's interpolation $L^p$-spaces together with some technical lemmas, which are
utilized in the main body.

\section{Preliminaries}\label{Sec-2}

Let $\Me$ be a von Neumann algebra and let $\Me^+$ be the cone of positive elements in $\Me$. We
denote the predual of $\Me$ by $\Me_*$ and its positive part (consisting of normal positive
functionals on $\Me$) by $\Me_*^+$. For $\psi\in \Me_*^+$, we will denote by $s(\psi)$ the support
projection  of $\psi$. We denote the unit of von Neumann algebras by $\1$.

For $0< p\le \infty$, let $L^p(\Me)$ be the \emph{Haagerup $L^p$-space}
\cite{haagerup1979lpspaces,terp1981lpspaces} over $\Me$ and let $L^p(\Me)^+$ its positive cone. We
{note that $L^\infty(\Me)=\Me$, and use the identification
$\Me_*\ni \psi \leftrightarrow h_\psi\in L^1(\Me)$ (an order isomorphism)},
so that the $\Tr$-functional on $L^1(\Me)$ is defined by $\Tr\,h_\psi:=\psi(\1)$ for $\psi\in \Me_*$. It this way,
$\Me_*^+$ is identified with the positive cone $L^1(\Me)^+$; in particular, the normal states
on $\Me$ are identified with the elements $h\in L^1(\Me)^+$ with $\Tr\,h=1$. Let $\|a\|_p$ denote the
Haagerup (quasi-)norm of $a\in L^p(\Me)$, $0<p\le\infty$. Precise definitions and further details on
the spaces $L^p(\Me)$ can be found in \cite[Chap.~9]{hiai2021lectures}, or in the notes
\cite{terp1981lpspaces}. A short summary on the Haagerup $L^p$-spaces and some technical results
that will be used below can be found in Appendix \ref{Sec-A}.

In \cite{kato2023aremark, kato2023onrenyi}, the
$\alpha$-$z$-R\'enyi divergence for $\psi,\varphi\in \mathcal M_*^+$  was defined as follows:

\begin{defi}\label{defi:renyi} Let $\psi,\varphi\in \Me_*^+$, $\psi\ne 0$, and let
$\alpha,z>0$, $\alpha\ne 1$. The \emph{$\alpha$-$z$-R\'enyi divergence} is defined as 
\[
D_{\alpha,z}(\psi\|\varphi):=\frac1{\alpha-1}\log
\frac{Q_{\alpha,z}(\psi\|\varphi)}{\psi(\1)},
\]
where
\[
Q_{\alpha,z}(\psi\|\varphi):=\begin{dcases} \Tr
\left(h_\varphi^{\frac{1-\alpha}{2z}}h_\psi^{\frac{\alpha}{z}}h_\varphi^{\frac{1-\alpha}{2z}}\right)^z, &
\text{if } 0<\alpha<1,\\[0.3em]
\|x\|_z^z, & \text{if } \alpha>1 \text{ and }
h_\psi^{\frac{\alpha}{z}}=h_\varphi^{\frac{\alpha-1}{2z}}xh_\varphi^{\frac{\alpha-1}{2z}}
\ \text{with}\\ & x\in s(\varphi)L^z(\Me)s(\varphi),\\[0.3em]
\infty,& \text{otherwise}.
\end{dcases}
\]
\end{defi}

\begin{remark}\label{remark:defi}
As mentioned in Appendix \ref{Sec-A}, the Haagerup $L^p(\Me)$ spaces are constructed inside the
crossed product $\cR:=\Me\rtimes_{\sigma^\omega}\bR$ by the modular automorphism group
$\sigma^\omega$ for a faithful normal semi-finite weight $\omega$ on $\Me$. When two such weights
$\omega_j$, $j=0,1$, are given, we have the Haagerup $L^p$-spaces $L^p(\Me)^{(j)}$ constructed in
$\cR_j:=\Me\rtimes_{\sigma^{\omega_j}}\bR$ with the canonical trace $\tau_j$. Then there is an
isomorphism $\kappa:\cR_0\to\cR_1$ (extended to $\kappa:\widetilde\cR_0\to\widetilde\cR_1$, the spaces
of $\tau_j$-measurable operators) such that $\tau_0\circ\kappa^{-1}=\tau_1$,
$\kappa\bigl(L^p(\Me)^{(0)}\bigr)=L^p(\Me)^{(1)}$ for $0<p\le\infty$, and $\Tr_0\circ\kappa^{-1}=\Tr_1$ on
$L^1(\Me)^{(1)}$, where $\Tr_j$ is the $\Tr$-functional on $L^1(\Me)^{(j)}$. (For more details on the isomorphism
$\kappa$, see, e.g., \cite[Remark 9.10]{hiai2021lectures}.) Hence we note that the above definition
of $D_{\alpha,z}(\psi\|\ffi)$ is indeed independent of the choice of the Haagerup $L^p$-spaces. From this
fact, we also note that for $\psi,\ffi\in\Me_*^+$, $\psi\ne0$, with $s(\psi),s(\ffi)\le e$, where $e$ is a projection
in $\Me$, the $\alpha$-$z$-R\'enyi divergence $D_{\alpha,z}(\psi\|\ffi)$ is the same as that defined when
$\psi$ and $\ffi$ regarded as functionals on the reduced von Neumann algebra $e\Me e$.
\end{remark}

\medskip
In the case $\alpha>1$, the following alternative form will be useful.

\begin{lemma}[\mbox{\cite[Lemma 7]{kato2023onrenyi}}]\label{lemma:renyi_2z}
Let $\alpha>1$ and $\psi,\varphi\in \Me_*^+$, $\psi\ne0$. Then $Q_{\alpha,z}(\psi\|\varphi)<\infty$
if and only if there is some $y\in L^{2z}(\Me)s(\varphi)$ such that 
\[
h_\psi^{\frac{\alpha}{2z}}=yh_\varphi^{\frac{\alpha-1}{2z}}.
\]
Moreover, in this case, such $y$ is unique and we have
$Q_{\alpha,z}(\psi\|\varphi)=\|y\|_{2z}^{2z}$. 
\end{lemma}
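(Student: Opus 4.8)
Throughout I set $a:=h_\psi^{\alpha/(2z)}\in L^{2z/\alpha}(\Me)^+$ and $b:=h_\varphi^{(\alpha-1)/(2z)}\in L^{2z/(\alpha-1)}(\Me)^+$, so that $b$ has support projection $s(\varphi)$, the sandwiched relation in Definition~\ref{defi:renyi} reads $a^2=h_\psi^{\alpha/z}=bxb$, and the claimed factorization reads $a=yb$. The plan is to treat the two implications separately and then settle uniqueness and the norm identity.

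For the ``if'' part, assume $y\in L^{2z}(\Me)s(\varphi)$ with $a=yb$. I would put $x:=y^*y$. Since $y=ys(\varphi)$, H\"older's inequality gives $x\in s(\varphi)L^z(\Me)s(\varphi)$ with $x\ge0$, and
\[
bxb=(yb)^*(yb)=a^*a=a^2=h_\psi^{\alpha/z},
\]
so the sandwiched representation holds. The identity $\|y^*y\|_z=\|y\|_{2z}^2$ then yields $Q_{\alpha,z}(\psi\|\varphi)=\|x\|_z^z=\|y\|_{2z}^{2z}<\infty$.

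For the converse, suppose $Q_{\alpha,z}(\psi\|\varphi)<\infty$, so there is $x\in s(\varphi)L^z(\Me)s(\varphi)$ with $bxb=a^2$. The first step is to show $x\ge0$. Writing $e_n:=\chi_{[1/n,n]}(b)\nearrow s(\varphi)$ and $f_n:=b^{-1}e_n$ (a bounded self-adjoint truncation with $f_nb=bf_n=e_n$), self-adjointness of $bxb=a^2$ gives $b(x-x^*)b=0$, whence $e_n(x-x^*)e_n=f_n\,b(x-x^*)b\,f_n=0$ for all $n$ and therefore $x=x^*$; similarly $e_nxe_n=f_n(bxb)f_n\ge0$, so letting $n\to\infty$ gives $x=s(\varphi)xs(\varphi)\ge0$. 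Now factor $x=(x^{1/2})^2$ with $x^{1/2}\in L^{2z}(\Me)^+$ supported under $s(\varphi)$, set $w:=x^{1/2}b\in L^{2z/\alpha}(\Me)$, and note $w^*w=bxb=a^2$, so the polar decomposition is $w=va$ with $v^*v=s(a)$. Setting $y:=v^*x^{1/2}\in L^{2z}(\Me)s(\varphi)$ then gives
\[
yb=v^*x^{1/2}b=v^*w=(v^*v)a=s(a)a=a,
\]
as required.

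It remains to verify the norm identity and uniqueness. Computing $y^*y=x^{1/2}(vv^*)x^{1/2}$ and identifying $vv^*$ with the left support of $w=x^{1/2}b$, which equals $s(x)$ because $b$ has dense range projection $s(\varphi)\ge s(x)$, I get $y^*y=x^{1/2}s(x)x^{1/2}=x$ and hence $\|y\|_{2z}^{2z}=\|x\|_z^z=Q_{\alpha,z}(\psi\|\varphi)$. Uniqueness follows from injectivity of right multiplication by $b$ on $L^{2z}(\Me)s(\varphi)$: if $(y_1-y_2)b=0$ with $y_i=y_is(\varphi)$, then $(y_1-y_2)e_n=(y_1-y_2)b\,f_n=0$ for all $n$, so $y_1-y_2=0$. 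The main obstacle I expect is carrying out these cancellations rigorously in the algebra of $\tau$-measurable operators: because $b=h_\varphi^{(\alpha-1)/(2z)}$ is generally unbounded with non-invertible closure, the ``division by $b$'' and the proof that $x\ge0$ must be routed through the bounded spectral truncations $f_n=b^{-1}e_n$ and the control of support projections, rather than by the naive cancellation valid in finite dimensions.
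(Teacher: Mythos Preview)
The paper does not give its own proof of this lemma; it is quoted from \cite{kato2023onrenyi} and used as a black box. So there is nothing to compare against, and I will assess your argument on its own merits.

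Your ``if'' direction, the uniqueness argument, and the norm identity (including the identification $vv^*=s(x)$) are correct. The cancellation-by-$b$ arguments you use for self-adjointness of $x$ and for uniqueness of $y$ can indeed be made rigorous via the truncations $f_n=b^{-1}e_n$: from $bcb=0$ one gets $e_nce_m=0$ for \emph{all} $n,m$ (not just $n=m$), and then for any $\xi$ in the domain of $ce_m$ the relations $e_n(ce_m\xi)=0$ force $ce_m\xi\perp s(\varphi)H$, while $c=s(\varphi)c$ gives $ce_m\xi\in s(\varphi)H$; hence $ce_m=0$ for all $m$, and repeating on the other side yields $c=0$. So the equalities survive the limit.

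The genuine gap is the positivity step ``$e_nxe_n\ge0$, so letting $n\to\infty$ gives $x\ge0$''. Unlike the equalities above, here you need \emph{convergence} of $e_nxe_n$ to $x$ in a topology in which the positive cone is closed, and this is not available via your truncation: in the Haagerup picture the generalized $s$-numbers satisfy $\mu_t(b)=t^{-(\alpha-1)/(2z)}\|b\|_{2z/(\alpha-1)}$, so $\tau\bigl(\chi_{(0,1/n)}(b)\bigr)=\infty$ for every $n$, and $(s(\varphi)-e_n)x$ does not tend to $0$ in measure. Nor can one argue via quadratic forms, since $\mathrm{dom}(x)\cap\bigcup_n e_nH$ need not be a core for $x$. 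A clean repair, using only the right-cancellation you already have, is: once $x=x^*$, write $x=x_+-x_-$; then $bxb\ge0$ gives $|x_-^{1/2}b|^2\le|x_+^{1/2}b|^2$, so by the Douglas-type factorization \cite[Lemma~A.58]{hiai2021quantum} there is a contraction $c\in\Me$ with $x_-^{1/2}b=c\,x_+^{1/2}b$; your cancellation yields $x_-^{1/2}=c\,x_+^{1/2}$, hence (taking adjoints) $x_-^{1/2}=x_+^{1/2}c^*$ and $\mathrm{range}(x_-^{1/2})\subseteq s(x_+)H$, which together with $s(x_+)\perp s(x_-)$ forces $x_-=0$. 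With $x\ge0$ in hand, your polar-decomposition construction of $y$ goes through as written.
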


The \emph{standard} or \emph{Petz-type R\'enyi divergence}
\cite{petz1985quasi,hiai2018quantum,hiai2021quantum} is contained in this range as
$D_\alpha(\psi\|\varphi)=D_{\alpha,1}(\psi\|\varphi)$. Also, the \emph{sandwiched R\'enyi divergence} is
obtained as $\tilde D_\alpha(\psi\|\varphi)=D_{\alpha,\alpha}(\psi\|\varphi)$; see
\cite{berta2018renyi,hiai2021quantum,jencova2018renyi, jencova2021renyi} for some
alternative definitions and properties of $\tilde D_\alpha$. The definitions in
\cite{jencova2018renyi,jencova2021renyi} are based on \emph{Kosaki's interpolation $L^p$-spaces}
$L^p(\Me,\varphi)$ \cite{kosaki1984applications} with respect to $\varphi$. These spaces and the
complex interpolation method are briefly summarized in Appendix \ref{Sec-C}, and will be used
frequently in the present work. Here, we note that for $z=1$ and $z=\alpha$, the above
definition of $D_{\alpha,z}$ proposed in \cite{kato2023aremark, kato2023onrenyi} is compatible with the
previous alternative definitions of $D_\alpha$ and $\tilde D_\alpha$; see \cite[Theorem 3.6]{hiai2021quantum}
for $z=1$ and \cite[Lemma 9]{kato2023aremark} for $z=\alpha$.

As have already been done by Kato in \cite{kato2023onrenyi}, many of the properties of
$D_{\alpha,z}(\psi\|\varphi)$ are extended from the finite dimensional case into the general von Neumann
algebra case. The main purpose of this section is to prove the following
variational expressions. These expressions will provide
 an important tool for our work. When $0<\alpha<1$, in \cite{kato2023onrenyi} Kato showed \eqref{F-2.1} for
$z\ge\max\{\alpha,1-\alpha\}$ as well as the inequality $\le$ in \eqref{F-2.1} for all $z>0$. When $\alpha>1$,
only the inequality $\ge$ in \eqref{F-2.2} for all $\alpha>1$ and $z>0$ was shown in
\cite{kato2023onrenyi}.

\begin{theorem}[Variational expressions]\label{thm:variational} Let $\psi,\varphi\in \Me_*^+$, $\psi\ne 0$. 
\begin{enumerate}
\item[(i)] Let $0<\alpha<1$. For every $z>0$ we have
\begin{align}\label{F-2.1}
Q_{\alpha,z}(\psi\|\varphi)=\inf_{a\in \Me^{++}}\left\{\alpha
\Tr\left((a^{\frac12}h_\psi^{\frac{\alpha}{z}}a^{\frac12})^{\frac{z}{\alpha}}\right)+(1-\alpha)
\Tr\left((a^{-\frac12}h_\varphi^{\frac{1-\alpha}{z}}a^{-\frac12})^{\frac{z}{1-\alpha}}\right) \right\}.
\end{align}

\item[(ii)] Let $\alpha>1$, $z\ge\alpha/2$. Then
\begin{align}\label{F-2.2}
Q_{\alpha,z}(\psi\|\varphi)=\sup_{a\in \Me^+} \left\{\alpha
\Tr\left((a^{\frac12}h_\psi^{\frac{\alpha}{z}}a^{\frac12})^{\frac{z}{\alpha}}\right)-(\alpha-1)
\Tr\left((a^{\frac12}h_\varphi^{\frac{\alpha-1}{z}}a^{\frac12})^{\frac{z}{\alpha-1}}\right) \right\}.
\end{align}
 Moreover,  equality \eqref{F-2.2} holds for all $\alpha>1$ and $z>0$ such that
$D_{\alpha,z}(\psi\|\ffi)<\infty$. In particular, if $\psi\le\lambda\ffi$ for some $\lambda>0$, then \eqref{F-2.2} holds for all
$z\ge\alpha-1>0$.
\end{enumerate}
\end{theorem}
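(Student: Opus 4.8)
The plan is to split each equality into its two one-sided inequalities. In \eqref{F-2.1} the bound $\le$, and in \eqref{F-2.2} the bound $\ge$, are the Young/Hölder-type inequalities that hold for every $a$ and every $z>0$; these are already available from \cite{kato2023onrenyi}, so I would quote them and concentrate entirely on the reverse inequalities, which constitute the improvement. In both parts the reverse inequality amounts to producing an $a$ (or a net of such elements) that nearly attains the extremum, so the heart of the matter is the construction and analysis of an optimal $a$, together with the verification that the two trace-power terms then collapse onto $Q_{\alpha,z}(\psi\|\varphi)$.

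To locate the optimizer I would first test the functional on commuting data: minimizing $\alpha a^{z/\alpha}h_\psi+(1-\alpha)a^{-z/(1-\alpha)}h_\varphi$ pointwise gives $a\sim(h_\varphi/h_\psi)^{\alpha(1-\alpha)/z}$ and recovers $\Tr\,h_\psi^{\alpha}h_\varphi^{1-\alpha}$, which identifies the optimizer as a fractional power of the relative density of $\psi$ with respect to $\varphi$. In the noncommutative setting the correct surrogate is assembled from the sandwich $h_\varphi^{(1-\alpha)/(2z)}h_\psi^{\alpha/z}h_\varphi^{(1-\alpha)/(2z)}$ together with the flanking factors $h_\varphi^{(1-\alpha)/(2z)}$, and for $\alpha>1$ from the element $y\in L^{2z}(\Me)s(\varphi)$ with $h_\psi^{\alpha/(2z)}=yh_\varphi^{(\alpha-1)/(2z)}$ supplied by Lemma \ref{lemma:renyi_2z}. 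Feeding such an $a$ into the functional and using cyclicity of $\Tr$ is what should force the two trace-power terms down to $Q_{\alpha,z}(\psi\|\varphi)$.

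Because $h_\psi,h_\varphi$ are in general unbounded $\tau$-measurable operators, this candidate $a$ need not lie in $\Me^{++}$ (resp. $\Me^+$), so I would first establish the reverse inequality for bounded, invertible densities — reducing by Remark \ref{remark:defi} together with spectral truncation, where the optimizer is a genuine element of $\Me^{++}$ and the Young bound can be shown to be tight — and then remove the truncation by approximation, controlling the limit through the continuity and semicontinuity of the Haagerup trace-power functionals collected in Appendix \ref{Sec-A}. The admissible range of $z$ is dictated by these steps: whether the exponents $z/\alpha,\ z/(1-\alpha)$ (resp. $z/\alpha,\ z/(\alpha-1)$) are $\ge1$ or $\le1$ fixes the convexity or concavity of the trace powers, hence the direction of the Young bound and the semicontinuity used in the limit, and the threshold $z\ge\alpha/2$ in (ii) is precisely $\alpha/(2z)\le1$, keeping $h_\psi^{\alpha/(2z)}$ a power of $h_\psi$ of exponent at most one, as the construction via Lemma \ref{lemma:renyi_2z} requires. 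For the remaining values of $z$ in (ii) I would assume $Q_{\alpha,z}(\psi\|\varphi)<\infty$, invoke Lemma \ref{lemma:renyi_2z} once more to obtain $y$, and run the same optimizer; the final assertion then follows by checking that $\psi\le\lambda\varphi$ yields a bounded relative density and hence solvability of $h_\psi^{\alpha/(2z)}=yh_\varphi^{(\alpha-1)/(2z)}$ with $y\in L^{2z}(\Me)s(\varphi)$ exactly for $z\ge\alpha-1$, placing us in the finite-divergence case.

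The main obstacle I expect is the limiting step: the true optimizer is unbounded, so one must show that the truncated choices are asymptotically optimal and that both trace-power terms converge to the intended limits in the Haagerup $L^p$-spaces, all without recourse to matrix analysis. Pinning down the precise truncation and proving enough continuity of $a\mapsto\Tr((a^{1/2}h^{s}a^{1/2})^{r})$ to exchange the limit with the infimum or supremum — and thereby verifying the noncommutative tightness that turns the Young bound into an equality — is where the technical $L^p$-space lemmas, and if needed Haagerup's reduction theorem, must carry the argument, and where the parameter restrictions ultimately originate.
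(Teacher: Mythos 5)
Your overall architecture --- quote the known one-sided bounds from \cite{kato2023onrenyi}, prove the reverse inequalities by exhibiting (near-)optimal elements in a regularized situation, and then pass to the limit via continuity/semicontinuity --- matches the paper's, and your treatment of the finite case of (ii) (substitute $h_\psi^{\alpha/z}=h_\varphi^{\frac{\alpha-1}{2z}}xh_\varphi^{\frac{\alpha-1}{2z}}$ via Lemma \ref{lemma:renyi_2z}, use cyclicity, and test with $w=x^{\alpha-1}$, with Lemma \ref{lemma:cone} supplying the density of $h_\varphi^{\frac{\alpha-1}{2z}}\Me^+h_\varphi^{\frac{\alpha-1}{2z}}$ in $L^{z/(\alpha-1)}(\Me)^+$) is essentially the paper's argument. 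The genuine gap is in the regularization step, which you yourself flag as the main obstacle. A spectral truncation of $h_\psi$ and $h_\varphi$ separately does not produce a pair with $\psi\sim\varphi$, which is what the explicit-optimizer construction (via \cite[Lemma A.58]{hiai2021quantum}) actually requires, nor does it give a one-sided, term-by-term comparison of the two trace-power functionals with a consistent sign. The device the paper uses is the symmetric power-mean perturbation
\[
h_{\psi_\eps}:=(h_\psi^r+\eps h_\varphi^r)^{1/r},\qquad
h_{\varphi_\eps}:=(h_\varphi^r+\eps h_\psi^r)^{1/r},\qquad
r=\max\Bigl\{\tfrac{\alpha}{z},\tfrac{1-\alpha}{z}\Bigr\},
\]
which simultaneously (a) forces $h_{\psi_\eps}^r\le(\eps^{-1}+\eps)h_{\varphi_\eps}^r$ and vice versa, so that Kato's proof of \eqref{F-2.1} applies to $(\psi_\eps,\varphi_\eps)$ for \emph{all} $z>0$; (b) gives $h_{\psi_\eps}^{\alpha/z}\ge h_\psi^{\alpha/z}$ and $h_{\varphi_\eps}^{(1-\alpha)/z}\ge h_\varphi^{(1-\alpha)/z}$ by operator monotonicity of $t^s$ for $s\le1$, so the infimum functional for the perturbed pair dominates the one for $(\psi,\varphi)$ pointwise in $a$; and (c) satisfies $\psi_\eps\to\psi$, $\varphi_\eps\to\varphi$ in norm, whence $Q_{\alpha,z}(\psi_\eps\|\varphi_\eps)\to Q_{\alpha,z}(\psi\|\varphi)$. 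Without a perturbation enjoying all three properties the limiting step cannot be carried out; the analogous one-sided perturbation of $\varphi$ alone, $h_{\varphi_\eps}=(h_\varphi^{\frac{\alpha-1}{z}}+\eps h_\psi^{\frac{\alpha-1}{z}})^{\frac{z}{\alpha-1}}$, is likewise what reduces the general case of (ii) to the finite case, since it makes $Q_{\alpha,z}(\psi\|\varphi_\eps)<\infty$.

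Separately, your explanation of the threshold $z\ge\alpha/2$ is off the mark: Lemma \ref{lemma:renyi_2z} holds for all $\alpha>1$ and $z>0$ and does not require $\alpha/(2z)\le1$. The restriction enters only at the last step of (ii), when $Q_{\alpha,z}(\psi\|\varphi)=\infty$: one needs $Q_{\alpha,z}(\psi\|\varphi)\le\liminf_{\eps\searrow0}Q_{\alpha,z}(\psi\|\varphi_\eps)$, and the lower semicontinuity of $Q_{\alpha,z}$ in its arguments is available from \cite[Theorem 2(iv)]{kato2023onrenyi} only for $z\ge\alpha/2$. This is consistent with the second assertion of (ii), which holds with no restriction on $z>0$ whenever the divergence is finite.
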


\begin{proof}
(i)\enspace
Let $0<\alpha<1$, $z>0$ and $\psi,\varphi\in\Me_*^+$. We need to prove the inequality $\ge$ in
\eqref{F-2.1}, since the inequality $\le$ is known in \cite[Theorem 1(vi)]{kato2023onrenyi}.

Let $r:=\max\bigl\{{\alpha\over z},{1-\alpha\over z}\bigr\}$, and for any $\eps>0$ define
$\psi_\eps,\varphi_\eps\in\Me_*^+$ by
\[
h_{\psi_\eps}:=(h_\psi^r+\eps h_\varphi^r)^{1/r},\qquad
h_{\varphi_\eps}:=(h_\varphi^r+\eps h_\psi^r)^{1/r}.
\]
Since
\[
h_{\psi_\eps}^r\ge h_\psi^r,\qquad h_{\psi_\eps}^r\ge\eps h_\varphi^r,\qquad
h_{\varphi_\eps}^r\ge h_\varphi^r,\qquad h_{\varphi_\eps}^r\ge\eps h_\psi^r,
\]
we have
\[
h_{\psi_\eps}^r\le(\eps^{-1}+\eps)h_{\varphi_\eps}^r,\qquad
h_{\varphi_\eps}^r\le(\eps^{-1}+\eps)h_{\psi_\eps}^r.
\]
Since ${\alpha\over rz},{1-\alpha\over rz}\le1$, it follows (see \cite[Lemma B.7]{hiai2021quantum} and
\cite[Lemma 3.2]{hiai2021connections}) that
\begin{align}\label{F-2.3}
h_{\psi_\eps}^{\alpha\over z}\ge h_\psi^{\alpha\over z},\qquad
h_{\varphi_\eps}^{1-\alpha\over z}\ge h_\varphi^{1-\alpha\over z},\qquad
(\eps^{-1}+\eps)^{-{\alpha\over rz}}h_{\varphi_\eps}^{\alpha\over z}
\le h_{\psi_\eps}^{\alpha\over z}\le
(\eps^{-1}+\eps)^{\alpha\over rz}h_{\varphi_\eps}^{\alpha\over z}.
\end{align}
By the second paragraph of the proof of \cite[Theorem 1(vi)]{kato2023onrenyi} we thus obtain
\[
Q_{\alpha,z}(\psi_\eps\|\varphi_\eps)=\inf_{a\in\Me^{++}}
\left\{\alpha\Tr\left((a^{1/2}h_{\psi_\eps}^{\alpha\over z}a^{1/2})^{z\over\alpha}\right)
+(1-\alpha)\Tr\left((a^{-{1\over2}}h_{\varphi_\eps}^{1-\alpha\over z}a^{-{1\over2}})^{z\over1-\alpha}
\right)\right\}.
\]
For any $a\in\Me^{++}$, since
$a^{1/2}h_{\psi_\eps}^{\alpha\over z}a^{1/2}\ge a^{1/2}h_\psi^{\alpha\over z}a^{1/2}$ by the first
inequality in \eqref{F-2.3}, it follows from Lemma \ref{lemma:order1} that
\[
\|a^{1/2}h_{\psi_\eps}^{\alpha\over z}a^{1/2}\|_{z\over\alpha}
\ge\|a^{1/2}h_\psi^{\alpha\over z}a^{1/2}\|_{z\over\alpha},
\]
that is,
\[
\Tr\left((a^{1/2}h_{\psi_\eps}^{\alpha\over z}a^{1/2})^{z\over\alpha}\right)
\ge\Tr\left((a^{1/2}h_\psi^{\alpha\over z}a^{1/2})^{z\over\alpha}\right),
\]
and similarly by the second inequality in \eqref{F-2.3},
\[
\Tr\left((a^{-1/2}h_{\varphi_\eps}^{1-\alpha\over z}a^{-1/2})^{z\over1-\alpha}\right)
\ge\Tr\left((a^{-1/2}h_\varphi^{1-\alpha\over z}a^{-1/2})^{z\over1-\alpha}\right).
\]
Therefore, we obtain
\[
Q_{\alpha,z}(\psi_\eps\|\varphi_\eps)\ge\inf_{a\in\Me^{++}}
\left\{\alpha\Tr\left((a^{1/2}h_\psi^{\alpha\over z}a^{1/2})^{z\over\alpha}\right)
+(1-\alpha)\Tr\left((a^{-{1\over2}}h_\varphi^{1-\alpha\over z}a^{-{1\over2}})^{z\over1-\alpha}
\right)\right\}.
\]
Letting $\eps\searrow0$ with \cite[Lemma 6 and Theorem 1(iv)]{kato2023onrenyi} gives the desired
inequality $\ge$ in \eqref{F-2.1}.

\medskip
(ii)\enspace
It was proved in \cite[Theorem 2(vi)]{kato2023onrenyi} that the inequality $\ge$ in
\eqref{F-2.2} holds for all $\alpha>1$ and $z>0$. We now prove the opposite inequality. 

Assume first that $Q_{\alpha,z}(\psi\|\varphi)<\infty$, so that there is some $x\in
s(\varphi)L^z(\Me)^+s(\varphi)$ such that
$h_\psi^{\frac{\alpha}{z}}=h_\varphi^{\frac{\alpha-1}{2z}}xh_\varphi^{\frac{\alpha-1}{2z}}$. Plugging this
into the right-hand side of \eqref{F-2.2}, we obtain
\begin{align}
&\sup_{a\in \Me^+} \left\{\alpha
\Tr\left((a^{\frac12}h_\psi^{\frac{\alpha}{z}}a^{\frac12})^{\frac{z}{\alpha}}\right)-(\alpha-1)
\Tr\left((a^{\frac12}h_\varphi^{\frac{\alpha-1}{z}}a^{\frac12})^{\frac{z}{\alpha-1}}\right) \right\}
\nonumber\\
&\quad=\sup_{a\in \Me^+} \left\{\alpha
\Tr\left((a^{\frac12}h_\varphi^{\frac{\alpha-1}{2z}}xh_\varphi^{\frac{\alpha-1}{2z}}
a^{\frac12})^{\frac{z}{\alpha}}\right)-(\alpha-1)
\Tr\left((a^{\frac12}h_\varphi^{\frac{\alpha-1}{z}}a^{\frac12})^{\frac{z}{\alpha-1}}\right) \right\}
\nonumber\\
&\quad=\sup_{a\in \Me^+} \left\{\alpha
\Tr\left((x^{\frac12}h_\varphi^{\frac{\alpha-1}{2z}}ah_\varphi^{\frac{\alpha-1}{2z}}
x^{\frac12})^{\frac{z}{\alpha}}\right)-(\alpha-1)
\Tr\left((h_\varphi^{\frac{\alpha-1}{2z}}a h_\varphi^{\frac{\alpha-1}{2z}}
)^{\frac{z}{\alpha-1}}\right)\right\}\nonumber\\
&\quad=\sup_{w\in L^{\frac{z}{\alpha-1}}(\Me)^+} \left\{\alpha
\Tr\left((x^{\frac12}wx^{\frac12})^{\frac{z}{\alpha}}\right)-(\alpha-1)
\Tr\left(w^{\frac{z}{\alpha-1}}\right)
\right\}, \label{F-2.4}
\end{align}
where we have used the fact that $\Tr\left((h^*h)^p\right)=\Tr\left((hh^*)^p\right)$ for
$p>0$, $h\in L^{\frac{p}{2}}(\Me)$, and Lemma \ref{lemma:cone}.
Putting $w=x^{\alpha-1}$ we get
\[
\sup_{w\in L^{\frac{z}{\alpha-1}}(\Me)^+} \left\{\alpha
\Tr\left((x^{\frac12}wx^{\frac12})^{\frac{z}{\alpha}}\right)-(\alpha-1)
\Tr\left(w^{\frac{z}{\alpha-1}}\right)
\right\}\ge \Tr(x^z)=\|x\|_z^z= Q_{\alpha,z}(\psi\|\varphi).
\]
This finishes the proof of \eqref{F-2.2} in the case that $Q_{\alpha,z}(\psi\|\varphi)<\infty$. Note that
this holds if $z\ge \alpha-1>0$ and $\psi\le \lambda\varphi$ for some $\lambda>0$. Indeed, since
$\frac{\alpha-1}{z}\in (0,1]$ by the assumption, we then have 
$h_\psi^{\frac{\alpha-1}{z}}\le \lambda^{\frac{\alpha-1}{z}}h_\varphi^{\frac{\alpha-1}{z}}$.
Hence by \cite[Lemma A.58]{hiai2021quantum} there is some $b\in \Me$ such that 
\[
h_\psi^{\frac{\alpha-1}{2z}}=bh_\varphi^{\frac{\alpha-1}{2z}},
\]
so that $h_\psi^{\alpha\over2z}=yh_\ffi^{\alpha-1\over2z}$, where
 $y:=h_\psi^{\frac{1}{2z}}b\in L^{2z}(\Me)$. By Lemma \ref{lemma:renyi_2z},
$Q_{\alpha,z}(\psi\|\varphi)=\|y\|_{2z}^{2z}<\infty$. This shows the  last assertion too.
(Note that this follows also from the order relations in \cite[Theorem 2(iii)]{kato2023onrenyi}.)

Next, assume that $z\ge\alpha/2$. For any $\eps>0$ define $\varphi_\eps\in\Me_*^+$ by
\[
h_{\varphi_\eps}:=\left(
h_\varphi^{\alpha-1\over z}+\eps h_\psi^{\alpha-1\over z}\right)^{z\over\alpha-1}.
\]
Then we have
\begin{align}\label{F-2.5}
h_\varphi^{\alpha-1\over z}\le h_{\varphi_\eps}^{\alpha-1\over z},\qquad
h_\psi^{\alpha-1\over z}\le\eps^{-1}h_{\varphi_\eps}^{\alpha-1\over z}.
\end{align}
From the last inequality we have $Q_{\alpha,z}(\psi\|\varphi)<\infty$ by a similar argument to
the last part of the above paragraph. It thus follows from the case proved above that
\begin{align*}
Q_{\alpha,z}(\psi\|\varphi_\varepsilon)&=\sup_{a\in \Me^+} \left\{\alpha
\Tr\left((a^{\frac12}h_\psi^{\frac{\alpha}{z}}a^{\frac12})^{\frac{z}{\alpha}}\right)
-(\alpha-1)\Tr\left((a^{\frac12}h_{\varphi_\eps}^{\frac{\alpha-1}{z}}
a^{\frac12})^{\frac{z}{\alpha-1}}\right)\right\}\\
&\le\sup_{a\in \Me^+} \left\{\alpha
\Tr\left((a^{\frac12}h_\psi^{\frac{\alpha}{z}}a^{\frac12})^{\frac{z}{\alpha}}\right)-(\alpha-1)
\Tr\left((a^{\frac12}h_\varphi^{\frac{\alpha-1}{z}}a^{\frac12})^{\frac{z}{\alpha-1}}\right)
\right\},
\end{align*}
where the inequality above follows from the first inequality in \eqref{F-2.5} and Lemma \ref{lemma:order1}.
Furthermore, since $z\ge\alpha/2$, from the lower semi-continuity of $Q_{\alpha,z}$ in
\cite[Theorem 2(iv)]{kato2023onrenyi} we have
\[
Q_{\alpha,z}(\psi\|\varphi)\le \liminf_{\varepsilon\searrow 0}
Q_{\alpha,z}(\psi\|\varphi_\varepsilon),
\]
so that the desired inequality is obtained.

\end{proof}

%In the general case, assume that $\max\{\alpha/2,\alpha-1\}\le z$. As shown above,
%the variational expression holds for $Q_{\alpha,z}(\psi\|\varphi+\varepsilon\psi)$ for all
%$\varepsilon>0$, so that we have
%\begin{align*}
%Q_{\alpha,z}(\psi\|\varphi+\varepsilon\psi)&=\sup_{a\in \Me^+} \left\{\alpha
%\Tr\left((a^{\frac12}h_\psi^{\frac{\alpha}{z}}a^{\frac12})^{\frac{z}{\alpha}}\right)
%-(\alpha-1)\Tr\left((a^{\frac12}h_{\varphi+\varepsilon \psi}^{\frac{\alpha-1}{z}}
%a^{\frac12})^{\frac{z}{\alpha-1}}\right)\right\}\\
%&\le\sup_{a\in \Me^+} \left\{\alpha
%\Tr\left((a^{\frac12}h_\psi^{\frac{\alpha}{z}}a^{\frac12})^{\frac{z}{\alpha}}\right)-(\alpha-1)
%\Tr\left((a^{\frac12}h_\varphi^{\frac{\alpha-1}{z}}a^{\frac12})^{\frac{z}{\alpha-1}}\right)
%\right\},
%\end{align*}
%where the inequality above follows by Lemma \ref{lemma:order}. Therefore,
%since $z\ge\alpha/2$, from the lower semi-continuity of $Q_{\alpha,z}$ in
%\cite[Theorem 2(iv)]{kato2023onrenyi} we have
%\[
%Q_{\alpha,z}(\psi\|\varphi)\le \liminf_{\varepsilon\searrow 0}
%Q_{\alpha,z}(\psi\|\varphi+\varepsilon \psi),
%\]
%so that the desired inequality is obtained.

\begin{remark}\label{remark:variational}
%The conditions on the parameters $\alpha,z$ in Theorem \ref{thm:variational} are similar to the DPI
%bounds shown in Theorem \ref{thm:dpi} below. However,
In the finite dimensional setting, the variational expressions in \eqref{F-2.1} and \eqref{F-2.2} were proved
in \cite[Theorem 3.3]{zhang2020fromwyd} with no restriction on the parameter $z>0$. Also, in
\cite[Lemma 3.23]{mosonyi2023thestrong}, the variational expression in \eqref{F-2.2} was proved in the
infinite dimensional $\mathcal{B}(\mathcal{H})$ case with no restriction on $z$.
However, we need the assumption $z\ge\alpha/2$ to prove \eqref{F-2.2} of
Theorem \ref{thm:variational}(ii), which is though a part of the DPI bounds shown in
Theorem \ref{thm:dpi} below.
\end{remark}

\medskip
We finish this section by investigation of the properties of the variational expression for
$0<\alpha<1$, in the case when $\lambda^{-1}\ffi\le \psi\le \lambda \ffi$ for some
$\lambda>0$. This case will be denoted as $\psi\sim \ffi$. 

\begin{lemma}\label{lemma:variational_majorized}
Assume that $\psi\sim\ffi$ and let $0<\alpha<1$,
$\max\{\alpha,1-\alpha\}\le z$. Then the infimum in \eqref{F-2.1} of Theorem \ref{thm:variational}(i) is
attained at a unique element $\bar a\in \Me^{++}$. This element satisfies
\begin{align}
h_\psi^{\alpha\over2z}\bar ah_\psi^{\alpha\over2z}
&=\Bigl(h_\psi^{\alpha\over2z}h_\ffi^{1-\alpha\over z}h_\psi^{\alpha\over2z}\Bigr)^\alpha,
\label{eq:minimizer1}\\
h_\ffi^{1-\alpha\over2z}\bar a^{-1}h_\ffi^{1-\alpha\over2z}
&=\Bigl(h_\ffi^{1-\alpha\over2z}h_\psi^{\alpha\over
z}h_\ffi^{1-\alpha\over2z}\Bigr)^{1-\alpha}.
\label{eq:minimizer2}
\end{align}
\end{lemma}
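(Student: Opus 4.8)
The plan is to exhibit the minimizer explicitly rather than to differentiate the (awkward) trace functional. Since $\psi\sim\ffi$ forces $s(\psi)=s(\ffi)$, I would first reduce via Remark \ref{remark:defi} to the faithful case, so that all powers and inverses of $h_\psi,h_\ffi$ occurring below are well-defined nonsingular operators. Write $A:=h_\psi^{\alpha/z}$ and $B:=h_\ffi^{(1-\alpha)/z}$, and note that $\max\{\alpha,1-\alpha\}\le z$ gives $z/\alpha\ge1$ and $z/(1-\alpha)\ge1$. Using the identity $\Tr((h^*h)^p)=\Tr((hh^*)^p)$ already invoked in the proof of Theorem \ref{thm:variational} (applied to $h=A^{1/2}a^{1/2}$ and to $h=B^{1/2}a^{-1/2}$), the functional to be minimized becomes
\[
F(a)=\alpha\,\Tr\bigl((A^{1/2}aA^{1/2})^{z/\alpha}\bigr)+(1-\alpha)\,\Tr\bigl((B^{1/2}a^{-1}B^{1/2})^{z/(1-\alpha)}\bigr),\qquad a\in\Me^{++},
\]
in which $a$ enters linearly inside the first trace and through $a^{-1}$ in the second.

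Next I would set $C:=A^{1/2}B^{1/2}$, so that $CC^*=A^{1/2}BA^{1/2}$ and $C^*C=B^{1/2}AB^{1/2}$, and define the candidate $\bar a:=A^{-1/2}(CC^*)^\alpha A^{-1/2}$, which is exactly \eqref{eq:minimizer1} rearranged. Two things must be checked. First, $\bar a\in\Me^{++}$: here $\psi\sim\ffi$ is essential, since $\lambda^{-1}h_\ffi\le h_\psi\le\lambda h_\ffi$ together with operator monotonicity of $t\mapsto t^{(1-\alpha)/z}$ (legitimate as $(1-\alpha)/z\le1$) yields the two-sided bound $\lambda^{-(1-\alpha)/z}h_\psi^{1/z}\le A^{1/2}BA^{1/2}\le\lambda^{(1-\alpha)/z}h_\psi^{1/z}$, whence $(CC^*)^\alpha$ is comparable to $A$ and therefore $\bar a$ is bounded with bounded inverse. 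Second, $\bar a$ satisfies \eqref{eq:minimizer2}: from $\bar a^{-1}=A^{1/2}(CC^*)^{-\alpha}A^{1/2}$ one computes $B^{1/2}\bar a^{-1}B^{1/2}=C^*(CC^*)^{-\alpha}C=(C^*C)^{-\alpha}C^*C=(C^*C)^{1-\alpha}$, using the intertwining relation $C^*g(CC^*)=g(C^*C)C^*$ of the continuous functional calculus.

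With both optimality identities in hand, attainment is immediate: by \eqref{eq:minimizer1} the first trace in $F(\bar a)$ equals $\Tr((CC^*)^z)$ and by \eqref{eq:minimizer2} the second equals $\Tr((C^*C)^z)$; since these coincide and $C^*C=h_\ffi^{(1-\alpha)/2z}h_\psi^{\alpha/z}h_\ffi^{(1-\alpha)/2z}$, we get $F(\bar a)=\Tr((C^*C)^z)=Q_{\alpha,z}(\psi\|\ffi)$, which is the infimum in \eqref{F-2.1} by Theorem \ref{thm:variational}(i). Thus $\bar a$ attains the infimum. For uniqueness I would argue strict convexity of $F$ on $\Me^{++}$: the first term is convex, being a convex trace power of a linear function of $a$, while the second term is strictly convex because $a\mapsto a^{-1}$ is strictly operator convex, $B$ is invertible, and $X\mapsto\Tr(X^{z/(1-\alpha)})$ is strictly monotone and convex for exponent $\ge1$; a strictly convex functional has at most one minimizer, so $\bar a$ is the unique one.

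I expect the main obstacle to be the technical bookkeeping in the unbounded Haagerup-$L^p$/measurable-operator setting rather than the algebra: namely, justifying that $\bar a$ lies in $\Me^{++}$ (the boundedness-and-invertibility estimate above, which is precisely where $\psi\sim\ffi$ and $\max\{\alpha,1-\alpha\}\le z$ are used), making the cyclicity and intertwining identities rigorous for products of noncommuting nonsingular positive operators, and establishing strict convexity of the second trace term for measurable operators, including the borderline exponents $z=\alpha$ or $z=1-\alpha$ where one of the two powers degenerates to $1$.
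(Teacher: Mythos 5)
Your proposal is correct, and its skeleton matches the paper's, but you travel a somewhat more explicit route. The paper does not write $\bar a$ in closed form: it invokes the Douglas-type factorization \cite[Lemma A.58]{hiai2021quantum} (exactly as in Kato's proof of the variational formula) to produce $b,c\in\Me$ with $h_\ffi^{1-\alpha\over2z}=b\,(C^*C)^{1-\alpha\over2}$ and $(C^*C)^{1-\alpha\over2}=c\,h_\ffi^{1-\alpha\over2z}$, sets $\bar a:=bb^*$, reads off \eqref{eq:minimizer2} from $\bar a^{-1}=c^*c$, and then obtains \eqref{eq:minimizer1} indirectly from the symmetry $Q_{\alpha,z}(\psi\|\ffi)=Q_{1-\alpha,z}(\ffi\|\psi)$ combined with uniqueness. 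You instead write $\bar a=A^{-1/2}(CC^*)^\alpha A^{-1/2}$ (one can check this equals $bb^*$) and verify both identities directly via the intertwining $C^*g(CC^*)=g(C^*C)C^*$, which is cleaner in that \eqref{eq:minimizer1} does not have to be routed through the symmetry argument. The uniqueness proofs are essentially identical: both reduce to strict operator convexity of $t\mapsto t^{-1}$ together with the strict monotonicity of $\|\cdot\|_r^r$ on the positive cone (Lemma \ref{lemma:order1}), and your observation that the $r=1$ case survives because $\ffi$ is faithful is exactly what is needed. The one place where your write-up is thinner than it should be is the assertion $\bar a\in\Me^{++}$: the two-sided bound $\lambda^{-\alpha(1-\alpha)/z}A\le(CC^*)^\alpha\le\lambda^{\alpha(1-\alpha)/z}A$ is right, but passing from it to a bounded, boundedly invertible $\bar a$ cannot be done by literally conjugating with the unbounded operator $A^{-1/2}$; it requires precisely the factorization lemma the paper cites (an operator inequality $X^*X\le\lambda Y^*Y$ yields $X=dY$ with $d\in\Me$, $\|d\|\le\lambda^{1/2}$). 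Since you flag this as the main technical point and the cited lemma closes it, this is a presentational gap rather than a mathematical one.
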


\begin{proof} We may assume that $\varphi$ and hence also $\psi$ are faithful. Following
 the proof of \cite[Theorem 1(vi)]{kato2023onrenyi},  we may use the
assumptions and \cite[Lemma A.58]{hiai2021quantum} to show  that there are $b,c\in\cM$ such that
\begin{align}\label{eq:bc}
h_\ffi^{1-\alpha\over2z}
=b\Bigl(h_\ffi^{1-\alpha\over2z}h_\psi^{\alpha\over z}h_\ffi^{1-\alpha\over2z}\Bigr)^{1-\alpha\over2},\qquad
\Bigl(h_\ffi^{1-\alpha\over2z}h_\psi^{\alpha\over z}h_\ffi^{1-\alpha\over2z}\Bigr)^{1-\alpha\over2}
=ch_\ffi^{1-\alpha\over2z}.
\end{align}
With $\bar a:=bb^*\in\cM^{++}$ we have  $\bar a^{-1}=c^*c$ and $\bar a$ is indeed a minimizer of
\eqref{F-2.2}, so that
\begin{align}\label{eq:infimum}
Q_{\alpha,z}(\psi\|\ffi)
=\Big\|h_\psi^{\alpha\over2z}\bar ah_\psi^{\alpha\over2z}\Big\|_{z\over\alpha}^{z\over\alpha}
=\alpha\Big\|h_\psi^{\alpha\over2z}\bar ah_\psi^{\alpha\over2z}\Big\|_{z\over\alpha}^{z\over\alpha}
+(1-\alpha)\Big\|h_\ffi^{1-\alpha\over2z}\bar a^{-1}h_\ffi^{1-\alpha\over2z}\Big\|_{z\over
1-\alpha}^{z\over 1-\alpha}.
\end{align}
(See the proof of \cite[Theorem 1(vi)]{kato2023onrenyi} for more detail.)
We next observe that such a minimizer is unique. Indeed, suppose that the infimum is
attained  at some $a_1,a_2\in \Me^{++}$. Let $a_0:=(a_1+a_2)/2$. Since the map 
$L^{p}(\cM)\ni k\mapsto\|k\|_{p}^{p}$ is convex for any $p\ge 1$, we have 
\[
\Big\|h_\psi^{\alpha\over2z}a_0h_\psi^{\alpha\over2z}\Big\|_{z\over\alpha}^{z\over\alpha}
\le{1\over2}\biggl\{\Big\|h_\psi^{\alpha\over2z}a_1
h_\psi^{\alpha\over2z}\Big\|_{z\over\alpha}^{z\over\alpha}
+\Big\|h_\psi^{\alpha\over2z}a_2h_\psi^{\alpha\over2z}\Big\|_{z\over\alpha}^{z\over\alpha}\biggr\}.
\]
Moreover, since $a_0^{-1}\le(a_1^{-1}+a_2^{-1})/2$, using Lemma \ref{lemma:order1}
we have
\begin{align*}
\Big\|h_\ffi^{1-\alpha\over2z}a_0^{-1}h_\ffi^{1-\alpha\over2z}\Big\|_{z\over1-\alpha}^{z\over1-\alpha}
&\le\Big\|h_\ffi^{1-\alpha\over2z}\biggl({a_1^{-1}+a_2^{-1}\over2}\biggr)
h_\ffi^{1-\alpha\over2z}\Big\|_{z\over1-\alpha}^{z\over1-\alpha} \\
&\le{1\over2}\biggl\{\Big\|h_\ffi^{1-\alpha\over2z}a_1^{-1}
h_\ffi^{1-\alpha\over2z}\Big\|_{z\over1-\alpha}^{z\over1-\alpha}
+\Big\|h_\ffi^{1-\alpha\over2z}a_2^{-1}
h_\ffi^{1-\alpha\over2z}\Big\|_{z\over1-\alpha}^{z\over1-\alpha}\biggr\}.
\end{align*}
Hence the assumption of $a_1,a_2$ being a minimizer gives
\[
\Big\|h_\ffi^{1-\alpha\over2z}a_0^{-1}h_\ffi^{1-\alpha\over2z}\Big\|_{z\over1-\alpha}
=\Big\|h_\ffi^{1-\alpha\over2z}\biggl({a_1^{-1}+a_2^{-1}\over2}\biggr)
h_\ffi^{1-\alpha\over2z}\Big\|_{z\over1-\alpha},
\]
which implies that $a_0^{-1}={a_1^{-1}+a_2^{-1}\over2}$, as easily verified by Lemma \ref{lemma:order1}
again. This gives $a_1=a_2$ from the strict operator convexity of $t\mapsto t^{-1}$ on
$(0,\infty)$, as easily seen since $\bigl({a_1+a_2\over2}\bigr)^{-1}={a_1^{-1}+a_2^{-1}\over2}$ gives
$\Bigl({\1+a_1^{-1/2}a_2a_1^{-1/2}\over2}\Bigr)^{-1}={\1+a_1^{-1/2}a_2a_1^{-1/2}\over2}$.

The equality  \eqref{eq:minimizer2} is obvious from the second equality in \eqref{eq:bc} and
$\bar a^{-1}=c^*c$. Since $Q_{\alpha,z}(\psi\|\ffi)=Q_{1-\alpha,z}(\ffi\|\psi)$, we see by uniqueness that
the minimizer of the infimum expression for $Q_{1-\alpha,z}(\ffi\|\psi)$ (instead of \eqref{eq:infimum}) is
$\bar a^{-1}$ (instead of $\bar a$). This says that \eqref{eq:minimizer1} is the equality corresponding to
\eqref{eq:minimizer2} when $\psi,\ffi,\alpha$ are replaced with $\ffi,\psi,1-\alpha$, respectively. 
\end{proof}

To make the next lemma more readable, we will use the following notations:
\[
p:={z\over\alpha},\qquad r:={z\over 1-\alpha},\qquad
\xi_p(a):=h_\psi^{1\over2p}ah_\psi^{1\over 2p},\qquad
\eta_r(a):=h_\ffi^{1\over2r}a^{-1}h_\ffi^{1\over 2r}.
\]
We will also denote the function under the infimum in the variational expression in
Theorem \ref{thm:variational}(i) by $f$, that is,
\begin{equation}\label{func-variational}
f(a)=\alpha\|\xi_p(a)\|_p^p
+(1-\alpha)\|\eta_r(a)\|_r^r,\qquad a\in \Me^{++}.
\end{equation}
When $p\in(1,\infty)$, recall that $L^p(\Me)$ is uniformly convex (see
\cite{haagerup1979lpspaces}, \cite[Theorem 4.2]{kosaki1984applications}), so that the norm
$\|\cdot\|_p$ is uniformly Fr\'echet differentiable (see, e.g.,
\cite[Part 3, Chap.~II]{beauzamy1982introduction}). Hence if $p,r>1$, $a\mapsto\|\xi_p(a)\|_p^p$
and $a\mapsto\|\eta_r(a)\|_r^r$ are Fr\'echet differentiable on $\Me^{++}$. Since differentiability of
these functions is obvious when $p=1$ and $r=1$, we see that the function $f$ is Fr\'echet differentiable
on $\Me^{++}$ for any $p,r\ge1$, whose Fr\'echet derivative at $a$ will be denoted by $\nabla f(a)$.

\begin{lemma}\label{lemma:variational_majorized2}
Assume that $\psi\sim\ffi$  and let $0<\alpha<1$, $\max\{\alpha,1-\alpha\}\le z$. {Let
$\bar a\in\Me^{++}$ be as given in Lemma \ref{lemma:variational_majorized}.} If $p>1$, then for
every $C\ge Q_{\alpha,z}(\psi\|\ffi)$ and $\eps>0$ there is some $\delta>0$ such that whenever
{$b\in\Me^{++}$ satisfies}
$\|\xi_p(b)\|^p_p\le C$ and $\|\xi_p(b)-\xi_p(\bar a)\|_p\ge \eps$, we have
\[
f(b)-Q_{\alpha,z}(\psi\|\ffi)\ge \delta.
\]
A similar statement holds if $r>1$ too.
\end{lemma}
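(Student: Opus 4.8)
The plan is to argue by contradiction, thereby reducing the assertion to the statement that any minimizing sequence of $f$ that stays bounded in $\|\xi_p(\cdot)\|_p$ must converge to $\bar a$ through $\xi_p$. Recall from Theorem~\ref{thm:variational}(i) that $f(a)\ge Q_{\alpha,z}(\psi\|\ffi)=:Q$ for all $a\in\Me^{++}$, with equality at $a=\bar a$ by Lemma~\ref{lemma:variational_majorized}, and that $f$ is convex: $a\mapsto\|\xi_p(a)\|_p^p$ is convex since $\xi_p$ is affine and $\|\cdot\|_p^p$ is convex for $p\ge1$, while $a\mapsto\|\eta_r(a)\|_r^r$ is convex by operator convexity of $t\mapsto t^{-1}$ combined with Lemma~\ref{lemma:order1} and convexity of $t\mapsto t^r$ (here $r=z/(1-\alpha)\ge1$). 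Suppose the conclusion fails for some $C\ge Q$ and $\eps>0$. Then there are $b_n\in\Me^{++}$ with $\|\xi_p(b_n)\|_p^p\le C$, $\|\xi_p(b_n)-\xi_p(\bar a)\|_p\ge\eps$, and $f(b_n)<Q+1/n$; since $f\ge Q$ this forces $f(b_n)\to Q$. Write $u_n:=\xi_p(b_n)$, $u:=\xi_p(\bar a)$, $v_n:=\eta_r(b_n)$, $v:=\eta_r(\bar a)$, and set $c_n:=(b_n+\bar a)/2\in\Me^{++}$.

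The first step is to show that the convexity gap of the $L^p$-term vanishes, i.e.
\[
G_n:=\frac{\|u_n\|_p^p+\|u\|_p^p}{2}-\Bigl\|\frac{u_n+u}{2}\Bigr\|_p^p\longrightarrow0.
\]
Since $\xi_p$ is affine, $\xi_p(c_n)=(u_n+u)/2$; and by operator convexity of the inverse, $c_n^{-1}\le(b_n^{-1}+\bar a^{-1})/2$, so Lemma~\ref{lemma:order1} and convexity of $t\mapsto t^r$ give $\|\eta_r(c_n)\|_r^r\le(\|v_n\|_r^r+\|v\|_r^r)/2$. Combining this with $f(c_n)\ge Q$ and the identity $\frac{f(b_n)+Q}{2}=\alpha\frac{\|u_n\|_p^p+\|u\|_p^p}{2}+(1-\alpha)\frac{\|v_n\|_r^r+\|v\|_r^r}{2}$ yields $\alpha G_n\le\frac{f(b_n)-Q}{2}\to0$, as claimed. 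The point is that at the midpoint $c_n$ the $\eta_r$-contributions cancel favorably against their convex upper bound, which is precisely why one passes to $c_n$ rather than estimating $f(b_n)$ directly.

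The final step, which I expect to be the main obstacle, is to upgrade $G_n\to0$ together with the bound $\|u_n\|_p\le C^{1/p}$ to the norm convergence $u_n\to u$ in $L^p(\Me)$, which contradicts $\|u_n-u\|_p\ge\eps$. This is exactly where $p>1$ enters, through the uniform convexity of $L^p(\Me)$ (\cite{haagerup1979lpspaces}, \cite[Theorem~4.2]{kosaki1984applications}). Passing to a subsequence with $\|u_n\|_p\to s$, the vanishing of $G_n$ forces $\|(u_n+u)/2\|_p\to\bigl(\tfrac{s^p+\|u\|_p^p}{2}\bigr)^{1/p}$; comparing with the triangle bound $\|(u_n+u)/2\|_p\le\tfrac{\|u_n\|_p+\|u\|_p}{2}$ and using strict convexity of $t\mapsto t^p$ gives $s=\|u\|_p$ and hence $\|(u_n+u)/2\|_p\to\|u\|_p$. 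If $\|u\|_p=0$ this already gives $u_n\to0=u$; otherwise, normalizing $u_n$ and $u$ to unit vectors and invoking uniform convexity shows that $\|u_n-u\|_p\ge\eps$ would push $\|(u_n+u)/2\|_p$ strictly below $\|u\|_p$, a contradiction. Hence $u_n\to u$, which is the desired contradiction and establishes the statement for $p>1$. The case $r>1$ follows by the symmetry $Q_{\alpha,z}(\psi\|\ffi)=Q_{1-\alpha,z}(\ffi\|\psi)$, under which $f$ is preserved via $a\mapsto a^{-1}$ (with $\bar a$ replaced by $\bar a^{-1}$) and the roles of $\xi_p$ and $\eta_r$ are interchanged, reducing it via $b\mapsto b^{-1}$ to the case just treated.
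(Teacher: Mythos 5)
Your proof is correct and follows essentially the same route as the paper's: both reduce the claim to the inequality $f(b)-Q_{\alpha,z}(\psi\|\ffi)\ge 2\alpha\bigl[\tfrac12\|\xi_p(b)\|_p^p+\tfrac12\|\xi_p(\bar a)\|_p^p-\bigl\|\tfrac12(\xi_p(b)+\xi_p(\bar a))\bigr\|_p^p\bigr]$ and then exploit uniform convexity of $L^p(\Me)$. The only differences are in execution: you obtain this key inequality directly from $f\bigl(\tfrac12(b+\bar a)\bigr)\ge Q_{\alpha,z}(\psi\|\ffi)$ together with the convex upper bound on the $\eta_r$-term at the midpoint, without passing through the Fr\'echet derivative condition $\nabla f(\bar a)=0$ as the paper does, and you conclude via a contradiction/normalization argument from uniform convexity of the unit ball rather than citing the uniform convexity of $h\mapsto\|h\|_p^p$ on bounded sets.
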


\begin{proof} By assumptions, $p,r\ge 1$.  For
$a,b\in \Me^{++}$ and $s\in (0,1/2)$, we have
\begin{align*}
\|\xi_p(sb+(1-s)a)\|_p^p&=\|s\xi_p(b)+(1-s)\xi_p(a)\|_p^p\\
&=\Big\|(1-2s)\xi_p(a)+2s\frac12(\xi_p(a)+\xi_p(b))\Big\|_p^p\\
&\le (1-2s)\|\xi_p(a)\|_p^p+2s\Big\|\frac12(\xi_p(a)+\xi_p(b))\Big\|_p^p.
\end{align*}
Similarly,
\[
\|\eta_r(sb+(1-s)a)\|_r^r\le
(1-2s)\|\eta_r(a)\|_r^r+2s\Big\|\frac12(\eta_r(a)+\eta_r(b))\Big\|_r^r,
\]
where we have also used the fact that $(t a+(1-t)b)^{-1}\le t
a^{-1}+(1-t)b^{-1}$ for $t\in (0,1)$ and Lemma \ref{lemma:order1}. It follows that
\begin{align*}
&\<\nabla f(a),b-a\>\\
&\quad=\lim_{s\searrow0} {1\over s}[ f(sb+(1-s)a)-f(a)]\\
&\quad\le 2\alpha\biggl[\Big\|\frac12(\xi_p(a)+\xi_p(b))\Big\|_p^p-\|\xi_p(a)\|_p^p\biggr]
+2(1-\alpha)\biggl[\Big\|\frac12(\eta_r(a)+\eta_r(b))\Big\|_r^r-\|\eta_r(a)\|_r^r\biggr]\\
&\quad=f(b)-f(a)-2\Biggl\{\alpha\biggl[\frac12 \|\xi_p(a)\|_p^p
+\frac12 \|\xi_p(b)\|_p^p-\Big\|\frac12(\xi_p(a)+\xi_p(b))\Big\|_p^p\biggr]\\
&\qquad\qquad\qquad\qquad\quad+(1-\alpha)\biggl[\frac12 \|\eta_r(a)\|_r^r+\frac12 \|\eta_r(b)\|_r^r
-\Big\|\frac12(\eta_r(a)+\eta_r(b))\Big\|_r^r\biggr]\Biggr\}.
\end{align*}
Since $p,r\ge 1$, both terms in brackets in the last expression above are non-negative.
Let $\bar a\in \Me^{++}$ be the minimizer as in Lemma \ref{lemma:variational_majorized};
then $f(\bar a)=Q_{\alpha,z}(\psi\|\varphi)$ and $\nabla f(\bar a)=0$, so that we get
\begin{align*}
f(b)-Q_{\alpha,z}(\psi\|\varphi)\ge 2\alpha\biggl[\frac12 \|\xi_p(\bar a)\|_p^p+\frac12 \|\xi_p(b)\|_p^p
-\Big\|\frac12(\xi_p(\bar a)+\xi_p(b))\Big\|_p^p\biggr].
\end{align*}
Since $L^p(\Me)$ is uniformly convex, note (see, e.g., \cite[Theorem 3.7.7]{zalinescu2002convex})
that the function $h\mapsto \|h\|_p^p$ is uniformly convex on each bounded subset of $L^p(\Me)$.
Hence for each $C>0$ and $\varepsilon>0$ there is some $\delta>0$ such that for every $h,k$
with $\|h\|_p^p,\|k\|_p^p\le C$ and $\|h-k\|_p\ge \varepsilon$, we have
\[
\frac12\|h\|_p^p+\frac12\|k\|_p^p-\Big\|\frac12(h+k)\Big\|_p^p\ge \delta
\]
(see \cite[p.~288, Exercise 3.3]{zalinescu2002convex}). Since $\|\xi_p(\bar
a)\|_p^p=Q_{\alpha,z}(\psi\|\varphi)$ by \eqref{eq:infimum}, this proves the statement.
The proof in the case $r>1$ is similar. 
\end{proof}

\section{Data processing inequality and martingale convergence}\label{Sec-3}

Let  $\gamma: \Ne\to \Me$ be a normal positive unital map. Then the predual of $\gamma$ defines a 
positive linear map $\gamma_*: L^1(\Me)\to L^1(\Ne)$ that preserves the $\Tr$-functional,
acting as
\[
L^1(\Me)\ni h_\rho\mapsto h_{\rho\circ\gamma} \in L^1(\Ne).
\]
The support
of $\gamma$ will be denoted by $s(\gamma)$. Recall that this is defined as the smallest projection
$e\in \Ne$ such that $\gamma(e)=\1$ and in this case, $\gamma(a)=\gamma(eae)$ for any $a\in
\Ne$. For any $\rho\in \Me_*^+$ we clearly have
$s(\rho\circ\gamma)\le s(\gamma)$, with equality if $\rho$ is faithful. 
It follows that $\gamma_*$ maps $L^1(\Me)$ to $s(\gamma)L^1(\Ne)s(\gamma)\equiv
L^1(s(\gamma)\Ne s(\gamma))$.  For any $\rho\in \Me_*^+$, $\rho\ne 0$, the map
\[
\gamma_0: s(\rho\circ\gamma)\Ne s(\rho\circ\gamma)\to s(\rho)\Me s(\rho),
\qquad a\mapsto s(\rho)\gamma(a)s(\rho)
\]
is a faithful normal positive unital (i.e., $\gamma_0(s(\rho\circ\gamma))=s(\rho)$) map;
see \cite[Remark 6.7]{hiai2021quantum}. Moreover, for any $\ffi\in \Me_*^+$ such that
$s(\ffi)\le s(\rho)$, we have for any $a\in \Ne$,
\[
\ffi(\gamma_0(s(\rho\circ\gamma)as(\rho\circ\gamma)))=\ffi(s(\rho)\gamma(a)s(\rho))=\ffi(\gamma(a)).
\]
Replacing $\gamma$ by $\gamma_0$ and $\rho$ by the restriction
$\rho|_{s(\rho)\Me s(\rho)}$, we may  assume that both $\rho$ and $\rho\circ\gamma$ are faithful,
as far as we are concerned with $\ffi\in\Me_*^+$ and $\ffi\circ\gamma\in\Ne_*^+$ with $s(\ffi)\le s(\rho)$.

The \emph{Petz dual} of $\gamma$ with respect to $\rho\in \Me_*^+$ is a map
$\gamma_\rho^*:\Me\to \Ne$, introduced in \cite{petz1988sufficiency} when $\rho$ and
$\rho\circ\gamma$ are faithful (hence so is $\gamma$ as well). It was proved that
$\gamma_\rho^*$ is again normal, positive and unital, and in addition, it is $n$-positive whenever
$\gamma$ is. More generally, even though none of $\rho$, $\rho\circ\gamma$ and
$\gamma$ is faithful, letting $e:=s(\rho)$ and $e_0:=s(\rho\circ\gamma)$, we can use the restriction
$\gamma_0$ as mentioned above to define the Petz dual
$\gamma^*_\rho: e\Me e\to e_0\Ne e_0$. As explained in \cite{jencova2018renyi} (also in
\cite[Lemma 8.3]{hiai2021quantum}), in this general setting, $\gamma^*_\rho$ is determined by
the equality
\[
h_{\rho\circ\gamma}^{1/2}\gamma_\rho^*(a)h_{\rho\circ\gamma}^{1/2}
=\gamma_*\bigl(h_\rho^{1/2}ah_\rho^{1/2}\bigr),
\qquad a\in e\Me e,
\]
equivalently,
\begin{equation}\label{eq:petzdual}
(\gamma^*_\rho)_*\bigl(h_{\rho\circ\gamma}^{1/2}bh_{\rho\circ\gamma}^{1/2}\bigr)
=h_\rho^{1/2}\gamma(b)h_\rho^{1/2},\qquad b\in\Ne^+,
\end{equation}
where $\gamma_*$ and $(\gamma^*_\rho)_*$ are the predual maps of $\gamma$ and $\gamma^*_\rho$,
respectively.
We also have
\begin{equation}\label{eq:petzdual2}
\rho\circ\gamma\circ\gamma^*_\rho=\rho,\qquad (\gamma_\rho^*)_{\rho\circ\gamma}^*=\gamma_0.
\end{equation}
In the special case where $\gamma$ is the
inclusion map $\gamma: \Ne\hookrightarrow \Me$ for a subalgebra $\Ne\subseteq \Me$ and
$\rho\in\Me_*^+$ is faithful, the Petz dual is the \emph{generalized conditional expectation}
$\cE_{\Ne,\rho}:\Me\to \Ne$, as introduced in \cite{accardi1982conditional}; see, e.g.,
\cite[Proposition 6.5]{hiai2021quantum}. Hence $\cE_{\Ne,\rho}$ is a normal completely positive
unital map with range in $\Ne$ and such that 
\[
\rho\circ \cE_{\Ne,\rho}=\rho.
\]

\subsection{Data processing inequality}\label{Sec-3.1}

In this subsection we prove the \emph{data processing inequality} (\emph{DPI}) for $D_{\alpha,z}$
with respect to normal positive unital maps. For standard R\'enyi divergence $D_\alpha$,
that is, for $z=1$, {the} DPI is known to hold for $\alpha\in (0,1)\cup (1,2]$ under stronger positivity
assumptions \cite{hiai2018quantum}. In the case of the sandwiched divergences $\tilde D_\alpha$
with {$\alpha\in[1/2,1)\cup(1,\infty)$,} the DPI was proved in
\cite{jencova2018renyi,jencova2021renyi}; see also \cite{berta2018renyi} for an alternative proof
in the case when the maps are assumed completely positive. In the finite dimensional case, the DPI for
$D_{\alpha,z}$ under completely positive maps was proved in \cite{zhang2020fromwyd}, for $\alpha,z$
in the range specified as in Theorem \ref{thm:dpi} below.

The first part of the next lemma was essentially shown in
\cite[Proposition 3.12]{jencova2018renyi}, while we give the proof for the convenience of the reader.

\begin{lemma}\label{lemma:pcontraction} Let $\gamma:\Ne\to \Me$ be a normal positive unital map.
Let $\rho\in \Me_*^+$, {$\rho\ne0$,} $e:=s(\rho)$ and $e_0:=s(\rho\circ\gamma)$. For any
$p\ge 1$, the map $\gamma^*_{\rho,p}:L^p(e_0 \Ne e_0)\to L^p(e\Me e)$, determined by
\begin{align}\label{gamma-rho-p1}
\gamma^*_{\rho,p}\Bigl(h_{\rho\circ\gamma}^{1\over 2p}bh^{1\over2p}_{\rho\circ\gamma}\Bigr)
=h_\rho^{1\over 2p}\gamma(b) h_\rho^{1\over 2p},\qquad b\in \Ne,
\end{align}
is a contraction such that 
\begin{align}\label{gamma-rho-p2}
(\gamma^*_\rho)_*\Bigl(h_{\rho\circ\gamma}^{p-1\over 2p}k
h_{\rho\circ\gamma}^{p-1\over2p}\Bigr)
=h_\rho^{p-1\over 2p}\gamma^*_{\rho,p}(k)h_\rho^{p-1\over 2p},\qquad
k\in L^p(e_0 \Ne e_0).
\end{align}
Moreover, if $\rho_n\in \Me_*^+$ are such that $s(\rho)\le s(\rho_n)$ and
$\|\rho_n-\rho\|_1\to 0$, then for any $k\in L^p(e_0 \Ne e_0)$ we have
$\gamma^*_{\rho_n,p}(k)\to \gamma^*_{\rho,p}(k)$ in $L^p(\Me)$.

\end{lemma}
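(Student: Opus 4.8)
The plan is to handle the three assertions in turn: the contraction property by complex interpolation, the identity \eqref{gamma-rho-p2} by a density computation, and the convergence by an approximation argument resting on the uniform contractivity from the first part. As in the discussion preceding the lemma, I would first replace $\gamma$ by its reduction $\gamma_0$ and $\rho$ by $\rho|_{e\Me e}$, so that we may assume $e=\1$ and $e_0=\1$, i.e. $\rho$ and $\rho\circ\gamma$ are faithful. For the contraction property I would invoke Kosaki's interpolation (Appendix \ref{Sec-C}): with respect to $\rho\circ\gamma$ on $\Ne$ and $\rho$ on $\Me$, one has $L^p(\Ne)=[L^1(\Ne),\Ne]_{1/p}$ and $L^p(\Me)=[L^1(\Me),\Me]_{1/p}$ under the symmetric embeddings $b\mapsto h_{\rho\circ\gamma}^{1/2}bh_{\rho\circ\gamma}^{1/2}$ and $a\mapsto h_\rho^{1/2}ah_\rho^{1/2}$. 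The pair consisting of $\gamma\colon\Ne\to\Me$ (the $p=\infty$ endpoint) and $(\gamma^*_\rho)_*\colon L^1(\Ne)\to L^1(\Me)$ (the $p=1$ endpoint) is a morphism of these compatible couples, and the needed consistency on the embedded copy of $\Ne$ is \emph{exactly} the Petz-dual relation \eqref{eq:petzdual}, since $(\gamma^*_\rho)_*(h_{\rho\circ\gamma}^{1/2}bh_{\rho\circ\gamma}^{1/2})=h_\rho^{1/2}\gamma(b)h_\rho^{1/2}$. Both endpoint maps are contractions, $\gamma$ because it is positive and unital and $(\gamma^*_\rho)_*$ because $\gamma^*_\rho$ is unital so its predual is positive and trace-preserving; Calderón's interpolation theorem then yields a contraction on the interpolation spaces, which after unwinding the identifications is precisely the map $\gamma^*_{\rho,p}$ of \eqref{gamma-rho-p1}.

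For \eqref{gamma-rho-p2} I would check the identity on the dense set $k=h_{\rho\circ\gamma}^{1/(2p)}bh_{\rho\circ\gamma}^{1/(2p)}$, $b\in\Ne$. Using $\tfrac{p-1}{2p}+\tfrac1{2p}=\tfrac12$, the left-hand side collapses to $(\gamma^*_\rho)_*(h_{\rho\circ\gamma}^{1/2}bh_{\rho\circ\gamma}^{1/2})=h_\rho^{1/2}\gamma(b)h_\rho^{1/2}$ by \eqref{eq:petzdual}, while the right-hand side, using \eqref{gamma-rho-p1}, equals $h_\rho^{(p-1)/(2p)}h_\rho^{1/(2p)}\gamma(b)h_\rho^{1/(2p)}h_\rho^{(p-1)/(2p)}=h_\rho^{1/2}\gamma(b)h_\rho^{1/2}$. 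Both sides are continuous from $\|\cdot\|_p$ to $\|\cdot\|_1$ in $k$ (by Hölder together with the $L^1$- and $L^p$-boundedness of $(\gamma^*_\rho)_*$ and of $\gamma^*_{\rho,p}$ from the first part), so the identity extends to all $k\in L^p(e_0\Ne e_0)$.

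Finally, for the convergence I would use that, by the first part, all the $\gamma^*_{\rho_n,p}$ are contractions, so the set of $k$ for which $\gamma^*_{\rho_n,p}(k)\to\gamma^*_{\rho,p}(k)$ is closed, and it suffices to treat $k=h_{\rho\circ\gamma}^{1/(2p)}bh_{\rho\circ\gamma}^{1/(2p)}$ with $b\in\Ne$. Writing $k_n:=h_{\rho_n\circ\gamma}^{1/(2p)}bh_{\rho_n\circ\gamma}^{1/(2p)}$, one has $\gamma^*_{\rho_n,p}(k_n)=h_{\rho_n}^{1/(2p)}\gamma(b)h_{\rho_n}^{1/(2p)}$, and the triangle inequality with $\|\gamma^*_{\rho_n,p}\|\le1$ gives
\[
\|\gamma^*_{\rho_n,p}(k)-\gamma^*_{\rho,p}(k)\|_p\le\|k-k_n\|_p+\|h_{\rho_n}^{1/(2p)}\gamma(b)h_{\rho_n}^{1/(2p)}-h_\rho^{1/(2p)}\gamma(b)h_\rho^{1/(2p)}\|_p.
\]
Both terms vanish in the limit once one knows that $h_{\rho_n}\to h_\rho$ and $h_{\rho_n\circ\gamma}\to h_{\rho\circ\gamma}$ in $L^1$ (the latter by $L^1$-continuity of $\gamma_*$) imply $h_{\rho_n}^{1/(2p)}ah_{\rho_n}^{1/(2p)}\to h_\rho^{1/(2p)}ah_\rho^{1/(2p)}$ in $L^p$ for fixed $a\in\Me$, which reduces via Hölder to the continuity of $h\mapsto h^{1/(2p)}$ from $L^1(\Me)^+$ into $L^{2p}(\Me)$. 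I expect the genuine obstacles to be exactly these two technical points: pinning down this continuity of fractional powers (a Powers--Størmer-type estimate, presumably from the Appendix), and the support bookkeeping that makes $k$ lie in the domain of $\gamma^*_{\rho_n,p}$ — this is where the hypothesis $s(\rho)\le s(\rho_n)$ enters, since it forces $s(\rho\circ\gamma)\le s(\rho_n\circ\gamma)$ (any projection $q$ with $\rho_n(\gamma(q))=0$ has $s(\gamma(q))\le\1-s(\rho_n)\le\1-s(\rho)$, hence $\rho(\gamma(q))=0$). The conceptual heart, by contrast, is the clean observation in the first paragraph that \eqref{eq:petzdual} is precisely what makes the two endpoint maps interpolate.
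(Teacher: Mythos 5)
Your proposal is correct and follows essentially the same route as the paper: both establish contractivity by complex (Riesz--Thorin/Calder\'on) interpolation on Kosaki's symmetric $L^p$-scale between the $L^1$ endpoint $(\gamma^*_\rho)_*$ and the operator-norm endpoint given by the Petz-dual relation \eqref{eq:petzdual}, identify \eqref{gamma-rho-p1} and \eqref{gamma-rho-p2} on the dense set $h_{\rho\circ\gamma}^{1/2p}\Ne h_{\rho\circ\gamma}^{1/2p}$, and prove the convergence by the same triangle-inequality argument resting on uniform contractivity and the $L^1\to L^{2p}$ continuity of $h\mapsto h^{1/2p}$ (the paper cites \cite[Lemma 3.4]{kosaki1984applicationsuc} for this). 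The only differences are cosmetic (you derive \eqref{gamma-rho-p2} from \eqref{gamma-rho-p1} rather than the reverse, and you spell out the support inclusion $s(\rho\circ\gamma)\le s(\rho_n\circ\gamma)$ that the paper leaves implicit).
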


\begin{proof}
We use Kosaki's symmetric $L^p$-spaces $L^p(e_0\Ne e_0,\rho\circ\gamma)$ and
$L^p(e\Me e,\rho)$ (see \eqref{F-C.4} in Appendix \ref{Sec-C}). The map
$(\gamma_\rho^*)_*:L^1(e_0\Ne e_0)\to L^1(e\Me e)$ is contractive
with respect to $\|\cdot\|_1$. Its restriction to $h_{\rho\circ\gamma}^{1/2}\Ne h_{\rho\circ\gamma}^{1/2}$
($\subseteq L^1(e_0\Ne e_0)$) is given by \eqref{eq:petzdual}, which is also contractive with respect
to $\|\cdot\|_{\infty,\rho\circ\gamma}$ and $\|\cdot\|_{\infty,\rho}$. Hence it follows from the Riesz--Thorin
theorem that $(\gamma_\rho^*)_*$ is a contraction from $L^p(e_0\Ne e_0,\rho\circ\gamma)$ to
$L^p(e\Me e,\rho)$ for any $p\in(1,\infty)$. By \eqref{F-C.4} note that we have isometric isomorphisms
\begin{align*}
k\in L^p(e_0\Ne e_0)&\mapsto h_{\rho\circ\gamma}^{p-1\over2p}kh_{\rho\circ\gamma}^{p-1\over2p}
\in L^p(e_0\Ne e_0,\rho\circ\gamma), \\
h\in L^p(e\Me e)&\mapsto h_\rho^{p-1\over2p}hh_\rho^{p-1\over2p}\in L^p(e\Me e,\rho).
\end{align*}
Hence we can define a contraction $\gamma_{\rho,p}^*:L^p(e_0\Ne e_0)\to L^p(e\Me e)$ by
\eqref{gamma-rho-p2}. Then, for $k=h_{\rho\circ\gamma}^{1\over2p}bh_{\rho\circ\gamma}^{1\over2p}$
with $b\in\Ne$ we have
\[
h_\rho^{p-1\over2p}\gamma_{\rho,p}^*\Bigl(h_{\rho\circ\gamma}^{1\over2p}b
h_{\rho\circ\gamma}^{1\over2p}\Bigr)h_\rho^{p-1\over2p}
=(\gamma_\rho^*)_*\Bigl(h_{\rho\circ\gamma}^{1\over2}bh_{\rho\circ\gamma}^{1\over2}\Bigr)
=h_\rho^{1\over2}\gamma(b)h_\rho^{1\over2},
\]
so that \eqref{gamma-rho-p1} holds.
Since $h_{\rho\circ\gamma}^{1\over 2p}\Ne h^{1\over
2p}_{\rho\circ\gamma}$ is dense in $L^p(e_0 \Ne e_0)$, this proves the first part of the statement.

Let $\rho_n$ be a sequence as required and let $k\in L^p(e_0 \Ne e_0)$. By the
assumptions on the supports, $\gamma^*_{\rho_n,p}$ is well defined on $k$ for all $n$.
Further, we may assume that $k=h_{\rho\circ\gamma}^{1\over 2p}b
h_{\rho\circ\gamma}^{1\over 2p}$ for some $b\in \Ne$, since the set of such elements is
dense in $L^p(e_0 \Ne e_0)$ and all the maps are contractions. 
Put $k_n:=h_{\rho_n\circ\gamma}^{1\over 2p}b
h_{\rho_n\circ\gamma}^{1\over 2p}$, then we have
\[
\gamma^*_{\rho,p}(k)={h_\rho^{1\over2p}\gamma(b)h_\rho^{1\over2p},}\qquad
\gamma^*_{\rho_n,p}(k_n)= {h_{\rho_n}^{1\over2p}\gamma(b)h_{\rho_n}^{1\over2p},}
\]
and we have $k_n\to k$ in $L^p(\Ne)$ and  $\gamma^*_{\rho_n,p}(k_n)\to
\gamma^*_{\rho,p}(k)$ in $L^p(\Me)$. Indeed, this follows by the H\"older inequality and the
continuity of the map $L^1(\Me)^+\ni h\mapsto {h^{1\over2p}\in L^{2p}(\Me)^+}$; see
\cite[Lemma 3.4]{kosaki1984applicationsuc}. Therefore
\[
\|\gamma^*_{\rho_n,p}(k)-\gamma^*_{\rho,p}(k)\|_p\le
\|\gamma^*_{\rho_n,p}(k-k_n)\|_p+\|\gamma^*_{\rho_n,p}(k_n)-\gamma^*_{\rho,p}(k)\|_p\to 0,
\]
showing the latter statement.
\end{proof}

\begin{lemma}\label{lemma:dpi} Let $\gamma:\Ne\to \Me$ be a normal positive unital map, and
let $\rho\in \Me_*^+$, {$\rho\ne0$, and} $b\in \Ne^+$. 
\begin{enumerate}
\item[(i)]  If $p\in [1/2,1)$, then 
\[
\Big\|h_{\rho\circ\gamma}^{\frac{1}{2p}}bh_{\rho\circ\gamma}^{\frac{1}{2p}}\Big\|_p\le
\Big\|h_{\rho}^{\frac{1}{2p}}\gamma(b)h_{\rho}^{\frac{1}{2p}}\Big\|_p.
\]

\item[(ii)]  If $p\in [1,\infty]$, the inequality reverses.
\end{enumerate}
\end{lemma}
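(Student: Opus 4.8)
The plan is to treat the two parts separately: part (ii) is immediate from the machinery already set up, while part (i) carries essentially all the difficulty.

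For (ii) I would simply observe that for $b\in\Ne^+$ the element $h_{\rho\circ\gamma}^{1/2p}bh_{\rho\circ\gamma}^{1/2p}$ lies in $L^p(e_0\Ne e_0)$ and, by \eqref{gamma-rho-p1}, is mapped by the contraction $\gamma^*_{\rho,p}$ of Lemma \ref{lemma:pcontraction} to $h_\rho^{1/2p}\gamma(b)h_\rho^{1/2p}$. Hence for $p\in[1,\infty)$ the contractivity of $\gamma^*_{\rho,p}$ gives $\|h_\rho^{1/2p}\gamma(b)h_\rho^{1/2p}\|_p\le\|h_{\rho\circ\gamma}^{1/2p}bh_{\rho\circ\gamma}^{1/2p}\|_p$, which is precisely the asserted reversed inequality. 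The remaining case $p=\infty$ (where $\tfrac1{2p}=0$) reduces, after the standard passage to the faithful situation described before this subsection, to $\|\gamma(b)\|\le\|b\|$, i.e. to the contractivity of the positive unital map $\gamma$.

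For (i) I would exploit the duality of the quasi-norm $\|\cdot\|_p$ on the positive cone for $0<p<1$. Writing $p'=p/(p-1)<0$ for the reverse-H\"older conjugate, one has $\Tr(fg)\ge\|f\|_p\|g\|_{p'}$ for positive $f,g$ (with $\|g\|_{p'}:=(\Tr g^{p'})^{1/p'}$ for invertible $g$), so that $\|f\|_p=\inf_{g>0}\Tr(fg)/\|g\|_{p'}$. Applying this to $f=h_{\rho\circ\gamma}^{1/2p}bh_{\rho\circ\gamma}^{1/2p}$ and inserting test elements of the form $g=h_{\rho\circ\gamma}^{1/2p'}\gamma^*_\rho(c)h_{\rho\circ\gamma}^{1/2p'}$ with $c\in\Me^+$, a cancellation of $h_{\rho\circ\gamma}$-powers (using $\tfrac1{2p}+\tfrac1{2p'}=\tfrac12$) together with \eqref{eq:petzdual} and its consequence $(\gamma_0)_*\bigl(h_\rho^{1/2}ch_\rho^{1/2}\bigr)=h_{\rho\circ\gamma}^{1/2}\gamma^*_\rho(c)h_{\rho\circ\gamma}^{1/2}$ from \eqref{eq:petzdual2} gives $\Tr(fg)=\Tr\bigl(h_\rho^{1/2p}\gamma(b)h_\rho^{1/2p}\cdot h_\rho^{1/2p'}ch_\rho^{1/2p'}\bigr)$; that is, the numerator coincides exactly with the one produced on the $\Me$-side by the same $c$. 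Choosing $c$ to be the reverse-H\"older optimizer for the right-hand side $F=h_\rho^{1/2p}\gamma(b)h_\rho^{1/2p}$ then reduces the target $\|f\|_p\le\|F\|_p$ to the single comparison $\|g\|_{p'}\ge\|G\|_{p'}$ for the corresponding $\Me$-side element $G=h_\rho^{1/2p'}ch_\rho^{1/2p'}$.

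The crux, and the main obstacle, is exactly this conjugate-index comparison. Using $\|a\|_{p'}=\|a^{-1}\|_{-p'}^{-1}$ with $-p'\ge1$ turns it into a genuine $L^s$-norm inequality ($s\ge1$) for the Petz dual $\gamma^*_\rho$, which is again normal positive unital with $(\rho\circ\gamma)\circ\gamma^*_\rho=\rho$; but the naive reduction to part (ii) applied to $\gamma^*_\rho$ would require the operator-Jensen inequality $\gamma^*_\rho(c^{-1})\ge\gamma^*_\rho(c)^{-1}$, i.e. two-positivity, which is \emph{not} assumed in the lemma. To remain within merely positive unital maps I would therefore establish the required $L^p$-contraction directly by complex interpolation on Kosaki's $L^p$-spaces (Appendix \ref{Sec-C}), interpolating between the $L^1$- and $L^\infty$-contractivity of $\gamma$ (equivalently of $\gamma^*_\rho$), which is the Stein/Riesz--Thorin-type argument that makes positivity alone sufficient. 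The two delicate features I expect to fight with are that the regime $p<1$ is not locally convex — so the quasi-norm must be accessed through the positive-cone duality above rather than a Banach dual — and that the analytic family must be arranged so that both boundary estimates collapse to the elementary $L^1$ and $L^\infty$ bounds. The reduction to the faithful case and the boundary equality at $p=1$ (where both (i) and (ii) read $\rho(\gamma(b))=\rho(\gamma(b))$) serve as consistency checks.
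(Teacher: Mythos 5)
Part (ii) of your proposal is correct and coincides with the paper's argument: the inequality is just the contractivity of the map $\gamma^*_{\rho,p}$ from Lemma \ref{lemma:pcontraction} applied to $h_{\rho\circ\gamma}^{1/2p}bh_{\rho\circ\gamma}^{1/2p}$, with the trivial $p=\infty$ case handled by unitality and positivity of $\gamma$.

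Part (i) contains a genuine gap. After your reverse-H\"older reduction, what remains to be shown is
\[
\Big\|h_{\rho\circ\gamma}^{1\over2s}\gamma^*_\rho(c)^{-1}h_{\rho\circ\gamma}^{1\over2s}\Big\|_s
\le\Big\|h_{\rho}^{1\over2s}c^{-1}h_{\rho}^{1\over2s}\Big\|_s,\qquad s=-p'\ge1,
\]
and the fix you propose --- complex interpolation between the $L^1$- and $L^\infty$-contractivity of $\gamma^*_\rho$ --- cannot supply it. Interpolation only reproduces the linear contraction of part (ii) for $\gamma^*_\rho$, i.e.\ a bound on $\|h_{\rho\circ\gamma}^{1/2s}\gamma^*_\rho(d)h_{\rho\circ\gamma}^{1/2s}\|_s$ for a fixed argument $d$; the left-hand side above involves the \emph{inverse of the image}, and $c\mapsto\gamma^*_\rho(c)^{-1}$ is not a linear (or sublinear) operator to which a Riesz--Thorin/Stein scheme applies. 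Nor can the complex method between $L^1$ and $L^\infty$ ever reach the quasi-Banach exponents $p\in[1/2,1)$ directly. Ironically, the obstacle you back away from is not an obstacle: Choi's inequality $\Phi(c^{-1})\ge\Phi(c)^{-1}$ holds for merely \emph{positive} unital maps (no $2$-positivity is needed); the paper uses exactly this for a normal positive unital $\gamma$ in the proof of Theorem \ref{thm:dpi}, citing \cite{choi1974aschwarz}. With it, your scheme would close via part (ii) applied to $\gamma^*_\rho$ and Lemma \ref{lemma:order1} --- though you would still owe a rigorous treatment of the reverse-H\"older variational formula for the Haagerup quasi-norm, since your test elements contain negative powers $h_{\rho\circ\gamma}^{1/2p'}$ (not $\tau$-measurable in type III) and your optimizer $c$ need not lie in $\Me^{++}$. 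The paper sidesteps all of this: it notes that the two sides of (i) are $Q_{p,p}(h_\omega\|h_{\rho\circ\gamma})$ and $Q_{p,p}(\beta_*(h_\omega)\|\beta_*(h_{\rho\circ\gamma}))$ with $\beta=\gamma^*_\rho$ and $h_\omega=h_{\rho\circ\gamma}^{1/2}bh_{\rho\circ\gamma}^{1/2}$, and invokes the known DPI for the sandwiched R\'enyi divergence $D_{p,p}$, $p\in[1/2,1)$, under normal positive unital maps \cite{jencova2021renyi}.
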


\begin{proof} Let us denote $\beta:=\gamma_\rho^*$ and let $\omega\in {\Ne_*^+}$ be such
that 
$h_\omega:=h_{\rho\circ\gamma}^{\frac12}bh_{\rho\circ\gamma}^{\frac12}\in L^1(\Ne)^+$. Then
$\beta$ is a normal positive unital map, and {by \eqref{eq:petzdual}} we have
\[
\beta_*(h_\omega)=h_\rho^{\frac12}\gamma(b)h_\rho^{\frac12},\qquad
\beta_*(h_{\rho\circ\gamma})=h_\rho.
\]
Let $p\in [1/2,1)$, then  
\begin{align*}
\Big\|h_{\rho}^{\frac{1}{2p}}\gamma(b)h_{\rho}^{\frac{1}{2p}}\Big\|^p_p
&=\Big\|h_\rho^{\frac{1-p}{2p}}\beta_*(h_\omega)h_\rho^{\frac{1-p}{2p}}\Big\|_p^p
=Q_{p,p}(\beta_*(h_\omega)\|h_\rho)=Q_{p,p}(\beta_*(h_\omega)\|\beta_*(h_{\rho\circ\gamma}))\\
&\ge  Q_{p,p}(h_\omega\|h_{\rho\circ\gamma})=\Big\|h_{\rho\circ\gamma}^{\frac{1-p}{2p}}h_\omega
h_{\rho\circ\gamma}^{\frac{1-p}{2p}}\Big\|_p^p
=\Big\|h_{\rho\circ\gamma}^{\frac{1}{2p}}bh_{\rho\circ\gamma}^{\frac{1}{2p}}\Big\|^p_p.
\end{align*}
Here we have used the DPI for the sandwiched R\'enyi  divergence $D_{\alpha,\alpha}$ for
$\alpha\in [1/2,1)$; see \cite[Theorem 4.1]{jencova2021renyi}. This proves (i).
The case (ii) is immediate from Lemma \ref{lemma:pcontraction}. This was proved also 
in \cite{kato2023onrenyi} (see Eq.~(A3) therein), by using the same argument.
\end{proof}

\begin{theorem}[DPI] \label{thm:dpi}
Let $\psi,\varphi\in \Me_*^+$, $\psi\ne 0$ and let $\gamma:\Ne\to \Me$ be a normal positive
unital map. Assume either of the following conditions:
\begin{enumerate}
\item[(i)] $0<\alpha<1$, $\max\{\alpha,1-\alpha\}\le z$,
\item[(ii)] $\alpha>1$, $\max\{\alpha/2,\alpha-1\}\le z\le \alpha$.
\end{enumerate}
Then we have
\[
D_{\alpha,z}(\psi\circ\gamma\|\varphi\circ\gamma)\le D_{\alpha,z}(\psi\|\varphi).
\]
\end{theorem}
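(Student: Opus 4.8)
The plan is to read off the DPI directly from the variational expressions of Theorem~\ref{thm:variational} combined with the pointwise contraction estimates of Lemma~\ref{lemma:dpi}, treating the two parameter regimes separately. Since $\gamma$ is unital, $(\psi\circ\gamma)(\1)=\psi(\1)$, and writing $D_{\alpha,z}=\tfrac1{\alpha-1}\log\tfrac{Q_{\alpha,z}}{\psi(\1)}$ the sign of $\tfrac1{\alpha-1}$ shows the asserted inequality is equivalent to $Q_{\alpha,z}(\psi\circ\gamma\|\varphi\circ\gamma)\ge Q_{\alpha,z}(\psi\|\varphi)$ in case (i) and to $Q_{\alpha,z}(\psi\circ\gamma\|\varphi\circ\gamma)\le Q_{\alpha,z}(\psi\|\varphi)$ in case (ii). I may assume $\varphi\ne0$, since otherwise $D_{\alpha,z}(\psi\|\varphi)=+\infty$ and nothing is to prove. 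I will also use $h_{\psi\circ\gamma}=\gamma_*(h_\psi)$, $h_{\varphi\circ\gamma}=\gamma_*(h_\varphi)$ and the identity $\Tr\bigl((a^{1/2}h^{t}a^{1/2})^q\bigr)=\Tr\bigl((h^{t/2}ah^{t/2})^q\bigr)=\|h^{t/2}ah^{t/2}\|_q^q$, which follows from $\Tr((c^*c)^q)=\Tr((cc^*)^q)$ with $c=h^{t/2}a^{1/2}$; this recasts each trace term as a Haagerup norm to which Lemma~\ref{lemma:dpi} applies.

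\textbf{Case (i)}: $0<\alpha<1$, $\max\{\alpha,1-\alpha\}\le z$. Here $p:=z/\alpha\ge1$ and $r:=z/(1-\alpha)\ge1$. Fix $a\in\Ne^{++}$ and let $F_\Ne(a)$ be the expression under the infimum in \eqref{F-2.1} for $\psi\circ\gamma,\varphi\circ\gamma$. Rewriting its first term as $\alpha\bigl\|h_{\psi\circ\gamma}^{\frac{\alpha}{2z}}\,a\,h_{\psi\circ\gamma}^{\frac{\alpha}{2z}}\bigr\|_{z/\alpha}^{z/\alpha}$ and applying Lemma~\ref{lemma:dpi}(ii) (with $\rho=\psi$, $b=a$, $p=z/\alpha\ge1$) bounds it below by $\alpha\bigl\|h_{\psi}^{\frac{\alpha}{2z}}\gamma(a)h_{\psi}^{\frac{\alpha}{2z}}\bigr\|_{z/\alpha}^{z/\alpha}$; likewise the second term, rewritten as $(1-\alpha)\bigl\|h_{\varphi\circ\gamma}^{\frac{1-\alpha}{2z}}\,a^{-1}\,h_{\varphi\circ\gamma}^{\frac{1-\alpha}{2z}}\bigr\|_{z/(1-\alpha)}^{z/(1-\alpha)}$, is bounded below by the same expression with $h_\varphi$ and $\gamma(a^{-1})$ (Lemma~\ref{lemma:dpi}(ii) with $\rho=\varphi$, $b=a^{-1}$, $r\ge1$). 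At this point $\gamma(a^{-1})$ rather than $\gamma(a)^{-1}$ appears; I then invoke the operator inequality $\gamma(a^{-1})\ge\gamma(a)^{-1}$, which, together with the monotonicity of the Haagerup norm under the operator order (Lemma~\ref{lemma:order1}) and trace symmetry, replaces $\gamma(a^{-1})$ by $\gamma(a)^{-1}$ at the cost of decreasing the bound. The outcome is $F_\Ne(a)\ge F_\Me(\gamma(a))\ge Q_{\alpha,z}(\psi\|\varphi)$, where $F_\Me$ is the corresponding expression over $\Me^{++}$ and $\gamma(a)\in\Me^{++}$; taking the infimum over $a\in\Ne^{++}$ gives the claim.

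\textbf{Case (ii)}: $\alpha>1$, $\max\{\alpha/2,\alpha-1\}\le z\le\alpha$. Now I use the supremum expression \eqref{F-2.2}, legitimate since $z\ge\alpha/2$, both for $\psi\circ\gamma,\varphi\circ\gamma$ and for $\psi,\varphi$. Here $p:=z/\alpha\in[1/2,1]$ and $z/(\alpha-1)\ge1$. Fix $a\in\Ne^+$ and let $G_\Ne(a)$ be the bracket in \eqref{F-2.2} for $\psi\circ\gamma,\varphi\circ\gamma$. The first term, with coefficient $+\alpha$, is rewritten as $\alpha\bigl\|h_{\psi\circ\gamma}^{\frac{\alpha}{2z}}\,a\,h_{\psi\circ\gamma}^{\frac{\alpha}{2z}}\bigr\|_{z/\alpha}^{z/\alpha}$ and bounded \emph{above}, via Lemma~\ref{lemma:dpi}(i) (the case $p=z/\alpha\in[1/2,1)$, with equality when $z=\alpha$), by $\alpha\bigl\|h_{\psi}^{\frac{\alpha}{2z}}\gamma(a)h_{\psi}^{\frac{\alpha}{2z}}\bigr\|_{z/\alpha}^{z/\alpha}$. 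The second term, rewritten as $(\alpha-1)\bigl\|h_{\varphi\circ\gamma}^{\frac{\alpha-1}{2z}}\,a\,h_{\varphi\circ\gamma}^{\frac{\alpha-1}{2z}}\bigr\|_{z/(\alpha-1)}^{z/(\alpha-1)}$, is bounded \emph{below} by the analogous expression with $h_\varphi$ and $\gamma(a)$ (Lemma~\ref{lemma:dpi}(ii), $r=z/(\alpha-1)\ge1$); since it enters with the minus sign, this yields $G_\Ne(a)\le G_\Me(\gamma(a))\le Q_{\alpha,z}(\psi\|\varphi)$. Taking the supremum over $a\in\Ne^+$ gives the claim, and crucially no inverse appears here.

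The only genuinely delicate point is in case (i): the appearance of $\gamma(a^{-1})$ rather than $\gamma(a)^{-1}$. The inequality $\gamma(a^{-1})\ge\gamma(a)^{-1}$ is exactly where positivity-without-complete-positivity must be controlled; it holds for \emph{every} normal positive unital map because $a$ and $a^{-1}$ lie in the abelian von Neumann subalgebra generated by $a$ and $\1$, on which $\gamma$ is automatically completely positive, so the operator Jensen inequality for the operator convex function $t\mapsto t^{-1}$ applies. By contrast, case (ii) is a direct term-by-term application of Lemma~\ref{lemma:dpi}, precisely because the supremum expression contains no inverse. I also note that the hypotheses on $(\alpha,z)$ are exactly what force the relevant exponents into the ranges $p,r\ge1$ in case (i) and $p\in[1/2,1]$, $r\ge1$ in case (ii), where the correct direction of Lemma~\ref{lemma:dpi} is available; this is why the DPI bounds here coincide with the finite-dimensional ones.
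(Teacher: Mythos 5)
Your proposal is correct and follows essentially the same route as the paper: in case (i) it combines the infimum expression \eqref{F-2.1} with Lemma \ref{lemma:dpi}(ii) and the Choi inequality $\gamma(a)^{-1}\le\gamma(a^{-1})$ (your justification via the abelian subalgebra generated by $a$ is the standard proof of that inequality, which the paper simply cites), and in case (ii) it combines the supremum expression \eqref{F-2.2}, valid for $z\ge\alpha/2$, with both parts of Lemma \ref{lemma:dpi}. The only cosmetic difference is the direction in which the chain of inequalities is written before taking the infimum/supremum.
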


\begin{proof}
Under the conditions (i), the DPI was proved in \cite[Theorem 1(viii)]{kato2023onrenyi}.
Since parts of this proof will be used below, we repeat it here.

Assume the conditions in (i) and put $p:=\frac{z}{\alpha}$, $r:=\frac{z}{1-\alpha}$, so that $p,r\ge 1$. 
For any $b\in \Ne^{++}$, we have by  the Choi inequality \cite{choi1974aschwarz} 
that  $\gamma(b)^{-1}\le \gamma(b^{-1})$, so that  by Lemmas \ref{lemma:order1} and
\ref{lemma:dpi}(ii), we have 
\begin{equation}\label{eq:ineq}
\Big\|h_\varphi^{\frac{1}{2r}}\gamma(b)^{-1}h_\varphi^{\frac{1}{2r}}\Big\|_r\le
\Big\|h_\varphi^{\frac{1}{2r}}\gamma(b^{-1})h_\varphi^{\frac{1}{2r}}\Big\|_r\le
\Big\|h_{\varphi\circ\gamma}^{\frac{1}{2r}}b^{-1}h_{\varphi\circ\gamma}^{\frac{1}{2r}}\Big\|_r.
\end{equation}
Using the variational expression in Theorem \ref{thm:variational}(i), we have
\begin{align*}
Q_{\alpha,z}(\psi\|\varphi)&\le \alpha\Big\|h_\psi^{\frac{1}{2p}}\gamma(b)h_\psi^{\frac{1}{2p}}\Big\|_p^p+
(1-\alpha)\Big\|h_\varphi^{\frac{1}{2r}}\gamma(b)^{-1}h_\varphi^{\frac{1}{2r}}\Big\|_r^r\\
&\le \alpha\Big\|h_{\psi\circ\gamma}^{\frac{1}{2p}}bh_{\psi\circ\gamma}^{\frac{1}{2p}}\Big\|_p^p+
(1-\alpha)\Big\|h_{\varphi\circ\gamma}^{\frac{1}{2r}}b^{-1}h_{\varphi\circ\gamma}^{\frac{1}{2r}}\Big\|_r^r.
\end{align*}
 Since this holds for all
$b\in \Ne^{++}$, it follows that $Q_{\alpha,z}(\psi\|\varphi)\le
Q_{\alpha}(\psi\circ\gamma\|\varphi\circ\gamma)$, which proves the DPI in this case.

Assume next the condition (ii), and put $p:=\frac{z}{\alpha}$, $q:=\frac{z}{\alpha-1}$, so
that $p\in [1/2,1)$ and $q\ge 1$. Using Theorem
\ref{thm:variational}(ii), we get for any $b\in \Ne^+$,
\begin{align*}
Q_{\alpha,z}(\psi\|\varphi)&\ge
\alpha\Big\|h_\psi^{\frac{1}{2p}}\gamma(b)h_\psi^{\frac{1}{2p}}\Big\|_p^p-
(\alpha-1)\Big\|h_\varphi^{\frac{1}{2q}}\gamma(b)h_\varphi^{\frac{1}{2q}}\Big\|_q^q\\
&\ge \alpha\Big\|h_{\psi\circ\gamma}^{\frac{1}{2p}}bh_{\psi\circ\gamma}^{\frac{1}{2p}}\Big\|_p^p-
(\alpha-1)\Big\|h_{\varphi\circ\gamma}^{\frac{1}{2q}}bh_{\varphi\circ\gamma}^{\frac{1}{2q}}\Big\|_q^q,
\end{align*}
here we used both (i) and (ii) in Lemma \ref{lemma:dpi}. Again, since this holds for all
$b\in \Ne^+$, we get the desired inequality.
\end{proof}

Once the DPI has been proved as above, the joint concavity/convexity of $Q_{\alpha,z}(\psi\|\ffi)$ in $\psi,\ffi$
follows immediately by taking a unital positive map $\gamma:\Me\to\Me\oplus\Me$, $\gamma(a):=a\oplus a$,
as was shown in \cite[Theorem 1(ix)]{kato2023onrenyi}. Here we state the next proposition for completeness.

\begin{prop}\label{prop:jointconv}
If $\alpha$ and $z$ satisfy condition (i) (resp., (ii)) in Theorem \ref{thm:dpi}, then
$(\psi,\ffi)\mapsto Q_{\alpha,z}(\psi\|\ffi)$ is jointly concave (resp., jointly convex) on $\Me_*^+\times\Me_*^+$.
\end{prop}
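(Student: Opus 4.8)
The plan is to follow the route indicated just before the statement: deduce joint concavity/convexity from the DPI (Theorem \ref{thm:dpi}) applied to a copy map into a direct sum. Concretely, I would invoke Theorem \ref{thm:dpi} with the larger algebra taken to be $\Me\oplus\Me$ and the smaller one taken to be $\Me$, using the normal positive unital map $\gamma:\Me\to\Me\oplus\Me$, $\gamma(a):=a\oplus a$ (it is unital since $\gamma(\1)=\1\oplus\1$). For fixed $\psi_1,\psi_2,\ffi_1,\ffi_2\in\Me_*^+$ and $\lambda\in(0,1)$, set $\Psi:=(\lambda\psi_1,(1-\lambda)\psi_2)$ and $\Phi:=(\lambda\ffi_1,(1-\lambda)\ffi_2)$ in $(\Me\oplus\Me)_*^+$, so that $\Psi\circ\gamma=\lambda\psi_1+(1-\lambda)\psi_2$ and $\Phi\circ\gamma=\lambda\ffi_1+(1-\lambda)\ffi_2$; thus applying the DPI to the pair $\Psi,\Phi$ produces exactly the convex-combination inequality we seek.

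Two elementary structural facts about $Q_{\alpha,z}$ drive the argument, and I would record them first. First, $Q_{\alpha,z}$ is additive over direct sums: since $L^1(\Me\oplus\Me)=L^1(\Me)\oplus L^1(\Me)$ and all the operations in Definition \ref{defi:renyi} (fractional powers, products, the trace, and for $\alpha>1$ the representation of Lemma \ref{lemma:renyi_2z}) respect the block structure, one has $Q_{\alpha,z}((\mu_1,\mu_2)\|(\nu_1,\nu_2))=Q_{\alpha,z}(\mu_1\|\nu_1)+Q_{\alpha,z}(\mu_2\|\nu_2)$. Second, $Q_{\alpha,z}$ is jointly homogeneous of degree one, $Q_{\alpha,z}(t\psi\|t\ffi)=t\,Q_{\alpha,z}(\psi\|\ffi)$ for $t>0$, which is immediate from $h_{t\psi}=t\,h_\psi$ together with the fact that the total exponent of $t$ accumulated inside the trace equals $\tfrac{1-\alpha}{z}+\tfrac{\alpha}{z}=\tfrac1z$ before the final power $z$ (and, via Lemma \ref{lemma:renyi_2z}, the factorization $h_\psi^{\alpha/2z}=y\,h_\ffi^{(\alpha-1)/2z}$ scales $y$ by $t^{1/2z}$, giving the same net factor $t$). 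Combining these two facts yields $Q_{\alpha,z}(\Psi\|\Phi)=\lambda Q_{\alpha,z}(\psi_1\|\ffi_1)+(1-\lambda)Q_{\alpha,z}(\psi_2\|\ffi_2)$.

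Next I would translate the DPI for $D_{\alpha,z}$ into an inequality for $Q_{\alpha,z}$. Because $\gamma$ is unital, $(\Psi\circ\gamma)(\1)=\Psi(\1)$, so the normalization constant in the definition of $D_{\alpha,z}$ is common to both sides of the DPI and cancels; hence $D_{\alpha,z}(\Psi\circ\gamma\|\Phi\circ\gamma)\le D_{\alpha,z}(\Psi\|\Phi)$ reduces to comparing $\tfrac1{\alpha-1}\log Q_{\alpha,z}(\Psi\circ\gamma\|\Phi\circ\gamma)$ with $\tfrac1{\alpha-1}\log Q_{\alpha,z}(\Psi\|\Phi)$. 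Under condition (i) the factor $\tfrac1{\alpha-1}$ is negative, so monotonicity of $\log$ flips the inequality and gives $Q_{\alpha,z}(\Psi\circ\gamma\|\Phi\circ\gamma)\ge Q_{\alpha,z}(\Psi\|\Phi)$, which is precisely joint concavity once the two structural facts above are inserted; under condition (ii) the factor is positive, the inequality is preserved, and one obtains joint convexity.

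The step needing the most care, though still routine, is the additivity over direct sums when $\alpha>1$, since there $Q_{\alpha,z}$ may equal $+\infty$ and is defined through the $L^{2z}$-representation rather than a closed trace formula. I would verify that the defining factorization $h_\Psi^{\alpha/2z}=y\,h_\Phi^{(\alpha-1)/2z}$ on $\Me\oplus\Me$ exists if and only if it exists blockwise, with $y=y_1\oplus y_2$, so that both finiteness and the value $\|y\|_{2z}^{2z}=\|y_1\|_{2z}^{2z}+\|y_2\|_{2z}^{2z}$ split correctly along the blocks; the cases where some $Q_{\alpha,z}(\psi_i\|\ffi_i)=\infty$ then make the target inequality trivial, and the degenerate case where the mixture vanishes is handled directly. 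Everything else is substitution, in agreement with the remark that the statement ``follows immediately'' from Theorem \ref{thm:dpi}.
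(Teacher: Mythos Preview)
Your proposal is correct and follows exactly the route the paper indicates: apply Theorem \ref{thm:dpi} to the copy map $\gamma:\Me\to\Me\oplus\Me$, $\gamma(a)=a\oplus a$, and use the direct-sum additivity and degree-one homogeneity of $Q_{\alpha,z}$ (which the paper attributes to \cite[Theorem 1(ix)]{kato2023onrenyi}). Your added care with the $\alpha>1$ blockwise factorization and the $+\infty$ cases fills in the details the paper leaves implicit, but the argument is the same.
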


\subsection{Martingale convergence}\label{Sec-3.2}

An important consequence of DPI is the martingale convergence property that will be
proved in this subsection. Here assume that $\cM$ is a $\sigma$-finite von Neumann algebra.

\begin{theorem}\label{thm:martingale}
Let $\psi,\varphi\in \Me_*^+$, $\psi\ne 0$, and let $\{\cM_i\}$ be an increasing net of von Neumann
subalgebras of $\cM$ containing the unit of $\cM$ such that $\cM=\bigl(\bigcup_i\cM_i\bigr)''$.
Assume that $\alpha$ and $z$ satisfy the DPI bounds (that is, condition (i) or (ii) in
Theorem \ref{thm:dpi}). Then we have
\begin{align}\label{eq:martingale}
D_{\alpha,z}(\psi\|\ffi)=\lim_iD_{\alpha,z}(\psi|_{\cM_i}\|\ffi|_{\cM_i})
\quad\mbox{increasingly}.
\end{align}
\end{theorem}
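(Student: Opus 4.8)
The plan is to combine the data processing inequality (Theorem \ref{thm:dpi}) with the variational expressions (Theorem \ref{thm:variational}), thereby reducing the statement to a martingale-type convergence of the ``sandwiched'' $L^p$-norms that occur inside those expressions. I first record the easy direction. Each inclusion $\iota_i:\cM_i\hookrightarrow\cM$, and each inclusion $\cM_i\hookrightarrow\cM_j$ for $i\le j$, is a normal, completely positive, unital map, and $\psi|_{\cM_i}=\psi\circ\iota_i$, $\ffi|_{\cM_i}=\ffi\circ\iota_i$. Hence Theorem \ref{thm:dpi} yields
\[
D_{\alpha,z}(\psi|_{\cM_i}\|\ffi|_{\cM_i})\le D_{\alpha,z}(\psi|_{\cM_j}\|\ffi|_{\cM_j})\le D_{\alpha,z}(\psi\|\ffi),\qquad i\le j,
\]
so the net on the left is increasing and bounded above; its limit $L$ exists and satisfies $L\le D_{\alpha,z}(\psi\|\ffi)$. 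It remains to prove $L\ge D_{\alpha,z}(\psi\|\ffi)$. By a standard support reduction, using Remark \ref{remark:defi} together with $\sigma$-finiteness to provide a faithful normal reference state, I may assume $\psi$ and $\ffi$ faithful.

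Passing to $Q_{\alpha,z}$ (which is monotone in $D_{\alpha,z}$ with the sign of $\alpha-1$), the remaining inequality is equivalent to $Q_{\alpha,z}(\psi|_{\cM_i}\|\ffi|_{\cM_i})\to Q_{\alpha,z}(\psi\|\ffi)$. I would fix, for a given $\eps>0$, a near-optimal element $a$ for the variational expression on $\cM$, which by $\sigma$-weak density of $\bigcup_i\cM_i$ and continuity of the variational functional (adding $\eps\1$ to keep positivity/invertibility, since $\1\in\cM_i$) may be taken in some $\cM_{i_0}$, and evaluate the corresponding expression on $\cM_i$ at this same $a$ for $i\ge i_0$. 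In the case $0<\alpha<1$ the infimum expression \eqref{F-2.1} gives $Q_{\alpha,z}(\psi|_{\cM_i}\|\ffi|_{\cM_i})\le f_{\cM_i}(a)$, while for $\alpha>1$ the supremum expression \eqref{F-2.2} gives $Q_{\alpha,z}(\psi|_{\cM_i}\|\ffi|_{\cM_i})\ge F_{\cM_i}(a)$; in either case the desired bound follows once I know $f_{\cM_i}(a)\to f_{\cM}(a)$, respectively $F_{\cM_i}(a)\to F_{\cM}(a)$, where the subscript indicates the algebra over which the Haagerup $L^p$-norms are computed.

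The crux, and the main obstacle, is therefore the convergence, for fixed $b\in\bigcup_i\cM_i$ and $\rho\in\{\psi,\ffi\}$,
\[
\big\|h_{\rho|_{\cM_i}}^{\frac1{2p}}\,b\,h_{\rho|_{\cM_i}}^{\frac1{2p}}\big\|_{p}\;\longrightarrow\;\big\|h_{\rho}^{\frac1{2p}}\,b\,h_{\rho}^{\frac1{2p}}\big\|_{p},
\]
the left norm being taken in $L^p(\cM_i)$ and the right in $L^p(\cM)$. By Lemma \ref{lemma:dpi}, applied to the inclusions, this net is monotone (decreasing when $p\ge1$, increasing when $p\in[1/2,1)$) and bounded by the right-hand side, so a limit exists; the substance is equality in the limit. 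I would obtain it from Lemma \ref{lemma:pcontraction}: the Petz-dual contractions $(\iota_i)^*_{\rho,p}:L^p(\cM_i)\to L^p(\cM)$ carry $h_{\rho|_{\cM_i}}^{1/2p}b\,h_{\rho|_{\cM_i}}^{1/2p}$ to $h_\rho^{1/2p}b\,h_\rho^{1/2p}$, so it suffices to show these maps are asymptotically isometric on such elements. This I expect to deduce from the martingale convergence of the associated generalized conditional expectations $\cE_{\cM_i,\rho}\to\mathrm{id}$ (in the $\sigma$-strong topology), together with the density of $h_\rho^{1/2p}\cM h_\rho^{1/2p}$ in $L^p(\cM)$ and continuity of $h\mapsto h^{1/2p}$ on $L^1(\cM)^+$; in the general non-tracial setting I would, if needed, first reduce via Haagerup's reduction theorem to an algebra carrying a normal conditional expectation.

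Granting this convergence, both cases close: for $0<\alpha<1$ one gets $\limsup_i Q_{\alpha,z}(\psi|_{\cM_i}\|\ffi|_{\cM_i})\le f_{\cM}(a)\le Q_{\alpha,z}(\psi\|\ffi)+\eps$, and for $\alpha>1$ one gets $\liminf_i Q_{\alpha,z}(\psi|_{\cM_i}\|\ffi|_{\cM_i})\ge F_{\cM}(a)\ge Q_{\alpha,z}(\psi\|\ffi)-\eps$; letting $\eps\searrow0$ gives the matching inequality and hence \eqref{eq:martingale}. The remaining points are routine: the $a^{-1}$ term in \eqref{F-2.1} is covered by the same convergence applied to $b=a^{-1}\in\cM_i$, the $\varphi$-term for $\alpha>1$ uses the exponent $q=z/(\alpha-1)\ge1$ while the $\psi$-term uses $p=z/\alpha\in[1/2,1]$, and $\psi|_{\cM_i}\ne0$ throughout since $\1\in\cM_i$.
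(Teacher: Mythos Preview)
Your monotonicity step and reduction to the reverse inequality are fine. But from there your route diverges from the paper's and develops a real gap.

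The paper avoids any direct comparison of Haagerup $L^p$-norms across the algebras $\cM_i$ and $\cM$. Instead, after reducing to faithful $\ffi$, it applies the generalized conditional expectation $\cE_{\cM_i,\ffi}:\cM\to\cM_i$: one has $\ffi_i\circ\cE_{\cM_i,\ffi}=\ffi$, and by the Hiai--Tsukada martingale theorem $\psi_i\circ\cE_{\cM_i,\ffi}=\psi\circ\cE_{\cM_i,\ffi}\to\psi$ in norm. Then lower semi-continuity of $D_{\alpha,z}$ in the first variable (established by Kato) combined with DPI applied to $\cE_{\cM_i,\ffi}$ gives
\[
D_{\alpha,z}(\psi\|\ffi)\le\liminf_i D_{\alpha,z}(\psi_i\circ\cE_{\cM_i,\ffi}\|\ffi)\le\liminf_i D_{\alpha,z}(\psi_i\|\ffi_i),
\]
which finishes the proof in one line. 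No variational expressions are needed here.

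Your approach instead pushes the work into the claimed convergence $\big\|h_{\rho|_{\cM_i}}^{1/2p}bh_{\rho|_{\cM_i}}^{1/2p}\big\|_p\to\big\|h_\rho^{1/2p}bh_\rho^{1/2p}\big\|_p$, and this is where things break down. You only sketch why it should hold (``I expect to deduce\dots''), and the ingredients you list do not assemble into a proof: the Petz-dual contractions $(\iota_i)^*_{\rho,p}$ being contractions gives only one inequality, and ``asymptotic isometry'' on a moving sequence $k_i\in L^p(\cM_i)$ is not a consequence of $\cE_{\cM_i,\rho}\to\mathrm{id}$ without further argument. Worse, your fallback suggestion to invoke Haagerup's reduction theorem is circular: in this paper, the passage from general to finite von Neumann algebras (Lemma \ref{L-B.2}) \emph{uses} Theorem \ref{thm:martingale}. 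There is also a smaller issue upstream: the approximation of a near-optimal $a\in\cM^{++}$ by elements of $\bigcup_i\cM_i$ uses only $\sigma$-weak density, and you have not shown that the variational functional $a\mapsto\|h_\psi^{1/2p}ah_\psi^{1/2p}\|_p^p$ is continuous for that topology (it is not, in general, on unbounded sets; one would need Kaplansky plus $\sigma$-strong$^*$ continuity, which requires justification).

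In short: the paper's three-line argument via lower semi-continuity and $\cE_{\cM_i,\ffi}$ is the missing idea. Your variational route might be salvageable with a genuine $L^p$-martingale convergence theorem for the Haagerup spaces, but that is a nontrivial external input you have not supplied, and it is not needed.
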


\begin{proof}
Let $\ffi_i:=\ffi|_{\cM_i}$ and $\psi_i:=\psi|_{\cM_i}$. From Theorem \ref{thm:dpi} it follows that
$D_{\alpha,z}(\psi\|\ffi)\ge D_{\alpha,z}(\psi_i\|\ffi_i)$ for all $i$ and
$i\mapsto D_{\alpha,z}(\psi_i\|\ffi_i)$ is increasing. Hence, to show \eqref{eq:martingale}, it suffices to
prove that
\begin{align}\label{eq:martingale1}
D_{\alpha,z}(\psi\|\ffi)\le\sup_iD_{\alpha,z}(\psi_i\|\ffi_i).
\end{align}
To do this, we may assume that $\ffi$ is faithful. Indeed, assume that
\eqref{eq:martingale1} has been shown when $\ffi$ is
faithful. For general $\ffi\in\cM_*^+$, from the assumption of $\cM$ being $\sigma$-finite, there exists
a $\ffi_0\in\cM_*^+$ with $s(\ffi_0)=\1-s(\ffi)$. Let $\ffi^{(n)}:=\ffi+n^{-1}\ffi_0$ and
$\ffi_i^{(n)}:=\ffi^{(n)}|_{\cM_i}$ for each $n\in\bN$. Thanks to the lower semi-continuity in
\cite[Theorems 1(iv) and 2(iv)]{kato2023onrenyi} and the order relation
\cite[Theorems 1(iii) and 2(iii)]{kato2023onrenyi} we have
\begin{align*}
D_{\alpha,z}(\psi\|\ffi)&\le\liminf_{n\to\infty}D_{\alpha,z}(\psi\|\ffi^{(n)}) \\
&\le\liminf_{n\to\infty}\sup_iD_{\alpha,z}(\psi_i\|\ffi_i^{(n)}) \\
&\le\sup_iD_{\alpha,z}(\psi_i\|\ffi_i),
\end{align*}
proving \eqref{eq:martingale1} for general $\ffi$. Below we assume the faithfulness of $\ffi$ and write
$\cE_{\cM_i,\ffi}$ for the generalized conditional expectation from $\cM$ to $\cM_i$ with respect to $\ffi$. 
Then from the martingale convergence given in \cite[Theorem 3]{hiai1984strong}, it follows that
\begin{align}\label{eq:martingaleHT}
\psi_i\circ\cE_{\cM_i,\ffi}=\psi\circ\cE_{\cM_i,\ffi}\to\psi\quad\mbox{in the norm},
\end{align}
as well as
\begin{align}\label{eq:condexp}
\ffi_i\circ\cE_{\cM_i,\ffi}=\ffi\circ\cE_{\cM_i,\ffi}=\ffi.
\end{align}
Using lower semi-continuity and DPI, we obtain
\[
D_{\alpha,z}(\psi\|\ffi)\le \liminf_{i}
D_{\alpha,z}(\psi_i\circ\cE_{\cM_i,\varphi}\|\ffi)\le \liminf_i
D_{\alpha,z}(\psi_i\|\ffi_i)\le \sup_i D_{\alpha,z}(\psi_i\|\ffi_i).
\]
\end{proof}

The next proposition is another martingale type convergence.

\begin{prop}\label{prop:mart-proj}
Let $\psi,\varphi\in \Me_*^+$, $\psi\ne 0$, and let $\{e_i\}$ be an increasing net of projections in $\Me$
such that $e_i\nearrow\1$. Either if $0<\alpha<1$ and $z>0$, or if $\alpha>1$ and
$\max\{\alpha/2,\alpha-1\}\le z\le \alpha$ (that is, $\alpha$ and $z$ satisfy the DPI bounds when
$\alpha>1$), then
\begin{align}\label{eq:mart-proj}
D_{\alpha,z}(\psi\|\ffi)=\lim_iD_{\alpha,z}(e_i\psi e_i\|e_i\ffi e_i),
\end{align}
where $e_i\psi e_i:=\psi|_{e_i\Me e_i}$ and $e_i\ffi e_i:=\ffi|_{e_i\Me e_i}$.
\end{prop}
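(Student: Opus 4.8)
Set $\psi_i:=\psi(e_i\,\cdot\,e_i)$ and $\ffi_i:=\ffi(e_i\,\cdot\,e_i)$, viewed as functionals on $\Me$, so that $h_{\psi_i}=e_ih_\psi e_i$, $h_{\ffi_i}=e_ih_\ffi e_i$, $\psi_i(\1)=\psi(e_i)$, and $s(\psi_i),s(\ffi_i)\le e_i$. By Remark \ref{remark:defi} this gives $D_{\alpha,z}(e_i\psi e_i\|e_i\ffi e_i)=D_{\alpha,z}(\psi_i\|\ffi_i)$, so the plan is to prove $D_{\alpha,z}(\psi_i\|\ffi_i)\to D_{\alpha,z}(\psi\|\ffi)$. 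First I would record the standard fact that $\psi_i\to\psi$ and $\ffi_i\to\ffi$ in norm (equivalently $h_{\psi_i}\to h_\psi$, $h_{\ffi_i}\to h_\ffi$ in $L^1(\Me)$), since $\psi-\psi_i=(\1-e_i)\psi+e_i\psi(\1-e_i)$ and $\psi(\1-e_i)\searrow0$ by normality; in particular $\psi(e_i)\to\psi(\1)$, so the normalizing factor is harmless.

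For $0<\alpha<1$ and \emph{every} $z>0$ the DPI need not hold, so I would avoid data processing and instead prove directly the stronger statement $Q_{\alpha,z}(\psi_i\|\ffi_i)\to Q_{\alpha,z}(\psi\|\ffi)$, whence convergence of $D_{\alpha,z}$ follows from $\psi(e_i)\to\psi(\1)$ and continuity of $\tfrac1{\alpha-1}\log(\,\cdot\,)$. Writing $Q_{\alpha,z}(\psi_i\|\ffi_i)=\bigl\|(e_ih_\ffi e_i)^{\frac{1-\alpha}{2z}}(e_ih_\psi e_i)^{\frac{\alpha}{2z}}\bigr\|_{2z}^{2z}$, I would obtain the bound $\limsup_i Q_{\alpha,z}(\psi_i\|\ffi_i)\le Q_{\alpha,z}(\psi\|\ffi)$ from the lower semicontinuity of $D_{\alpha,z}$ in \cite[Theorem 1(iv)]{kato2023onrenyi} (equivalently the upper semicontinuity of $Q_{\alpha,z}$, also visible from the infimum in Theorem \ref{thm:variational}(i)), and the matching bound $\liminf_i Q_{\alpha,z}(\psi_i\|\ffi_i)\ge Q_{\alpha,z}(\psi\|\ffi)$ from continuity of functional calculus and of multiplication in the measure topology of $\tau$-measurable operators, which gives $(e_ih_\ffi e_i)^{\frac{1-\alpha}{2z}}(e_ih_\psi e_i)^{\frac{\alpha}{2z}}\to h_\ffi^{\frac{1-\alpha}{2z}}h_\psi^{\frac{\alpha}{2z}}$ in measure, combined with the lower semicontinuity of the Haagerup norm $\|\cdot\|_{2z}$ thereunder.

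For $\alpha>1$ with $\max\{\alpha/2,\alpha-1\}\le z\le\alpha$ the density-level identity rests on Lemma \ref{lemma:renyi_2z}, i.e.\ on the relation $h_\psi^{\frac{\alpha}{2z}}=yh_\ffi^{\frac{\alpha-1}{2z}}$, which is discontinuous and possibly $+\infty$, so the direct argument fails and I would invoke the DPI. The lower bound $\liminf_i D_{\alpha,z}(\psi_i\|\ffi_i)\ge D_{\alpha,z}(\psi\|\ffi)$ follows from the lower semicontinuity of $Q_{\alpha,z}$ in \cite[Theorem 2(iv)]{kato2023onrenyi} (valid as $z\ge\alpha/2$) and $\psi(e_i)\to\psi(\1)$. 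For the upper bound I would apply Theorem \ref{thm:dpi} to the pinching $\Phi_i(a):=e_iae_i+(\1-e_i)a(\1-e_i)$, a normal unital completely positive map on $\Me$; since $\Phi_i(\1)=\1$ and $\psi\circ\Phi_i,\ffi\circ\Phi_i$ are block diagonal along $e_i\oplus(\1-e_i)$, the block-additivity of $Q_{\alpha,z}$ (checked from Lemma \ref{lemma:renyi_2z}) gives $Q_{\alpha,z}(\psi\circ\Phi_i\|\ffi\circ\Phi_i)=Q_{\alpha,z}(\psi_i\|\ffi_i)+Q_{\alpha,z}(\psi_{\1-e_i}\|\ffi_{\1-e_i})\ge Q_{\alpha,z}(\psi_i\|\ffi_i)$, while DPI gives $Q_{\alpha,z}(\psi\circ\Phi_i\|\ffi\circ\Phi_i)\le Q_{\alpha,z}(\psi\|\ffi)$; hence $Q_{\alpha,z}(\psi_i\|\ffi_i)\le Q_{\alpha,z}(\psi\|\ffi)$, and with $\psi(e_i)\to\psi(\1)$ this yields $\limsup_i D_{\alpha,z}(\psi_i\|\ffi_i)\le D_{\alpha,z}(\psi\|\ffi)$ (the case $Q_{\alpha,z}(\psi\|\ffi)=\infty$ being trivial).

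The hard part is precisely the upper bound, where the natural inclusion $e_i\Me e_i\hookrightarrow\Me$ is \emph{not} unital, so monotonicity under it is unavailable. For $\alpha>1$ the remedy is the pinching-plus-additivity device, whose applicability hinges on $(\alpha,z)$ lying inside the DPI bounds and on verifying the block-additivity of $Q_{\alpha,z}$ through Lemma \ref{lemma:renyi_2z}. For $0<\alpha<1$ with $z<\max\{\alpha,1-\alpha\}$ the DPI is genuinely absent, so the two semicontinuity estimates must be produced by hand; the delicate technical point there is the continuity of $h\mapsto h^{s}$ and of multiplication in the measure topology together with the lower semicontinuity of $\|\cdot\|_{2z}$, which jointly replace the data-processing argument.
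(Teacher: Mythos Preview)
Your argument is correct in both regimes, though it differs from the paper's in ways worth noting.

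For $0<\alpha<1$, you underuse the result you cite: \cite[Theorem~1(iv)]{kato2023onrenyi} in fact gives \emph{joint continuity} of $Q_{\alpha,z}$ in $(\psi,\ffi)$, not merely lower semicontinuity of $D_{\alpha,z}$. So once $\psi_i\to\psi$ and $\ffi_i\to\ffi$ in $\Me_*$ the convergence $Q_{\alpha,z}(\psi_i\|\ffi_i)\to Q_{\alpha,z}(\psi\|\ffi)$ is immediate, and this is exactly what the paper does in one line. Your measure-topology liminf argument is valid but superfluous.

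For $\alpha>1$, your route genuinely differs. The paper embeds $e_i\Me e_i$ into the unital subalgebra $\Me_i:=e_i\Me e_i\oplus\bC(\1-e_i)$ and applies Theorem~\ref{thm:martingale} together with the direct-sum formula \cite[Theorem~2(ii)]{kato2023onrenyi}, obtaining
\[
Q_{\alpha,z}(\psi\|\ffi)=\lim_i\bigl[Q_{\alpha,z}(e_i\psi e_i\|e_i\ffi e_i)+\psi(\1-e_i)^\alpha\ffi(\1-e_i)^{1-\alpha}\bigr],
\]
and must then argue that the scalar correction term is negligible in the limit (nontrivial for $\alpha>1$, since $\ffi(\1-e_i)^{1-\alpha}$ can blow up), for which it refers to \cite[Theorem~4.5]{hiai2018quantum}. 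Your pinching device sidesteps this: DPI for $\Phi_i$ and block-additivity (properly cited as \cite[Theorem~2(ii)]{kato2023onrenyi}, rather than Lemma~\ref{lemma:renyi_2z}) give $Q_{\alpha,z}(\psi_i\|\ffi_i)\le Q_{\alpha,z}(\psi\|\ffi)$ directly, with no correction term to analyze, and lower semicontinuity closes the argument. Both approaches need $(\alpha,z)$ inside the DPI bounds; yours is the more economical.
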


\begin{proof}
Let $\psi_i,\ffi_i\in\cM_*^+$ be such that $\psi_i:=\psi(e_i\cdot e_i)$ and $\ffi_i:=\ffi(e_i\cdot e_i)$.
Since $s(\psi_i)\le e_i$ and $e_i\psi e_i=\psi_i|_{e_i\Me e_i}$ and similarly for $\ffi_i$, we note
(see Remark \ref{remark:defi}) that
\[
D_{\alpha,z}(e_i\psi e_i\|e_i\ffi e_i)
=D_{\alpha,z}(\psi_i|_{e_i\Me e_i}\|\ffi_i|_{e_i\Me e_i})=D_{\alpha,z}(\psi_i\|\ffi_i)
\]
for all $\alpha,z>0$, $\alpha\ne1$. Moreover, we have
\begin{align*}
\|\psi-\psi_i\|&=\|h_\psi-e_ih_\psi e_i\|_1\le\|(\1-e_i)h_\psi\|_1+\|e_ih_\psi(\1-e_i)\|_1 \\
&\le\|(\1-e_i)h_\psi^{1/2}\|_2\|h_\psi^{1/2}\|_2+\|e_ih_\psi^{1/2}\|_2\|h_\psi^{1/2}(\1-e_i)\|_2 \\
&=\psi(\1-e_i)^{1/2}\psi(\1)^{1/2}+\psi(e_i)^{1/2}\psi(\1-e_i)^{1/2}\to0\quad\mbox{as $i\to\infty$},
\end{align*}
and similarly $\|\ffi-\ffi_i\|_1\to0$. Hence, when $0<\alpha<1$ and $z>0$,
the joint continuity of $Q_{\alpha,z}$ in \cite[Theorem 1(iv)]{kato2023onrenyi} gives \eqref{eq:mart-proj}.

Next, assume that $\alpha>1$ and $\alpha,z$ satisfy the DPI bounds. Let
$\Me_i:=e_i\Me e_i\oplus\bC(\1-e_i)$; then $\{\Me_i\}$ is an increasing net of von Neumann
subalgebras of $\Me$ containing the unit of $\Me$ with $\Me=\bigl(\bigcup_i\Me_i\bigr)''$. Since
$\psi|_{\Me_i}=e_i\psi e_i\oplus\psi(\1-e_i)$ and similarly for $\ffi|_{\Me_i}$, it follows from
Theorem \ref{thm:martingale} and \cite[Theorems 1(ii) and 2(ii)]{kato2023onrenyi} that
\[
Q_{\alpha,z}(\psi\|\ffi)
=\lim_i\bigl[Q_{\alpha,z}(e_i\psi e_i\|e_i\ffi e_i)
+\psi(\1-e_i)^\alpha\ffi(\1-e_i)^{1-\alpha}\bigr].
\]
Here, $\psi(\1-e_i)^\alpha\ffi(\1-e_i)^{1-\alpha}$ is defined with the usual conventions
\[
0^{1-\alpha}:=\begin{cases}0 & (0<\alpha<1), \\ \infty & (\alpha>1),\end{cases}\qquad
\lambda\cdot\infty:=\begin{cases}0 & (\lambda=0), \\ \infty & (\lambda>0).\end{cases}
\]
Then \eqref{eq:mart-proj} holds if we show the following:
\begin{itemize}
\item[(1)] If $Q_{\alpha,z}(\psi\|\ffi)=\infty$, then
$\lim_iQ_{\alpha,z}(e_i\psi e_i\|e_i\ffi e_i)=\infty$.
\item[(2)] If $Q_{\alpha,z}(\psi\|\ffi)<\infty$, then $\lim_i\psi(\1-e_i)^\alpha\ffi(\1-e_i)^{1-\alpha}=0$.
\end{itemize}
These two facts can be shown in the same way as in the proof of \cite[Theorem 4.5]{hiai2018quantum},
whose details are omitted here.
\end{proof}

\begin{remark}\label{remark:martingale}
The martingale convergence in Theorem \ref{thm:martingale} is applied, in particular, when $\Me$ is an
injective (or approximately finite dimensional) von Neumann algebra. It may also be useful since one can
reduce studying $D_{\alpha,z}$ in general von Neumann algebras to the finite von Neumann algebra
case by Haagerup's reduction theorem; see, e.g., the proof of Lemma \ref{L-B.2} in Appendix
\ref{Sec-B}. The convergence in Proposition \ref{prop:mart-proj} is of quite use as well.
For example, in the infinite dimensional $\mathcal{B}(\mathcal{H})$ setting, this proposition was given in
\cite[Proposition 3.40]{mosonyi2023thestrong} and it was frequently utilized in \cite{mosonyi2023thestrong}
to prove results related to $D_{\alpha,z}$ (in particular, $D_{\alpha,\alpha}$) from those in the finite
dimensional case.
\end{remark}

\section{Equality in DPI and reversibility of channels}\label{Sec-4}

In what follows, a (quantum) \emph{channel} is a normal \emph{2-positive} unital
map $\gamma: \Ne\to \Me$. We use this term for the sake of convenience, while in most literature,
a quantum channel is assumed to be completely positive.

\begin{defi}\label{defi:reversible}
Let $\gamma:\Ne\to \Me$ be a channel and let $\mathcal S \subset
\Me_*^+$. We say that $\gamma$ is \emph{reversible} (or \emph{sufficient}) with respect to
$\mathcal S$ if there exists a channel $\beta:\Me\to \Ne$ such that
\[
\rho\circ\gamma\circ\beta=\rho\quad\mbox{for all}\ \rho\in \mathcal S.
\]
\end{defi}

The notion of sufficient channels was introduced by Petz
\cite{petz1986sufficient,petz1988sufficiency}, who also obtained a number of conditions
characterizing this situation. It particular, it was proved in \cite{petz1988sufficiency}
(also \cite{jencova2006sufficiency}) that sufficient channels can be characterized by
equality in the DPI for the \emph{relative entropy} $D(\psi\|\varphi)$: if $D(\psi\|\varphi)<\infty$,
then a channel $\gamma$ is sufficient with respect to $\{\psi,\varphi\}$ if and only if 
\[
D(\psi\circ\gamma\|\varphi\circ\gamma)=D(\psi\|\varphi). 
\]
This characterization has been proved for a number of other divergence measures, including the
standard R\'enyi divergences $D_{\alpha,1}$ with $0<\alpha<2$  and the sandwiched
R\'enyi divergences $D_{\alpha,\alpha}$ for $\alpha>1/2$
(\cite{hiai2021quantum,jencova2018renyi,jencova2021renyi}).
Another important result of \cite{petz1988sufficiency} shows that the Petz dual $\gamma_\varphi^*$
is a universal recovery map, in the sense given in the proposition below. 

\begin{prop}\label{prop:universal}
Let $\gamma:\Ne\to \Me$ be a channel and let $\varphi\in \Me_*^+$ be such that both $\ffi$ and
$\ffi\circ\gamma$ are faithful. Then the following statements hold:
\begin{itemize}
\item[(i)] For any $\psi\in \Me_*^+$, $\gamma$ is sufficient with respect to $\{\psi,\varphi\}$ if and only
if $\psi\circ\gamma\circ\gamma_\varphi^*=\psi$.

\item[(ii)]
There is a faithful normal conditional expectation $\mathcal E$ from $\Me$ onto a von Neumann
subalgebra of $\Me$ such that $\varphi\circ \mathcal E=\varphi$, and $\gamma$ is sufficient with
respect to $\{\psi,\varphi\}$ if and only if also $\psi\circ\mathcal E=\psi$.
\end{itemize}
\end{prop}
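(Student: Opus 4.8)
The plan is to prove (i) first and then deduce (ii) from it by a mean ergodic argument. In (i) the ``if'' direction is immediate: since $\gamma$ is a channel (normal, $2$-positive, unital) and both $\ffi$ and $\ffi\circ\gamma$ are faithful, the Petz dual $\gamma_\ffi^*\colon\Me\to\Ne$ is again a channel, and by \eqref{eq:petzdual2} it satisfies $\ffi\circ\gamma\circ\gamma_\ffi^*=\ffi$. Hence if moreover $\psi\circ\gamma\circ\gamma_\ffi^*=\psi$, then $\beta:=\gamma_\ffi^*$ recovers both $\psi$ and $\ffi$, so $\gamma$ is sufficient with respect to $\{\psi,\ffi\}$.

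For the ``only if'' direction, suppose a channel $\beta\colon\Me\to\Ne$ recovers $\{\psi,\ffi\}$. I would extract recovery by the Petz dual from the equality theory already available for the sandwiched divergences $\tilde D_\alpha=D_{\alpha,\alpha}$. Applying the DPI for $\tilde D_\alpha$ (with $\alpha$ chosen so that the quantity is finite, e.g.\ the transition probability $\alpha=1/2$) first to $\gamma$ and then to $\beta$ gives $\tilde D_\alpha(\psi\|\ffi)\ge\tilde D_\alpha(\psi\circ\gamma\|\ffi\circ\gamma)\ge\tilde D_\alpha(\psi\circ\gamma\circ\beta\|\ffi\circ\gamma\circ\beta)=\tilde D_\alpha(\psi\|\ffi)$, so equality holds throughout. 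By the sufficiency theorem for $\tilde D_\alpha$ in the von Neumann algebra setting (\cite{jencova2018renyi,jencova2021renyi}; see also \cite{hiai2021quantum}), equality singles out the Petz dual as the recovery map, yielding $\psi\circ\gamma\circ\gamma_\ffi^*=\psi$. Equivalently, one may run Petz's Connes-cocycle characterization of sufficiency \cite{petz1988sufficiency,jencova2006sufficiency} directly, after the harmless reduction to faithful $\psi$ (replace $\psi$ by $\psi+\eps\ffi$, use linearity and $\ffi\circ\gamma\circ\gamma_\ffi^*=\ffi$, and let $\eps\searrow0$). The main obstacle is exactly this input---that equality in the DPI, or preservation of the Connes cocycle, forces recovery by the canonical Petz map; I would either cite it or re-derive it in the Haagerup $L^p$ framework from \eqref{eq:petzdual}--\eqref{eq:petzdual2} and the contractions $\gamma^*_{\rho,p}$ of Lemma \ref{lemma:pcontraction}.

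For (ii) I would build $\cE$ from the fixed-point structure of the channel $T:=\gamma\circ\gamma_\ffi^*\colon\Me\to\Me$. This $T$ is normal, $2$-positive and unital, with $\ffi\circ T=\ffi$ by \eqref{eq:petzdual2}, and $\ffi$ is faithful. By the noncommutative mean ergodic theorem, the Ces\`aro averages $\frac1N\sum_{n=1}^N T^n$ converge, pointwise in the $\sigma$-weak topology, to a faithful normal conditional expectation $\cE$ onto the fixed-point subalgebra $\Me^T=\{a\in\Me:T(a)=a\}$; the $2$-positivity (Schwarz) property is precisely what guarantees that $\Me^T$ is a von Neumann subalgebra and that $\cE$ is a conditional expectation. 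This $\cE$ depends only on $\ffi$ and $\gamma$, not on $\psi$, and $\ffi\circ\cE=\ffi$ since $\ffi\circ T^n=\ffi$ for all $n$.

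It then remains to combine this with (i). By (i), $\gamma$ is sufficient with respect to $\{\psi,\ffi\}$ if and only if $\psi\circ\gamma\circ\gamma_\ffi^*=\psi$, that is, $\psi\circ T=\psi$. Since the ergodic limit obeys $\cE\circ T=T\circ\cE=\cE$, one direction follows by Ces\`aro averaging, $\psi\circ T=\psi\Rightarrow\psi\circ T^n=\psi\Rightarrow\psi\circ\cE=\psi$, while conversely $\psi\circ\cE=\psi$ gives $\psi\circ T=(\psi\circ\cE)\circ T=\psi\circ\cE=\psi$. Hence $\gamma$ is sufficient with respect to $\{\psi,\ffi\}$ if and only if $\psi\circ\cE=\psi$, which is the assertion of (ii). The only delicate point in this last step is the mean ergodic convergence to a genuine normal conditional expectation, which I would import from the standard theory of Schwarz maps preserving a faithful normal state.
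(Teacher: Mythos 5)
The paper does not actually prove this proposition: it imports (i) from Petz \cite{petz1988sufficiency} and (ii) from \cite[Lemma 4.3]{jencova2018renyi}, adding only the remark that the range of $\cE$ is the fixed-point set of $\gamma\circ\gamma_\ffi^*$. Measured against those sources, your argument for (ii) is exactly the standard one: $T:=\gamma\circ\gamma_\ffi^*$ is a normal unital Schwarz map (by $2$-positivity) with $\ffi\circ T=\ffi$ and $\ffi$ faithful, the mean ergodic theorem for such maps yields a faithful normal $\ffi$-preserving conditional expectation $\cE$ onto $\Me^T$, and your two-line equivalence $\psi\circ T=\psi\Leftrightarrow\psi\circ\cE=\psi$ via $\cE\circ T=T\circ\cE=\cE$ is correct; combined with (i) this gives (ii). For (i), the ``if'' direction is fine, since $\gamma_\ffi^*$ is itself a channel and $\ffi\circ\gamma\circ\gamma_\ffi^*=\ffi$ by \eqref{eq:petzdual2}.

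Your ``only if'' direction of (i) takes a genuinely different route from Petz's original (Connes-cocycle/transition-probability) argument: you sandwich the DPI for $\tilde D_\alpha$ between $\gamma$ and the recovery channel $\beta$ and then invoke the equality theorem for the sandwiched divergence. This works, but two points need care. First, the endpoint $\alpha=1/2$ you suggest is not covered by the sufficiency theorems of \cite{jencova2018renyi,jencova2021renyi}, which require $\alpha\in(1/2,1)\cup(1,\infty)$; take instead any $\alpha\in(1/2,1)$, where finiteness of $\tilde D_\alpha(\psi\|\ffi)$ is automatic and $s(\psi)\le s(\ffi)=\1$ holds trivially. Second, what you must import is the specific implication ``equality in the DPI $\Rightarrow\psi\circ\gamma\circ\gamma_\ffi^*=\psi$'', which in those references is established by an independent $L^p$-space argument; the companion equivalence ``sufficient $\Leftrightarrow$ recovered by $\gamma_\ffi^*$'' appearing in the same theorems \emph{is} Proposition~\ref{prop:universal}(i), so citing the theorem wholesale would be circular. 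With the citation pinned down to the right implication (or with Petz's cocycle argument substituted, as you propose in the alternative), the proof is complete.
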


See \cite[Lemma 4.3]{jencova2018renyi}, for a proof of the statement (ii).
Note (see also \cite[Theorem 2]{petz1988sufficiency} and the proof of \cite[Theorem 3]{petz1988sufficiency})
that the range of the conditional expectation $\mathcal E$ is the set of fixed points of the channel
$\gamma\circ\gamma_\varphi^*$.

Our aim in this section is to prove that equality in the DPI for $D_{\alpha,z}$ with 
values of the parameters (strictly) contained in the DPI bounds of Theorem \ref{thm:dpi}
characterizes sufficiency of channels. Throughout this section, we use the
notations $\psi_0:=\psi\circ\gamma$ and $\ffi_0:=\ffi\circ\gamma$. We also denote
\[
p:=\frac{z}{\alpha},\qquad r:=\frac{z}{1-\alpha},\qquad q:=-r=\frac{z}{\alpha-1}. 
\]

\subsection{The case $\alpha\in (0,1)$}\label{Sec-4.1}

Here we study equality in the DPI for $D_{\alpha,z}$ with $\alpha\in (0,1)$,  for a pair of positive
normal functionals $\psi,\ffi\in \Me_*^+$ and a normal positive unital map $\gamma:\Ne\to \Me$. We first
prove some equality conditions  in the case $\psi\sim \ffi$. 

\begin{prop}\label{prop:DPI_equality}
Let $0<\alpha<1$ and $\max\{\alpha,1-\alpha\}\le z$.  Assume that $\psi\sim \ffi$ and  let
$\gamma:\Ne\to \Me$ be a normal positive unital map. Let $\bar a\in \Me^{++}$  
be the unique minimizer as in Lemma \ref{lemma:variational_majorized} for $Q_{\alpha,z}(\psi\|\ffi)$
and let $\bar a_0\in \Ne^{++}$ be the minimizer for $Q_{\alpha,z}(\psi_0\|\ffi_0)$.
The following conditions are equivalent:
\begin{itemize}
\item[(i)] $D_{\alpha,z}(\psi_0\|\ffi_0)=D_{\alpha,z}(\psi\|\ffi)$, i.e.,
$Q_{\alpha,z}(\psi_0\|\ffi_0)=Q_{\alpha,z}(\psi\|\ffi)$.
\item[(ii)] $\gamma(\bar a_0)=\bar a$ and
$\Big\|h_\psi^{1\over2p}\gamma(\bar a_0)h_\psi^{1\over2p}\Big\|_{p}
=\Big\|h_{\psi_0}^{1\over2p}\bar a_0h_{\psi_0}^{1\over2p}\Big\|_{p}$.
\item[(iii)] $\Big\|h_\psi^{1\over2p}\bar ah_\psi^{1\over2p}\Big\|_{p}
=\Big\|h_{\psi_0}^{1\over2p}\bar a_0h_{\psi_0}^{1\over2p}\Big\|_{p}$.
\item[(iv)] $\gamma(\bar a_0^{-1})=\bar a^{-1}$ and
$\Big\|h_\ffi^{1\over2r}\gamma(\bar a_0^{-1})h_\ffi^{1\over2r}\Big\|_{r}
=\Big\|h_{\ffi_0}^{1\over2r}\bar a_0^{-1}h_{\ffi_0}^{1\over2r}\Big\|_{r}$.
\item[(v)] $\Big\|h_\ffi^{1\over2r}\bar a^{-1}h_\ffi^{1\over2r}\Big\|_{r}
=\Big\|h_{\ffi_0}^{1\over2r}\bar a_0^{-1}h_{\ffi_0}^{1\over2r}\Big\|_{r}$.
\end{itemize}
\end{prop}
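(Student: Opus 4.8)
The statement is an equivalence of five conditions, so the natural strategy is to prove a cycle of implications together with a couple of direct equivalences, exploiting the symmetric roles of $\psi$ and $\ffi$ under the substitution $\alpha\leftrightarrow 1-\alpha$. The plan is to establish $(i)\Leftrightarrow(iii)$ and $(i)\Leftrightarrow(v)$ as the two core equivalences, then derive $(ii)\Leftrightarrow(iii)$ and $(iv)\Leftrightarrow(v)$ by identifying $\bar a$ with $\gamma(\bar a_0)$ via the DPI chain. The key fact throughout is that the minimizer characterization from Lemma \ref{lemma:variational_majorized} pins down $\bar a$ (respectively $\bar a_0$) uniquely, and by \eqref{eq:infimum} we have $Q_{\alpha,z}(\psi\|\ffi)=\|\xi_p(\bar a)\|_p^p$ and $Q_{\alpha,z}(\psi_0\|\ffi_0)=\|h_{\psi_0}^{1\over2p}\bar a_0h_{\psi_0}^{1\over2p}\|_p^p$, with analogous formulas on the $\ffi$-side via \eqref{eq:minimizer2}.

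\textbf{The core chain.}
First I would feed $\bar a_0\in\Ne^{++}$ into the variational expression for $Q_{\alpha,z}(\psi\|\ffi)$, taking $b=\bar a_0$ in the DPI computation from the proof of Theorem \ref{thm:dpi}(i). This gives, via \eqref{eq:ineq} and the two norm inequalities of Lemma \ref{lemma:dpi}(ii),
\[
Q_{\alpha,z}(\psi\|\ffi)\le \alpha\Big\|h_\psi^{1\over2p}\gamma(\bar a_0)h_\psi^{1\over2p}\Big\|_p^p
+(1-\alpha)\Big\|h_\ffi^{1\over2r}\gamma(\bar a_0)^{-1}h_\ffi^{1\over2r}\Big\|_r^r
\le Q_{\alpha,z}(\psi_0\|\ffi_0).
\]
Condition $(i)$ forces both inequalities to be equalities. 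The first equality, combined with the DPI inequality for the sandwiched divergence (Lemma \ref{lemma:dpi}(ii)), yields $\|h_\psi^{1\over2p}\gamma(\bar a_0)h_\psi^{1\over2p}\|_p=\|h_{\psi_0}^{1\over2p}\bar a_0h_{\psi_0}^{1\over2p}\|_p$, which is the norm equality in $(ii)$; the second equality similarly gives the norm equality in $(iv)$. That $\gamma(\bar a_0)$ is then a minimizer of the variational problem for $Q_{\alpha,z}(\psi\|\ffi)$ follows because it attains the value $Q_{\alpha,z}(\psi\|\ffi)$, and by uniqueness in Lemma \ref{lemma:variational_majorized} we conclude $\gamma(\bar a_0)=\bar a$. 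This delivers $(i)\Rightarrow(ii)$ and $(i)\Rightarrow(iv)$. Since $\gamma(\bar a_0)=\bar a$ converts the $(ii)$-norm equality into the $(iii)$-norm equality, and likewise $(iv)$ into $(v)$, the implications $(ii)\Rightarrow(iii)$ and $(iv)\Rightarrow(v)$ are immediate.

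\textbf{Closing the loop.}
To return from $(iii)$ to $(i)$, I would note that the $(iii)$ equality says $\|\xi_p(\bar a)\|_p=\|h_{\psi_0}^{1\over2p}\bar a_0h_{\psi_0}^{1\over2p}\|_p$, and since both sides raised to the $p$-th power equal $Q_{\alpha,z}(\psi\|\ffi)$ and $Q_{\alpha,z}(\psi_0\|\ffi_0)$ respectively by \eqref{eq:infimum}, this is exactly $(i)$. The equivalence $(v)\Leftrightarrow(i)$ is identical but uses the $\ffi$-side formula $\|h_\ffi^{1\over2r}\bar a^{-1}h_\ffi^{1\over2r}\|_r^r=Q_{\alpha,z}(\psi\|\ffi)$ coming from \eqref{eq:minimizer2}; the symmetry $Q_{\alpha,z}(\psi\|\ffi)=Q_{1-\alpha,z}(\ffi\|\psi)$ noted in Lemma \ref{lemma:variational_majorized} makes the $\ffi$-side argument a verbatim translation of the $\psi$-side one under $(\psi,\ffi,\alpha)\mapsto(\ffi,\psi,1-\alpha)$, $p\leftrightarrow r$, $\bar a\leftrightarrow\bar a^{-1}$. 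The main obstacle is the equality-analysis of Lemma \ref{lemma:dpi}(ii): I must verify that equality in the sandwiched-divergence DPI inequality transfers cleanly, i.e.\ that the two separate norm inequalities in the chain above each saturate independently — this requires that the single scalar equality $(i)$ decouples into the two equalities claimed in $(ii)$ and $(iv)$, which holds precisely because each term is individually bounded and their weighted sum is tight. Handling this decoupling carefully, and confirming that $\gamma(\bar a_0)\in\Me^{++}$ (not merely positive) so that Lemma \ref{lemma:variational_majorized}'s uniqueness applies, is where the real work lies.
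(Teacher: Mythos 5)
Your overall route is the same as the paper's: feed $\bar a_0$ into the variational expression for $Q_{\alpha,z}(\psi\|\ffi)$, squeeze the resulting chain against $Q_{\alpha,z}(\psi_0\|\ffi_0)$, use uniqueness of the minimizer to identify $\gamma(\bar a_0)=\bar a$, note that (ii)$\Rightarrow$(iii) and (iv)$\Rightarrow$(v) are trivial, and close the loop via the explicit evaluations \eqref{eq:infimum}, \eqref{eq:minimizer1}, \eqref{eq:minimizer2}. That part is fine. There is, however, one concrete gap: the first clause of condition (iv) is $\gamma(\bar a_0^{-1})=\bar a^{-1}$, not $\gamma(\bar a_0)^{-1}=\bar a^{-1}$. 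From $\gamma(\bar a_0)=\bar a$ you only get the latter, and since $\gamma$ is merely positive unital it does not commute with inversion; the Choi inequality gives only $\gamma(\bar a_0^{-1})\ge\gamma(\bar a_0)^{-1}$. Your write-up treats (iv) as "delivered" by $\gamma(\bar a_0)=\bar a$ together with the norm equality, which silently conflates the two operators.

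The missing step is recoverable from ingredients you already have on the table, and it is exactly how the paper proceeds: saturation of the full chain \eqref{eq:ineq} forces equality in its first inequality,
\[
\Big\|h_\ffi^{1\over2r}\gamma(\bar a_0)^{-1}h_\ffi^{1\over2r}\Big\|_r
=\Big\|h_\ffi^{1\over2r}\gamma(\bar a_0^{-1})h_\ffi^{1\over2r}\Big\|_r,
\]
and since $h_\ffi^{1\over2r}\gamma(\bar a_0^{-1})h_\ffi^{1\over2r}\ge h_\ffi^{1\over2r}\gamma(\bar a_0)^{-1}h_\ffi^{1\over2r}$, the second assertion of Lemma \ref{lemma:order1} forces $h_\ffi^{1\over2r}\bigl(\gamma(\bar a_0^{-1})-\gamma(\bar a_0)^{-1}\bigr)h_\ffi^{1\over2r}=0$, hence $\gamma(\bar a_0^{-1})=\gamma(\bar a_0)^{-1}=\bar a^{-1}$. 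Note that the last deduction needs $\ffi$ faithful, so you should also record the preliminary reduction (via the restrictions described at the start of Sec.~\ref{Sec-3}, using $s(\psi)=s(\ffi)$ from $\psi\sim\ffi$) to the case where $\psi,\ffi,\psi_0,\ffi_0$ are all faithful; the paper does this in its first line and your proposal omits it. The remaining worry you flag, that $\gamma(\bar a_0)\in\Me^{++}$, is immediate: $\bar a_0\ge\eps\1$ for some $\eps>0$, so $\gamma(\bar a_0)\ge\eps\gamma(\1)=\eps\1$.
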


\begin{proof}
{Since $\psi\sim\ffi$ by assumption and hence $\psi_0\sim \ffi_0$, we have
$s(\psi)=s(\ffi)$ and $s(\psi_0)=s(\ffi_0)$.} Using restrictions explained in the beginning
of Sec.~\ref{Sec-3}, we may assume that all $\psi,\ffi,\psi_0,\ffi_0$ are faithful.

(i)$\implies$(ii) \& (iv).\enspace
By Lemma \ref{lemma:dpi}(ii)
\begin{align}
\Big\|h_\psi^{1\over2p}\gamma(\bar a_0)h_\psi^{1\over2p}\Big\|_{p}
\le\Big\|h_{\psi_0}^{1\over2p}\bar a_0h_{\psi_0}^{1\over2p}\Big\|_{p},
\label{eq:ineq2}
\end{align}
and by \eqref{eq:ineq} we have
\begin{align}
\Big\|h_\ffi^{1\over2r}\gamma(\bar a_0)^{-1}h_\ffi^{1\over2r}\Big\|_{r}
\le\Big\|h_\ffi^{1\over2r}\gamma(\bar a_0^{-1})h_\ffi^{1\over2r}\Big\|_{r}
\le\Big\|h_{\ffi_0}^{1\over2r}\bar
a_0^{-1}h_{\ffi_0}^{1\over2r}\Big\|_{r}. \label{eq:ineq3}
\end{align}
From \eqref{eq:ineq2} and \eqref{eq:ineq3} it follows that
\begin{align*}
Q_{\alpha,z}(\psi\|\ffi)&\le \alpha\Big\|h_\psi^{1\over2p}\gamma(\bar
a_0)h_\psi^{1\over2p}\Big\|^p_{p}
+(1-\alpha)\Big\|h_\ffi^{1\over2r}\gamma(\bar a_0)^{-1}h_\ffi^{1\over2r}\Big\|^r_{r} \\
&\le\alpha\Big\|h_{\psi_0}^{1\over2p}\bar a_0h_{\psi_0}^{1\over2p}\Big\|^p_{p}
+(1-\alpha)\Big\|h_{\ffi_0}^{1\over2r}\bar a_0^{-1}h_{\ffi_0}^{1\over2r}\Big\|^r_{r}
=Q_{\alpha,z}(\psi_0\|\ffi_0)=Q_{\alpha,z}(\psi\|\ffi).
\end{align*}
By uniqueness in Lemma \ref{lemma:variational_majorized} we find that $\gamma(\bar
a_0)=\bar a$ and all the inequalities in \eqref{eq:ineq2} and \eqref{eq:ineq3} must
become equalities. Since $\gamma(\bar a_0^{-1})\ge\gamma(\bar a_0)^{-1}$, we see by
Lemma \ref{lemma:order1} that the equality in
\eqref{eq:ineq3} yields $\gamma(\bar a_0^{-1})=\gamma(\bar a_0)^{-1}=\bar a^{-1}$. Therefore,
(ii) and (iv) hold.

The implications (ii)$\implies$(iii) and (iv)$\implies$(v) are obvious.

(iii)$\implies$(i).\enspace
By (iii) with the equality \eqref{eq:minimizer1} in Lemma \ref{lemma:variational_majorized} we have
\begin{align*}
Q_{\alpha,z}(\psi\|\ffi)
&=\Tr\Bigl(h_\psi^{1\over2p}h_\ffi^{1\over r}h_\psi^{1\over2p}\Bigr)^z
=\Tr\Bigl(h_\psi^{1\over2p}\bar ah_\psi^{1\over2p}\Bigr)^{p} \\
&=\Tr\Bigl(h_{\psi_0}^{1\over2p}\bar a_0h_{\psi_0}^{1\over2p}\Bigr)^{p}
=\Tr\Bigl(h_{\psi_0}^{\1\over2p}h_{\ffi_0}^{1\over r}h_{\psi_0}^{1\over2p}\Bigr)^z
=Q_{\alpha,z}(\psi_0\|\ffi_0).
\end{align*}

(v)$\implies$(i).\enspace
Similarly, by (v) with the equality \eqref{eq:minimizer2} in Lemma
\ref{lemma:variational_majorized} we have
\begin{align*}
Q_{\alpha,z}(\psi\|\ffi)
&=\Tr\Bigl(h_\ffi^{1\over2r}h_\psi^{1\over p}h_\ffi^{1\over2r}\Bigr)^z
=\Tr\Bigl(h_\ffi^{1\over2r}\bar a^{-1}h_\ffi^{1\over2r}\Bigr)^{r} \\
&=\Tr\Bigl(h_{\ffi_0}^{1\over2r}\bar a_0^{-1}h_{\ffi_0}^{1\over2r}\Bigr)^{r}
=\Tr\Bigl(h_{\ffi_0}^{1\over2r}h_{\psi_0}^{1\over p}h_{\ffi_0}^{1\over2r}\Bigr)^z
=Q_{\alpha,z}(\psi_0\|\ffi_0).
\end{align*}
\end{proof}

\begin{remark}\label{rem:conditions} Note that the above conditions extend the results
obtained in \cite{leditzky2017data} and \cite{zhang2020equality} in the finite dimensional case.
Indeed, as seen from \eqref{eq:minimizer2}, the first condition in (ii) with $\alpha=z$ is 
equivalent to the condition in \cite[Theorem 1]{leditzky2017data} (obtained under the more
general assumption that $s(\psi)\le s(\ffi)$). Since $p=\frac{z}{\alpha}=1$ in
\cite{leditzky2017data} and 
$\big\|h_\psi^{1\over2}\gamma(a_0)h_\psi^{1\over2}\big\|_{1}=\Tr
\bigl(\gamma(\alpha_0)h_\psi\bigr)=\big\|h_{\psi_0}^{1\over2}a_0h_{\psi_0}^{1\over2}\big\|_{1}$
for any $a_0\in \Ne$, the second condition in (ii) is automatic.
Moreover, (ii) extends the necessary condition in \cite[Theorem 1.2(2)]{zhang2020equality}
to a necessary and sufficient one. While in both of these works $\gamma$ was required
to be completely positive, only positivity   is assumed in Proposition
\ref{prop:DPI_equality}.

\end{remark}

\begin{theorem}\label{thm:suffle1} Let $0<\alpha<1$ and $\max\{\alpha,1-\alpha\}\le
z$. Let $\psi,\varphi\in \Me_*^+$, $\psi\ne0$, and assume either that $\alpha<z$ and $s(\ffi)\le
s(\psi)$, or that $1-\alpha<z$ and $s(\psi)\le s(\ffi)$. 
Then a channel $\gamma:\Ne \to \Me$ is reversible with respect to
$\{\psi,\varphi\}$ if and only if
\[
D_{\alpha,z}(\psi\circ\gamma\|\varphi\circ\gamma)=D_{\alpha,z}(\psi\|\varphi).
\]
\end{theorem}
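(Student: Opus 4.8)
The plan is to prove the two implications separately, with the ``only if'' direction being essentially immediate and the ``if'' direction (equality in DPI $\implies$ reversibility) being the substantive one. For the easy direction: if $\gamma$ is reversible with recovery channel $\beta$, then $\psi\circ\gamma\circ\beta=\psi$ and $\ffi\circ\gamma\circ\beta=\ffi$, so applying the DPI of Theorem \ref{thm:dpi} twice (once for $\gamma$, once for $\beta$) sandwiches $D_{\alpha,z}(\psi\|\ffi)$ between itself, forcing equality. One should first observe that the support hypotheses ($\alpha<z$ with $s(\ffi)\le s(\psi)$, or $1-\alpha<z$ with $s(\psi)\le s(\ffi)$) are set up so that the relevant $Q_{\alpha,z}$ is finite and the reduction at the start of Sec.~\ref{Sec-3} lets us pass to the faithful case on reduced algebras; by the symmetry $Q_{\alpha,z}(\psi\|\ffi)=Q_{1-\alpha,z}(\ffi\|\psi)$ the two listed cases are interchangeable, so I would treat just one, say $1-\alpha<z$ and $s(\psi)\le s(\ffi)$.

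For the hard direction I would \emph{not} try to work with general $\psi,\ffi$ directly, but instead reduce to the majorized case $\psi\sim\ffi$ where Proposition \ref{prop:DPI_equality} and the minimizer machinery of Lemmas \ref{lemma:variational_majorized} and \ref{lemma:variational_majorized2} are available. The natural device is a perturbation: for $\lambda>0$ set $\ffi_\lambda:=\ffi+\lambda\psi$ (or a similar convex combination), so that $\psi\sim\ffi_\lambda$ on the relevant support and $\ffi_\lambda\searrow\ffi$ as $\lambda\searrow0$. The strategy is then: (1) show that equality in the DPI for $(\psi,\ffi)$ propagates to equality, or at least to asymptotic equality, for $(\psi,\ffi_\lambda)$; (2) invoke Proposition \ref{prop:DPI_equality} to extract, for each $\lambda$, the structural equality $\gamma(\bar a_{0,\lambda})=\bar a_\lambda$ between minimizers together with the norm-preservation identity; (3) use this to produce a recovery map. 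The recovery map itself should be built from the Petz dual $\gamma_\rho^*$ (for a suitable reference $\rho$ such as $\ffi$ or $\ffi_\lambda$), using Proposition \ref{prop:universal}: the goal is to verify $\psi\circ\gamma\circ\gamma_\ffi^*=\psi$, which by part (i) of that proposition is exactly reversibility with respect to $\{\psi,\ffi\}$.

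The key analytic bridge is the norm-preservation condition in Proposition \ref{prop:DPI_equality}(iii)/(v), namely $\|h_{\ffi}^{1/2r}\bar a^{-1}h_{\ffi}^{1/2r}\|_r=\|h_{\ffi_0}^{1/2r}\bar a_0^{-1}h_{\ffi_0}^{1/2r}\|_r$ together with $\gamma(\bar a_0^{-1})=\bar a^{-1}$. The plan is to interpret this through the contraction $\gamma^*_{\rho,p}$ of Lemma \ref{lemma:pcontraction}: equality in the norm of a strict contraction between uniformly convex $L^p$-spaces forces the relevant element to lie in the ``isometric'' part, and Lemma \ref{lemma:variational_majorized2} (uniform convexity with the $\delta$–$\eps$ estimate) is precisely the tool to turn the approximate equalities obtained from the perturbation limit into an exact structural statement about $\gamma^*_{\rho,r}(\,\cdot\,)$ recovering $h_\psi$-type elements. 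Concretely, equality should yield $\gamma^*_{\ffi,r}\bigl(h_{\ffi_0}^{1/2r}\bar a_0^{-1}h_{\ffi_0}^{1/2r}\bigr)=h_\ffi^{1/2r}\bar a^{-1}h_\ffi^{1/2r}$ isometrically, which combined with the defining identity \eqref{eq:petzdual} of the Petz dual lets me identify $\psi\circ\gamma\circ\gamma_\ffi^*$ with $\psi$.

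The main obstacle I anticipate is step (1): controlling the behavior of the minimizers $\bar a_\lambda, \bar a_{0,\lambda}$ and of the equality conditions as $\lambda\searrow0$, since the perturbation changes the reference functional inside the $L^r$-norms and the minimizers need not converge in an obvious topology. This is exactly where Lemma \ref{lemma:variational_majorized2} earns its keep: its uniform-convexity estimate gives quantitative stability of $\xi_p(\bar a)$ (resp.\ $\eta_r(\bar a)$) under near-optimality, so that a near-equality in $Q_{\alpha,z}$ forces $\xi_p(\gamma(\bar a_{0,\lambda}))$ to be $L^p$-close to $\xi_p(\bar a_\lambda)$, and I can pass to the limit using the continuity of the maps $\gamma^*_{\rho_n,p}$ in $\rho$ (the last clause of Lemma \ref{lemma:pcontraction}) together with the lower semicontinuity of $Q_{\alpha,z}$. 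Once the structural identity survives the limit, assembling the recovery channel via Proposition \ref{prop:universal}(i) and checking the 2-positivity/normality requirements of the recovered map (inherited from the Petz dual) should be routine.
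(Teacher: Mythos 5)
Your easy direction and the first half of your hard direction match the paper's strategy closely: the paper also perturbs to the majorized case (with $\psi_n=\psi+\tfrac1n\ffi$ \emph{and} $\ffi_n=\ffi+\tfrac1n\psi$ --- note you need to perturb both functionals to get $\psi_n\sim\ffi_n$, not just $\ffi$), uses the minimizers of Lemma \ref{lemma:variational_majorized}, the quantitative uniform convexity of Lemma \ref{lemma:variational_majorized2}, and the continuity statement in Lemma \ref{lemma:pcontraction} to pass the near-optimality to the limit. Your anticipation that one only gets asymptotic equality for the perturbed pairs, controlled by the $\delta$--$\eps$ estimate, is exactly right.

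However, there is a genuine gap at the final step, which is in fact the crux of the theorem. The structural identity you extract (in the limit) from Proposition \ref{prop:DPI_equality} is of the form $(\gamma_\psi^*)_*(h_{\omega_0})=h_\omega$ where $h_\omega=h_\psi^{1/2}\bar a\,h_\psi^{1/2}$ is an \emph{auxiliary} functional built jointly from $\psi$ and $\ffi$ (via $h_\omega=h_\psi^{(p-1)/2p}\bigl|h_\ffi^{1/2r}h_\psi^{1/2p}\bigr|^{2\alpha}h_\psi^{(p-1)/2p}$). This, combined with $Q_{\alpha,z}(\psi\|\ffi)=Q_{p,p}(\omega\|\psi)$ and the known sufficiency theorem for the sandwiched divergence $D_{p,p}$ with $p>1$ (which your proposal never invokes, but which is the essential external input), yields sufficiency of $\gamma$ with respect to $\{\omega,\psi\}$ --- \emph{not} with respect to $\{\psi,\ffi\}$. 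Your assertion that the identity "lets me identify $\psi\circ\gamma\circ\gamma_\ffi^*$ with $\psi$" is unsubstantiated: the minimizer $\bar a$ encodes the pair jointly, and no direct computation recovers $\psi$ or $\ffi$ separately from it. The paper needs a further, nontrivial argument: take the faithful normal conditional expectation $\cE$ onto the fixed points of $\gamma\circ\gamma_\psi^*$ (which preserves $\psi$ and $\omega$ by Proposition \ref{prop:universal}(ii)), use the bimodule property \eqref{eq:cond} of its extensions $\cE_p$ to the Haagerup $L^p$-spaces to conclude $\bigl|h_\ffi^{1/2r}h_\psi^{1/2p}\bigr|\in L^{2z}(\cE(\Me))^+$, and then a polar-decomposition and kernel argument in $L^{2z}(\Me)$ and $L^{2r}(\Me)$ to conclude $h_\ffi\in L^1(\cE(\Me))$, hence $\ffi\circ\cE=\ffi$. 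This is also precisely where the strict hypothesis $\alpha<z$ (equivalently $2z>1$, so that $L^{2z}$ and $L^{2r}$ are honest Banach spaces with unique polar decompositions) is used; your proposal gives no role to that hypothesis beyond "finiteness," which is a sign the decisive step is missing.
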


\begin{proof} This proof is a modification of the proof of \cite[Theorem
5.1]{jencova2021renyi}.  
We will assume that $s(\ffi)\le s(\psi)$ and $\alpha<z$, that is, $p>1$. In the other case we may
exchange the roles of $p$, $r$ and of $\psi$, $\ffi$ by the equality
$Q_{\alpha,z}(\psi\|\varphi)=Q_{1-\alpha,z}(\varphi\|\psi)$. As before, we may assume that
both $\psi$ and $\psi_0$ are faithful.

The strategy of the proof is to use known results
in \cite{jencova2018renyi} for the sandwiched R\'enyi divergence $D_{p,p}$ with $p>1$. For this, let
$\mu,\omega\in\Me_*^+$ be such that
\[
h_{\mu}=\Big|h_{\ffi}^{1\over 2r}h_{\psi}^{1\over 2p}\Big|^{2z},\qquad
h_\omega=h_\psi^{p-1\over 2p}h_\mu^{\frac1p}h_\psi^{p-1\over 2p},
\]
and notice that
\[
Q_{\alpha, z}(\psi\|\ffi)=Q_{p,p}(\omega\|\psi).
\]
Let $\mu_0,\omega_0\in \Ne_*^+$ be similar functionals obtained from $\psi_0,\ffi_0$. Then
we have the equality
\begin{equation}\label{eq:sandwiched}
Q_{p,p}(\omega_0\|\psi_0)=Q_{\alpha,z}(\psi_0\|\ffi_0)=Q_{\alpha,z}(\psi\|\ffi)=Q_{p,p}(\omega\|\psi).
\end{equation}
Our first goal is to show that $\omega_0=\omega\circ\gamma$, which implies by
\cite[Theorem 4.6]{jencova2018renyi} that $\gamma$ is sufficient with respect to $\{\omega,\psi\}$. 
 We then apply Proposition \ref{prop:universal},  and the properties of the extensions of the
conditional expectation $\cE$ to the Haagerup $L^p$-spaces proved in
\cite{junge2003noncommutative}; see also Appendix \ref{Sec-A}.

Let us remark here that if $\psi\sim \ffi$, it follows from \eqref{eq:minimizer1} that
$h_\omega=h_\psi^{\frac12}\bar ah_\psi^{\frac12}$ and
$h_{\omega_0}=h_{\psi_0}^{\frac12}\bar a_0h_{\psi_0}^{\frac12}$. Hence from \eqref{eq:petzdual}
and condition (ii) in Proposition \ref{prop:DPI_equality}, we immediately have
\[
(\gamma^*_{\psi})_*(h_{\omega_0})=h_\psi^{\frac12}\gamma(\bar
a_0)h_\psi^{\frac12}=h_\omega,\quad
{\mbox{i.e.},\quad\omega_0\circ\gamma_\psi^*=\omega,}
\]
as well as $\psi_0\circ\gamma_\psi^*=\psi$ by \eqref{eq:petzdual2}.
These and \eqref{eq:sandwiched} show that $\gamma^*_\psi$ is sufficient with respect to
$\{\omega_0,\psi_0\}$. By Proposition \ref{prop:universal}(i) and the fact that the
Petz dual  $(\gamma_\psi^*)_{\psi_0}^*$ is $\gamma$ itself, this implies the desired
equality
\[
\omega\circ\gamma= \omega_0\circ \gamma_\psi^*\circ\gamma=\omega_0.
\]

In the case $\psi\not\sim \ffi$ some more work is required. Let $\psi_n:=\psi+\frac1n \ffi$ and
$\ffi_n:=\ffi+\frac1n \psi$. Then all $\psi_n$, $\ffi_n$ are faithful,  $\psi_n\to \psi$, $\ffi_n\to \ffi$ in
$\Me_*^+$, and moreover,  $\psi_n\sim \ffi_n$ for all $n$. Then
$\psi_n\circ\gamma\sim \ffi_n\circ\gamma$, $\psi_n\circ\gamma\to \psi_0$,
$\ffi_n \circ \gamma\to \ffi_0$ and by the joint continuity of $Q_{\alpha,z}$ in
\cite[Theorem 1(iv)]{kato2023onrenyi}, we have
\[
\lim_n
Q_{\alpha,z}(\psi_n\circ\gamma\|\ffi_n\circ\gamma)=Q_{\alpha,z}(\psi_0\|\ffi_0)
=Q_{\alpha,z}(\psi\|\ffi)=\lim_nQ_{\alpha,z}(\psi_n\|\ffi_n).
\]
Let $\bar b_{n}\in \Ne^{++}$ be the minimizer for the variational expression for
$Q_{\alpha,z}(\psi_n\circ\gamma\|\ffi_n\circ\gamma)$ {given in \eqref{F-2.1}.}
Let also $\bar a_n$ be the minimizer for $Q_{\alpha,z}(\psi_n\|\ffi_n)$, and let
$f_n:\Me^{++}\to \mathbb R^+$ be the function minimized in the expression for
$Q_{\alpha,z}(\psi_n\|\ffi_n)$ (see \eqref{func-variational}). We then have 
\begin{align*}
Q_{\alpha,z}(\psi_n\circ\gamma\|\ffi_n\circ\gamma)-f_n(\gamma(\bar
b_{n}))&=\alpha\left(\Big\|h_{\psi_n\circ\gamma}^{\frac1{2p}}\bar b_{n}
h_{\psi_n\circ\gamma}^{\frac1{2p}}\Big\|_p^p-\Big\|h_{\psi_n}^{1\over 2p}\gamma(\bar
b_n)h_{\psi_n}^{1\over 2p}\Big\|_p^p\right)\\
&\quad + (1-\alpha)\left(\Big\|h_{\ffi_n\circ\gamma}^{\frac1{2r}}\bar b_{n}^{-1}
h_{\ffi_n\circ\gamma}^{\frac1{2r}}\Big\|_r^r-\Big\|h_{\ffi_n}^{1\over 2r}\gamma(\bar
b_n)^{-1}h_{\ffi_n}^{1\over 2r}\Big\|_r^r\right)\ge 0,
\end{align*}
where the inequality follows from Lemma \ref{lemma:dpi}(ii) and \eqref{eq:ineq}. We
obtain
\begin{equation}\label{eq:qfn}
Q_{\alpha,z}(\psi_n\circ\gamma\|\ffi_n\circ\gamma)-Q_{\alpha,z}(\psi_n\|\ffi_n)\ge f_n(\gamma(\bar
b_{n}))-Q_{\alpha,z}(\psi_n\|\ffi_n)\ge 0.
\end{equation}

Now let $\mu_{n,0}\in \Ne_*^+$ and $\mu_n\in \Me_*^+$ be such that (using \eqref{eq:minimizer1} in
Lemma \ref{lemma:variational_majorized})
\begin{align*}
h_{\mu_{n,0}}^{\frac 1p}=
\Big|h_{\ffi_n\circ\gamma}^{1\over 2r}h_{\psi_n\circ\gamma}^{1\over 2p}\Big|^{2\alpha}=
h_{\psi_n\circ\gamma}^{\frac 1{2p}}\bar b_{n}
h_{\psi_n\circ\gamma}^{\frac1{2p}},\qquad
h_{\mu_{n}}^{\frac1p}=\Big|h_{\ffi_n}^{1\over 2r}h_{\psi_n}^{1\over 2p}\Big|^{2\alpha}=
h_{\psi_n}^{\frac 1{2p}}\bar a_{n}h_{\psi_n}^{\frac1{2p}}. 
\end{align*}
Then $h_{\mu_{n,0}}^{\frac1p}\to h_{\mu_0}^{\frac1p}$ in $L^p(\Ne)$, which follows from the H\"older
inequality and the fact \cite{kosaki1984applicationsuc} that the map
$L^{2z}(\Ne)\ni h\mapsto |h|^{2\alpha}\in L^p(\Ne)$ is continuous in the norm. Similarly,
$h_{\mu_n}^{\frac1p}\to h_\mu^{\frac1p}$ in $L^p(\Me)$. 
Next, note that since
$Q_{\alpha,z}(\psi_n\circ\gamma\|\ffi_n\circ\gamma)$ and $Q_{\alpha,z}(\psi_n\|\ffi_n)$
have the same limit, we see from \eqref{eq:qfn} that
$f_n(\gamma(\bar b_{n}))-Q_{\alpha,z}(\psi_n\|\ffi_n)\to0$.
Moreover, by Lemma \ref{lemma:dpi}(ii) and \eqref{eq:infimum} note that
\[
\sup_n\Big\|h_{\psi_n}^{\frac 1{2p}}\gamma(\bar b_{n})h_{\psi_n}^{1\over 2p}\Big\|_p^p
\le\sup_n\Big\|h_{\psi_n\circ\gamma}^{\frac 1{2p}}\bar b_{n}
h_{\psi_n\circ\gamma}^{\frac 1{2p}}\Big\|_p^p
=\sup_nQ_{\alpha,z}(\psi_n\circ\gamma\|\ffi_n\circ\gamma)<\infty.
\]
Therefore, since
$\big\|h_{\psi_n}^{\frac 1{2p}}\gamma(\bar b_{n})h_{\psi_n}^{1\over 2p}-h_{\mu_n}^{\frac1p}\big\|_p$
means $\|\xi_p(\gamma(\bar b_n))-\xi_p(\bar a_n)\|_p$ defined for $\psi_n$ (in place of $\psi$),
it follows from Lemma \ref{lemma:variational_majorized2} that 
$h_{\psi_n}^{\frac 1{2p}}\gamma(\bar b_{n})h_{\psi_n}^{1\over 2p}-h_{\mu_n}^{\frac1p}\to
0$ in $L^p(\Me)$.  On the other hand, let $\gamma^*_{\psi_n,p}, \gamma^*_{\psi,p}$ be
the contractions defined in Lemma \ref{lemma:pcontraction}. We
then have 
\[
h_{\psi_n}^{\frac 1{2p}}\gamma(\bar b_{n})h_{\psi_n}^{1\over
2p}=\gamma_{\psi_{n},p}^*\bigl(h_{\mu_{n,0}}^{\frac1p}\bigr)
\]
and since $\gamma^*_{\psi_{n},p}(k)\to \gamma^*_{\psi,p}(k)$ in $L^p(\Me)$ for any
$k\in L^p(s(\psi\circ\gamma)\Ne s(\psi\circ\gamma))$ by Lemma \ref{lemma:pcontraction}, we have  
\[
\Big\|\gamma^*_{\psi,p}\bigl(h_{\mu_0}^{\frac1p}\bigr)-
\gamma_{\psi_{n},p}^*\bigl(h_{\mu_{n,0}}^{\frac1p}\bigr)\Big\|_p\le
\Big\|(\gamma^*_{\psi,p}-\gamma^*_{\psi_{n},p})\bigl(h_{\mu_0}^{\frac1p}\bigr)\Big\|_p+
\Big\|\gamma^*_{\psi_{n},p}\bigl(h_{\mu_0}^{\frac1p}-h_{\mu_{n,0}}^{\frac1p}\bigr)\Big\|_p\to 0.
\]
Putting all together, we obtain that 
\[
h_\mu^{\frac1p}=\lim_n h_{\mu_n}^{\frac1p}=\lim_n
\gamma^*_{\psi_{n},p}(h_{\mu_{n,0}}^{\frac1p})=\gamma^*_{\psi,p}(h_{\mu_0}^{\frac1p}).
\]
It follows from \eqref{gamma-rho-p2} that 
\[
(\gamma^*_{\psi})_*(h_{\omega_0})=
h_{\psi}^{\frac{p-1}{2p}}\gamma^*_{\psi,p}(h_{\mu_0}^{\frac1p})h_{\psi}^{\frac{p-1}{2p}}=
h_{\psi}^{\frac{p-1}{2p}}h_\mu^{\frac1p}h_{\psi}^{\frac{p-1}{2p}}=h_\omega.
\]
As we have seen in the case $\psi\sim \ffi$ above, this and \eqref{eq:sandwiched} imply that
\[
\omega\circ\gamma= \omega_0\circ \gamma_\psi^*\circ\gamma=\omega_0.
\]
Therefore, we have shown that $\gamma$ is sufficient with respect to $\{\omega,\psi\}$.

Next, let $\mathcal E$ be the faithful  normal conditional expectation onto the set of fixed points of
$\gamma\circ\gamma^*_\psi$ (see a note after Proposition \ref{prop:universal}).
Then by Proposition \ref{prop:universal}(ii), $\mathcal E$ preserves both $\psi$ and $\omega$.
Since $\psi$ is assumed faithful, we may construct the Haagerup $L^p$-spaces with respect
to $\psi$; see Remark \ref{remark:defi}. We then have the extensions of $\cE$ to any Haagerup
$L^p$-space as in Appendix \ref{Sec-A}, such that  by the bimodule property
\eqref{eq:cond}, 
\[
h_\psi^{p-1\over 2p}h_\mu^{\frac1p}h_\psi^{p-1\over 2p}=h_\omega=\mathcal
E_*(h_\omega)=h_\psi^{p-1\over 2p}\cE_p\bigl(h_\mu^{\frac1p}\bigr)h_\psi^{p-1\over 2p}.
\]
It follows that $\big|h_\ffi^{1\over 2r}h_\psi^{1\over 2p}\big|^{2\alpha}= h_\mu^{\frac1p}\in
L^p(\cE(\Me))$ and consequently $\big|h_\varphi^{\frac1{2r}}h_\psi^{\frac1{2p}}\big|=h_\mu^{1\over
2z}\in L^{2z}(\mathcal E(\Me))^+$. Note that by the  assumptions we must have $2z>1$. 
Let  $h_\varphi^{\frac1{2r}}h_\psi^{\frac1{2p}}=u\big|h_\varphi^{\frac1{2r}}h_\psi^{\frac1{2p}}\big|$ be the
polar decomposition in $L^{2z}(\Me)$; then we have by using \eqref{eq:cond} again,
\[
u^*h_\varphi^{\frac1{2r}}h_\psi^{\frac1{2p}}=\mathcal
E_{2z}\bigl(u^*h_\varphi^{\frac1{2r}}h_\psi^{\frac1{2p}}\bigr)=\mathcal
E_{2r}\bigl(u^*h_\varphi^{\frac1{2r}}\bigr)h_\psi^{\frac1{2p}},
\]
which implies that $u^*h_\varphi^{\frac1{2r}}\in L^{2r}(\cE(\Me))$. Since $\psi$ is
faithful, we have

\[
\ker\Bigl(\bigl(h_\ffi^{1\over2r}h_\psi^{1\over2p}\bigr)^*\Bigr)
=\ker\bigl(h_\psi^{1\over2p}h_\ffi^{1\over2r}\bigr)
=\ker h_\ffi^{1\over2r}=\ker h_\ffi,
\]
which implies that $uu^*=s(\ffi)$.
Hence by uniqueness of the polar decomposition in $L^{2r}(\Me)$ and $L^{2r}(\cE(\Me))$,
we obtain that $h_\ffi^{1\over 2r}\in L^{2r}(\cE(\Me))^+$ and $u\in \cE(\Me)$. 
Therefore, we must have $h_\ffi\in L^1(\cE(\Me))$, so that $\ffi\circ\mathcal E=\ffi$ and $\gamma$ is
sufficient with respect to $\{\psi,\varphi\}$ by Proposition \ref{prop:universal}(ii) again.
\end{proof}

\subsection{The case $\alpha>1$}\label{Sec-4.2}

We now turn to the case $\alpha>1$. Assume that $(\alpha,z)$ is within the DPI bounds,
that is, $\alpha>1$, $\max\{\alpha/2,\alpha-1\}\le z\le\alpha$.  Then
we have $p=\frac z\alpha\in [1/2,1]$ and $q=\frac z{\alpha-1}\ge 1$, and we further assume that $q>1$.
Here we need to assume that $D_{\alpha,z}(\psi\|\ffi)<\infty$, so that by Lemma \ref{lemma:renyi_2z}
there is some (unique) $y\in L^{2z}(\Me)s(\ffi)$ such that
\[
h_{\psi}^{1\over 2p}=y h_\ffi^{1\over 2q}.
\]
By the proof of Theorem \ref{thm:variational}(ii), we have the following variational expression
\begin{align}\label{eq:variationalq}
Q_{\alpha,z}(\psi\|\varphi) =\sup_{w\in
L^q(\Me)^+}\bigl\{\alpha\Tr\bigl((ywy^*)^p\bigr)-(\alpha-1)\Tr\bigl(w^q\bigr)\bigr\}.
\end{align}
Indeed, we note that $x$ in the proof of Theorem \ref{thm:variational}(ii) is $y^*y$ and
$\Tr\bigl((x^{1\over2}wx^{1\over2})^p\bigr)$ in expression \eqref{F-2.4} is rewritten as
$\Tr\bigl((|y|w|y|)^p\bigr)=\Tr\bigl((ywy^*)^p\bigr)$.
The supremum is attained at a unique point $\bar
w=(y^*y)^{\alpha-1}\in L^q(\Me)^+$, where uniqueness follows from strict concavity of the
function $w\mapsto \alpha\Tr\bigl((ywy^*)^p\bigr)-(\alpha-1)\Tr\bigl( w^q\bigr)$ thanks to $p\le1$ and $q>1$.

By DPI, we have $D_{\alpha,z}(\psi_0\|\varphi_0)\le D_{\alpha,z}(\psi\|\varphi)<\infty$,
so that there is some (unique) $y_0\in L^{2z}(\Ne)s(\ffi_0)$ such that 
\[
h_{\psi_0}^{\frac1{2p}}=y_0h_{\varphi_0}^{\frac1{2q}}.
\]

Since $D_{\alpha,z}(\psi\|\ffi)<\infty$ implies that $s(\psi)\le s(\ffi)$, we may assume that both $\ffi$ and $\ffi_0$
are faithful.

\begin{lemma}\label{lemma:le}
Let $(\alpha,z)$ be as assumed above, and assume further that $\alpha<z+1$. Let
$\gamma:\Ne\to\Me$ be a normal positive unital map.
Let $\gamma^*_{\ffi,q}:L^q(\Ne)\to L^q(\Me)$ be the contraction as in Lemma \ref{lemma:pcontraction}.
Let $\bar w:=(y^*y)^{\alpha-1}\in L^q(\Me)$ and
$\bar w_0:=(y_0^*y_0)^{\alpha-1}\in L^q(\Ne)$. Then equality in the DPI for $D_{\alpha,z}(\psi\|\ffi)$ holds
if and only if
\begin{equation}\label{eq:dpiw}
\bar w=\gamma^*_{\ffi,q}(\bar w_0)\quad \text{and}\quad  
\Tr\bigl(\bar w_0^q\bigr)=\Tr\bigl(\gamma^*_{\varphi,q}(\bar w_0)^q\bigr).
\end{equation}
\end{lemma}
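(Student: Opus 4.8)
The plan is to work entirely with the two concave functionals behind \eqref{eq:variationalq}, namely $G(w):=\alpha\Tr((ywy^*)^p)-(\alpha-1)\Tr(w^q)$ on $L^q(\Me)^+$ and its analogue $G_0(w):=\alpha\Tr((y_0wy_0^*)^p)-(\alpha-1)\Tr(w^q)$ on $L^q(\Ne)^+$, so that $Q_{\alpha,z}(\psi\|\ffi)=G(\bar w)=\sup_wG(w)$ and $Q_{\alpha,z}(\psi_0\|\ffi_0)=G_0(\bar w_0)=\sup_wG_0(w)$. The first thing I would record is the identity
\[
Q_{\alpha,z}(\psi\|\ffi)=\Tr(\bar w^q),\qquad Q_{\alpha,z}(\psi_0\|\ffi_0)=\Tr(\bar w_0^q),
\]
which follows from $\bar w=(y^*y)^{\alpha-1}$, $(\alpha-1)q=z$, and $y(y^*y)^{\alpha-1}y^*=(yy^*)^\alpha$: these give $\Tr((y\bar wy^*)^p)=\Tr((y^*y)^z)=\Tr(\bar w^q)$, hence $G(\bar w)=\Tr(\bar w^q)$, and likewise for $\psi_0,\ffi_0$. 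This reduces the equality in the DPI to the single scalar identity $\Tr(\bar w^q)=\Tr(\bar w_0^q)$.

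Next I would transport $G_0$ into $G$ through $\gamma^*_{\ffi,q}$. For $w_0=h_{\ffi_0}^{1/2q}bh_{\ffi_0}^{1/2q}$ with $b\in\Ne^+$, using $h_{\psi_0}^{1/2p}=y_0h_{\ffi_0}^{1/2q}$, $h_\psi^{1/2p}=yh_\ffi^{1/2q}$ and \eqref{gamma-rho-p1} one computes $y_0w_0y_0^*=h_{\psi_0}^{1/2p}bh_{\psi_0}^{1/2p}$ and $y\gamma^*_{\ffi,q}(w_0)y^*=h_\psi^{1/2p}\gamma(b)h_\psi^{1/2p}$, so that Lemma \ref{lemma:dpi}(i) (with $\rho=\psi$; note $p\le1$, with equality at $p=1$) yields
\[
\Tr((y_0w_0y_0^*)^p)\le\Tr\bigl((y\gamma^*_{\ffi,q}(w_0)y^*)^p\bigr).
\]
Since both sides depend continuously on $w_0\in L^q(\Ne)^+$ (by H\"older the map $w\mapsto ywy^*$ sends $L^q$ continuously into $L^p$, and $\gamma^*_{\ffi,q}$ is continuous) and elements of this form are dense in $L^q(\Ne)^+$, the inequality extends to all $w_0\in L^q(\Ne)^+$. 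Combining it with the contraction estimate $\Tr(\gamma^*_{\ffi,q}(w_0)^q)\le\Tr(w_0^q)$ from Lemma \ref{lemma:pcontraction} (valid since $q>1$) and $\alpha,\alpha-1>0$, I obtain for every $w_0\in L^q(\Ne)^+$
\[
G_0(w_0)\le G\bigl(\gamma^*_{\ffi,q}(w_0)\bigr)\le\sup_wG(w)=Q_{\alpha,z}(\psi\|\ffi),
\]
which re-proves the DPI and is the engine of the equality analysis.

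For the forward implication I would specialize this chain to $w_0=\bar w_0$: equality in the DPI forces the outer terms to coincide, so both inequalities saturate. That $G(\gamma^*_{\ffi,q}(\bar w_0))=\sup_wG(w)$ says $\gamma^*_{\ffi,q}(\bar w_0)$ is a maximizer of $G$, whence $\gamma^*_{\ffi,q}(\bar w_0)=\bar w$ by the uniqueness noted after \eqref{eq:variationalq}; this is the first condition in \eqref{eq:dpiw}. Then $G_0(\bar w_0)=G(\gamma^*_{\ffi,q}(\bar w_0))$ rewrites as $\alpha[\Tr((y\gamma^*_{\ffi,q}(\bar w_0)y^*)^p)-\Tr((y_0\bar w_0y_0^*)^p)]+(\alpha-1)[\Tr(\gamma^*_{\ffi,q}(\bar w_0)^q)-\Tr(\bar w_0^q)]=0$, where the first bracket is $\ge0$ and the second $\le0$, forcing $\Tr(\gamma^*_{\ffi,q}(\bar w_0)^q)=\Tr(\bar w_0^q)$, the second condition. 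Conversely, assuming \eqref{eq:dpiw}, the two identities of the first paragraph give $Q_{\alpha,z}(\psi\|\ffi)=\Tr(\bar w^q)=\Tr(\gamma^*_{\ffi,q}(\bar w_0)^q)=\Tr(\bar w_0^q)=Q_{\alpha,z}(\psi_0\|\ffi_0)$, i.e.\ equality in the DPI. The one step that needs genuine care is the density/continuity extension of the transport inequality to the non-factorized maximizer $\bar w_0=(y_0^*y_0)^{\alpha-1}$; the rest is bookkeeping of which inequalities saturate, together with the structural identity $Q_{\alpha,z}=\Tr(\bar w^q)$.
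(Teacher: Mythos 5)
Your proposal is correct and follows essentially the same route as the paper: the transport inequality $\Tr((y_0w_0y_0^*)^p)\le\Tr((y\gamma^*_{\ffi,q}(w_0)y^*)^p)$ proved on factorized elements via Lemma \ref{lemma:dpi}(i) and extended by density (Lemma \ref{lemma:cone}), combined with the contraction property of $\gamma^*_{\ffi,q}$ and uniqueness of the maximizer in \eqref{eq:variationalq}, with the converse read off from $Q_{\alpha,z}=\Tr(\bar w^q)$. The only differences are cosmetic (naming the functionals $G$, $G_0$ and spelling out the continuity needed for the density step).
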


\begin{proof} We first show that  for any $w_0\in L^q(\Ne)^+$,
\begin{equation}\label{eq:lemma}
\Tr\bigl((y\gamma^*_{\ffi,q}(w_0)y^*)^p\bigr)\ge
\Tr\bigl((y_0w_0y_0^*)^p\bigr).
\end{equation}
Let us first assume that
$w_0=h_{\varphi_0}^{\frac1{2q}}bh_{\varphi_0}^{\frac1{2q}}$ for some $b\in \Ne^+$. Then 
$\gamma^*_{\varphi,q}(w_0)=h_{\varphi}^{\frac1{2q}}\gamma(b)h_{\varphi}^{\frac1{2q}}$.
Therefore
\begin{align*}
\Tr\bigl((y\gamma^*_{\varphi,q}(w_0)y^*)^p\bigr)
&=\Tr\Bigl(\bigl(yh_{\varphi}^{\frac1{2q}}\gamma(b)h_{\varphi}^{\frac1{2q}}y^*\bigr)^p\Bigr)=
\Tr\Bigl(\bigl(h_\psi^{\frac1{2p}}\gamma(b)h_\psi^{\frac1{2p}}\bigr)^p\Bigr)\ge
\Tr\Bigl(\bigl(h_{\psi_0}^{\frac1{2p}}bh_{\psi_0}^{\frac1{2p}}\bigr)^p\Bigr)\\
&=\Tr\Bigl(\bigl(y_0h_{\varphi_0}^{\frac1{2q}}bh_{\varphi_0}^{\frac1{2q}}y_0^*\bigr)^p\Bigr)
=\Tr\bigl((y_0w_0y_0^*)^p\bigr),
\end{align*}
where the inequality is from Lemma \ref{lemma:dpi}(i). The proof of inequality
\eqref{eq:lemma} is finished by Lemma \ref{lemma:cone}.
By using this and the fact that $\gamma^*_{\ffi,q}$ is a contraction, it follows from the variational
expression in \eqref{eq:variationalq} that
\begin{align*}
Q_{\alpha,z}(\psi\|\varphi)
&\ge \alpha\Tr\bigl((y\gamma^*_{\varphi,q}(\bar w_0)y^*)^p\bigr)-
(\alpha-1)\Tr\bigl(\gamma^*_{\varphi,q}(\bar w_0)^q\bigr)\\
&\ge \alpha\Tr\bigr((y_0\bar w_0 y_0^*)^p\bigr)-(\alpha-1)\Tr\bigl(\bar
w_0^q\bigr)=Q_{\alpha,z}(\psi_0\|\varphi_0).
\end{align*}
Suppose that $Q_{\alpha,z}(\psi_0\|\varphi_0)=Q_{\alpha,z}(\psi\|\varphi)$, then both the
inequalities must be equalities. Since $\bar w\in L^q(\Me)^+$ and $\bar w_0\in L^q(\Ne)^+$ are
the unique elements such that the suprema
in the variational expressions in \eqref{eq:variationalq} for $Q_{\alpha,z}(\psi\|\varphi)$ and
$Q_{\alpha,z}(\psi_0\|\varphi_0)$ are attained, this proves \eqref{eq:dpiw}.
Conversely, if the equalities in \eqref{eq:dpiw} hold, then
\[
Q_{\alpha,z}(\psi_0\|\ffi_0)=\Tr\bigl((y_0^*y_0)^z\bigr)=\Tr\bigl(\bar w_0^q\bigr)
=\Tr\bigl(\bar w^q\bigr)=\Tr\bigl((y^*y)^z\bigr)=Q_{\alpha,z}(\psi\|\ffi).
\]
\end{proof}

\begin{theorem}\label{thm:suffge1}
{Let $\alpha>1$ and $\max\{\alpha/2,\alpha-1\}\le z\le\alpha$, and assume further that
$\alpha<z+1$.} Let $\gamma:\Ne\to \Me$ be a channel and let $\psi,\varphi\in \Me_*^+$,
$\psi\ne0$, be such that $D_{\alpha,z}(\psi\|\varphi)<\infty$ (hence $s(\psi)\le s(\ffi)$).
Then $D_{\alpha,z}(\psi\circ\gamma\|\ffi\circ\gamma)=D_{\alpha,z}(\psi\|\varphi)$ if and only if
$\gamma$ is sufficient with respect to $\{\psi,\ffi\}$.
\end{theorem}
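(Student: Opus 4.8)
The plan is to reduce $D_{\alpha,z}(\psi\|\ffi)$ to a \emph{sandwiched} R\'enyi divergence of order $q=\tfrac{z}{\alpha-1}$ (the extra hypothesis $\alpha<z+1$ is exactly $q>1$), to invoke the known sufficiency theorem for the sandwiched divergence, and finally to transport sufficiency back to the pair $\{\psi,\ffi\}$ through a conditional expectation, in the spirit of the last part of the proof of Theorem~\ref{thm:suffle1}. I keep $\ffi,\ffi_0$ faithful and retain the elements $y,\ \bar w=(y^*y)^{\alpha-1}\in L^q(\Me)^+$ and $y_0,\ \bar w_0=(y_0^*y_0)^{\alpha-1}\in L^q(\Ne)^+$ from the setup preceding Lemma~\ref{lemma:le}. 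I define $\omega\in\Me_*^+$, $\omega_0\in\Ne_*^+$ by $h_\omega:=h_\ffi^{\frac{q-1}{2q}}\bar w\,h_\ffi^{\frac{q-1}{2q}}$ and $h_{\omega_0}:=h_{\ffi_0}^{\frac{q-1}{2q}}\bar w_0\,h_{\ffi_0}^{\frac{q-1}{2q}}$. Reading Definition~\ref{defi:renyi} (equivalently Lemma~\ref{lemma:renyi_2z}) for the sandwiched case $\alpha=z=q>1$, the element $\bar w$ is exactly the operator $x$ attached to $\omega$, so that $Q_{q,q}(\omega\|\ffi)=\|\bar w\|_q^q=\Tr((y^*y)^z)=Q_{\alpha,z}(\psi\|\ffi)$, and likewise $Q_{q,q}(\omega_0\|\ffi_0)=Q_{\alpha,z}(\psi_0\|\ffi_0)$.

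The direction from reversibility to equality is generic: if $\gamma$ is reversible with a recovery channel $\beta$, then applying the DPI of Theorem~\ref{thm:dpi} to both $\gamma$ and $\beta$ gives $D_{\alpha,z}(\psi\|\ffi)=D_{\alpha,z}(\psi\circ\gamma\circ\beta\|\ffi\circ\gamma\circ\beta)\le D_{\alpha,z}(\psi_0\|\ffi_0)\le D_{\alpha,z}(\psi\|\ffi)$, forcing equality. For the converse I assume equality in the DPI. Lemma~\ref{lemma:le} then yields $\bar w=\gamma^*_{\ffi,q}(\bar w_0)$ together with $\Tr(\bar w_0^q)=\Tr(\bar w^q)$. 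Feeding the first identity into the intertwining relation \eqref{gamma-rho-p2} (with $\rho=\ffi$ and $p=q$) gives $(\gamma^*_\ffi)_*(h_{\omega_0})=h_\omega$, i.e.\ $\omega_0\circ\gamma_\ffi^*=\omega$; combined with $\ffi_0\circ\gamma_\ffi^*=\ffi$ from \eqref{eq:petzdual2} and the $Q$-equality above (the unital $\gamma_\ffi^*$ preserves normalizations), this says precisely that equality holds in the DPI for the sandwiched divergence $D_{q,q}$ under the channel $\gamma_\ffi^*\colon\Me\to\Ne$ with respect to $\{\omega_0,\ffi_0\}$. Note that, unlike the case $0<\alpha<1$, no approximation of $\psi,\ffi$ is required here, since the maximizer $\bar w=(y^*y)^{\alpha-1}$ exists as soon as $D_{\alpha,z}(\psi\|\ffi)<\infty$.

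Next I would invoke the sufficiency theorem for the sandwiched divergence of order $q>1$ (\cite[Theorem 4.6]{jencova2018renyi}, cf.\ \cite[Theorem 5.1]{jencova2021renyi}) to conclude that $\gamma_\ffi^*$ is sufficient with respect to $\{\omega_0,\ffi_0\}$. Since the Petz dual of $\gamma_\ffi^*$ with respect to $\ffi_0$ is $\gamma$ itself by \eqref{eq:petzdual2}, Proposition~\ref{prop:universal}(i) turns this into $\omega_0\circ\gamma_\ffi^*\circ\gamma=\omega_0$, that is $\omega\circ\gamma=\omega_0$. Consequently the $2$-positive unital map $\gamma_\ffi^*$ is itself a recovery channel for $\gamma$, because $\omega\circ\gamma\circ\gamma_\ffi^*=\omega_0\circ\gamma_\ffi^*=\omega$ and $\ffi\circ\gamma\circ\gamma_\ffi^*=\ffi_0\circ\gamma_\ffi^*=\ffi$; hence $\gamma$ is sufficient with respect to $\{\omega,\ffi\}$.

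It remains to upgrade sufficiency with respect to $\{\omega,\ffi\}$ to sufficiency with respect to $\{\psi,\ffi\}$, and this is where I expect the real work to lie. Let $\cE$ be the faithful normal conditional expectation of $\Me$ onto the fixed-point algebra of $\gamma\circ\gamma_\ffi^*$ (the note after Proposition~\ref{prop:universal}); it preserves $\ffi$, and by Proposition~\ref{prop:universal}(ii) applied to $\{\omega,\ffi\}$ it also preserves $\omega$. Constructing the Haagerup $L^p$-spaces with respect to the faithful $\ffi$ and using the bimodule property \eqref{eq:cond} of the extensions of $\cE$, the relation $\omega\circ\cE=\omega$ forces $\bar w\in L^q(\cE(\Me))$, whence $y^*y=\bar w^{1/(\alpha-1)}\in L^z(\cE(\Me))^+$ and $|y|\in L^{2z}(\cE(\Me))^+$. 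Writing the polar decomposition $y=u|y|$, the element $m:=u^*h_\psi^{\frac{\alpha}{2z}}=|y|\,h_\ffi^{\frac{\alpha-1}{2z}}$ then lies in $L^{2z/\alpha}(\cE(\Me))$, while $m^*m=h_\ffi^{\frac{\alpha-1}{2z}}(y^*y)h_\ffi^{\frac{\alpha-1}{2z}}=h_\psi^{\alpha/z}$ shows $|m|=h_\psi^{\frac{\alpha}{2z}}$; since the modulus of an element of $L^{2z/\alpha}(\cE(\Me))$ stays in that space, we obtain $h_\psi^{\frac{\alpha}{2z}}\in L^{2z/\alpha}(\cE(\Me))^+$, hence $h_\psi\in L^1(\cE(\Me))$ and $\psi\circ\cE=\psi$. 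Proposition~\ref{prop:universal}(ii) then yields that $\gamma$ is sufficient with respect to $\{\psi,\ffi\}$. The delicate points are the passage $\bar w\in L^q(\cE(\Me))$ via the $L^p$-module calculus and the correct handling of supports in the polar-decomposition step; here this is in fact cleaner than in Theorem~\ref{thm:suffle1}, because the modulus identity $|m|=h_\psi^{\alpha/2z}$ locates $h_\psi$ inside $\cE(\Me)$ directly.
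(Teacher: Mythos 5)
Your proposal is correct and follows essentially the same route as the paper's proof: reduce to the sandwiched divergence of order $q=z/(\alpha-1)>1$ via $\omega,\omega_0$ built from the maximizers $\bar w,\bar w_0$ of Lemma \ref{lemma:le}, deduce sufficiency of $\gamma$ for $\{\omega,\ffi\}$ through the Petz dual $\gamma_\ffi^*$ and \cite[Theorem 4.6]{jencova2018renyi}, and then pull $h_\psi$ into $L^1(\cE(\Me))$ using the bimodule property of the extended conditional expectation. Your final step (identifying $h_\psi^{\alpha/2z}$ as the modulus of $u^*h_\psi^{\alpha/2z}=|y|h_\ffi^{(\alpha-1)/2z}\in L^{2z/\alpha}(\cE(\Me))$) is just a slight rephrasing of the paper's appeal to uniqueness of the polar decomposition in $L^{2p}(\Me)$ and $L^{2p}(\cE(\Me))$, and is equally valid.
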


\begin{proof}
As before we may assume that both $\ffi$ and $\ffi_0$ are faithful. Let $\bar w$ and $\bar w_0$
be as in Lemma \ref{lemma:le}. Let $\omega\in\Me_*^+$ and $\omega_0\in \Ne_*^+$ be such that
\[
h_\omega=h_\ffi^{\frac{q-1}{2q}}\bar wh_\ffi^{\frac{q-1}{2q}},\qquad
h_{\omega_0}=h_{\ffi_0}^{\frac{q-1}{2q}}\bar w_0h_{\ffi_0}^{\frac{q-1}{2q}}.
\]
Then by Lemma \ref{lemma:le} we have
\[
Q_{q,q}(\omega_0\|\varphi_0)=\Tr\bigl(\bar w_0^q\bigr)=\Tr\bigl(\bar
w^q\bigr)=Q_{q,q}(\omega\|\varphi).
\]
and using also Lemma \ref{lemma:pcontraction}, we have
\[
(\gamma^*_\ffi)_*(h_{\omega_0})=h_\ffi^{\frac1{2\alpha}}\gamma^*_{\ffi,q}(\bar
w_0)h_\ffi^{\frac1{2\alpha}}=h_\omega.
\]
Similarly to the proof of Theorem \ref{thm:suffle1} (note that we
have $q>1$ by the assumption), this shows that $\gamma$ is sufficient
with respect to $\{\omega,\ffi\}$. Hence $\omega\circ \cE=\omega$, where $\cE$ is the
conditional expectation onto the fixed points of $\gamma\circ\gamma^*_\ffi$.
Using again the extensions of $\cE$ to the Haagerup $L^p$-spaces with respect to $\ffi$
and their properties in Appendix \ref{Sec-A},
\[
h_\varphi^{q-1\over2q}\bar
wh_\varphi^{q-1\over 2q}=h_\omega=\cE(h_\omega)=h_\varphi^{q-1\over 2q}\cE(\bar
w)h_\varphi^{q-1\over 2q},
\]
which implies that $\bar w=\cE(\bar w)\in L^q(\cE(\Me))^+$. But then we also have
\[
|y|=\bar w^{\frac1{2(\alpha-1)}}=\bar w^{\frac{q}{2z}}\in L^{2z}(\cE(\Me))^+.
\]
Let $y=u|y|$ be the polar decomposition of $y$; then we obtain from the definition of $y$ that
$uu^*=s(yy^*)=s(\psi)$. Furthermore, since
\[
u^*h_\psi^{\frac1{2p}}=|y|h_\varphi^{\frac1{2q}}\in L^{2p}(\cE(\Me)),
\]
by uniqueness of the polar decomposition in $L^{2p}(\Me)$ and $L^{2p}(\cE(\Me))$, we
obtain that $h_{\psi}^{\frac1{2p}}\in L^{2p}(\cE(\Me))^+$ and $u\in \cE(\Me)$. Hence we must
have $h_\psi\in L^1(\cE(\Me))$ so that $\psi\circ\cE=\psi$ and $\gamma$ is sufficient with
respect to $\{\psi,\ffi\}$ by Proposition \ref{prop:universal}(ii). The converse is clear from DPI.
\end{proof}

\begin{remark}
We note that an extra assumption $\alpha<z+1$ is essential in Theorem \ref{thm:suffge1}.
In fact, for the Petz-type R\'enyi divergence $D_{2,1}$ (i.e., in the case $\alpha=2$ and $z=1$),
equality in the DPI does not necessarily imply the sufficiency of completely positive channels,
as exemplified in \cite[Example 2.2]{jencova2010markov} and \cite[Example 4.8]{hiai2017different}
in the finite dimensional case.
\end{remark}

\section{Monotonicity in the parameter $z$}\label{Sec-5}

It is well known \cite{berta2018renyi,hiai2018quantum,jencova2018renyi} that the standard R\'enyi divergence
$D_{\alpha,1}(\psi\|\ffi)$ is monotone increasing in $\alpha\in(0,1)\cup(1,\infty)$ and the sandwiched R\'enyi
divergence $D_{\alpha,\alpha}(\psi\|\ffi)$ is monotone increasing in $\alpha\in[1/2,1)\cup(1,\infty)$. It is also
known \cite{berta2018renyi,hiai2018quantum,jencova2018renyi} that
\begin{align}\label{F-5.1}
\lim_{\alpha\nearrow1}D_{\alpha,1}(\psi\|\ffi)=\lim_{\alpha\nearrow1}D_{\alpha,\alpha}(\psi\|\ffi)
=D_1(\psi\|\ffi),
\end{align}
and if $D_{\alpha,1}(\psi\|\ffi)<\infty$ (resp., $D_{\alpha,\alpha}(\psi\|\ffi)<\infty$) for some $\alpha>1$, then
\begin{align}\label{F-5.2}
\lim_{\alpha\searrow1}D_{\alpha,1}(\psi\|\ffi)=D_1(\psi\|\ffi)\quad
\Bigl(\mbox{resp.,}\ \lim_{\alpha\searrow1}D_{\alpha,\alpha}(\psi\|\ffi)=D_1(\psi\|\ffi)\Bigr),
\end{align}
where $D_1(\psi\|\ffi):=D(\psi\|\ffi)/\psi(\1)$, the normalized relative entropy.
In the rest of the paper we will discuss similar monotonicity properties {of $D_{\alpha,z}$ in
both parameters $\alpha$, $z$ and limits for $D_{\alpha,z}(\psi\|\ffi)$ as
$\alpha\nearrow1$, $\alpha\searrow1$.} We consider monotonicity in the parameter $z$ in Sec.~\ref{Sec-5}
and monotonicity in the parameter $\alpha$ in Sec.~\ref{Sec-6}.

The next theorem in \cite{kato2023onrenyi} shows the monotonicity of $D_{\alpha,z}$ in the
parameter $z>0$ when $0<\alpha<1$.

\begin{theorem}[\mbox{\cite[Theorem 1(x)]{kato2023onrenyi}}]\label{T-5.1}
For every $\psi,\ffi\in\cM_*^+$, $\psi\ne0$, and $0<\alpha<1$, the function
$z\mapsto D_{\alpha,z}(\psi\|\ffi)$ is monotone increasing on $(0,\infty)$.
\end{theorem}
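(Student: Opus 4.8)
The plan is to reduce the assertion to a monotonicity property of $Q_{\alpha,z}$ and then to obtain the latter from the Araki--Lieb--Thirring (ALT) log-majorization. First I would observe that, since $0<\alpha<1$ makes $\frac{1}{\alpha-1}<0$ while $\log$ is increasing, the map $z\mapsto D_{\alpha,z}(\psi\|\ffi)$ is increasing exactly when $z\mapsto Q_{\alpha,z}(\psi\|\ffi)$ is decreasing, so it suffices to prove the latter. For $0<\alpha<1$ this quantity is always finite: writing $P:=h_\psi^{\alpha}\in L^{1/\alpha}(\Me)^+$, $S:=h_\ffi^{1-\alpha}\in L^{1/(1-\alpha)}(\Me)^+$ and $u:=1/z$, the H\"older inequality gives $S^{u/2}P^uS^{u/2}\in L^{1/u}(\Me)^+$, and the functional-calculus identities $h_\psi^{\alpha/z}=P^u$, $h_\ffi^{(1-\alpha)/2z}=S^{u/2}$ yield
\[
Q_{\alpha,z}(\psi\|\ffi)=\Tr\Bigl(S^{u/2}P^uS^{u/2}\Bigr)^{1/u}.
\]
Thus the whole matter reduces to showing that $u\mapsto\Tr\bigl(S^{u/2}P^uS^{u/2}\bigr)^{1/u}$ is increasing on $(0,\infty)$, since $z=1/u$ reverses the direction.

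The core step will be the ALT log-majorization for $\tau$-measurable operators affiliated with the crossed product $\cR$, phrased through generalized $s$-numbers $\mu_t(\cdot)$. Fixing $0<u_1\le u_2$ and putting $s:=u_2/u_1\ge1$, $A:=P^{u_1}$, $B:=S^{u_1}$, I would apply it in the form
\[
\bigl(S^{u_1/2}P^{u_1}S^{u_1/2}\bigr)^{u_2/u_1}=\bigl(B^{1/2}AB^{1/2}\bigr)^{s}\prec_{\log}B^{s/2}A^sB^{s/2}=S^{u_2/2}P^{u_2}S^{u_2/2}.
\]
Because log-majorization is preserved by the increasing power $t\mapsto t^{1/u_2}$ and forces weak majorization, and because the $\Tr$-functional is monotone under weak majorization on $L^1(\Me)^+$ (being $\int_0^\infty\mu_t(\cdot)\,dt$ on positive elements), applying $(\cdot)^{1/u_2}$ and then $\Tr$ would give
\[
\Tr\bigl(S^{u_1/2}P^{u_1}S^{u_1/2}\bigr)^{1/u_1}\le\Tr\bigl(S^{u_2/2}P^{u_2}S^{u_2/2}\bigr)^{1/u_2},
\]
which is precisely the sought monotonicity in $u$; transporting it through $u=1/z$ and the negative prefactor $\frac{1}{\alpha-1}$ would then finish the argument.

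The hard part will be to make the log-majorization step fully rigorous for unbounded $\tau$-measurable operators: one has to check that $P^u$, $S^u$ and all the intermediate products lie in the class for which both the ALT log-majorization and the implication ``log-majorization $\Rightarrow$ weak majorization $\Rightarrow$ $\Tr$-inequality'' are available, the integrability being controlled by the H\"older bounds noted above. Should the unbounded theory prove delicate, a safe alternative is to pass to a finite von Neumann algebra via Haagerup's reduction theorem (Appendix \ref{Sec-B}), where $\Tr$ is a genuine finite trace and the ALT inequality is classical, and then to lift the conclusion back using the lower semicontinuity and approximation tools already in play here (for instance the martingale convergence of Theorem \ref{thm:martingale} together with \cite[Theorem 1(iv)]{kato2023onrenyi}). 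I expect the log-majorization route to be the cleaner of the two.
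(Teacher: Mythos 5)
The paper does not actually prove this theorem; it is quoted from Kato's Theorem 1(x) without proof. That said, your route — rewriting $Q_{\alpha,z}(\psi\|\ffi)=\Tr\bigl(S^{u/2}P^uS^{u/2}\bigr)^{1/u}$ with $u=1/z$ and invoking Kosaki's Araki--Lieb--Thirring log-majorization for $\tau$-measurable operators in the crossed product — is exactly the mechanism this paper deploys for the neighbouring results (the $\alpha>1$ finite-algebra case in Lemma \ref{L-5.3} and the $\Lambda_t$-computations of Sec.~\ref{Sec-6.1}), and the core inequality $\bigl(B^{1/2}AB^{1/2}\bigr)^{u_2/u_1}\prec_{\log}B^{u_2/2u_1}A^{u_2/u_1}B^{u_2/2u_1}$ does close the argument, with the sign bookkeeping ($\tfrac{1}{\alpha-1}<0$, $z=1/u$) handled correctly.

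Two points need repair. First, your justification of the final step is wrong as stated: on the Haagerup space $L^1(\Me)^+$ the functional $\Tr$ is \emph{not} $\int_0^\infty\mu_t(\cdot)\,dt$; that integral is the canonical trace $\tau$ of the crossed product, which equals $+\infty$ on every nonzero element of $L^1(\Me)^+$, since $\mu_t(h)=t^{-1}\Tr h$ by \cite[Lemma 4.8]{fack1986generalized} (this is precisely the computation used in \eqref{F-6.5}--\eqref{F-6.6}). The fix is immediate and actually simplifies matters: for $h\in L^p(\Me)^+$ one has $\mu_t(h)=t^{-1/p}\|h\|_p$, so $\Lambda_t$-domination between two elements of the same $L^{1/u_2}(\Me)^+$ — which is where both $\bigl(S^{u_1/2}P^{u_1}S^{u_1/2}\bigr)^{u_2/u_1}$ and $S^{u_2/2}P^{u_2}S^{u_2/2}$ live by the H\"older inequality — is literally equivalent to the inequality between their $\|\cdot\|_{1/u_2}$-norms, which is the desired $Q_{\alpha,z_1}\le Q_{\alpha,z_2}$ for $z_1\ge z_2$; no separate ``weak majorization implies trace inequality'' step is needed. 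Second, the proposed fallback via Haagerup's reduction theorem cannot give the full statement: Lemma \ref{L-B.2} rests on the DPI and hence requires $z\ge\max\{\alpha,1-\alpha\}$, whereas the theorem claims monotonicity on all of $(0,\infty)$; the paper itself notes this limitation after Proposition \ref{P-5.6}. You should therefore commit to the direct log-majorization argument in the crossed product.
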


\subsection{The finite von Neumann algebra case}\label{Sec-5.1}

In this subsection we show monotonicity of $D_{\alpha,z}$ in the parameter $z$ in the finite
von Neumann algebra setting. Recall (see, e.g., \cite[Example 9.11]{hiai2021lectures}) that if $(\Me,\tau)$
is a semi-finite von Neumann algebra $\Me$ with a faithful normal semi-finite trace $\tau$, then the
Haagerup $L^p$-space $L^p(\Me)$ is identified with the $L^p$-space $L^p(\Me,\tau)$ with respect to
$\tau$. Hence one can define $Q_{\alpha,z}(\psi\|\ffi)$ for $\psi,\ffi\in\Me_*^+$ by replacing, in
Definition \ref{defi:renyi}, $L^p(\Me)$ with $L^p(\Me,\tau)$ and $h_\psi\in L^1(\Me)^+$ with the
Radon--Nikodym derivative $d\psi/d\tau\in L^1(\Me,\tau)^+$. Below we use the symbol $h_\psi$ to
denote $d\psi/d\tau$ as well. Note that $\tau$ on $\Me^+$ is naturally extended to the positive part
$\widetilde\Me^+$ of the space $\widetilde\Me$ of $\tau$-measurable operators. By
\cite[Proposition 2.7]{fack1986generalized} (also \cite[Proposition 4.20]{hiai2021lectures}) we then have
\begin{align}\label{F-5.3}
\tau(a)=\int_0^\infty\mu_s(a)\,ds,\qquad a\in\widetilde\Me^+,
\end{align}
where $\mu_s(a)$ is the generalized $s$-number of $a$, defined by
\[
\mu_s(a):=\inf\{\|ae\|:\mbox{$e$ is a projection in $\Me$ with $\tau(\1-e)\le s$}\}
\]
for $s>0$ (see \cite{fack1986generalized} for more details).

In the rest of this subsection, we assume that $(\Me,\tau)$ is a finite von Neumann algebra
with a faithful normal finite trace $\tau$. Note that in this setting, $\widetilde\Me^+$ consists of all positive
self-adjoint operators affiliated with $\Me$.

\begin{lemma}\label{L-5.2}
Let $(\Me,\tau)$ be as mentioned above.
For every $\psi,\ffi\in\Me_*^+$ with $\psi\ne0$ and for any $\alpha,z>0$ with $\alpha\ne1$,
\[
D_{\alpha,z}(\psi\|\ffi)=\lim_{\eps\searrow0}D_{\alpha,z}(\psi\|\ffi+\eps\tau)\quad\mbox{increasingly},
\]
and hence $D_{\alpha,z}(\psi\|\ffi)=\sup_{\eps>0}D_{\alpha,z}(\psi\|\ffi+\eps\tau)$.
\end{lemma}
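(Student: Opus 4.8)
The plan is to realize $\ffi+\eps\tau$ as a faithful functional whose Radon--Nikodym derivative with respect to $\tau$ is $h_\ffi+\eps\1$, decreasing to $\ffi$ as $\eps\searrow0$, and to squeeze the limit between a monotonicity bound from the order relation and a lower-semicontinuity bound in the opposite direction. First I would record the elementary facts $\ffi+\eps\tau\le\ffi+\eps'\tau$ for $0<\eps'\le\eps$, $\ffi+\eps\tau\ge\ffi$, and $\|\ffi+\eps\tau-\ffi\|_1=\eps\,\tau(\1)\to0$. By the order relations \cite[Theorems 1(iii) and 2(iii)]{kato2023onrenyi}, $D_{\alpha,z}(\psi\|\cdot)$ is antitone in its second argument, so $\eps\mapsto D_{\alpha,z}(\psi\|\ffi+\eps\tau)$ is monotone increasing as $\eps\searrow0$ and is bounded above by $D_{\alpha,z}(\psi\|\ffi)$. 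This already yields the ``increasingly'' assertion, the existence of the limit, the identity with $\sup_{\eps>0}$, and the inequality $\lim_{\eps\searrow0}D_{\alpha,z}(\psi\|\ffi+\eps\tau)\le D_{\alpha,z}(\psi\|\ffi)$; only the reverse inequality remains.

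For $0<\alpha<1$ this is immediate: by the joint continuity of $Q_{\alpha,z}$ in \cite[Theorem 1(iv)]{kato2023onrenyi} and $\ffi+\eps\tau\to\ffi$ in $\|\cdot\|_1$, we get $Q_{\alpha,z}(\psi\|\ffi+\eps\tau)\to Q_{\alpha,z}(\psi\|\ffi)$, and passing to $D_{\alpha,z}=\frac1{\alpha-1}\log\frac{Q_{\alpha,z}}{\psi(\1)}$ (with the convention $\log0=-\infty$) gives the claim, including the case $D_{\alpha,z}(\psi\|\ffi)=\infty$.

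The case $\alpha>1$ is where the real work lies, since the variational expression of Theorem \ref{thm:variational}(ii), and hence the easy lower semicontinuity, are available only for $z\ge\alpha/2$, whereas the lemma is asserted for all $z>0$. Here I would exploit the finite-trace setting through Lemma \ref{lemma:renyi_2z}. Since $\ffi+\eps\tau$ is faithful and $(h_\ffi+\eps\1)^{-\frac{\alpha-1}{2z}}\in\Me$ is bounded, the element $y_\eps:=h_\psi^{\frac\alpha{2z}}(h_\ffi+\eps\1)^{-\frac{\alpha-1}{2z}}\in L^{2z}(\Me,\tau)$ satisfies $h_\psi^{\frac\alpha{2z}}=y_\eps(h_\ffi+\eps\1)^{\frac{\alpha-1}{2z}}$, so $Q_{\alpha,z}(\psi\|\ffi+\eps\tau)=\|y_\eps\|_{2z}^{2z}=\tau\bigl((y_\eps^*y_\eps)^z\bigr)$. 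As $\eps\searrow0$ the functional calculus gives $(h_\ffi+\eps\1)^{-\frac{\alpha-1}{2z}}\to h_\ffi^{-\frac{\alpha-1}{2z}}$ in measure, whence $y_\eps\to y:=h_\psi^{\frac\alpha{2z}}h_\ffi^{-\frac{\alpha-1}{2z}}$ and $(y_\eps^*y_\eps)^z\to(y^*y)^z$ in measure. Applying the Fatou-type lower semicontinuity of $\tau$ under convergence in measure \cite{fack1986generalized} gives $\tau\bigl((y^*y)^z\bigr)\le\liminf_{\eps\searrow0}\tau\bigl((y_\eps^*y_\eps)^z\bigr)$, and by Lemma \ref{lemma:renyi_2z} the left-hand side equals $Q_{\alpha,z}(\psi\|\ffi)$ (both when it is finite and, under the support conventions for $h_\ffi^{-\frac{\alpha-1}{2z}}$, when it is $+\infty$, e.g.\ when $s(\psi)\not\le s(\ffi)$). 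Converting back to $D_{\alpha,z}$ yields $D_{\alpha,z}(\psi\|\ffi)\le\liminf_{\eps\searrow0}D_{\alpha,z}(\psi\|\ffi+\eps\tau)$, which together with the monotonicity bound completes the proof.

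The main obstacle I anticipate is precisely the $\alpha>1$, $z<\alpha/2$ regime: one must justify $y_\eps\to y$ in measure carefully on $\ker h_\ffi$, where $(h_\ffi+\eps\1)^{-\frac{\alpha-1}{2z}}$ blows up, check that the unbounded power $h_\ffi^{-\frac{\alpha-1}{2z}}$ is handled with the correct support conventions so that $\tau\bigl((y^*y)^z\bigr)$ genuinely reproduces $Q_{\alpha,z}(\psi\|\ffi)$ (with the infinite case forcing $\liminf=\infty$), and confirm that the Fatou inequality for the trace applies to $(y_\eps^*y_\eps)^z$. Everything else is routine bookkeeping with the order relation and the $\tau$-measurable operator calculus available in a finite von Neumann algebra.
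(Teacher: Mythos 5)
Your overall architecture matches the paper's: monotonicity in $\eps$ gives one inequality, and a lower-semicontinuity/Fatou argument on the trace gives the other; your treatment of $0<\alpha<1$ via the joint norm-continuity of $Q_{\alpha,z}$ from \cite[Theorem 1(iv)]{kato2023onrenyi} is valid and in fact shorter than the paper's dominated-convergence argument on generalized $s$-numbers. However, the $\alpha>1$ case contains a genuine gap at exactly the point you flag as an ``anticipated obstacle.'' The assertion that $(h_\ffi+\eps\1)^{-\frac{\alpha-1}{2z}}\to h_\ffi^{-\frac{\alpha-1}{2z}}$ in measure is false whenever $\ffi$ is not faithful: on $(\1-s(\ffi))\mathcal H$, which has positive trace, this operator equals $\eps^{-\frac{\alpha-1}{2z}}(\1-s(\ffi))$ and diverges, so $y_\eps$ has no limit in measure and the Fatou inequality has nothing to be applied to. In particular, in the case $s(\psi)\not\le s(\ffi)$ (where $Q_{\alpha,z}(\psi\|\ffi)=\infty$) your mechanism cannot produce the required conclusion $\liminf_\eps Q_{\alpha,z}(\psi\|\ffi+\eps\tau)=\infty$; one needs a separate quantitative estimate, which the paper supplies as $\mu_s(x_\eps)\ge\eps^{-\frac{\alpha-1}{z}}\mu_s\bigl(h_\psi^{\alpha/2z}s(\ffi)^\perp h_\psi^{\alpha/2z}\bigr)\nearrow\infty$ on an interval $(0,s_0]$ of positive measure. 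In the complementary case $s(\psi)\le s(\ffi)$ the blow-up is harmless, but only after one inserts the support projection and proves $(h_\ffi+\eps\1)^{-\frac{\alpha-1}{z}}s(\ffi)\nearrow h_\ffi^{-\frac{\alpha-1}{z}}s(\ffi)$ in measure via the spectral decomposition (the paper's choice of $t_0$ with $\tau(e_{(0,t_0)})<\delta$); naming the issue is not the same as closing it, and the case split is the actual content of the lemma for $\alpha>1$.

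Two smaller points. First, you derive the monotonicity in $\eps$ and the bound $D_{\alpha,z}(\psi\|\ffi+\eps\tau)\le D_{\alpha,z}(\psi\|\ffi)$ from the general order relations of \cite[Theorem 2(iii)]{kato2023onrenyi}; for $\alpha>1$ an antitonicity statement in the second argument for \emph{all} $z>0$ is delicate (operator antitonicity of $t\mapsto t^{-(\alpha-1)/z}$ fails for $z<\alpha-1$), and the paper avoids relying on it by exploiting that $h_\ffi+\eps\1$ and $h_\ffi+\eps'\1$ commute, so that $(h_\ffi+\eps\1)^{-\frac{\alpha-1}{z}}\ge(h_\ffi+\eps'\1)^{-\frac{\alpha-1}{z}}$ holds trivially and yields $\mu_s(x_\eps)\ge\mu_s(x_{\eps'})$ directly. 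You should do the same. Second, your claim $y_\eps\in L^{2z}(\Me,\tau)$ need not hold (e.g.\ $h_\psi^{\alpha/z}$ may fail to be trace-class for large $\alpha$), so the identity $Q_{\alpha,z}(\psi\|\ffi+\eps\tau)=\tau\bigl((y_\eps^*y_\eps)^z\bigr)$ must be read as an identity in $[0,\infty]$, as the paper does with $\tau(x_\eps^z)$.
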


\begin{proof}
{\it Case $0<\alpha<1$}.\enspace
We need to prove that
\begin{align}\label{F-5.4}
Q_{\alpha,z}(\psi\|\ffi)&=\lim_{\eps\searrow0}Q_{\alpha,z}(\psi\|\ffi+\eps\tau)\quad\mbox{decreasingly}.
\end{align}
In the present setting we have by \eqref{F-5.3}
\begin{align}\label{F-5.5}
Q_{\alpha,z}(\psi\|\ffi)
=\tau\Bigl(\bigl(h_\psi^{\alpha/2z}h_\ffi^{1-\alpha\over z}h_\psi^{\alpha/2z}\bigr)^z\Bigr)
=\int_0^\infty\mu_s\Bigl(h_\psi^{\alpha/2z}h_\ffi^{1-\alpha\over z}h_\psi^{\alpha/2z}\Bigr)^z\,ds,
\end{align}
and similarly
\[
Q_{\alpha,z}(\psi\|\ffi+\eps\tau)
=\int_0^\infty\mu_s\Bigl(h_\psi^{\alpha/2z}h_{\ffi+\eps\tau}^{1-\alpha\over z}h_\psi^{\alpha/2z}\Bigr)^z\,ds.
\]
Since $h_{\ffi+\eps\tau}^{1-\alpha\over z}=(h_\ffi+\eps\1)^{1-\alpha\over z}$ decreases to
$h_\ffi^{1-\alpha\over z}$ in the measure topology as $\eps\searrow0$, it follows that
$h_\psi^{\alpha/2z}h_{\ffi+\eps\tau}^{1-\alpha\over z}h_\psi^{\alpha/2z}$ decreases to
$h_\psi^{\alpha/2z}h_\ffi^{1-\alpha\over z}h_\psi^{\alpha/2z}$ in the measure topology. Hence by
\cite[Lemma 3.4]{fack1986generalized} we have
$\mu_s\bigl(h_\psi^{\alpha/2z}h_{\ffi+\eps\tau}^{1-\alpha\over z}h_\psi^{\alpha/2z}\bigr)
\searrow\mu_s\bigl(h_\psi^{\alpha/2z}h_\ffi^{1-\alpha\over z}h_\psi^{\alpha/2z}\bigr)$
as $\eps\searrow0$ for almost every $s>0$. Since
$s\mapsto\mu_s\bigl(h_\psi^{\alpha/2z}h_{\ffi+\tau}^{1-\alpha\over z}h_\psi^{\alpha/2z}\bigr)^z$ is
integrable on $(0,\infty)$, the Lebesgue convergence theorem yields \eqref{F-5.4}.

{\it Case $\alpha>1$}.\enspace
We need to prove that
\begin{align}\label{F-5.6}
Q_{\alpha,z}(\psi\|\ffi)&=\lim_{\eps\searrow0}Q_{\alpha,z}(\psi\|\ffi+\eps\tau)\quad\mbox{increasingly}.
\end{align}
For any $\eps>0$, since $h_{\ffi+\eps\tau}=h_\ffi+\eps\1$ has the bounded inverse
$h_{\ffi+\eps\tau}^{-1}=(h_\ffi+\eps\1)^{-1}\in\Me^+$, one can define
$x_\eps:=(h_\ffi+\eps\1)^{-{\alpha-1\over2z}}h_\psi^{\alpha/z}(h_\ffi+\eps\1)^{-{\alpha-1\over2z}}
\in\widetilde\Me^+$ so that
\[
h_\psi^{\alpha/z}=(h_\ffi+\eps\1)^{\alpha-1\over2z}x_\eps(h_\ffi+\eps\1)^{\alpha-1\over2z}.
\]
In the present setting one can write by \eqref{F-5.3}
\begin{align}\label{F-5.7}
Q_{\alpha,z}(\psi\|\ffi+\eps\tau)=\tau(x_\eps^z)=\int_0^\infty\mu_s(x_\eps)^z\,ds\quad(\in[0,\infty]).
\end{align}
Let $0<\eps\le\eps'$. Since $(h_\ffi+\eps\1)^{-{\alpha-1\over z}}\ge(h_\ffi+\eps'\1)^{-{\alpha-1\over z}}$,
one has $\mu_s(x_\eps)\ge\mu_s(x_{\eps'})$ for all $s>0$, so that
\[
Q_{\alpha,z}(\psi\|\ffi+\eps\tau)\ge Q_{\alpha,z}(\psi\|\ffi+\eps'\tau).
\]
Hence $\eps>0\mapsto D_{\alpha,z}(\psi\|\ffi+\eps\tau)$ is decreasing.

First, assume that $s(\psi)\not\le s(\ffi)$. Then
$\mu_{s_0}\Bigl(h_\psi^{\alpha/2z}s(\ffi)^\perp h_\psi^{\alpha/2z}\Bigr)>0$ for some $s_0>0$; indeed,
otherwise, $h_\psi^{\alpha/2z}s(\ffi)^\perp h_\psi^{\alpha/2z}=0$ so that $s(\psi)\le s(\ffi)$. Hence we have
\[
\mu_s(x_\eps)=\mu_s\Bigl(h_\psi^{\alpha/2z}(h_\ffi+\eps\1)^{-{\alpha-1\over z}}h_\psi^{\alpha/2z}\Bigr)
\ge\eps^{-{\alpha-1\over z}}\mu_s\Bigl(h_\psi^{\alpha/2z}s(\ffi)^\perp h_\psi^{\alpha/2z}\Bigr)
\nearrow\infty\quad\mbox{as $\eps\searrow0$}
\]
for all $s\in(0,s_0]$. Therefore, it follows from \eqref{F-5.7} that
$Q_{\alpha,z}(\psi\|\ffi+\eps\tau)\nearrow\infty=Q_{\alpha,z}(\psi\|\ffi)$.

Next, assume that $s(\psi)\le s(\ffi)$. Take the spectral decomposition $h_\ffi=\int_0^\infty t\,de_t$ and
define $y,x\in\widetilde\Me^+$ by
\[
y:=h_\ffi^{-{\alpha-1\over z}}s(\ffi)=\int_{(0,\infty)}t^{-{\alpha-1\over z}}\,de_t,
\qquad x:=y^{1/2}h_\psi^{\alpha/z}y^{1/2}.
\]
Since
\[
h_\psi^{\alpha/z}=s(\ffi)h_\psi^{\alpha/z}s(\ffi)
=h_\ffi^{\alpha-1\over2z}y^{1/2}h_\psi^{\alpha/z}y^{1/2}h_\ffi^{\alpha-1\over2z}
=h_\ffi^{\alpha-1\over2z}xh_\ffi^{\alpha-1\over2z},
\]
one has, similarly to \eqref{F-5.7},
\begin{align}\label{F-5.8}
Q_{\alpha,z}(\psi\|\ffi)=\tau(x^z)=\int_0^\infty\mu_s(x)^z\,ds.
\end{align}
We write $(h_\ffi+\eps\1)^{-{\alpha-1\over z}}s(\ffi)=\int_{(0,\infty)}(t+\eps)^{-{\alpha-1\over z}}\,de_t$,
and for any $\delta>0$ choose a $t_0>0$ such that $\tau(e_{(0,t_0)})<\delta$. Then, since
$\int_{[t_0,\infty)}(t+\eps)^{-{\alpha-1\over z}}\,de_t\to\int_{[t_0,\infty)}t^{-{\alpha-1\over z}}\,de_t$
in the operator norm as $\eps\searrow0$, we obtain $(h_\ffi+\eps\1)^{-{\alpha-1\over z}}s(\ffi)\nearrow y$
in the measure topology (see \cite[1.5]{fack1986generalized}), so that
$h_\psi^{\alpha/2z}(h_\ffi+\eps\1)^{-{\alpha-1\over z}}h_\psi^{\alpha/2z}
\nearrow h_\psi^{\alpha/2z}yh_\psi^{\alpha/2z}$ in the measure topology as $\eps\searrow0$. Hence
we have by \cite[Lemma 3.4]{fack1986generalized}
\[
\mu_s(x_\eps)=\mu_s\Bigl(h_\psi^{\alpha/2z}(h_\ffi+\eps\1)^{-{\alpha-1\over z}}h_\psi^{\alpha/2z}\Bigr)
\nearrow\mu_s\Bigl(h_\psi^{\alpha/2z}yh_\psi^{\alpha/2z}\Bigr)=\mu_s(x)
\]
for all $s>0$. Therefore, by \eqref{F-5.7} and \eqref{F-5.8} the monotone convergence theorem yields
\eqref{F-5.6}.
\end{proof}

\begin{lemma}\label{L-5.3}
Let $(\Me,\tau)$ and $\psi,\ffi$ be as in Lemma \ref{L-5.2}, and let $0<z\le z'$. Then
\[
\begin{cases}
D_{\alpha,z}(\psi\|\ffi)\le D_{\alpha,z'}(\psi\|\ffi), & \text{$0<\alpha<1$},\\
D_{\alpha,z}(\psi\|\ffi)\ge D_{\alpha,z'}(\psi\|\ffi), & \text{$\alpha>1$}.
\end{cases}
\]
\end{lemma}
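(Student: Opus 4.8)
The plan is to reduce both inequalities to a single monotonicity statement for $Q_{\alpha,z}$ and then settle it by the Araki--Lieb--Thirring (ALT) inequality. Since $D_{\alpha,z}(\psi\|\ffi)=\frac1{\alpha-1}\log\bigl(Q_{\alpha,z}(\psi\|\ffi)/\psi(\1)\bigr)$ and the factor $\frac1{\alpha-1}$ is positive for $\alpha>1$ and negative for $0<\alpha<1$, both displayed inequalities are equivalent to the one assertion that $z\mapsto Q_{\alpha,z}(\psi\|\ffi)$ is non-increasing on $(0,\infty)$. For $0<\alpha<1$ this is exactly Theorem~\ref{T-5.1} (and is in any case recovered by the argument below), so I would concentrate on the case $\alpha>1$, where $h_\ffi$ enters through a negative power. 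To keep that power well behaved I would first invoke Lemma~\ref{L-5.2}: it suffices to prove $Q_{\alpha,z}(\psi\|\ffi+\eps\tau)\ge Q_{\alpha,z'}(\psi\|\ffi+\eps\tau)$ for each fixed $\eps>0$ and $z\le z'$, because taking the supremum over $\eps>0$ on both sides then yields $Q_{\alpha,z}(\psi\|\ffi)\ge Q_{\alpha,z'}(\psi\|\ffi)$, the supremum preserving the inequality.

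For fixed $\eps>0$ the operator $B_\eps:=h_\ffi+\eps\1\ge\eps\1$ is bounded with bounded inverse, so $s(\ffi+\eps\tau)=\1\ge s(\psi)$ and, as in the proof of Lemma~\ref{L-5.2},
\[
Q_{\alpha,z}(\psi\|\ffi+\eps\tau)=\tau\Bigl(\bigl(B_\eps^{-\frac{\alpha-1}{2z}}h_\psi^{\alpha/z}B_\eps^{-\frac{\alpha-1}{2z}}\bigr)^z\Bigr)\in[0,\infty].
\]
I would then fix $z\le z'$ and put $\beta:=z'/z\ge1$, $u:=h_\psi^{\alpha/z'}\in\widetilde\Me^+$ and $v:=B_\eps^{-\frac{\alpha-1}{z'}}\in\Me^+$ (bounded, as $\alpha>1$ and $B_\eps$ is boundedly invertible). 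With these choices $B_\eps^{-\frac{\alpha-1}{2z}}=v^{\beta/2}$ and $h_\psi^{\alpha/z}=u^{\beta}$, so that
\[
Q_{\alpha,z'}(\psi\|\ffi+\eps\tau)=\tau\bigl((v^{1/2}uv^{1/2})^{z'}\bigr),\qquad Q_{\alpha,z}(\psi\|\ffi+\eps\tau)=\tau\bigl((v^{\beta/2}u^{\beta}v^{\beta/2})^{z'/\beta}\bigr).
\]
The needed inequality $\tau\bigl((v^{1/2}uv^{1/2})^{z'}\bigr)\le\tau\bigl((v^{\beta/2}u^{\beta}v^{\beta/2})^{z'/\beta}\bigr)$ is precisely the ALT inequality with outer exponent $z'$ and inner exponent $\beta\ge1$. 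I would derive it from Araki's log-majorization $\mu_s\bigl((v^{1/2}uv^{1/2})^{\beta}\bigr)\prec_{\log}\mu_s\bigl(v^{\beta/2}u^{\beta}v^{\beta/2}\bigr)$ for generalized $s$-numbers, applying the function $t\mapsto t^{z'/\beta}$ (whose composition with $\exp$ is convex and increasing) together with the integral formula \eqref{F-5.3}; here I use $\mu_s(a^{p})=\mu_s(a)^{p}$ for $a\ge0$.

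The main obstacle will be securing the ALT inequality, and the passage from log-majorization to the trace inequality, in the present operator-algebraic generality rather than in the finite-dimensional matrix case: one works with $\tau$-measurable (possibly unbounded) positive operators in a finite von Neumann algebra, with traces allowed to take the value $\infty$. This requires Araki's log-majorization together with the implication that $\mu_s(X)\prec_{\log}\mu_s(Y)$ yields $\int_0^\infty\phi(\mu_s(X))\,ds\le\int_0^\infty\phi(\mu_s(Y))\,ds$ for every $\phi$ with $\phi\circ\exp$ convex increasing, all within the generalized $s$-number framework of \cite{fack1986generalized}; one must check that $u=h_\psi^{\alpha/z'}$ being merely $\tau$-measurable (while $v$ is bounded) causes no difficulty and that the inequality is preserved when one side is infinite. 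Once this is in place, the reduction via Lemma~\ref{L-5.2} and the supremum argument of the first paragraph finish the proof; moreover, for $0<\alpha<1$ the same computation with $v:=h_\ffi^{\frac{1-\alpha}{z'}}\ge0$ (no invertibility, hence no regularization, needed) gives an alternative derivation consistent with Theorem~\ref{T-5.1}.
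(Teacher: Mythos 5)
Your proposal is correct and follows essentially the same route as the paper: reduce to the regularized functionals $\ffi+\eps\tau$ via Lemma \ref{L-5.2} and then identify the required inequality as an Araki--Lieb--Thirring inequality for $\tau$-measurable operators. The step you single out as the main obstacle --- the ALT inequality in the von Neumann algebra setting, derived from log-majorization of generalized $s$-numbers --- is precisely Kosaki's result \cite[Corollary 3]{kosaki1992aninequality}, which is what the paper invokes at that point, so no new argument is needed there.
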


\begin{proof}
The case $0<\alpha<1$ is in Theorem \ref{T-5.1} for general von Neumann algebras.
For the case $\alpha>1$, by Lemma \ref{L-5.2} it suffices to show that, for every $\eps>0$,
\[
\tau\Bigl(\Bigl(y_\eps^{\alpha-1\over2z}h_\psi^{\alpha/z}y_\eps^{\alpha-1\over2z}\Bigr)^z\Bigr)
\ge\tau\Bigl(\Bigl(y_\eps^{\alpha-1\over2z'}h_\psi^{\alpha/z'}y_\eps^{\alpha-1\over2z'}\Bigr)^{z'}\Bigr),
\]
where $y_\eps:=(h_\ffi+\eps\1)^{-1}\in\Me^+$. The above is equivalently written as
\[
\tau\Bigl(\big|(h_\psi^{\alpha/2z'})^r(y_\eps^{(\alpha-1)/2z'})^r\big|^{2z}\Bigr)
\ge\tau\Bigl(\big|h_\psi^{\alpha/2z'} y_\eps^{(\alpha-1)/2z'}\big|^{2zr}\Bigr),
\]
where $r:=z'/z\ge1$. Hence the desired inequality follows from Kosaki's Araki--Lieb--Thirring inequality
\cite[Corollary 3]{kosaki1992aninequality}.
\end{proof}

When $(\Me,\tau)$ and $\psi,\ffi$ are as in Lemma \ref{L-5.2}, one can define, thanks to
Lemma \ref{L-5.3}, for any $\alpha\in(0,\infty)\setminus\{1\}$,
\begin{align}
Q_{\alpha,\infty}(\psi\|\ffi)&:=\lim_{z\to\infty}Q_{\alpha,z}(\psi\|\ffi)
=\inf_{z>0}Q_{\alpha,z}(\psi\|\ffi), \nonumber\\
D_{\alpha,\infty}(\psi\|\ffi)&:={1\over\alpha-1}\log{Q_{\alpha,\infty}(\psi\|\ffi)\over\psi(\1)} \nonumber\\
&\ =\lim_{z\to\infty}D_{\alpha,z}(\psi\|\ffi)
=\begin{cases}\sup_{z>0}D_{\alpha,z}(\psi\|\ffi), & \text{$0<\alpha<1$},\\
\inf_{z>0}D_{\alpha,z}(\psi\|\ffi), & \text{$\alpha>1$}.\end{cases} \label{F-5.9}
\end{align}
If $h_\psi,h_\ffi\in\Me^{++}$ (i.e., $\lambda^{-1}\tau\le\psi,\ffi\le\lambda\tau$ for some $\lambda>0$), then
the Lie--Trotter formula gives
\begin{align}\label{F-5.10}
Q_{\alpha,\infty}(\psi\|\ffi)=\tau\bigl(\exp(\alpha\log h_\psi+(1-\alpha)\log h_\ffi)\bigr).
\end{align}

\begin{lemma}\label{L-5.4}
Let $(\Me,\tau)$ and $\psi,\ffi$ be as in Lemma \ref{L-5.2}. Then for any $z>0$,
\[
\begin{cases}
D_{\alpha,z}(\psi\|\ffi)\le D_1(\psi\|\ffi), & \text{$0<\alpha<1$},\\
D_{\alpha,z}(\psi\|\ffi)\ge D_1(\psi\|\ffi), & \text{$\alpha>1$}.
\end{cases}
\]
\end{lemma}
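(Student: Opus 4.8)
The plan is to deduce the statement from the monotonicity in $z$ already obtained in Lemma \ref{L-5.3}. For $0<\alpha<1$ that lemma gives $D_{\alpha,z}(\psi\|\ffi)\le D_{\alpha,\infty}(\psi\|\ffi)$, and for $\alpha>1$ it gives $D_{\alpha,z}(\psi\|\ffi)\ge D_{\alpha,\infty}(\psi\|\ffi)$, with $D_{\alpha,\infty}$ as in \eqref{F-5.9}. Hence it suffices to prove
\[
D_{\alpha,\infty}(\psi\|\ffi)\le D_1(\psi\|\ffi)\quad(0<\alpha<1),\qquad
D_{\alpha,\infty}(\psi\|\ffi)\ge D_1(\psi\|\ffi)\quad(\alpha>1),
\]
and both inequalities will come from a single convexity (supporting-line) argument, since $D_1$ will turn out to be the slope of the relevant convex function at the point $\alpha=1$.

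First I would treat the case $h_\psi,h_\ffi\in\Me^{++}$, where the Lie--Trotter formula \eqref{F-5.10} applies. Put $H(\alpha):=\alpha\log h_\psi+(1-\alpha)\log h_\ffi$ and
\[
G(\alpha):=\log Q_{\alpha,\infty}(\psi\|\ffi)
=\log\tau\bigl(\exp(\alpha\log h_\psi+(1-\alpha)\log h_\ffi)\bigr),
\]
so that $D_{\alpha,\infty}(\psi\|\ffi)=\bigl(G(\alpha)-G(1)\bigr)/(\alpha-1)$, using $G(1)=\log\tau(h_\psi)=\log\psi(\1)$. Writing the exponent as $\log h_\ffi+\alpha(\log h_\psi-\log h_\ffi)$, the function $\alpha\mapsto G(\alpha)$ is the logarithm of a trace-exponential with affine exponent, hence convex (its second derivative is a non-negative Kubo--Mori variance of $\log h_\psi-\log h_\ffi$). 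Differentiating under the trace and using cyclicity, $G'(\alpha)=\tau\bigl((\log h_\psi-\log h_\ffi)e^{H(\alpha)}\bigr)\big/\tau\bigl(e^{H(\alpha)}\bigr)$, and at $\alpha=1$, where $e^{H(1)}=h_\psi$, this gives $G'(1)=\tau\bigl(h_\psi(\log h_\psi-\log h_\ffi)\bigr)/\psi(\1)=D_1(\psi\|\ffi)$. The supporting-line inequality for the convex $G$ reads $G(\alpha)-G(1)\ge D_1(\psi\|\ffi)(\alpha-1)$; dividing by $\alpha-1$, which reverses the inequality for $\alpha<1$ and preserves it for $\alpha>1$, yields exactly the two displayed inequalities for $D_{\alpha,\infty}$.

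It remains to remove the assumption $h_\psi,h_\ffi\in\Me^{++}$, and this is the main obstacle. I would reduce to bounded invertible densities in steps: replacing $\ffi$ by $\ffi+\eps\tau$ bounds $h_\ffi$ below and, by Lemma \ref{L-5.2} together with the monotonicity and lower semicontinuity of the relative entropy in its second argument, both $D_{\alpha,z}(\psi\|\ffi+\eps\tau)$ and $D_1(\psi\|\ffi+\eps\tau)$ increase to their values at $\ffi$ as $\eps\searrow0$, so the inequality passes to the limit; cutting to a corner $e_n\Me e_n$ by a spectral projection $e_n\nearrow\1$ of $h_\psi+h_\ffi$ bounds both densities above, with $D_{\alpha,z}(e_n\psi e_n\|e_n\ffi e_n)\to D_{\alpha,z}(\psi\|\ffi)$ by Proposition \ref{prop:mart-proj} and a matching martingale convergence for the relative entropy; and inside each corner a further $\eps\tau_n$-perturbation produces densities in $[\eps,n+\eps]$, to which the bounded-invertible case applies. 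The delicate point throughout is that the inequality must survive these limits, so I need genuine convergence (not merely one-sided semicontinuity) of \emph{both} $D_{\alpha,z}$ and the normalized relative entropy along the approximants, with the two sides converging so that the correct direction is preserved. This is most sensitive for $\alpha>1$, where $Q_{\alpha,z}$ is only lower semicontinuous and Proposition \ref{prop:mart-proj} is at hand only inside the DPI range; the convergence of the relative entropy under the $\eps\tau$-perturbation also requires the usual $\eps\log\eps\to0$ estimate on the kernel of $h_\ffi$, for which one uses $s(\psi)\le s(\ffi)$ (which we may assume, the case $s(\psi)\not\le s(\ffi)$ being trivial).
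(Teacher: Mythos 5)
Your first step --- reducing to $D_{\alpha,\infty}$ via the monotonicity in $z$ and then proving $D_{\alpha,\infty}\le D_1$ ($\alpha<1$) and $D_{\alpha,\infty}\ge D_1$ ($\alpha>1$) from the convexity of $\alpha\mapsto\log\tau\bigl(e^{\alpha\log h_\psi+(1-\alpha)\log h_\ffi}\bigr)$ and the identification of its derivative at $\alpha=1$ with $D_1(\psi\|\ffi)$ --- is exactly the paper's argument for $h_\psi,h_\ffi\in\Me^{++}$ (the paper phrases it via monotonicity of difference quotients rather than the supporting line, which is the same thing).

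The gap is in the extension to general $\psi,\ffi$ for $\alpha>1$, and it is the one you yourself flag without resolving. Your corner-cutting step needs genuine convergence $D_{\alpha,z}(e_n\psi e_n\|e_n\ffi e_n)\to D_{\alpha,z}(\psi\|\ffi)$, but Proposition \ref{prop:mart-proj} supplies this for $\alpha>1$ only when $\max\{\alpha/2,\alpha-1\}\le z\le\alpha$, while the lemma claims the inequality for all $z>0$; and lower semicontinuity of $Q_{\alpha,z}$ gives $\liminf_n D_{\alpha,z}(\psi_n\|\ffi_n)\ge D_{\alpha,z}(\psi\|\ffi)$, which is the \emph{wrong} direction for passing $D_{\alpha,z}(\psi_n\|\ffi_n)\ge D_1(\psi_n\|\ffi_n)$ to the limit (one needs an upper bound on the left side along the approximants, paired with lower semicontinuity of $D_1$ on the right). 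Monotonicity in $z$ does not rescue the case $z>\alpha$ either, since there $D_{\alpha,z}\le D_{\alpha,\alpha}$. The paper closes this by not using the martingale-type propositions at all for $\alpha>1$: it works in the semifinite setting with the explicit formula $Q_{\alpha,z}(\psi\|\ffi)=\int_0^\infty\mu_s(\cdot)^z\,ds$ and runs a four-step scheme of spectral truncations of $h_\psi$ and $h_\ffi$ (from below and from above), each chosen monotone so that $\mu_s$ of the relevant operator converges monotonically by \cite[Lemma 3.4]{fack1986generalized} and hence $Q_{\alpha,z}$ \emph{genuinely converges} by monotone or dominated convergence, combined with lower semicontinuity of $D_1$. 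Some replacement of this kind is needed to make your reduction work for all $z>0$ when $\alpha>1$.
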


\begin{proof}
First assume that $h_\psi,h_\ffi\in\Me^{++}$. Set self-adjoint $H:=\log h_\psi$ and
$K:=\log h_\ffi$ in $\Me$ and define $F(\alpha):=\log\tau\bigl(e^{\alpha H+(1-\alpha)K}\bigr)$ for $\alpha>0$.
Then by \eqref{F-5.10}, $F(\alpha)=\log Q_{\alpha,\infty}(\psi\|\ffi)$ for all $\alpha\in(0,\infty)\setminus\{1\}$,
and we compute
\begin{align*}
F'(\alpha)&={\tau\bigl(e^{\alpha H+(1-\alpha)K}(H-K)\bigr)\over\tau\bigl(e^{\alpha H+(1-\alpha)K}\bigr)}, \\
F''(\alpha)&={\tau\bigl(e^{\alpha H+(1-\alpha)K}(H-K)^2\bigr)\tau\bigl(e^{\alpha
H+(1-\alpha)K}\bigr)-\bigl\{\tau\bigl(e^{\alpha H+(1-\alpha)K}(H-K)\bigr)\bigr\}^2
\over\bigl\{\tau\bigl(e^{\alpha H+(1-\alpha)K}\bigr)\bigr\}^2}.
\end{align*}
Since $F''(\alpha)\ge0$ on $(0,\infty)$ thanks to the Schwarz inequality, we see that $F(\alpha)$ is
convex on $(0,\infty)$ and hence
\[
D_{\alpha,\infty}(\psi\|\ffi)={F(\alpha)-F(1)\over\alpha-1}
\]
is increasing in $\alpha\in(0,\infty)$, where for $\alpha=1$ the above right-hand side is understood as
\[
F'(1)={\tau(e^H(H-K))\over\tau(e^H)}={\tau\bigl(h_\psi(\log h_\psi-\log h_\ffi)\bigr)\over\tau(h_\psi)}
=D_1(\psi\|\ffi).
\]
Hence by \eqref{F-5.9} the assertion holds when $h_\psi,h_\ffi\in\Me^{++}$. Below we extend this to
general $\psi,\ffi\in\Me_*^+$.

{\it Case $0<\alpha<1$}.\enspace
Let $\psi,\ffi\in\Me_*^+$ and $z>0$. From \cite[Theorem 1(iv)]{kato2023onrenyi} and
\cite[Corollary 2.8(3)]{hiai2021quantum} we have
\begin{align*}
D_{\alpha,z}(\psi\|\ffi)&=\lim_{\eps\searrow0}D_{\alpha,z}(\psi+\eps\tau\|\ffi+\eps\tau), \\
D_1(\psi\|\ffi)&=\lim_{\eps\searrow0}D_1(\psi+\eps\tau\|\ffi+\eps\tau),
\end{align*}
so that we may assume that $\psi,\ffi\ge\eps\tau$ for some $\eps>0$. Take the spectral decompositions
$h_\psi=\int_0^\infty t\,de_t^\psi$ and $h_\ffi=\int_0^\infty t\,de_t^\ffi$, and define
$e_n:=e_n^\psi\wedge e_n^\ffi$ for each $n\in\bN$. Then
$\tau(e_n^\perp)\le\tau((e_n^\psi)^\perp)+\tau((e_n^\ffi)^\perp)\to0$ as $n\to\infty$,
so that $e_n\nearrow\1$. Hence by Proposition \ref{prop:mart-proj} one has
$D_{\alpha,z}(e_n\psi e_n\|e_n\ffi e_n)\to D_{\alpha,z}(\psi\|\ffi)$. On the other hand,
by \cite[Proposition 2.10]{hiai2021quantum} one has
$D_1(e_n\psi e_n\|e_n\ffi e_n)\to D_1(\psi\|\ffi)$ as well. Since
$D_{\alpha,z}(e_n\psi e_n\|e_n\ffi e_n)\le D_1(e_n\psi e_n\|e_n\ffi e_n)$
holds by the above shown case, we obtain the desired inequality for general $\psi,\ffi\in\Me_*^+$.

{\it Case $\alpha>1$}.\enspace
We show the extension to general $\psi,\ffi\in\Me_*^+$ by dividing four steps as follows, where
$h_\psi=\int_0^\infty t\,e_t^\psi$ and $h_\ffi=\int_0^\infty t\,de_t^\ffi$ are the spectral decompositions.

(1)\enspace
Assume that $h_\psi\in\Me^+$ and $h_\ffi\in\Me^{++}$. Set $\psi_n\in\Me_*^+$ by
$h_{\psi_n}=(1/n)e_{[0,1/n]}^\psi+\int_{(1/n,\infty)}t\,de_t^\psi$ ($\in\Me^{++}$). Since
$h_{\psi_n}^{\alpha/z}\searrow h_\psi^{\alpha/z}$ in the operator norm, we have by \eqref{F-5.5} and
\cite[Lemma 3.4]{fack1986generalized}
\begin{align}
Q_{\alpha,z}(\psi\|\ffi)&=\int_0^\infty\mu_s\Bigl((h_\ffi^{-1})^{\alpha-1\over2z}h_\psi^{\alpha/z}
(h_\ffi^{-1})^{\alpha-1\over2z}\Bigr)^z\,ds \nonumber\\
&=\lim_{n\to\infty}\int_0^\infty\mu_s\Bigl((h_\ffi^{-1})^{\alpha-1\over2z}h_{\psi_n}^{\alpha/z}
(h_\ffi^{-1})^{\alpha-1\over2z}\Bigr)^z\,ds
=\lim_{n\to\infty}Q_{\alpha,z}(\psi_n\|\ffi). \label{F-5.11}
\end{align}
From this and the lower semi-continuity of $D_1$ the extension holds in this case.

(2)\enspace
Assume that $h_\psi\in\Me^+$ and $h_\ffi\ge\delta\1$ for some $\delta>0$. Set $\ffi_n\in\Me_*^+$
by $h_{\ffi_n}=\int_{[\delta,n]}t\,de_t^\ffi+ne_{(n,\infty)}^\ffi$ ($\in\Me^{++}$). Since
$h_{\ffi_n}^{-{\alpha-1\over z}}\searrow h_\ffi^{-{\alpha-1\over z}}$ in the operator norm, we have by
\eqref{F-5.5} and \cite[Lemma 3.4]{fack1986generalized} again
\begin{align*}
Q_{\alpha,z}(\psi\|\ffi)&=\int_0^\infty\mu_s\Bigl(h_\psi^{\alpha/2z}h_\ffi^{-{\alpha-1\over z}}
h_\psi^{\alpha/2z}\Bigr)^z\,ds \\
&=\lim_{n\to\infty}\int_0^\infty\mu_s\Bigl(h_\psi^{\alpha/2z}h_{\ffi_n}^{-{\alpha-1\over z}}
h_\psi^{\alpha/2z}\Bigr)^z\,ds
=\lim_{n\to\infty}Q_{\alpha,z}(\psi\|\ffi_n).
\end{align*}
From this and (1) above the extension holds in this case too.

(3)\enspace
Assume that $\psi$ is general and $\ffi\ge\delta\tau$ for some $\delta>0$. Set $\psi_n\in\Me_*^+$
by $h_{\psi_n}=\int_{[0,n]}t\,de_t^\psi+ne_{(n,\infty)}^\ffi$ ($\in\Me^+$). Since
$h_{\psi_n}^{\alpha/z}\nearrow h_\psi^{\alpha/z}$ in the measure topology, one can argue as in \eqref{F-5.11}
with use of the monotone convergence theorem to see from (2) that the extension holds in this case too.

(4)\enspace
Finally, from (3) with Lemma \ref{L-5.2} and \cite[Corollary 2.8(3)]{hiai2021quantum} it follows that
the desired extension holds for general $\psi,\ffi\in\Me_*^+$.
\end{proof}

Note that the first inequality in (i) of the next proposition is a particular case of
Theorem \ref{T-5.1}, while we include it here to make the statement complete.

\begin{prop}\label{P-5.5}
Assume that $\Me$ is a finite von Neumann algebra with a faithful normal finite trace $\tau$.
Let $\psi,\ffi\in\Me_*^+$, $\psi\ne0$. Then we have the following:
\begin{itemize}
\item[(i)] If $0<\alpha<1<\alpha'$ and $0<z\le z'\le\infty$, then
\[
D_{\alpha,z}(\psi\|\ffi)\le D_{\alpha,z'}(\psi\|\ffi)\le D_1(\psi\|\ffi)
\le D_{\alpha',z'}(\psi\|\ffi)\le D_{\alpha',z}(\psi\|\ffi).
\]
\item[(ii)] For any $z\in[1,\infty]$,
\begin{align}\label{F-5.12}
\lim_{\alpha\nearrow1}D_{\alpha,z}(\psi\|\ffi)=D_1(\psi\|\ffi).
\end{align}
\item[(iii)] If $D_{\alpha,\alpha}(\psi\|\ffi)<\infty$ for some $\alpha>1$, then for any $z\in(1,\infty]$,
\begin{align}\label{F-5.13}
\lim_{\alpha\searrow1}D_{\alpha,z}(\psi\|\ffi)=D_1(\psi\|\ffi).
\end{align}
\end{itemize}
\end{prop}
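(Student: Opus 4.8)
The plan is to derive all three parts by assembling the monotonicity results already established---Theorem \ref{T-5.1} and Lemmas \ref{L-5.3} and \ref{L-5.4}---together with the known limiting relations \eqref{F-5.1} and \eqref{F-5.2} for the Petz-type and sandwiched divergences, via elementary squeeze arguments. No new analytic input seems necessary: the technical substance (Kosaki's Araki--Lieb--Thirring inequality and the convexity of $F(\alpha)=\log\tau(e^{\alpha H+(1-\alpha)K})$) is already packaged inside Lemmas \ref{L-5.3} and \ref{L-5.4} and the definitions \eqref{F-5.9}.

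For part (i), the chain of four inequalities reads off directly. For $0<\alpha<1$ and $z\le z'<\infty$, the bound $D_{\alpha,z}\le D_{\alpha,z'}$ is Theorem \ref{T-5.1} (equivalently the first case of Lemma \ref{L-5.3}), and $D_{\alpha,z'}\le D_1$ is the first case of Lemma \ref{L-5.4}; for $\alpha'>1$ the last two inequalities are the second cases of Lemmas \ref{L-5.4} and \ref{L-5.3}. The endpoints $z'=\infty$ are covered by \eqref{F-5.9}: since $D_{\alpha,\infty}=\sup_{z}D_{\alpha,z}$ for $0<\alpha<1$ and $D_{\alpha',\infty}=\inf_{z}D_{\alpha',z}$ for $\alpha'>1$, passing to the supremum/infimum in Lemma \ref{L-5.4} gives $D_{\alpha,z}\le D_{\alpha,\infty}\le D_1$ and $D_1\le D_{\alpha',\infty}\le D_{\alpha',z}$.

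For part (ii), fix $z\in[1,\infty]$. Since $z\ge1$ and $z\mapsto D_{\alpha,z}$ is increasing on $(0,\infty)$ for $0<\alpha<1$ by Theorem \ref{T-5.1}, we have $D_{\alpha,1}(\psi\|\ffi)\le D_{\alpha,z}(\psi\|\ffi)$, while Lemma \ref{L-5.4} (and \eqref{F-5.9} when $z=\infty$) gives $D_{\alpha,z}(\psi\|\ffi)\le D_1(\psi\|\ffi)$. Thus
\[
D_{\alpha,1}(\psi\|\ffi)\le D_{\alpha,z}(\psi\|\ffi)\le D_1(\psi\|\ffi),\qquad 0<\alpha<1,
\]
and since $\lim_{\alpha\nearrow1}D_{\alpha,1}(\psi\|\ffi)=D_1(\psi\|\ffi)$ by \eqref{F-5.1}, the squeeze yields \eqref{F-5.12}.

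For part (iii), fix $z\in(1,\infty]$ and restrict to $1<\alpha<z$, which is all that matters as $\alpha\searrow1$. Lemma \ref{L-5.4} furnishes the lower bound $D_1(\psi\|\ffi)\le D_{\alpha,z}(\psi\|\ffi)$, and because $\alpha<z$, the decreasing monotonicity of Lemma \ref{L-5.3} (or $D_{\alpha,\infty}=\inf_z D_{\alpha,z}$ when $z=\infty$) gives $D_{\alpha,z}(\psi\|\ffi)\le D_{\alpha,\alpha}(\psi\|\ffi)$, so
\[
D_1(\psi\|\ffi)\le D_{\alpha,z}(\psi\|\ffi)\le D_{\alpha,\alpha}(\psi\|\ffi),\qquad 1<\alpha<z,
\]
and $\lim_{\alpha\searrow1}D_{\alpha,\alpha}(\psi\|\ffi)=D_1(\psi\|\ffi)$ by \eqref{F-5.2} completes the squeeze for \eqref{F-5.13}. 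The one point needing care is that \eqref{F-5.2} requires $D_{\alpha,\alpha}(\psi\|\ffi)<\infty$ for $\alpha>1$ near $1$: this follows from the hypothesis $D_{\alpha_0,\alpha_0}(\psi\|\ffi)<\infty$ for some $\alpha_0>1$ together with the monotonicity of the sandwiched divergence $\alpha\mapsto D_{\alpha,\alpha}$ recalled at the start of this section, which forces finiteness on $(1,\alpha_0]$. I expect no genuine obstacle; the only real task beyond invoking the earlier lemmas is to check that the threshold hypotheses $z\ge1$ in (ii) and $z>1$ in (iii) are exactly what make the two comparison inequalities available, and to keep track of this finiteness propagation.
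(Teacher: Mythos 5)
Your proposal is correct and follows essentially the same route as the paper: part (i) is assembled from Lemmas \ref{L-5.3} and \ref{L-5.4} (with the $z'=\infty$ endpoint handled via \eqref{F-5.9}), and parts (ii) and (iii) are exactly the paper's squeeze arguments between $D_{\alpha,1}$ (resp.\ $D_{\alpha,\alpha}$) and $D_1$, concluded by the known limits \eqref{F-5.1} and \eqref{F-5.2}. Your extra remark on propagating finiteness of $D_{\alpha,\alpha}$ down to $\alpha$ near $1$ is harmless but not needed, since \eqref{F-5.2} is already stated under precisely the hypothesis of (iii).
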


\begin{proof}
(i) follows from Lemmas \ref{L-5.3} and \ref{L-5.4}.

(ii)\enspace
For every $z\in[1,\infty]$ and $\alpha\in(0,1)$, it follows from (i) above that
\[
D_{\alpha,1}(\psi\|\ffi)\le D_{\alpha,z}(\psi\|\ffi)\le D_1(\psi\|\ffi).
\]
Hence \eqref{F-5.12} follows since it holds for $D_{\alpha,1}$ by
\cite[Proposition 5.3(3)]{hiai2018quantum}, as stated in \eqref{F-5.1}.

(iii)\enspace
Assume that $D_{\alpha,\alpha}(\psi\|\ffi)<\infty$ for some $\alpha>1$. For every $z\in(1,\infty]$
and $\alpha\in(1,z]$, it follows from (i) that
\[
D_1(\psi\|\ffi)\le D_{\alpha,z}(\psi\|\ffi)\le D_{\alpha,\alpha}(\psi\|\ffi).
\]
Hence \eqref{F-5.13} follows since it holds for $D_{\alpha,\alpha}$ by
\cite[Proposition 3.8(ii)]{jencova2018renyi}, as stated in \eqref{F-5.2}.
\end{proof}

In this subsection, in the specialized setting of finite von Neumann algebras, we have considered the
monotonicity of $D_{\alpha,z}$ in the parameter $z$ in an essentially similar way to the finite dimensional
case \cite{lin2015investigating,mosonyi2023somecontinuity}. By Theorem
\ref{T-5.1}, for $0<\alpha<1$ this monotonicity holds in general von Neumann algebras. In the next
subsection, we will extend it to this general setting also for $\alpha>1$, under a
restriction on $z$, by using the Haagerup reduction theorem and the complex interpolation method. Furthermore,
we will extend the limits given in \eqref{F-5.12} and \eqref{F-5.13} in Sec.~\ref{Sec-6.3}.

\subsection{The general von Neumann algebra case}\label{Sec-5.2}

From now on let $\Me$ be again a general von Neumann algebra. In the next proposition, we first
extend Proposition \ref{P-5.5}(i) to general von Neumann algebras, based on Haagerup's reduction
theorem \cite{haagerup2010areduction}, that is briefly explained in Appendix \ref{Sec-B} for
the convenience of the reader.

\begin{prop}\label{P-5.6}
For every $\psi,\ffi\in\Me_*^+$, $\psi\ne0$, we have the following:
\begin{itemize}
\item[(i)] If $0<\alpha<1$ and $0<z\le z'$, then
\[
D_{\alpha,z}(\psi\|\ffi)\le D_{\alpha,z'}(\psi\|\ffi)\le D_1(\psi\|\ffi).
\]
\item[(ii)] If $\alpha>1$ and $\max\{\alpha/2,\alpha-1\}\le z\le z'\le\alpha$, then
\[
D_1(\psi\|\ffi)\le D_{\alpha,z'}(\psi\|\ffi)\le D_{\alpha,z}(\psi\|\ffi).
\]
\end{itemize}
\end{prop}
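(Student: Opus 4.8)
The plan is to reduce to the finite von Neumann algebra case of Proposition \ref{P-5.5} via Haagerup's reduction theorem (Appendix \ref{Sec-B}), transporting the divergences through conditional expectations and the martingale convergence of Theorem \ref{thm:martingale}. First I would pass to the $\sigma$-finite case: setting $e:=s(\psi)\vee s(\ffi)$, which is a $\sigma$-finite projection, Remark \ref{remark:defi} shows that $D_{\alpha,z}(\psi\|\ffi)$ is unchanged when $\psi,\ffi$ are regarded as functionals on $e\Me e$, and the normalized relative entropy $D_1$ has the same reduction property; hence I may assume $\Me$ is $\sigma$-finite. Fixing a faithful normal state on $\Me$, I form the Haagerup reduction $\hat\Me\supseteq\Me$, which carries a faithful normal state (so is $\sigma$-finite), with the canonical state-preserving conditional expectation $E:\hat\Me\to\Me$, together with an increasing sequence of \emph{finite} von Neumann subalgebras $\Me_n\subseteq\hat\Me$ satisfying $\hat\Me=(\bigcup_n\Me_n)''$ and state-preserving conditional expectations $E_n:\hat\Me\to\Me_n$. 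Put $\hat\psi:=\psi\circ E$ and $\hat\ffi:=\ffi\circ E$.

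Next I would transport the divergences up to $\hat\Me$. Applying the DPI (Theorem \ref{thm:dpi}) to the inclusion $\iota:\Me\hookrightarrow\hat\Me$ gives $D_{\alpha,z}(\psi\|\ffi)\le D_{\alpha,z}(\hat\psi\|\hat\ffi)$, since $\hat\psi\circ\iota=\psi$ and $\hat\ffi\circ\iota=\ffi$, while applying it to $E$ gives the reverse inequality, since $\psi\circ E=\hat\psi$ and $\ffi\circ E=\hat\ffi$; hence $D_{\alpha,z}(\hat\psi\|\hat\ffi)=D_{\alpha,z}(\psi\|\ffi)$, and the same argument via the analogous monotonicity of the relative entropy under $\iota$ and $E$ gives $D_1(\hat\psi\|\hat\ffi)=D_1(\psi\|\ffi)$. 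Throughout, whenever $\alpha>1$ I keep $\max\{\alpha/2,\alpha-1\}\le z\le z'\le\alpha$, so both directions of the DPI are legitimately available. Since each $\Me_n$ is a finite von Neumann algebra, Proposition \ref{P-5.5} applies to the restrictions $\hat\psi|_{\Me_n}$, $\hat\ffi|_{\Me_n}$, yielding the desired chains of inequalities among $D_{\alpha,z}$, $D_{\alpha,z'}$ and $D_1$ at each level $n$. Letting $n\to\infty$ and using the martingale convergence $D_{\alpha,z}(\hat\psi\|\hat\ffi)=\lim_nD_{\alpha,z}(\hat\psi|_{\Me_n}\|\hat\ffi|_{\Me_n})$ of Theorem \ref{thm:martingale} (and the corresponding convergence for $D_1$, a known property of the relative entropy), together with the equalities of the previous step, then transfers each inequality back to $\Me$.

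For part (ii) this finishes the proof, as the whole range $\max\{\alpha/2,\alpha-1\}\le z\le z'\le\alpha$ sits inside the DPI bounds. For part (i) the left inequality $D_{\alpha,z}\le D_{\alpha,z'}$ is already Theorem \ref{T-5.1} for all $z>0$; for the right inequality $D_{\alpha,z'}\le D_1$ the argument above applies directly once $z'\ge\max\{\alpha,1-\alpha\}$, and for smaller $z'$ I would combine it with Theorem \ref{T-5.1}, writing $D_{\alpha,z'}\le D_{\alpha,z''}\le D_1$ for any $z''\ge\max\{\alpha,1-\alpha\}$. The main obstacle I anticipate is organizational rather than computational: making sure every transfer step stays inside the DPI bounds so that Theorems \ref{thm:dpi} and \ref{thm:martingale} are applicable, and verifying that the auxiliary quantity $D_1$ shares both the invariance under state-preserving conditional expectations and the martingale convergence along the $\Me_n$. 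Once these are secured, the finite-algebra inequalities of Proposition \ref{P-5.5} pass to the limit without further effort.
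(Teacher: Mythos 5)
Your proposal is correct and follows essentially the same route as the paper: the paper's proof reduces to the $\sigma$-finite case via Remark \ref{remark:defi} and then invokes Lemma \ref{L-B.2} (whose proof is exactly the Haagerup-reduction/DPI/martingale argument you spell out) together with Proposition \ref{P-5.5}(i), handling the range $z<\max\{\alpha,1-\alpha\}$ in part (i) by Theorem \ref{T-5.1} just as you do. You have merely inlined the content of Lemma \ref{L-B.2}; no substantive difference.
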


\begin{proof}
Since $D_{\alpha,z}(\psi\|\ffi)$ is the same as that for $\psi,\ffi$ restricted to $s(\psi+\ffi)\Me s(\psi+\ffi)$
(see Remark \ref{remark:defi}) and similarly for $D_1(\psi\|\ffi)$, we may assume that $\Me$ is
$\sigma$-finite. If $0<\alpha<1$ and $\max\{\alpha,1-\alpha\}\le z$, then we have
$D_{\alpha,z}(\psi\|\ffi)\le D_1(\psi\|\ffi)$ by Proposition \ref{P-5.5}(i) and Lemma \ref{L-B.2} in
Appendix \ref{Sec-B}. Combining this and Theorem \ref{T-5.1} yields (i). Statement (ii) is
again a consequence of Proposition \ref{P-5.5}(i) and Lemma \ref{L-B.2}.
\end{proof}

For the proof of Proposition \ref{P-5.6}(ii) by use of Lemma \ref{L-B.2} (based on Theorem \ref{T-B.1})
it is inevitable to restrict the parameter $z$ to the DPI bounds when $\alpha>1$. However, in the next
theoem we show the $\alpha>1$ counterpart of Theorem \ref{T-5.1}, showing the monotonicity of
$D_{\alpha,z}$ in $z\ge\alpha/2$ when $\alpha>1$. Nevertheless, we note that the inequalities between
$D_{\alpha,z}$ and $D_1$ in (i) and (ii) of Proposition \ref{P-5.6} are not included in
Theorems \ref{T-5.1} and \ref{T-5.7}.
The proof below is based on Kosaki's interpolation $L^p$-spaces \cite{kosaki1984applications},
briefly explained in Appendix \ref{Sec-C}.

\begin{theorem}\label{T-5.7}
For every $\psi,\ffi\in\Me_*^+$, $\psi\ne0$, and $\alpha>1$, the function $z\mapsto D_{\alpha,z}(\psi\|\varphi)$
is monotone decreasing on $[\alpha/2,\infty)$.
\end{theorem}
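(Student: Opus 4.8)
The plan is to reduce the statement to a monotonicity of $Q_{\alpha,z}$ and then to an Araki--Lieb--Thirring-type inequality in the Haagerup $L^p$-spaces, which I would establish by complex interpolation. Since $\alpha>1$ gives $\tfrac1{\alpha-1}>0$, the map $z\mapsto D_{\alpha,z}(\psi\|\ffi)$ is decreasing on $[\alpha/2,\infty)$ if and only if $z\mapsto Q_{\alpha,z}(\psi\|\ffi)$ is decreasing there. So I fix $\alpha/2\le z\le z'$, put $r:=z'/z\ge1$, and aim at $Q_{\alpha,z}(\psi\|\ffi)\ge Q_{\alpha,z'}(\psi\|\ffi)$. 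Using Remark \ref{remark:defi} I first restrict to $s(\ffi)\Me s(\ffi)$ and assume $\ffi$ faithful (the case $Q_{\alpha,z'}(\psi\|\ffi)=\infty$, in particular $s(\psi)\not\le s(\ffi)$, will follow from the finite-value inequality by the limiting argument below). By Lemma \ref{lemma:renyi_2z} I write $Q_{\alpha,z}=\|y_z\|_{2z}^{2z}$ and $Q_{\alpha,z'}=\|y_{z'}\|_{2z'}^{2z'}$ with $h_\psi^{\alpha/2z}=y_zh_\ffi^{(\alpha-1)/2z}$ and $h_\psi^{\alpha/2z'}=y_{z'}h_\ffi^{(\alpha-1)/2z'}$. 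Setting $A:=h_\psi^{\alpha/2z'}$ and $B:=h_\ffi^{-(\alpha-1)/2z'}$ (positive operators affiliated with $\Me$), one checks $y_{z'}=AB$ and $y_z=A^rB^r$, so the goal becomes the inequality
\[
\|AB\|_{p}\le\|A^rB^r\|_{p/r}^{1/r},\qquad p:=2z',\quad p/r=2z,
\]
which is exactly the Araki--Lieb--Thirring inequality used in the finite case (Lemma \ref{L-5.3}). Here the hypothesis $z\ge\alpha/2$ enters precisely to guarantee $p/r=2z>1$ (and $p=2z'>1$), so that both $L^{2z}(\Me)$ and $L^{2z'}(\Me)$ are Banach spaces to which a three-lines argument applies.

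To prove this inequality without a trace on $\Me$, I would run Kosaki's complex-interpolation proof of the Araki--Lieb--Thirring inequality \cite{kosaki1992aninequality} inside Kosaki's interpolation $L^p$-spaces attached to $\ffi$ (Appendix \ref{Sec-C}). By duality it suffices to bound $|\Tr(cAB)|$ for $c\in L^{p'}(\Me)$ with $\|c\|_{p'}\le1$, where $p'$ is the conjugate index of $p>1$. I would construct an analytic family $\zeta\mapsto c_\zeta$ on the strip $\{0\le\Re\zeta\le1\}$, built from the modular unitaries $h_\psi^{it},h_\ffi^{it}$ and the embeddings of the Kosaki spaces, together with the analytic family $F(\zeta):=A^{r\zeta}B^{r\zeta}$, and consider $G(\zeta):=\Tr(c_\zeta F(\zeta))$. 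The family is arranged so that $G(1/r)=\Tr(cAB)$, while on the line $\Re\zeta=0$ the factor $F(it)=h_\psi^{i\cdots}h_\ffi^{i\cdots}$ is a unitary and $\|c_{it}\|_1\le1$, giving $|G(it)|\le1$, and on the line $\Re\zeta=1$ the boundary values are controlled by $\|A^rB^r\|_{p/r}$. Hadamard's three-lines theorem then yields $|G(1/r)|\le1^{1-1/r}\,\|A^rB^r\|_{p/r}^{1/r}$, and taking the supremum over $c$ gives the desired inequality, hence $Q_{\alpha,z}\ge Q_{\alpha,z'}$.

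The main obstacle is exactly the bookkeeping on the boundary lines: the imaginary powers in $F(\zeta)$ produce a product $h_\psi^{it\cdots}h_\ffi^{-it\cdots}$ of unitaries coming from the two \emph{different} modular flows of $\psi$ and $\ffi$, and on the line $\Re\zeta=1$ such a unitary sits \emph{between} $A^r$ and $B^r$. The point of phrasing the argument in Kosaki's $\ffi$-interpolation spaces is that the $\ffi$-modular part acts isometrically there, so the remaining twist can be absorbed into the compensating family $c_\zeta$ while keeping the correct boundary norms. Making $c_\zeta$ simultaneously analytic, of controlled $L^1$- and $L^{(p/r)'}$-norm on the two lines, and matched to $c$ at $\zeta=1/r$ is the technical heart, and is where the non-tracial structure (only $\Tr$ on $L^1(\Me)$, not a trace on $\Me$) makes the argument more delicate than Kosaki's original semifinite-trace version \cite{kosaki1984applications,kosaki1992aninequality}. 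Finally I would remove the auxiliary assumptions: the negative powers $h_\ffi^{-(\alpha-1)/2z'}$ are treated by the spectral truncations $h_\ffi\mapsto(h_\ffi+\eps\,s(\ffi))$ together with the continuity and lower semicontinuity properties of $Q_{\alpha,z}$ from \cite{kato2023onrenyi}, which also dispatch the non-faithful case and the case $Q_{\alpha,z'}(\psi\|\ffi)=\infty$ by letting $\eps\searrow0$ in the already-established inequality.
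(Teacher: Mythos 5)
Your overall strategy -- reduce to $Q_{\alpha,z}\ge Q_{\alpha,z'}$ and prove an Araki--Lieb--Thirring-type bound by complex interpolation in Kosaki's $\ffi$-left $L^p$-spaces, with an $L^\infty$-type boundary at $\Re\zeta=0$ and an $L^{2z}$-boundary at $\Re\zeta=1$ -- is exactly the paper's strategy, and your formal family $F(\zeta)=A^{r\zeta}B^{r\zeta}=h_\psi^{(\alpha/2z)\zeta}h_\ffi^{-((\alpha-1)/2z)\zeta}$ is, up to the Kosaki embedding factor $h_\ffi^{1-\zeta/2z}$, the paper's analytic family $f(w)=h_{\psi_0}^{(\alpha/2z)w}eh_\ffi^{1-(\alpha/2z)w}$. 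But the argument as written has genuine gaps. First, the step you yourself flag as ``the technical heart'' (constructing the compensating analytic family $c_\zeta$ with the right boundary norms) is left open, and it is precisely the hard part; the paper avoids it entirely by never introducing the unbounded operator $B=h_\ffi^{-(\alpha-1)/2z'}$: it keeps all exponents of $h_\ffi$ in $[0,1]$ so that $f(w)\in L^1(\Me)$ throughout, checks $f\in\cF'(L^\infty_L,L^{2z}_L)$, and then invokes Kosaki's identification $C_\theta(L^\infty_L,L^{2z}_L)=L^{2z/\theta}_L$ \emph{with equal norms} together with the reiteration theorem at $\theta=z/z'$. That identification packages the whole duality/three-lines bookkeeping you are trying to set up by hand.

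Second, your fallback regularization $h_\ffi\mapsto h_\ffi+\eps\,s(\ffi)$ does not exist in the Haagerup setting: $s(\ffi)\in\Me=L^\infty(\Me)$ and $h_\ffi\in L^1(\Me)$ live in different homogeneous components of $\widetilde\cR$ (the dual action scales them differently), so $h_\ffi+\eps\,s(\ffi)$ is neither an element of $L^1(\Me)$ nor even $\tau$-measurable when $\tau(s(\ffi))=\infty$, and $Q_{\alpha,z}(\psi\|\cdot)$ of such an object is undefined. Adding ``$\eps$ times something'' to $h_\ffi$ is only available when there is a trace to add (as in Lemma \ref{L-5.2}, which is why the paper's Sec.~\ref{Sec-5.1} is restricted to finite von Neumann algebras). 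Third, your setup presupposes that $y_{z'}$ exists, i.e.\ that $Q_{\alpha,z'}(\psi\|\ffi)<\infty$, which is part of what must be proved; in the paper the conclusion $f(z/z')\in C_{z/z'}=L^{2z'}_L$ simultaneously produces $y'\in L^{2z'}(\Me)$ with $h_\psi^{\alpha/2z'}=y'h_\ffi^{(\alpha-1)/2z'}$ and the norm bound $\|y'\|_{2z'}\le\|y\|_{2z}^{z/z'}$. Finally, a small point: the role of $z\ge\alpha/2$ is not just to make $2z>1$; it is used to ensure $L^{2z}_L$ is reflexive so that the larger function class $\cF'$ still computes the interpolation space, which is what legitimizes using the unbounded-at-infinity family $f$.
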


\begin{proof}
Let $\alpha>1$ and $z,z'\in[\alpha/2,\infty)$ be such that $z<z'$. We need to prove that
$Q_{\alpha,z}(\psi\|\ffi)\ge Q_{\alpha,z'}(\psi\|\ffi)$. To do this, we may assume that
$Q_{\alpha,z}(\psi\|\ffi)<\infty$. Hence by Lemma \ref{lemma:renyi_2z}, there is some $y\in L^{2z}(\Me)$
such that
\[
h_\psi^{\alpha\over2z}=yh_\ffi^{\alpha-1\over2z},\qquad
Q_{\alpha,z}(\psi\|\ffi)=\|y\|_{2z}^{2z}.
\]
In particular, $e:=s(\psi)\le s(\ffi)$, so that we may assume that $\ffi$ is faithful. Let $\sigma\in\Me_*^+$
be such that $s(\sigma)=\1-e$, and set $\psi_0:=\psi+\sigma$, so that $\psi_0$ is faithful too. Let us use
for simplicity the notation $L^p_L$ for Kosaki's left $L^p$-space $L^p(\Me,\ffi)_L$ for $1\le p\le\infty$;
see \eqref{F-C.2} in Appendix \ref{Sec-C}.

Consider the function
\begin{align}\label{F-5.16}
f(w):=h_{\psi_0}^{{\alpha\over2z}w}eh_\ffi^{1-{\alpha\over2z}w},\qquad w\in S,
\end{align}
where $S:=\{w\in\bC:0\le\Re w\le1\}$. Then, for $w=s+it$ with $0\le s\le1$ and $t\in\bR$, since
\[
f(s+it)=h_{\psi_0}^{{\alpha\over2z}it}h_{\psi_0}^{{\alpha\over2z}s}e
h_\ffi^{1-{\alpha\over2z}s}h_\ffi^{-{\alpha\over2z}it},
\]
we have
\begin{align*}
&f(it)=h_{\psi_0}^{{\alpha\over2z}it}eh_\ffi^{-{\alpha\over2z}it}h_\ffi\in L^\infty_L, \\
&\|f(it)\|_{L^\infty_L}=\Big\|h_{\psi_0}^{{\alpha\over2z}it}eh_\ffi^{-{\alpha\over2z}it}\Big\|=1,
\qquad t\in\bR,
\end{align*}
and
\begin{align*}
&f(1+it)=h_{\psi_0}^{{\alpha\over2z}it}h_\psi^{\alpha\over2z}
h_\ffi^{1-{\alpha\over2z}}h_\ffi^{-{\alpha\over2z}it}
=\Bigl(h_{\psi_0}^{{\alpha\over2z}it}yh_\ffi^{-{\alpha\over2z}it}\Bigr)h_\ffi^{2z-1\over2z}\in L^{2z}_L, \\
&\|f(1+it)\|_{L^{2z}_L}=\Big\|h_{\psi_0}^{{\alpha\over2z}it}yh_\ffi^{-{\alpha\over2z}it}\Big\|_{2z}=\|y\|_{2z},
\qquad t\in\bR,
\end{align*}
where the last equality follows from \cite[Lemma 10.1]{kosaki1984applications}. Furthermore,
we observe that $f$ is a bounded continuous function on $S$ into $L^1(\Me)$ and it is analytic in the
interior (see, e.g., \cite[Lemma 9.19 and Theorem 9.18(c)]{hiai2021lectures}). Here, the continuity of
$f$ on $S$ can be verified, for instance, by using \cite[Theorem A.7 and Proposition 10.6]{hiai2021lectures},
while we omit the details. Therefore, $f$ belongs to the
set $\cF'(L^\infty_L,L^{2z}_L)$ of $L^1(\Me)$-valued functions given in
\cite[Definition 1.4]{kosaki1984applications}. Since $L^{2z}_L$ is reflexive thanks to $1<\alpha\le2z<\infty$,
it follows from \cite[Theorems 1.5 and Remark 3.4]{kosaki1984applications} that the set
$\cF'(L^\infty_L,L^{2z}_L)$ defines the interpolation space $C_\theta=C_\theta(L^\infty_L,L^{2z}_L)$ in
\cite[Definition 1.1]{kosaki1984applications}. Hence for any $\theta\in(0,1)$, we have $f(\theta)\in C_\theta$
and
\[
\|f(\theta)\|_{C_\theta}\le\biggl(\sup_{t\in\bR}\|f(it)\|_{L^\infty_L}\biggr)^{1-\theta}
\biggl(\sup_{t\in\bR}\|f(1+it)\|_{L^{2z}_L}\biggr)^\theta=\|y\|_{2z}^\theta,
\]
where the inequality above is known from \cite[Lemma 4.3.2(ii)]{bergh1976interpolation}. By
\cite[Theorem 1.9]{kosaki1984applications} and the reiteration theorem (see
\cite{cwikel1978complex} {for the best result on that}), $C_\theta=L^{2z/\theta}_L$
with equal norms, so that putting $\theta=z/z'$ we have
\[
f(z/z')=h_\psi^{\alpha\over2z'}h_\ffi^{1-{\alpha\over2z'}}=y'h_\ffi^{2z'-1\over2z'}
\]
for some $y'\in L^{2z'}(\Me)$, and $\|y'\|_{2z'}\le\|y\|_{2z}^{z/z'}$. This implies that
$h_\psi^{\alpha\over2z'}=y'h_\ffi^{\alpha-1\over2z'}$ so that
$Q_{\alpha,z'}(\psi\|\ffi)=\|y'\|_{2z'}^{2z'}\le\|y\|_{2z}^{2z}$, and the assertion follows.
\end{proof}

\section{Monotonicity in the parameter $\alpha$}\label{Sec-6}

In this section we show the monotonicity of $D_{\alpha,z}$ in the parameter $\alpha$ as well as limits of
$D_{\alpha,z}$ as $\alpha\nearrow1$ and $\alpha\searrow1$.

\subsection{The case $0<\alpha<1$ and all $z>0$}\label{Sec-6.1}

The aim of this subsection is to prove the next theorem.

\begin{theorem}\label{T-6.1}
Let $\psi,\ffi\in\Me_*^+$ and $z>0$. Then we have
\begin{itemize}
\item[(1)] $\alpha\mapsto\log Q_{\alpha,z}(\psi\|\ffi)$ is convex on $(0,1)$,
\item[(2)] $\alpha\mapsto D_{\alpha,z}(\psi\|\ffi)$ is monotone increasing on $(0,1)$.
\end{itemize}
\end{theorem}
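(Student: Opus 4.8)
The plan is to prove the log-convexity (1) first and then read off (2) as a statement about chords of a convex function. Throughout I use, for $0<\alpha<1$, the factorized form
\[
Q_{\alpha,z}(\psi\|\ffi)=\bigl\|h_\psi^{\frac{\alpha}{2z}}h_\ffi^{\frac{1-\alpha}{2z}}\bigr\|_{2z}^{2z},
\]
which holds because $K_\alpha:=h_\psi^{\alpha/2z}h_\ffi^{(1-\alpha)/2z}\in L^{2z}(\Me)$ satisfies $K_\alpha^*K_\alpha=h_\ffi^{(1-\alpha)/2z}h_\psi^{\alpha/z}h_\ffi^{(1-\alpha)/2z}$ and $\|K_\alpha\|_{2z}^{2z}=\Tr\bigl((K_\alpha^*K_\alpha)^z\bigr)$. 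Writing $N(\alpha):=\|K_\alpha\|_{2z}$, statement (1) is equivalent to convexity of $\alpha\mapsto\log N(\alpha)$ on $(0,1)$, since $\log Q_{\alpha,z}=2z\log N(\alpha)$. As $\alpha\mapsto N(\alpha)$ is continuous (continuity of the Haagerup powers $h\mapsto h^{p}$ and of multiplication and the quasi-norm, or directly the joint continuity of $Q_{\alpha,z}$ in \cite[Theorem 1(iv)]{kato2023onrenyi}), it suffices to prove midpoint log-convexity, $N(\alpha_m)^2\le N(\alpha_0)N(\alpha_1)$ for $\alpha_m:=(\alpha_0+\alpha_1)/2$.

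I would establish the midpoint inequality in three steps, writing $c_j:=(1-\alpha_j)/2z$, $c:=(1-\alpha_m)/2z=(c_0+c_1)/2$ and $d:=\alpha_m/z=\frac{\alpha_0}{2z}+\frac{\alpha_1}{2z}$. First, $N(\alpha_m)^2=\|K_{\alpha_m}^*K_{\alpha_m}\|_z=\bigl\|h_\ffi^{c}h_\psi^{d}h_\ffi^{c}\bigr\|_z$. Second, a symmetrization inequality $\bigl\|h_\ffi^{c}h_\psi^{d}h_\ffi^{c}\bigr\|_z\le\bigl\|h_\ffi^{c_0}h_\psi^{d}h_\ffi^{c_1}\bigr\|_z$. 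Third, the factorization $h_\ffi^{c_0}h_\psi^{d}h_\ffi^{c_1}=XY$ with $X:=h_\ffi^{c_0}h_\psi^{\alpha_0/2z}$ and $Y:=h_\psi^{\alpha_1/2z}h_\ffi^{c_1}$, to which Hölder's inequality $\|XY\|_z\le\|X\|_{2z}\|Y\|_{2z}$ applies; since $X^*=h_\psi^{\alpha_0/2z}h_\ffi^{c_0}$ we get $\|X\|_{2z}=N(\alpha_0)$ and likewise $\|Y\|_{2z}=N(\alpha_1)$. Chaining the three steps gives $N(\alpha_m)^2\le N(\alpha_0)N(\alpha_1)$, and continuity upgrades midpoint log-convexity to full convexity.

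The symmetrization inequality is the crux. Its point is that the two outer factors are powers of the \emph{same} operator $h_\ffi$: on the analytic family $G(w):=h_\ffi^{(1-w)c_0+wc_1}h_\psi^{d}h_\ffi^{(1-w)c_1+wc_0}$, $w\in S=\{0\le\Re w\le1\}$, the imaginary parts detach as an outer unitary conjugation, $G(it)=h_\ffi^{it(c_1-c_0)}\bigl(h_\ffi^{c_0}h_\psi^{d}h_\ffi^{c_1}\bigr)h_\ffi^{-it(c_1-c_0)}$ and similarly at $\Re w=1$, so that $\|G(it)\|_z=\|G(1+it)\|_z=\|h_\ffi^{c_0}h_\psi^{d}h_\ffi^{c_1}\|_z$ while $G(1/2)=h_\ffi^{c}h_\psi^{d}h_\ffi^{c}$. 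For $z\ge1$, where $L^z(\Me)$ is a Banach space, Hadamard's three-line theorem then gives the inequality at once (the boundedness and analyticity of $G$ being checked as in the proof of Theorem \ref{T-5.7}); for $0<z<1$, where $\|\cdot\|_z$ is only a quasi-norm, I would instead invoke the corresponding log-majorization (of Araki--Lieb--Thirring type, in the spirit of the use of \cite[Corollary 3]{kosaki1992aninequality} in Lemma \ref{L-5.3}), which delivers the inequality for every $z>0$. This outer-conjugation structure is exactly why the argument succeeds where a naive interpolation with a genuinely moving middle factor would not.

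For (2), set $R(\alpha):=\log\bigl(Q_{\alpha,z}(\psi\|\ffi)/\psi(\1)\bigr)$, convex on $(0,1)$ by (1), so that $D_{\alpha,z}(\psi\|\ffi)=R(\alpha)/(\alpha-1)$ is the slope of the chord joining $(\alpha,R(\alpha))$ to the fixed point $(1,0)$. By the monotone-slope property of convex functions this slope is nondecreasing in $\alpha$ once $R$ extends convexly to $(0,1]$ with $R(1)=0$, for which it suffices that $\lim_{\alpha\nearrow1}R(\alpha)\le0$, i.e. $\lim_{\alpha\nearrow1}Q_{\alpha,z}(\psi\|\ffi)\le\psi(\1)$. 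This I would obtain from $\lim_{\alpha\nearrow1}Q_{\alpha,z}(\psi\|\ffi)=\bigl\|h_\psi^{1/2z}s(\ffi)\bigr\|_{2z}^{2z}\le\bigl\|h_\psi^{1/2z}\bigr\|_{2z}^{2z}=\Tr(h_\psi)=\psi(\1)$, the inequality being contractivity of right multiplication by the projection $s(\ffi)$ (Hölder) and the limit being the same continuity used above. Hence $D_{\alpha,z}(\psi\|\ffi)$ is monotone increasing on $(0,1)$. The main obstacle throughout is the symmetrization inequality, and within it the quasi-norm range $0<z<1$, which forces the log-majorization input rather than the clean three-line argument available for $z\ge1$.
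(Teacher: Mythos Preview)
Your overall architecture (midpoint log-convexity via a three-factor splitting, then continuity, then the chord-slope argument for (2)) coincides with the paper's. The difference is in how the midpoint inequality is obtained: you factor as symmetrization $+$ H\"older, whereas the paper proves a purpose-built log-majorization inequality (Proposition~\ref{P-6.2}) directly yielding
\[
\Lambda_t\bigl(h_\psi^{\alpha_m/2z}h_\ffi^{(1-\alpha_m)/z}h_\psi^{\alpha_m/2z}\bigr)
\le\Lambda_t\bigl(h_\psi^{\alpha_0/2z}h_\ffi^{(1-\alpha_0)/z}h_\psi^{\alpha_0/2z}\bigr)^{1/2}
\Lambda_t\bigl(h_\psi^{\alpha_1/2z}h_\ffi^{(1-\alpha_1)/z}h_\psi^{\alpha_1/2z}\bigr)^{1/2},
\]
which, after inserting the explicit form $\mu_s(\,\cdot\,)=s^{-1/z}\|\cdot\|_z$ for $L^z(\Me)$-elements, gives midpoint log-convexity for every $z>0$ at once.

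The gap in your proposal is the symmetrization step for $0<z<1$. Your three-line argument for $G(w)$ is valid when $L^z(\Me)$ is a Banach space ($z\ge1$), but for $z<1$ you only assert that ``the corresponding log-majorization (of Araki--Lieb--Thirring type)'' delivers
\[
\bigl\|h_\ffi^{c}h_\psi^{d}h_\ffi^{c}\bigr\|_z\le\bigl\|h_\ffi^{c_0}h_\psi^{d}h_\ffi^{c_1}\bigr\|_z,\qquad c=\tfrac{c_0+c_1}{2}.
\]
This is not the shape of Kosaki's inequality in \cite{kosaki1992aninequality}, which compares $\tau(|A^rB^r|^{s})$ with $\tau(|AB|^{rs})$; your inequality rebalances the \emph{two outer} exponents of the same operator around a fixed inner factor, and does not reduce to ALT by an obvious substitution. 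Making it precise in the semifinite setting requires exactly the kind of $\Lambda_t$-submultiplicativity argument the paper develops in Proposition~\ref{P-6.2}. In fact, the paper's own Remark~\ref{remark:2nd-proof} shows that the complex-interpolation route (which is close in spirit to your three-line idea, but carried out in Kosaki's $L^p$-spaces so as to stay in a genuine Banach space) only reaches $z>1/2$; covering $0<z\le1/2$ seems to force the log-majorization machinery. So your sketch is correct for $z\ge1$, plausibly completable for $1/2<z<1$ via Kosaki's interpolation $L^p$-spaces, but is genuinely incomplete for $0<z\le1/2$.

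Your derivation of (2) from (1) is fine; note only that you should treat the degenerate case $s(\psi)\perp s(\ffi)$ (where $Q_{\alpha,z}\equiv0$) separately, as the paper does via Lemma~\ref{L-6.4}, and that the bound $\lim_{\alpha\nearrow1}Q_{\alpha,z}(\psi\|\ffi)\le\psi(\1)$ is exactly \cite[Theorem~1(vii)]{kato2023onrenyi}, which the paper cites at that point.
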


To prove this, we obtain a certain ``log-majorization'' result for positive $\tau$-measurable
operators, which might be meaningful in its own. Assume that $(\cR,\tau)$ be a semi-finite von Neumann
algebra $\cR$ with a faithful normal semi-finite trace $\tau$. Let $\widetilde\cR$ denote the space of
$\tau$-measurable operators affiliated with $\cR$. We consider operators $a\in\widetilde\cR$ satisfying
\begin{align}\label{F-6.1}
\mbox{$a\in\cR$\ \ or\ \ $\mu_t(a)\le Ct^{-\gamma}$ ($t>0$) for some $C,\gamma>0$,}
\end{align}
where $\mu_t(a)$, $t>0$, is the ($t$th) generalized $s$-number of $a$. 
For each $a\in\widetilde\cR$ satisfying \eqref{F-6.1} we define (see \cite{fack1986generalized})
\[
\Lambda_t(a):=\exp\int_0^t\log\mu_s(a)\,ds,\qquad t>0.
\]
Note \cite[]{fack1986generalized} that $\Lambda_t(a)\in[0,\infty)$, $t>0$, are well defined whenever $a$
satisfies \eqref{F-6.1}. Also, note that if $a,b\in\widetilde\cR$ satisfy \eqref{F-6.1}, then $|a|^p$ $(p>0$)
and $ab$ satisfy \eqref{F-6.1} too, as it is clear since $\mu_t(ab)\le\|a\|\mu_t(b)$ for $a\in\cR$,
$\mu_t(|a|^p)=\mu_t(a)^p$, and $\mu_t(ab)\le\mu_{t/2}(a)\mu_{t/2}(b)$; see
\cite[Lemma 2.5]{fack1986generalized}.

\begin{prop}\label{P-6.2}
Let $a_j,b_j\in\widetilde\cR^+$, $j=1,2$, satisfying \eqref{F-6.1} and assume that $a_1a_2=a_2a_1$
and $b_1b_2=b_2b_1$. Then for every $\theta\in(0,1)$ and any $t>0$,
\begin{align}\label{F-6.2}
\Lambda_t\Bigl((a_1^\theta a_2^{1-\theta})^{1/2}(b_1^\theta b_2^{1-\theta})
(a_1^\theta a_2^{1-\theta})^{1/2}\Bigr)
&\le\Lambda_t\bigl(a_1^\theta b_1^\theta\bigr)\Lambda_t\bigl(a_2^{1-\theta}b_2^{1-\theta}\bigr).
\end{align}
In particular,
\begin{align}\label{F-6.3}
\Lambda_t\Bigl((a_1^{1/2}a_2^{1/2})^{1/2}(b_1^{1/2}b_2^{1/2})(a_1^{1/2}a_2^{1/2})^{1/2}\Bigr)
\le\Lambda_t\bigl(a_1^{1/2}b_1a_1^{1/2}\bigr)^{1/2}\Lambda_t\bigl(a_2^{1/2}b_2a_2^{1/2}\bigr)^{1/2}.
\end{align}
\end{prop}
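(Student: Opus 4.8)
The plan is to deduce \eqref{F-6.3} from \eqref{F-6.2}, and to prove \eqref{F-6.2} by complex interpolation of an analytic operator family, letting the Fack--Kosaki submultiplicativity of $\Lambda_t$ handle the boundary of the strip and a three-lines theorem for $\Lambda_t$ handle its interior.

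First I would reduce everything to a single inequality. Since $a_1,a_2$ commute and $b_1,b_2$ commute, $A:=a_1^\theta a_2^{1-\theta}$ and $B:=b_1^\theta b_2^{1-\theta}$ are positive with $A^{1/2}=a_1^{\theta/2}a_2^{(1-\theta)/2}$ and $B^{1/2}=b_1^{\theta/2}b_2^{(1-\theta)/2}$. From $A^{1/2}BA^{1/2}=(B^{1/2}A^{1/2})^*(B^{1/2}A^{1/2})$ and $\mu_s(x)=\mu_s(|x|)$ one gets $\mu_s(A^{1/2}BA^{1/2})=\mu_s(A^{1/2}B^{1/2})^2$, hence $\Lambda_t(A^{1/2}BA^{1/2})=\Lambda_t(A^{1/2}B^{1/2})^2$. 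Thus \eqref{F-6.2} is equivalent to
\[
\Lambda_t\bigl(a_1^{\theta/2}a_2^{(1-\theta)/2}b_1^{\theta/2}b_2^{(1-\theta)/2}\bigr)
\le \Lambda_t\bigl(a_1^\theta b_1^\theta\bigr)^{1/2}\,\Lambda_t\bigl(a_2^{1-\theta}b_2^{1-\theta}\bigr)^{1/2}.
\]
Applying the same $\mu$-identity to each pair gives $\Lambda_t(a_j^{1/2}b_ja_j^{1/2})^{1/2}=\Lambda_t(a_j^{1/2}b_j^{1/2})$, so \eqref{F-6.3} is precisely the case $\theta=1/2$ of \eqref{F-6.2}; hence it suffices to prove the displayed inequality.

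To prove it I would introduce
\[
F(w):=a_1^{\theta w}\,a_2^{(1-\theta)(1-w)}\,b_1^{\theta w}\,b_2^{(1-\theta)(1-w)},\qquad w\in S:=\{w\in\bC:0\le\Re w\le1\},
\]
so that $F(1/2)$ is exactly the operator on the left above. One checks, using \eqref{F-6.1} and the measure topology of \cite{fack1986generalized}, that $F$ is $\widetilde\cR$-valued, bounded and continuous on $S$, analytic in its interior, with $\Lambda_t(F(w))$ bounded on $S$ (the factors carrying $\Re w$ have controlled generalized $s$-numbers, the purely imaginary factors being contractions). On the boundary I exploit commutativity within each pair: for $w=1+i\eta$ the factors $U_\eta:=a_1^{i\theta\eta}a_2^{-i(1-\theta)\eta}$ and $V_\eta:=b_1^{i\theta\eta}b_2^{-i(1-\theta)\eta}$ are partial isometries, and since $U_\eta$ commutes with $a_1^\theta$ we get $F(1+i\eta)=U_\eta\,a_1^\theta b_1^\theta\,V_\eta$; submultiplicativity of $\Lambda_t$ \cite{fack1986generalized} with $\Lambda_t(U_\eta),\Lambda_t(V_\eta)\le1$ yields $\Lambda_t(F(1+i\eta))\le\Lambda_t(a_1^\theta b_1^\theta)$. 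Symmetrically, moving $V_\eta$ past $b_2^{1-\theta}$ gives $F(i\eta)=U_\eta\,a_2^{1-\theta}b_2^{1-\theta}\,V_\eta$, whence $\Lambda_t(F(i\eta))\le\Lambda_t(a_2^{1-\theta}b_2^{1-\theta})$.

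The key remaining ingredient, which I expect to be the main obstacle, is a Hadamard three-lines theorem for $\Lambda_t$: that $w\mapsto\log\Lambda_t(F(w))$ is subharmonic, equivalently $\Lambda_t(F(\theta_0))\le(\sup_\eta\Lambda_t(F(i\eta)))^{1-\theta_0}(\sup_\eta\Lambda_t(F(1+i\eta)))^{\theta_0}$. Because $\Lambda_t$ is not subadditive, this does not come from the usual subharmonicity of $\log\|F(w)\|$; instead I would derive it from a supremum representation of $\log\Lambda_t$. In the matrix case $\Lambda_k(X)=\|\wedge^kX\|$, so $\log\Lambda_k(F(w))=\log\|\wedge^kF(w)\|$ is subharmonic since $\wedge^kF(w)$ is again analytic; the general $\tau$-measurable case follows either by writing $\log\Lambda_t(x)$ as a supremum of logarithms of moduli of Fuglede--Kadison determinants of compressions $exe$ with $\tau(e)=t$ (each term subharmonic in $w$, and a supremum of subharmonic functions being subharmonic), or by a Haagerup-reduction/approximation argument. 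Granting this, the three-lines theorem at $\theta_0=1/2$ together with the two boundary bounds yields the displayed inequality; squaring recovers \eqref{F-6.2}, and \eqref{F-6.3} follows as noted. The technical points to discharge are the regularity of $F$ on $S$, the uniform bound needed for the Phragm\'en--Lindel\"of form of the three-lines theorem, and a rigorous justification of the subharmonicity of $\log\Lambda_t(F(\cdot))$.
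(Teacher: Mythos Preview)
Your reduction and boundary computations are correct, and you correctly identify the crux as a three-lines inequality for $\Lambda_t$. But this is where the proposal has a genuine gap. The representation you suggest, $\log\Lambda_t(x)=\sup_e\log|\Delta(exe)|$ over projections $e$ with $\tau(e)=t$, is false already for $2\times2$ matrices: for a nilpotent $x=\bigl(\begin{smallmatrix}0&2\\0&0\end{smallmatrix}\bigr)$ one has $\Lambda_1(x)=\|x\|=2$, yet every rank-one compression satisfies $|exe|\le1$. The correct matrix formula requires two independent isometries, $\Lambda_k(x)=\sup_{U,V}|\det(U^*xV)|$, and carrying this to continuous $t$ in a general semifinite $(\cR,\tau)$---defining the determinant of such non-square compressions in $\widetilde\cR$, and verifying the upper semicontinuity needed so that a supremum of subharmonic functions remains subharmonic---is substantial work you have not done. (The parenthetical ``a supremum of subharmonic functions being subharmonic'' is not true without that regularity.) Haagerup reduction does not help either: $(\cR,\tau)$ is already semifinite, and the crossed product $\Me\rtimes_{\sigma^\omega}\bR$ need not admit any approximation by finite subalgebras.

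The paper sidesteps all of this with an elementary Horn--Weyl power trick using nothing beyond Fack--Kosaki submultiplicativity $\Lambda_t(xy)\le\Lambda_t(x)\Lambda_t(y)$ and $\Lambda_t(c^k)=\Lambda_t(c)^k$ for $c\ge0$. With $A=a_1^\theta a_2^{1-\theta}$ and $B=b_1^\theta b_2^{1-\theta}$, one expands $(A^{1/2}BA^{1/2})^k=A^{1/2}B\,A\,B\cdots A\,B\,A^{1/2}$ and uses only the commutativity within each pair to regroup the interior as
\[
(A^{1/2}b_2^{1-\theta})\,\bigl[(b_1^\theta a_1^\theta)(a_2^{1-\theta}b_2^{1-\theta})\bigr]^{k-1}\,(b_1^\theta A^{1/2}).
\]
Submultiplicativity then gives $\Lambda_t(A^{1/2}BA^{1/2})^k\le C\,\Lambda_t(a_1^\theta b_1^\theta)^{k-1}\Lambda_t(a_2^{1-\theta}b_2^{1-\theta})^{k-1}$ with $C$ independent of $k$; taking $k$th roots and letting $k\to\infty$ yields \eqref{F-6.2} directly, with \eqref{F-6.3} the case $\theta=1/2$ as you noted.
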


\begin{proof}
Let $\theta\in(0,1)$ and $t>0$ be arbitrary. For any $k\in\bN$ we note that
\begin{align*}
&\bigl((a_1^\theta a_2^{1-\theta})^{1/2}(b_1^\theta b_2^{1-\theta})
(a_1^\theta a_2^{1-\theta})^{1/2}\bigr)^k \\
&\quad=(a_1^\theta a_2^{1-\theta})^{1/2}(b_1^\theta b_2^{1-\theta})(a_1^\theta a_2^{1-\theta})
(b_1^\theta b_2^{1-\theta})\cdots(a_1^\theta a_2^{1-\theta})(b_1^\theta b_2^{1-\theta})
(a_1^\theta a_2^{1-\theta})^{1/2} \\
&\quad=(a_1^\theta a_2^{1-\theta})^{1/2}b_2^{1-\theta}(b_1^\theta a_1^\theta)
(a_2^{1-\theta}b_2^{1-\theta})\cdots(b_1^\theta a_1^\theta)(a_2^{1-\theta}b_2^{1-\theta})
b_1^\theta(a_1^\theta a_2^{1-\theta})^{1/2}.
\end{align*}
Since $(a_1^\theta a_2^{1-\theta})^{1/2}b_2^{1-\theta}$, $b_1^\theta a_1^\theta$,
etc.\ are $\tau$-measurable operators satisfying \eqref{F-6.1}, we have
by \cite[Theorem 4.2(ii)]{fack1986generalized}
\begin{align*}
&\Lambda_t\bigl((a_1^\theta a_2^{1-\theta})^{1/2}(b_1^\theta b_2^{1-\theta})
(a_1^\theta a_2^{1-\theta})^{1/2}\bigr)^k \\
&\quad\le\Lambda_t\bigl((a_1^\theta a_2^{1-\theta})^{1/2}b_2^{1-\theta}\bigr)
\Lambda_t(b_1^\theta a_1^\theta)^{k-1}\Lambda_t\bigl(a_2^{1-\theta}b_2^{1-\theta}\bigr)^{k-1}
\Lambda_t\bigl(b_1^\theta(a_1^\theta a_2^{1-\theta})^{1/2}\bigr),
\end{align*}
so that
\begin{align*}
&\Lambda_t\bigl((a_1^\theta a_2^{1-\theta})^{1/2}(b_1^\theta b_2^{1-\theta})
(a_1^\theta a_2^{1-\theta})^{1/2}\bigr) \\
&\quad\le\Lambda_t\bigl((a_1^\theta a_2^{1-\theta})^{1/2}b_2^{1-\theta}\bigr)^{1/k}
\Lambda_t(b_1^\theta a_1^\theta)^{1-{1\over k}}\Lambda_t\bigl(a_2^{1-\theta}b_2^{1-\theta}\bigr)^{1-{1\over k}}
\Lambda_t\bigl(b_1^\theta(a_1^\theta a_2^{1-\theta})^{1/2}\bigr)^{1/k}.
\end{align*}
Letting $k\to\infty$ gives \eqref{F-6.2}. When $\theta=1/2$, \eqref{F-6.2}  is rewritten as \eqref{F-6.3}.
\end{proof}

\begin{remark}\label{R-6.3}
Since $\Lambda_t(a_j^rb_j^r)\le\Lambda_t(a_jb_j)^r$ for any $r\in(0,1)$ by \cite{kosaki1992aninequality},
inequality \eqref{F-6.2} implies that
\[
\Lambda_t\bigl((a_1^\theta a_2^{1-\theta})^{1/2}(b_1^\theta b_2^{1-\theta})
(a_1^\theta a_2^{1-\theta})^{1/2}\bigr)
\le\Lambda_t(a_1b_1)^\theta\Lambda_t(a_2b_2)^{1-\theta}
=\Lambda_t\bigl(a_1b_1^2a_1\bigr)^{\theta\over2}\Lambda_t\bigl(a_2b_2^2a_2)^{1-\theta\over2}.
\]
We are indeed interested in whether a stronger inequality
\[
\Lambda_t\bigl((a_1^\theta a_2^{1-\theta})^{1/2}(b_1^\theta b_2^{1-\theta})
(a_1^\theta a_2^{1-\theta})^{1/2}\bigr)
\le\Lambda_t\bigl(a_1^{1/2}b_1a_1^{1/2}\bigr)^\theta
\Lambda_t\bigl(a_2^{1/2}b_2a_2^{1/2}\bigr)^{1-\theta}
\]
hold or not in the situation of Proposition \ref{P-6.2}. The last inequality is known to hold in the
finite dimensional setting, as shown in \cite[Theorem 2.1]{hiai2024log-majorization}.
\end{remark}

\begin{lemma}\label{L-6.4}
Let $\psi,\ffi\in\Me_*^+$ with $\psi\ne0$, and assume that $s(\psi)\not\perp s(\ffi)$. Then for every $z>0$,
$Q_{\alpha,z}(\psi\|\ffi)>0$ for all $\alpha\in(0,1)$, and $\alpha\mapsto Q_{\alpha,z}(\psi\|\ffi)$ is continuous
on $(0,1)$.
\end{lemma}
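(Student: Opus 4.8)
The plan is to reformulate $Q_{\alpha,z}$ for the calculus in the Haagerup $L^p$-spaces and then treat positivity and continuity separately. Writing $C:=h_\psi^{\frac{\alpha}{2z}}h_\ffi^{\frac{1-\alpha}{2z}}\in L^{2z}(\Me)$ and using $\Tr((C^*C)^z)=\|C\|_{2z}^{2z}$, one has for $0<\alpha<1$ the identity $Q_{\alpha,z}(\psi\|\ffi)=\big\|h_\psi^{\frac{\alpha}{2z}}h_\ffi^{\frac{1-\alpha}{2z}}\big\|_{2z}^{2z}$, so that $Q_{\alpha,z}(\psi\|\ffi)>0$ iff $C\neq0$. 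For positivity I would show $C=0\iff s(\psi)s(\ffi)=0$. The implication $(\Leftarrow)$ is immediate from $h_\psi^{\frac{\alpha}{2z}}=h_\psi^{\frac{\alpha}{2z}}s(\psi)$ and $h_\ffi^{\frac{1-\alpha}{2z}}=s(\ffi)h_\ffi^{\frac{1-\alpha}{2z}}$. For $(\Rightarrow)$, take the spectral projections $e_n:=\chi_{[1/n,n]}(h_\psi)\nearrow s(\psi)$ and $f_n:=\chi_{[1/n,n]}(h_\ffi)\nearrow s(\ffi)$; the operators $h_\psi^{-\frac{\alpha}{2z}}e_n$ and $f_nh_\ffi^{-\frac{1-\alpha}{2z}}$ are then bounded, and sandwiching $e_nCf_n=0$ between them collapses to $e_nf_n=0$. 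Letting $n\to\infty$ gives $s(\psi)s(\ffi)=0$. Hence $s(\psi)\not\perp s(\ffi)$ forces $C\neq0$, i.e.\ $Q_{\alpha,z}(\psi\|\ffi)>0$ for every $\alpha\in(0,1)$ and $z>0$.

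For continuity I would first record that $Q_{\alpha,z}$ is finite and bounded above on $(0,1)$: the H\"older inequality in $L^{2z}(\Me)$ gives $Q_{\alpha,z}(\psi\|\ffi)\le\psi(\1)^{\alpha}\ffi(\1)^{1-\alpha}$, which together with the positivity just shown makes $g(\alpha):=\log Q_{\alpha,z}(\psi\|\ffi)$ a real-valued function, bounded above, on $(0,1)$. The analytic heart is midpoint log-convexity of $g$, which I would extract from Proposition \ref{P-6.2}. Applying \eqref{F-6.3} with the commuting families $a_i=h_\psi^{\alpha_i/z}$ and $b_i=h_\ffi^{(1-\alpha_i)/z}$ (both in the class \eqref{F-6.1}) and using $\mu_s(C^*C)=\mu_s(CC^*)$ to identify the left-hand operator with $X_\beta:=h_\ffi^{\frac{1-\beta}{2z}}h_\psi^{\beta/z}h_\ffi^{\frac{1-\beta}{2z}}$, one obtains $\Lambda_t(X_{(\alpha_1+\alpha_2)/2})\le\Lambda_t(X_{\alpha_1})^{1/2}\Lambda_t(X_{\alpha_2})^{1/2}$ for all $t>0$. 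Since each $X_\beta\in L^z(\Me)$ has generalized $s$-numbers of the pure power form $\mu_s(X_\beta)=\mu_1(X_\beta)s^{-1/z}$ (a consequence of the covariance $\tau\circ\theta_s=e^{-s}\tau$ of the Haagerup construction), the $\beta$-independent factors cancel and this reduces to $\mu_1(X_{(\alpha_1+\alpha_2)/2})\le\mu_1(X_{\alpha_1})^{1/2}\mu_1(X_{\alpha_2})^{1/2}$. As the Haagerup $L^z$-norm of such a scaling-covariant element is a fixed multiple of $\mu_1$, this yields $Q_{(\alpha_1+\alpha_2)/2,z}\le Q_{\alpha_1,z}^{1/2}Q_{\alpha_2,z}^{1/2}$, i.e.\ $g$ is midpoint convex.

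Finally, a real-valued midpoint-convex function that is bounded above on an open interval is automatically convex there (Bernstein--Doetsch), and convex functions on open intervals are continuous; thus $g$, and hence $\alpha\mapsto Q_{\alpha,z}(\psi\|\ffi)$, is continuous on $(0,1)$. I expect the main obstacle to be precisely the passage from the $s$-number log-majorization \eqref{F-6.3} to the genuine inequality for $Q$: this rests on the two normalization facts about the Haagerup $L^z$-space just used, namely that $\mu_s(X_\beta)$ is an exact power law in $s$ and that $\|X_\beta\|_z$ is determined by $\mu_1(X_\beta)$ alone, both of which must be justified carefully from the crossed-product structure and the scaling of $\tau$. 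Should that conversion prove delicate for some range of $z$, an alternative closing is available: one replaces the upper bound by lower semicontinuity of $g$, obtained from the convergence $X_\alpha\to X_{\alpha_0}$ in the measure topology of $\widetilde\cR$ (which follows from continuity of $\beta\mapsto h^\beta$ in measure, via functional calculus, together with joint continuity of multiplication) and the lower semicontinuity of $\|\cdot\|_z$; lower semicontinuity gives measurability of $g$, and Sierpi\'nski's theorem again upgrades midpoint convexity to convexity and hence continuity.
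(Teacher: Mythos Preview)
Your positivity argument is essentially the paper's, only written out more explicitly with spectral truncations; the paper simply observes that $C=0$ forces the range projection $s(\ffi)$ of $h_\ffi^{(1-\alpha)/2z}$ to lie in the kernel projection $\1-s(\psi)$ of $h_\psi^{\alpha/2z}$.

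For continuity, however, you take a genuinely different route. The paper argues directly: $p\mapsto a^p$ is differentiable (hence continuous) in the measure topology of $\widetilde\cR$, so $\alpha\mapsto X_\alpha:=h_\psi^{\alpha/2z}h_\ffi^{(1-\alpha)/z}h_\psi^{\alpha/2z}$ is continuous in measure, and then a standard lemma (\cite[Lemma~9.14]{hiai2021lectures}, together with the quasi-norm inequality $|\,\|a\|_z^z-\|b\|_z^z|\le\|a-b\|_z^z$ for $z<1$) gives continuity of $\alpha\mapsto\|X_\alpha\|_z^z$. This is short and elementary once one has those two facts. Your approach instead first proves midpoint log-convexity of $g(\alpha)=\log Q_{\alpha,z}$ via Proposition~\ref{P-6.2}, bounds $g$ above by H\"older, and then invokes Bernstein--Doetsch to upgrade to full convexity and hence continuity. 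This is correct: the reduction of the $\Lambda_t$-inequality to an inequality for $Q$ is exactly the computation the paper carries out in \eqref{F-6.4}--\eqref{F-6.6} (your worry about the ``conversion'' is unfounded---the power law $\mu_s(X_\beta)=s^{-1/z}\|X_\beta\|_z$ for $X_\beta\in L^z(\Me)$ is precisely \cite[Lemma~4.8]{fack1986generalized}, and the $\beta$-independent factor $\exp\bigl(-\tfrac{1}{z}\int_0^t\log s\,ds\bigr)$ cancels from both sides). The upshot is that your proof of Lemma~\ref{L-6.4} already contains the proof of Theorem~\ref{T-6.1}(1): the paper proves continuity first and then uses it (midpoint convexity $+$ continuity $\Rightarrow$ convexity), whereas you prove convexity first and deduce continuity from it. Your route is longer for the lemma in isolation but avoids appealing to the external norm-continuity lemma and yields the log-convexity for free.
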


\begin{proof}
Assume that $Q_{\alpha,z}(\psi\|\ffi)=0$ for some $z>0$ and $\alpha\in(0,1)$. Then
$h_\psi^{\alpha/2z}h_\ffi^{(1-\alpha)/2z}=0$ as a $\tau$-measurable operator affiliated with
$\cR:=\Me\rtimes_{\sigma^\omega}\bR$ (see Appendix \ref{Sec-A}). 
Since $s(\psi)=s(h_\psi^{\alpha/2z})$ and $s(\ffi)=s(h_\ffi^{(1-\alpha)/2z})$, it is easy to see that
$s(\psi)\perp s(\ffi)$. Hence the first assertion follows.

Next, since $p>0\mapsto a^p\in\widetilde\cR$ is differentiable in the measure topology for any
$a\in\widetilde\cR^+$ (see, e.g., \cite[Lemma 9.19]{hiai2021lectures}), we see that
$\alpha\mapsto h_\psi^{\alpha/2z}h_\ffi^{(1-\alpha)/z}h_\psi^{\alpha/2z}$ is differentiable (hence continuous)
on $(0,1)$ in the measure topology. Hence by \cite[Lemma 9.14]{hiai2021lectures}, the function
$\alpha\mapsto Q_{\alpha,z}(\psi\|\ffi)=\|h_\psi^{\alpha/2z}h_\ffi^{(1-\alpha)/z}h_\psi^{\alpha/2z}\|_z^z$ is
continuous on $(0,1)$. Here, when $z<1$, note \cite[Theorem 4.9(iii)]{fack1986generalized} that
$|\,\|a\|_z^z-\|b\|_z^z|\le\|a-b\|_z^z$ for $a,b\in L^z(\Me)$.
\end{proof}

\noindent
{\bf Proof of Theorem \ref{T-6.1}.}\enspace
We may assume that $s(\psi)\not\perp s(\ffi)$; otherwise, $Q_{\alpha,z}(\psi\|\ffi)=0$ for all
$\alpha\in(0,1)$. Then by Lemma \ref{L-6.4}, $Q_{\alpha,z}(\psi\|\ffi)\in(0,\infty)$ for all $\alpha\in(0,1)$,
and $\alpha\mapsto Q_{\alpha,z}(\psi\|\ffi)$ is continuous on $(0,1)$. Hence
$\alpha\mapsto D_{\alpha,z}(\psi\|\ffi)$ is continuous on $(0,1)$ too.

Let $\alpha_1,\alpha_2\in(0,1)$ and $z>0$. Let $(\cR,\tau)$ be as in the proof of Lemma \ref{L-6.4}. Consider
$a_j:=h_\psi^{\alpha_j/z}$ and $b_j:=h_\ffi^{(1-\alpha_j)/z}$ in $\widetilde\cR^+$. Since
$a_j\in L^{z/\alpha_j}(\Me)$, we note by \cite[Lemma 4.8]{fack1986generalized} that
$\mu_t(a_j)=t^{-\alpha_j/z}\|a_j\|_{z/\alpha_j}$, $t>0$, and hence $a_j$ satisfies \eqref{F-6.1}. Similarly, $b_j$
does so. Therefore, we can apply \eqref{F-6.3} to $a_j,b_j$ with $t=1$ to obtain
\begin{align}
&\int_0^1\log\mu_s\Bigl(h_\psi^{\alpha_1+\alpha_2\over4z}
h_\ffi^{2-\alpha_1-\alpha_2\over2z}h_\psi^{\alpha_1+\alpha_2\over4z}\Bigr)\,ds \nonumber\\
&\quad\le{1\over2}\biggl[\int_0^1\log\mu_s\Bigl(h_\psi^{\alpha_1\over2z}
h_\ffi^{1-\alpha_1\over z}h_\psi^{\alpha_1\over2z}\Bigr)\,ds
+\int_0^1\log\mu_s\Bigl(h_\psi^{\alpha_2\over2z}
h_\ffi^{1-\alpha_2\over z}h_\psi^{\alpha_2\over2z}\Bigr)\,ds\biggr]. \label{F-6.4}
\end{align}
Since $h_\psi^{\alpha_1+\alpha_2\over4z}h_\ffi^{2-\alpha_1-\alpha_2\over2z}h_\psi^{\alpha_1+\alpha_2\over4z}$
is in $L^z(\Me)$, we have by \cite[Lemma 4.8]{fack1986generalized} again
\[
\mu_s\Bigl(h_\psi^{\alpha_1+\alpha_2\over4z}
h_\ffi^{2-\alpha_1-\alpha_2\over2z}h_\psi^{\alpha_1+\alpha_2\over4z}\Bigr)
=s^{-1/z}\Big\|h_\psi^{\alpha_1+\alpha_2\over4z}
h_\ffi^{2-\alpha_1-\alpha_2\over2z}h_\psi^{\alpha_1+\alpha_2\over4z}\Big\|_z
\]
so that
\begin{align}\label{F-6.5}
\log\mu_s\Bigl(h_\psi^{\alpha_1+\alpha_2\over4z}
h_\ffi^{2-\alpha_1-\alpha_2\over2z}h_\psi^{\alpha_1+\alpha_2\over4z}\Bigr)^z
=-\log s+\log Q_{{\alpha_1+\alpha_2\over2},z}(\psi\|\ffi).
\end{align}
Similarly,
\begin{align}\label{F-6.6}
\log\mu_s\Bigl(h_\psi^{\alpha_j\over2z}h_\ffi^{1-\alpha_j\over z}h_\psi^{\alpha_j\over2z}\Bigr)^z
=-\log s+\log Q_{\alpha_j,z}(\psi\|\ffi),\qquad j=1,2.
\end{align}
Multiply $z$ to both sides of \eqref{F-6.4} and insert \eqref{F-6.5} and \eqref{F-6.6} into it. Since
$\int_0^1(-\log s)\,ds=1$, we then arrive at
\[
1+\log Q_{{\alpha_1+\alpha_2\over2},z}(\psi\|\ffi)
\le{1\over2}\bigl[2+\log Q_{\alpha_1,z}(\psi\|\ffi)+\log Q_{\alpha_2,z}(\psi\|\ffi)\bigr],
\]
which implies that $\alpha\mapsto\log Q_{\alpha,z}(\psi\|\ffi)$ is midpoint convex on $(0,1)$.
Since midpoint convexity implies convexity for continuous functions, (1) holds. Moreover, by
\cite[Theorem 1(vii)]{kato2023onrenyi} we find that
$\lim_{\alpha\nearrow1}Q_{\alpha,z}(\psi\|\ffi)\le\psi(\1)$. Therefore, (1) implies (2) from the defining
formula of $D_{\alpha,z}$ in Definition \ref{defi:renyi}.\qed

\begin{remark}\label{remark:2nd-proof}
Theorem \ref{T-6.1} can also be proved, though restricted to $z>1/2$, by use of the complex
interpolation method based on Kosaki's interpolation $L^p$-spaces. It is worthwhile to sketch the
proof here. It suffices to prove (1) of Theorem \ref{T-6.1} as (2) is immediate from (1) as in the last
part of the above proof of Theorem \ref{T-6.1}. Assume that $z>1/2$ and let $p:=2z$ and
$q:={2z\over2z-1}$ with $1/p+1/q=1$. We may assume that $\Me$ is $\sigma$-finite, by restricting
$\psi,\ffi$ to $s(\psi+\ffi)\Me s(\psi+\ffi)$; see Remark \ref{remark:defi}. Choose $\rho,\sigma\in\Me_*^+$
with $s(\rho)=\1-s(\psi)$ and $s(\sigma)=\1-s(\ffi)$, and put $\psi_0:=\psi+\rho$ and $\ffi_0:=\ffi+\sigma$,
faithful functionals in $\Me_*^+$. For $p\in(1,\infty)$ and $\eta\in(0,1)$ we write
\[
L^p_L:=L^p(\Me,\ffi_0)_L,\qquad L^p_R:=L^p(\Me,\psi_0)_R,\qquad
L^p_\eta:=L^p(\Me,\psi_0,\ffi_0)_\eta,
\]
(see \eqref{F-C.1}--\eqref{F-C.3} in Appendix \ref{Sec-C}). Then by \eqref{F-C.6},
\begin{align}\label{F-6.7}
L^p_\eta=C_\eta(L^p_L,L^p_R)
\end{align}
with equal norms. We divide the proof into the cases $z\ge1$ and $1/2<z<1$.

For $z\ge1$, let $h_0:=h_\psi^{1/2}h_\ffi^{1/2}\in L^1(\Me)$. For each $\alpha\in(0,1)$ put
$\eta:={z-\alpha\over2z-1}$; then we have $0<\eta<1$. Since
\[
h_0=h_\psi^{\eta\over q}\Bigl(h_\psi^{\alpha\over2z}h_\ffi^{1-\alpha\over2z}\Bigr)
h_\ffi^{1-\eta\over q}
=h_{\psi_0}^{\eta\over q}\Bigl(h_\psi^{\alpha\over2z}h_\ffi^{1-\alpha\over2z}\Bigr)
h_{\ffi_0}^{1-\eta\over q},
\]
we have $h_0\in L^p_\eta$ and $\|h_0\|_{p,\psi_0,\ffi_0,\eta}^p=Q_{\alpha,z}(\psi\|\ffi)$. 
Now let $\alpha_1,\alpha_2\in(0,1)$. For each $\theta\in(0,1)$ put
$\alpha:=(1-\theta)\alpha_1+\theta\alpha_2$ and $\eta_j:={z-\alpha_j\over2z-1}$, $j=1,2$, so that
$\eta:={z-\alpha\over2z-1}=(1-\theta)\eta_1+\theta\eta_2$. From \eqref{F-6.7} and the reiteration
theorem, note that
\begin{align}\label{F-6.8}
L^p_\eta=C_\theta(L^p_{\eta_1},L^p_{\eta_2}).
\end{align}
Since $h_0\in L^p_{\eta_1}\cap L^p_{\eta_2}$ as shown above, $\|h_0\|_{p,\psi_0,\ffi_0,\eta}\le
\|h_0\|_{p,\psi_0,\ffi_0,\eta_1}^{1-\theta}\|h_0\|_{p,\psi_0,\ffi_0,\eta_2}^\theta$, which implies that
\begin{align}\label{F-6.9}
Q_{\alpha,z}(\psi\|\ffi)\le Q_{\alpha_1,z}(\psi\|\ffi)^{1-\theta}Q_{\alpha_2,z}(\psi\|\ffi)^\theta,
\end{align}
as desired.

Next, for $1/2<z<1$, denote $\Sigma=\Sigma(L^p_L,L^p_R):=L^p_L+L^p_R$ and let
$\tilde \cF(L^p_L,L^p_R)$ be the set of functions $f:S:=\{w\in\bC:0\le\Re w\le1\}\to\Sigma$
satisfying
\begin{itemize}
\item[(i)] $f$ is bounded, continuous on $S$ and analytic in the interior of $S$ (with respect to the norm
in $\Sigma$),
\item[(ii)] $f(it)\in L^p_L$ and $f(1+it)\in L^p_R$ for all $t\in\bR$,
\item[(iii)] the maps $t\in\bR\mapsto f(it)\in L^p_L$ and $t\in\bR\mapsto f(1+it)\in L^p_R$ are continuous
and
\[
\max\biggl\{\sup_{t\in\bR}\|f(it)\|_{L^p_L},\sup_{t\in\bR}\|f(1+it)\|_{L^p_R}\biggr\}<\infty.
\]
\end{itemize}
It is known \cite[Theorems 1.5 and 11.1]{kosaki1984applications} (see also \eqref{F-C.6} in
Appendix \ref{Sec-C}) that the set $\tilde \cF(L^p_L,L^p_R)$ defines the interpolation space
$L^p_\eta=C_\eta(L^p_L,L^p_R)$ with equal norms.
Consider the function $f:S\to L^1(\Me)$ defined by
\begin{align}\label{F-6.10}
f(w):=h_\psi^{{w\over q}+{1-w\over p}}h_\ffi^{{1-w\over q}+{w\over p}},
\qquad w\in S.
\end{align}
Then we can manage to show that $f\in\tilde \cF(L^p_L,L^p_R)$ and
for each $\eta\in(0,1)$ and $t\in\bR$, $f(\eta+it)\in L^p_\eta$ and
$\|f(\eta+it)\|_{p,\ffi_0,\psi_0,\eta}^p=Q_{1-\eta,z}(\psi\|\ffi)$.
Now let $\alpha_1,\alpha_2\in(0,1)$. For each $\theta\in(0,1)$ let
$\alpha:=(1-\theta)\alpha_1+\theta\alpha_2$ and $\eta_j:=1-\alpha_j$, $j=1,2$, so that
$\eta:=1-\alpha=(1-\theta)\eta_1+\theta\eta_2$. With $f$ given in \eqref{F-6.10}
define $f_1(w):=f((1-w)\eta_1+w\eta_2)$; then we have $f_1\in\tilde \cF(L^p_{\eta_1},L^p_{\eta_2})$.
Since $L^p_\eta=C_\theta(L^p_{\eta_1},L^p_{\eta_2})$ by the reiteration theorem,
\[
\|f(\eta)\|_{p,\ffi_0,\psi_0,\eta}=\|f_1(\theta)\|_{C_\theta(L^p_{\eta_1},L^p_{\eta_2})}
\le\biggl(\sup_{t\bR}\|f_1(it)\|_{L^p_{\eta_1}}\biggr)^{1-\theta}
\biggl(\sup_{t\in\bR}\|f_1(1+it)\|_{L^p_{\eta_2}}\biggr)^\theta,
\]
which implies that $Q_{1-\eta,z}(\psi\|\ffi)\le
Q_{1-\eta_1,z}(\psi\|\ffi)^{1-\theta}Q_{1-\eta_2,z}(\psi\|\ffi)^\theta$, as desired.
\end{remark}

\subsection{The case $1<\alpha\le2z$}\label{Sec-6.2}

In this subsection let us show monotonicity of $D_{\alpha,z}$ in the parameter $\alpha\in(1,2z]$ when
$z>1/2$, based on the complex interpolation in a similar way {to arguments} in
Remark \ref{remark:2nd-proof}.

\begin{theorem}\label{T-6.6}
Let $\psi,\ffi\in\Me_*^+$ and $z>1/2$. Then we have
\begin{itemize}
\item[(1)] $\alpha\mapsto\log Q_{\alpha,z}(\psi\|\ffi)$ is convex on $(1,2z]$,
\item[(2)] $\alpha\mapsto D_{\alpha,z}(\psi\|\ffi)$ is monotone increasing on $(1,2z]$.
\end{itemize}
\end{theorem}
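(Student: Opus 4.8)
The plan is to prove (1) by the complex interpolation method of Remark \ref{remark:2nd-proof}, but with the decisive simplification that \emph{a single fixed element} — namely $h_\psi$ itself — works throughout the range $(1,2z]$ with no splitting in $z$; then (2) will follow from (1) by a standard convexity/chord argument. As in that remark, I would first reduce to $\Me$ $\sigma$-finite (restricting $\psi,\ffi$ to $s(\psi+\ffi)\Me s(\psi+\ffi)$), set $p:=2z$ and $q:=\frac{2z}{2z-1}$ so that $\frac1p+\frac1q=1$ with $p,q\in(1,\infty)$ (here $z>1/2$ is used), and choose faithful $\psi_0,\ffi_0\in\Me_*^+$ with $\psi_0\ge\psi$, $\ffi_0\ge\ffi$ by adding functionals supported on the complements of $s(\psi)$, $s(\ffi)$. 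Writing $L^p_L:=L^p(\Me,\ffi_0)_L$, $L^p_R:=L^p(\Me,\psi_0)_R$ and $L^p_\eta:=L^p(\Me,\psi_0,\ffi_0)_\eta$, I would use that $L^p_\eta=C_\eta(L^p_L,L^p_R)$ with equal norms by \eqref{F-C.6} and that $k\in L^p(\Me)\mapsto h_{\psi_0}^{\eta/q}kh_{\ffi_0}^{(1-\eta)/q}\in L^p_\eta$ is an isometric isomorphism.

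The key step is the norm identity. Suppose $Q_{\alpha,z}(\psi\|\ffi)<\infty$ for $\alpha\in(1,2z]$; then $s(\psi)\le s(\ffi)$ and by Lemma \ref{lemma:renyi_2z} there is $y\in L^{2z}(\Me)s(\ffi)$ with $h_\psi^{\alpha/2z}=yh_\ffi^{(\alpha-1)/2z}$ and $Q_{\alpha,z}(\psi\|\ffi)=\|y\|_{2z}^{2z}$. Putting
\[
\eta:=\frac{2z-\alpha}{2z-1}\in[0,1),
\]
the exponent arithmetic $\frac\eta q+\frac\alpha{2z}=1$ and $\frac{1-\eta}q-\frac{\alpha-1}{2z}=0$ (using $\frac1q=\frac{2z-1}{2z}$, together with $ys(\ffi)=y$ and $s(\psi)\le s(\ffi)$ to replace $h_{\psi_0},h_{\ffi_0}$ by $h_\psi,h_\ffi$ on the relevant supports) gives
\[
h_{\psi_0}^{\eta/q}\,y\,h_{\ffi_0}^{(1-\eta)/q}=h_\psi^{\eta/q+\alpha/2z}h_\ffi^{(1-\eta)/q-(\alpha-1)/2z}=h_\psi .
\]
Hence $h_\psi\in L^p_\eta$ with $\|h_\psi\|_{L^p_\eta}^p=\|y\|_p^p=Q_{\alpha,z}(\psi\|\ffi)$. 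The point is that the \emph{same} element $h_\psi$ represents $Q_{\alpha,z}$ for every $\alpha$, with the parameter $\eta$ affine in $\alpha$ and staying inside $[0,1)$ on all of $(1,2z]$; this is why no case distinction in $z$ is needed here, in contrast to the $0<\alpha<1$ situation of Remark \ref{remark:2nd-proof}.

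With this in hand, (1) is immediate: given $\alpha_1,\alpha_2\in(1,2z]$, $\eta_j:=\frac{2z-\alpha_j}{2z-1}$, $\theta\in(0,1)$ and $\alpha:=(1-\theta)\alpha_1+\theta\alpha_2$, one has $\eta:=\frac{2z-\alpha}{2z-1}=(1-\theta)\eta_1+\theta\eta_2$. If some $Q_{\alpha_j,z}=\infty$ the inequality is trivial; otherwise $h_\psi\in L^p_{\eta_1}\cap L^p_{\eta_2}$, and by the reiteration theorem $L^p_\eta=C_\theta(L^p_{\eta_1},L^p_{\eta_2})$ together with \cite[Lemma 4.3.2(ii)]{bergh1976interpolation} we obtain $\|h_\psi\|_{L^p_\eta}\le\|h_\psi\|_{L^p_{\eta_1}}^{1-\theta}\|h_\psi\|_{L^p_{\eta_2}}^\theta$, i.e.
\[
Q_{\alpha,z}(\psi\|\ffi)\le Q_{\alpha_1,z}(\psi\|\ffi)^{1-\theta}Q_{\alpha_2,z}(\psi\|\ffi)^\theta ,
\]
which is the convexity of $\alpha\mapsto\log Q_{\alpha,z}(\psi\|\ffi)$ on $(1,2z)$. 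For (2) I would combine this with the one-sided boundary bound $\limsup_{\alpha\searrow1}Q_{\alpha,z}(\psi\|\ffi)\le\psi(\1)$ (the $\alpha>1$ analogue of \cite[Theorem 1(vii)]{kato2023onrenyi}): setting $G(\alpha):=\log Q_{\alpha,z}(\psi\|\ffi)-\log\psi(\1)$, convexity plus $\limsup_{\alpha\searrow1}G\le0$ gives, for $1<\alpha<\alpha'\le2z$ with $\lambda:=\frac{\alpha'-\alpha}{\alpha'-1}$, the bound $G(\alpha)\le\lambda\,G(1^+)+(1-\lambda)G(\alpha')\le(1-\lambda)G(\alpha')$, and dividing by $1-\lambda=\frac{\alpha-1}{\alpha'-1}>0$ yields $D_{\alpha,z}(\psi\|\ffi)\le D_{\alpha',z}(\psi\|\ffi)$.

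The hard part will be the rigorous bookkeeping in the non-faithful case, i.e. justifying $h_{\psi_0}^{\eta/q}yh_{\ffi_0}^{(1-\eta)/q}=h_\psi$ and the norm identity when $\psi,\ffi$ are not faithful (carefully passing between $h_{\psi_0},h_{\ffi_0}$ and $h_\psi,h_\ffi$ on the supports), and handling the scale boundary $\eta=0$, i.e. the endpoint $\alpha=2z$ — this I expect to treat either by admitting $L^p_0=L^p_L$ as an interpolation endpoint or by extending from $(1,2z)$ via lower semicontinuity of $Q_{\alpha,z}$ in $\alpha$. The boundary value $\limsup_{\alpha\searrow1}Q_{\alpha,z}(\psi\|\ffi)\le\psi(\1)$ invoked in (2) must also be supplied (it is trivial when $s(\psi)\not\le s(\ffi)$, since then $D_{\alpha,z}\equiv\infty$).
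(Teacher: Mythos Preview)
Your argument for (1) is essentially identical to the paper's: same reduction, same parameter $\eta=\frac{2z-\alpha}{2z-1}$, same norm identity $\|h_\psi\|_{L^p_\eta}^p=Q_{\alpha,z}(\psi\|\ffi)$ via Lemma \ref{lemma:renyi_2z}, and same use of reiteration for \eqref{F-6.9}. Your worries about the endpoint $\eta=0$ and the non-faithful bookkeeping are resolved just as in the paper: for $\eta=0$ the norm is simply $\|h_\psi\|_{L^p_L}$, and since $y\in s(\psi)L^{2z}(\Me)s(\ffi)$ one has $h_{\psi_0}^{\eta/q}y=h_\psi^{\eta/q}y$ and $yh_{\ffi_0}^{(1-\eta)/q}=yh_\ffi^{(1-\eta)/q}$ blockwise, so the identity $h_{\psi_0}^{\eta/q}yh_{\ffi_0}^{(1-\eta)/q}=h_\psi$ is immediate.

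For (2) the paper takes a cleaner route that eliminates the gap you flagged. Rather than invoking a boundary bound $\limsup_{\alpha\searrow1}Q_{\alpha,z}\le\psi(\1)$, the paper observes directly that $h_\psi=h_{\psi_0}^{1/q}h_\psi^{1/p}\in L^p_R$ with $\|h_\psi\|_{L^p_R}^p=\psi(\1)$; this is precisely the $\eta=1$ endpoint of the same interpolation scale. Then for $1<\alpha<\alpha_1\le2z$ with $Q_{\alpha_1,z}(\psi\|\ffi)<\infty$, writing $\alpha=(1-\theta)\alpha_1+\theta\cdot1$ and using reiteration $L^p_\eta=C_\theta(L^p_{\eta_1},L^p_R)$ gives
\[
Q_{\alpha,z}(\psi\|\ffi)\le Q_{\alpha_1,z}(\psi\|\ffi)^{1-\theta}\psi(\1)^\theta,
\]
and since $1-\theta=\frac{\alpha-1}{\alpha_1-1}$, taking logarithms yields $D_{\alpha,z}(\psi\|\ffi)\le D_{\alpha_1,z}(\psi\|\ffi)$ at once. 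Your chord argument is correct in spirit, but the paper's version folds the ``limit at $1$'' into the interpolation itself, so no separate $\alpha>1$ analogue of \cite[Theorem 1(vii)]{kato2023onrenyi} is needed (indeed, that bound drops out as a corollary).
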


\begin{proof}
Assume that $z>1/2$ and 
let $p$, $q$, $\psi_0$ and $\ffi_0$ be defined in the same way as in the first paragraph of
Remark \ref{remark:2nd-proof}. For each $\alpha\in(1,2z]$ put $\eta:={2z-\alpha\over2z-1}\in[0,1)$.
Assume that $Q_{\alpha,z}(\psi\|\ffi)<\infty$ (hence $s(\psi)\le s(\ffi)$), so that there exists a unique
$y\in s(\psi)L^p(\Me)s(\ffi)$ such that $h_\psi^{\alpha\over2z}=yh_\ffi^{\alpha-1\over2z}$. Since
\[
h_\psi=h_\psi^{2z-\alpha\over2z}yh_\ffi^{\alpha-1\over2z}
=h_{\psi_0}^{\eta\over q}yh_{\ffi_0}^{1-\eta\over q},
\]
we have $h_\psi\in L^p_\eta$ and
\[
\|h_\psi\|_{p,\psi_0,\ffi_0,\eta}^p=\|y\|_p^p=Q_{\alpha,z}(\psi\|\ffi),
\]
where for $\eta=0$ ($\alpha=2z$) the left-hand side is $\|h_\psi\|_{L^p_L}^p$. Now let
$\alpha_1,\alpha_2\in(1,2z]$ and $\alpha=(1-\theta)\alpha_1+\theta\alpha_2$ for any $\theta\in(0,1)$.
Put $\eta_j:={2z-\alpha_j\over2z-1}$, $j=1,2$, and
$\eta:={2z-\alpha\over2z-1}=(1-\theta)\eta+\theta\eta_1$. To show (1), it suffices to prove that
\eqref{F-6.9} holds in the present situation. For this, we may assume that
$Q_{\alpha_j,z}(\psi\|\ffi)<\infty$, $j=1,2$. Then we can use \eqref{F-6.8} similarly to the discussion
in Remark \ref{remark:2nd-proof} with $h_\psi$ instead of $h_0$. Hence we have \eqref{F-6.9},
and (1) follows.

As for (2), note that $h_\psi=h_{\psi_0}^{1/q}h_\psi^{1/p}\in L^p_R$ (see \eqref{F-C.3}) and
$\|h_\psi\|_{L^p_R}^p=\|h_\psi^{1/p}\|_p^p=\psi(\1)$. Assume that $1<\alpha<\alpha_1\le2z$ and
$Q_{\alpha_1,z}(\psi\|\ffi)<\infty$, so that $\alpha=(1-\theta)\alpha_1+\theta$ for some $\theta\in(0,1)$.
Let $\eta:={2z-\alpha\over2z-1}$ and $\eta_1:={2z-\alpha_1\over2z-1}$. Since
\[
L^p_\eta=C_\theta(L^p_{\eta_1},L^p_R)
\]
by the reiteration theorem, it follows that
\[
Q_{\alpha,z}(\psi\|\ffi)\le Q_{\alpha_1,z}(\psi\|\ffi)^{1-\theta}\psi(\1)^\theta.
\]
Taking the logarithm and noting $\theta={\alpha_1-\alpha\over\alpha_1-1}$, we obtain
$D_{\alpha,z}(\psi\|\ffi)\le D_{\alpha_1,z}(\psi\|\ffi)$, proving (2).
\end{proof}

\subsection{Limits as $\alpha\nearrow1$ and $\alpha\searrow1$}\label{Sec-6.3}

The aim of this last subsection is to show the limits of $D_{\alpha,z}$ as $\alpha\nearrow1$ and
$\alpha\searrow1$, extending the limits in \eqref{F-6.1} and \eqref{F-6.2}.

\begin{theorem}\label{T-6.7}
Let $\psi,\ffi\in\Me_*^+$, $\psi\ne0$. For every $z>0$ we have
\[
\lim_{\alpha\nearrow1}D_{\alpha,z}(\psi\|\ffi)=D_1(\psi\|\ffi).
\]
\end{theorem}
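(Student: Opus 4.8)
The plan is to sandwich $D_{\alpha,z}(\psi\|\ffi)$ and then evaluate the limit as a one-sided derivative. First I would record that by Theorem \ref{T-6.1}(2) the map $\alpha\mapsto D_{\alpha,z}(\psi\|\ffi)$ is increasing on $(0,1)$, so the limit exists and equals $\sup_{0<\alpha<1}D_{\alpha,z}(\psi\|\ffi)$, while Proposition \ref{P-5.6}(i) gives $D_{\alpha,z}(\psi\|\ffi)\le D_1(\psi\|\ffi)$ for every $z>0$; hence $\lim_{\alpha\nearrow1}D_{\alpha,z}(\psi\|\ffi)\le D_1(\psi\|\ffi)$ and only the reverse inequality remains. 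For $z\ge1$ this is immediate: Theorem \ref{T-5.1} yields $D_{\alpha,1}(\psi\|\ffi)\le D_{\alpha,z}(\psi\|\ffi)$, and since $\lim_{\alpha\nearrow1}D_{\alpha,1}(\psi\|\ffi)=D_1(\psi\|\ffi)$ by \eqref{F-5.1}, a squeeze finishes that range (including $D_1=\infty$). The genuine content is $0<z<1$, where monotonicity in $z$ runs the wrong way; there I would analyze the limit directly, in a manner that in fact covers all $z>0$ at once.

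Next I would reduce to a manageable configuration. By Remark \ref{remark:defi} I may restrict $\psi,\ffi$ to $s(\psi+\ffi)\Me s(\psi+\ffi)$ and so assume $\Me$ is $\sigma$-finite. If $s(\psi)\not\le s(\ffi)$, then $D_1(\psi\|\ffi)=\infty$, and using the continuity of $\alpha\mapsto Q_{\alpha,z}(\psi\|\ffi)$ from Lemma \ref{L-6.4} together with $h_\ffi^{(1-\alpha)/z}\to s(\ffi)$ as $\alpha\nearrow1$ one gets $Q_{\alpha,z}(\psi\|\ffi)\to\Tr\bigl((h_\psi^{1/2z}s(\ffi)h_\psi^{1/2z})^z\bigr)$, which is \emph{strictly} below $\Tr h_\psi=\psi(\1)$ by a support argument; as $\tfrac1{\alpha-1}\to-\infty$ this forces $D_{\alpha,z}(\psi\|\ffi)\to\infty=D_1(\psi\|\ffi)$. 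In the remaining case $s(\psi)\le s(\ffi)$ I restrict to $s(\ffi)\Me s(\ffi)$ and assume $\ffi$ faithful, and here $\lim_{\alpha\nearrow1}Q_{\alpha,z}(\psi\|\ffi)=\Tr h_\psi=\psi(\1)$ by the same computation.

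Finally, in this setting the limit is a derivative. Writing $g(\alpha):=\log Q_{\alpha,z}(\psi\|\ffi)$, which is convex on $(0,1)$ by Theorem \ref{T-6.1}(1) with $g(1^-)=\log\psi(\1)$, the monotone difference quotients give $\lim_{\alpha\nearrow1}D_{\alpha,z}(\psi\|\ffi)=g'_-(1)=\tfrac1{\psi(\1)}\,\tfrac{d}{d\alpha}\big|_{\alpha=1^-}Q_{\alpha,z}(\psi\|\ffi)$. Setting $X_\alpha:=h_\psi^{\alpha/2z}h_\ffi^{(1-\alpha)/z}h_\psi^{\alpha/2z}$ (so $X_1=h_\psi^{1/z}$) and invoking the traciality of the Haagerup functional $\Tr$ through the identity $\tfrac{d}{d\alpha}\Tr(X_\alpha^z)=z\,\Tr(X_\alpha^{z-1}\dot X_\alpha)$, a short computation in which the factors $z$ cancel (this is what makes the answer $z$-independent) gives
\[
\frac{d}{d\alpha}\bigg|_{\alpha=1}Q_{\alpha,z}(\psi\|\ffi)
=\Tr\bigl(h_\psi(\log h_\psi-\log h_\ffi)\bigr)=D(\psi\|\ffi),
\]
whence $g'_-(1)=D(\psi\|\ffi)/\psi(\1)=D_1(\psi\|\ffi)$. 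The hard part will be to make this differentiation rigorous in the von Neumann algebra framework: one must justify interchanging $\tfrac{d}{d\alpha}$ with $\Tr$ and differentiating the operator powers $\alpha\mapsto h^{\,c\alpha}$ in the measure topology, verify that the $\theta$-invariant affiliated operator $\log h_\psi-\log h_\ffi$ pairs against $h_\psi\in L^1(\Me)$ to give precisely the Araki relative entropy $D(\psi\|\ffi)$, and then remove the implicit integrability by a lower-semicontinuity and monotone-approximation argument that also accommodates the case $D_1(\psi\|\ffi)=\infty$, where the derivative is $+\infty$.
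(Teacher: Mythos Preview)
Your treatment of the upper bound via Proposition~\ref{P-5.6}(i) and of the range $z\ge1$ by squeezing $D_{\alpha,1}\le D_{\alpha,z}\le D_1$ is exactly the paper's argument there. The divergence is entirely in the case $0<z<1$.

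Your proposed route for $0<z<1$---computing $\frac{d}{d\alpha}\big|_{\alpha=1^-}Q_{\alpha,z}$ directly and identifying it with $D(\psi\|\ffi)$---is a program rather than a proof, and you say as much. The obstacles you list are real and serious: the operators $\log h_\psi$ and $\log h_\ffi$ are neither bounded nor $\tau$-measurable, so even making sense of $\dot X_\alpha$ at $\alpha=1$ is delicate; the chain rule $\frac{d}{d\alpha}\Tr(X_\alpha^z)=z\,\Tr(X_\alpha^{z-1}\dot X_\alpha)$ at the endpoint needs a dominated-convergence type control you do not have; and identifying the formal expression $\Tr\bigl(h_\psi(\log h_\psi-\log h_\ffi)\bigr)$ with the Araki relative entropy in the Haagerup picture is itself a theorem of comparable depth to what you want. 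Your side argument for $s(\psi)\not\le s(\ffi)$ also relies on continuity of $\Tr(X_\alpha^z)$ up to the boundary $\alpha=1$, which Lemma~\ref{L-6.4} does not provide (it covers only the open interval).

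The paper bypasses all of this with a one-line H\"older trick (Lemma~\ref{L-6.8}). For $0<z\le1$ and $\alpha$ close enough to $1$ that $\beta:=\frac{\alpha-1+z}{z}\in(0,1)$, write $\frac{\alpha}{2z}=\frac{1-z}{2z}+\frac{\beta}{2}$ and $\frac{1-\alpha}{2z}=\frac{1-\beta}{2}$, so that
\[
Q_{\alpha,z}(\psi\|\ffi)
=\Big\|h_\psi^{(1-z)/2z}\,h_\psi^{\beta/2}h_\ffi^{(1-\beta)/2}\Big\|_{2z}^{2z}
\le\Big\|h_\psi^{(1-z)/2z}\Big\|_{2z/(1-z)}^{2z}\Big\|h_\psi^{\beta/2}h_\ffi^{(1-\beta)/2}\Big\|_2^{2z}
=\psi(\1)^{1-z}\,Q_{\beta,1}(\psi\|\ffi)^z.
\]
Since $\alpha-1=z(\beta-1)$, this gives $D_{\beta,1}(\psi\|\ffi)\le D_{\alpha,z}(\psi\|\ffi)$; together with $D_{\alpha,z}\le D_{\alpha,1}$ from Theorem~\ref{T-5.1} (as $z\le1$), one squeezes against the known Petz-type limit \eqref{F-5.1} as $\alpha,\beta\nearrow1$. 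No differentiation, no support case-split, no endpoint analysis is needed.
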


\begin{proof}
Assume first that $z\in(0,1]$ and $0\le1-z<\alpha<1$. Let $\beta:={\alpha-1+z\over z}$; then
$0<\beta<1$ and $\beta\nearrow1$ as $\alpha\nearrow1$. Hence the result follows from Lemma \ref{L-6.8}
below and \eqref{F-5.1} for $D_{\alpha,1}$. On the other hand, for the case $z\in[1,\infty)$ the result
follows from Proposition \ref{P-5.6}(i) as \eqref{F-5.12} does from Proposition \ref{P-5.5}(i).
\end{proof}

\begin{lemma}\label{L-6.8}
Assume that $z\in(0,1]$ and $0\le1-z<\alpha<1$. Let $\beta:={\alpha-1+z\over z}$. Then for any
$\psi,\ffi\in\Me_*^+$, $\psi\ne0$,
\[
D_{\beta,1}(\psi\|\ffi)\le D_{\alpha,z}(\psi\|\ffi)\le D_{\alpha,1}(\psi\|\ffi).
\]
\end{lemma}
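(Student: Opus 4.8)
The plan is to prove the two inequalities separately. The right-hand inequality $D_{\alpha,z}(\psi\|\ffi)\le D_{\alpha,1}(\psi\|\ffi)$ is immediate from the monotonicity of $z\mapsto D_{\alpha,z}(\psi\|\ffi)$ on $(0,\infty)$ for $0<\alpha<1$ (Theorem \ref{T-5.1}), applied to $z\le1$. So the real content is the left-hand inequality, which I would reduce to the single estimate
\begin{equation}
Q_{\alpha,z}(\psi\|\ffi)\le\psi(\1)^{1-z}\,Q_{\beta,1}(\psi\|\ffi)^z.\tag{$\star$}
\end{equation}
Indeed, taking logarithms in $(\star)$ and dividing by $\alpha-1<0$ (which reverses the inequality), then using $\beta-1=(\alpha-1)/z$, rearranges $(\star)$ exactly into $D_{\beta,1}(\psi\|\ffi)\le D_{\alpha,z}(\psi\|\ffi)$. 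We may assume $s(\psi)\not\perp s(\ffi)$; otherwise both sides of $(\star)$ vanish by Lemma \ref{L-6.4} and both divergences equal $+\infty$.

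To prove $(\star)$, I would first rewrite $Q_{\alpha,z}$ as a single Haagerup norm: since $h_\psi^{\alpha/2z}$ and $h_\ffi^{(1-\alpha)/2z}$ are positive, writing $T:=h_\psi^{\alpha/2z}h_\ffi^{(1-\alpha)/2z}$ gives $T^*T=h_\ffi^{(1-\alpha)/2z}h_\psi^{\alpha/z}h_\ffi^{(1-\alpha)/2z}$, whence
\[
Q_{\alpha,z}(\psi\|\ffi)=\Tr\bigl((T^*T)^z\bigr)=\|T\|_{2z}^{2z}.
\]
The key observation is the factorization
\[
T=h_\psi^{\frac{\alpha}{2z}}h_\ffi^{\frac{1-\alpha}{2z}}
=h_\psi^{\frac{1-z}{2z}}\cdot\Bigl(h_\psi^{\frac{\beta}{2}}h_\ffi^{\frac{1-\alpha}{2z}}\Bigr),
\]
which holds because $\frac{1-z}{2z}+\frac{\beta}{2}=\frac{1-z+(\alpha-1+z)}{2z}=\frac{\alpha}{2z}$.

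Next I would apply Hölder's inequality for the Haagerup (quasi-)norms with the exponents dictated by $\frac1{2z}=\frac{1-z}{2z}+\frac12$, namely
\[
\|T\|_{2z}\le\Bigl\|h_\psi^{\frac{1-z}{2z}}\Bigr\|_{\frac{2z}{1-z}}
\cdot\Bigl\|h_\psi^{\frac{\beta}{2}}h_\ffi^{\frac{1-\alpha}{2z}}\Bigr\|_2.
\]
Here the first factor equals $(\Tr h_\psi)^{(1-z)/2z}=\psi(\1)^{(1-z)/2z}$, while for the second, using $1-\beta=\frac{1-\alpha}{z}$,
\[
\Bigl\|h_\psi^{\frac{\beta}{2}}h_\ffi^{\frac{1-\alpha}{2z}}\Bigr\|_2^2
=\Tr\bigl(h_\psi^{\beta}h_\ffi^{1-\beta}\bigr)=Q_{\beta,1}(\psi\|\ffi).
\]
Raising the Hölder estimate to the $2z$-th power then yields precisely $(\star)$.

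The argument is short, and the only real labor is exponent bookkeeping; the one point needing care is the validity of Hölder when $z<\tfrac13$, where $\frac{2z}{1-z}<1$ and $\|\cdot\|_{2z/(1-z)}$ is merely a quasi-norm — but Hölder's inequality for Haagerup $L^p$-spaces holds for all $0<p\le\infty$, so this is harmless. The hypotheses enter exactly as expected: $z\le1$ guarantees both the nonnegativity of the exponent $\frac{1-z}{2z}$ (and the applicability of Theorem \ref{T-5.1} for the right inequality), while $1-z<\alpha<1$ guarantees $0<\beta<1$, so that $Q_{\beta,1}$ is the genuine Petz divergence and, in particular, is finite. I do not anticipate a serious obstacle here; the main insight is simply spotting the factorization of $T$ that splits off the power of $h_\psi$ carrying exactly the mass $\psi(\1)$.
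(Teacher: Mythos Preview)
Your proof is correct and follows essentially the same approach as the paper: the right-hand inequality via monotonicity in $z$ (the paper cites Proposition~\ref{P-5.6}(i), which itself rests on Theorem~\ref{T-5.1}), and the left-hand inequality via the same factorization $h_\psi^{\alpha/2z}=h_\psi^{(1-z)/2z}h_\psi^{\beta/2}$ together with H\"older for the exponents $\frac{1}{2z}=\frac{1-z}{2z}+\frac{1}{2}$, yielding exactly $(\star)$. Your extra remarks on the degenerate case $s(\psi)\perp s(\ffi)$ and on the quasi-norm regime $z<1/3$ are fine but not needed for the argument.
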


\begin{proof}
Since the statement is trivial for $z=1$, we may assume that $z\in(0,1)$. The second inequality follows
from Proposition \ref{P-5.6}(i). For the first inequality, noting that $\beta\in(0,1)$ by assumption and 
by the H\"older inequality with ${1\over2z}={1-z\over2z}+{1\over2}$, we have
\begin{align*}
Q_{\alpha,z}(\psi\|\ffi)&=\Big\|h_\psi^{\alpha\over2z}h_\ffi^{1-\alpha\over2z}\Big\|_{2z}^{2z}
=\Big\|h_\psi^{1-z\over2z}h_\psi^{\beta\over2}h_\ffi^{1-\beta\over2}\Big\|_{2z}^{2z} \\
&\le\Big\|h_\psi^{1-z\over2z}\Big\|_{2z\over1-z}^{2z}
\Big\|h_\psi^{\beta\over2}h_\ffi^{1-\beta\over2}\Big\|_2^{2z}
=\psi(\1)^{1-z}Q_{\beta,1}(\psi\|\ffi)^z,
\end{align*}
which proves the second inequality since $\alpha-1=z(\beta-1)$.
\end{proof}

\begin{theorem}\label{T-6.9}
Let $\psi,\ffi\in\Me_*^+$, $\psi\ne0$, and $z>1/2$. Assume that $D_{\alpha,z}(\psi\|\ffi)<\infty$ for some
$\alpha\in(1,2z]$. Then we have
\[
\lim_{\alpha\searrow1}D_{\alpha,z}(\psi\|\ffi)=D_1(\psi\|\ffi).
\]
\end{theorem}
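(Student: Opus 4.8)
The plan is to squeeze $D_{\alpha,z}(\psi\|\ffi)$ between two \emph{sandwiched} R\'enyi divergences, $D_{\alpha,\alpha}(\psi\|\ffi)$ and $D_{\beta,\beta}(\psi\|\ffi)$ with $\beta=\beta(\alpha,z)$, both of which tend to $D_1(\psi\|\ffi)$ as their parameter decreases to $1$ by \eqref{F-5.2}. First a finiteness reduction: by the hypothesis $D_{\alpha_0,z}(\psi\|\ffi)<\infty$ for some $\alpha_0\in(1,2z]$ and the monotonicity in $\alpha$ of Theorem \ref{T-6.6}(2), we have $D_{\alpha,z}(\psi\|\ffi)\le D_{\alpha_0,z}(\psi\|\ffi)<\infty$ for all $\alpha\in(1,\alpha_0]$; in particular $s(\psi)\le s(\ffi)$, so we may assume $\ffi$ faithful and use the factorization $h_\psi^{\alpha/2z}=yh_\ffi^{(\alpha-1)/2z}$ with $y\in L^{2z}(\Me)s(\ffi)$ and $Q_{\alpha,z}(\psi\|\ffi)=\|y\|_{2z}^{2z}$ from Lemma \ref{lemma:renyi_2z}, for all $\alpha$ close to $1$.

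The key step is a H\"older comparison linking $D_{\alpha,z}$ to a sandwiched divergence. Fixing $\alpha$ near $1$ (so that $\alpha<z+1$) and putting $\beta:=\tfrac{z}{z-\alpha+1}$, one checks that $\beta>1$, that $\beta\searrow1$ as $\alpha\searrow1$, and that $(\beta-1)/2\beta=(\alpha-1)/2z$; hence the sandwiched factorization reads $h_\psi^{1/2}=y'h_\ffi^{(\alpha-1)/2z}$ with $y'\in L^{2\beta}(\Me)s(\ffi)$, $Q_{\beta,\beta}(\psi\|\ffi)=\|y'\|_{2\beta}^{2\beta}$, and $y=h_\psi^{(\alpha-z)/2z}y'$. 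Applying H\"older's inequality in the Haagerup $L^p$-spaces to this product, peeling off $h_\psi^{(\alpha-z)/2z}$ when $\alpha>z$ (resp.\ $h_\psi^{(z-\alpha)/2z}$ when $z>\alpha$) and using $\|h_\psi^{s}\|_{1/s}=\psi(\1)^{s}$, the $\psi(\1)$-factors cancel after passing to the divergences, giving
\[
D_{\alpha,z}(\psi\|\ffi)\le D_{\beta,\beta}(\psi\|\ffi)\quad(\alpha>z),
\qquad
D_{\beta,\beta}(\psi\|\ffi)\le D_{\alpha,z}(\psi\|\ffi)\quad(z>\alpha).
\]

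Combining this with the monotonicity in $z$ of Theorem \ref{T-5.7} (which, since $z,\alpha\ge\alpha/2$, gives $D_{\alpha,\alpha}\le D_{\alpha,z}$ when $z<\alpha$ and $D_{\alpha,z}\le D_{\alpha,\alpha}$ when $z>\alpha$) produces, for $\alpha$ near $1$, a two-sided bound of $D_{\alpha,z}(\psi\|\ffi)$ by the pair $D_{\alpha,\alpha}(\psi\|\ffi),D_{\beta,\beta}(\psi\|\ffi)$. When $z\le1$ one has $\alpha>z$ near $1$, so $D_{\alpha,\alpha}\le D_{\alpha,z}<\infty$ yields finiteness of the sandwiched divergence; when $z>1$ one has $z>\alpha$, and the H\"older bound $D_{\beta,\beta}\le D_{\alpha,z}<\infty$ does so at the point $\beta>1$. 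In either case \eqref{F-5.2} applies, both $D_{\alpha,\alpha}(\psi\|\ffi)$ and $D_{\beta,\beta}(\psi\|\ffi)$ converge to $D_1(\psi\|\ffi)$ as $\alpha\searrow1$ (whence $\beta\searrow1$), and the squeeze gives the claim.

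The main obstacle is the H\"older comparison itself: setting up the operator factorizations correctly in the Haagerup $L^p$-spaces (matching supports, justifying $y=h_\psi^{(\alpha-z)/2z}y'$ and the norm identity $\|h_\psi^{s}\|_{1/s}=\psi(\1)^{s}$), and verifying the exact cancellation of the $\psi(\1)$-terms so that the clean inequalities $D_{\alpha,z}\lessgtr D_{\beta,\beta}$ emerge. The secondary point, which the two regimes $z\le1$ and $z>1$ must treat separately, is the bookkeeping that transfers the single finiteness hypothesis to whichever sandwiched divergence is required to invoke \eqref{F-5.2}.
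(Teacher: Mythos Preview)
Your proof is correct and takes a genuinely different route from the paper's. The paper argues directly: it takes the analytic function $f(w)=h_{\psi_0}^{(\alpha/2z)w}eh_\ffi^{1-(\alpha/2z)w}$ on the strip (as in the proof of Theorem \ref{T-5.7}), notes that $\|f(\alpha'/\alpha)\|_{L^{2z}_L}^{2z}=Q_{\alpha',z}(\psi\|\ffi)$, and then uses the uniform Fr\'echet differentiability of $\|\cdot\|_{L^{2z}_L}$ (coming from uniform convexity, which is where $2z>1$ enters) to compute the derivative at $w=1/\alpha$; a classical formula identifies this derivative with $D(\psi\|\ffi)$. Your approach avoids complex interpolation and differentiation entirely: the H\"older factorisation $y=h_\psi^{(\alpha-z)/2z}y'$ (or its reverse) together with Theorem \ref{T-5.7} traps $D_{\alpha,z}$ between the sandwiched quantities $D_{\alpha,\alpha}$ and $D_{\beta,\beta}$ with $\beta=z/(z-\alpha+1)$, and \eqref{F-5.2} closes the squeeze. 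The one point you should make explicit is the finiteness of $D_{\beta,\beta}$ in the regime $\alpha>z$ (where $\beta>\alpha$): from $D_{\alpha_0,\alpha_0}\le D_{\alpha_0,z}<\infty$ you get a fixed $\gamma:=\alpha_0>1$ with $D_{\gamma,\gamma}<\infty$, so for $\alpha$ close to $1$ one has $\beta(\alpha)<\gamma$ and hence $D_{\beta,\beta}<\infty$, making the H\"older bound $D_{\alpha,z}\le D_{\beta,\beta}$ non-vacuous. Your argument is more elementary and modular, leaning on Theorem \ref{T-5.7} and the known sandwiched limit; the paper's interpolation proof is more self-contained and does not require Theorem \ref{T-5.7} as an input.
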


\begin{proof}
Assume that $z>1/2$ and $D_{\alpha,z}(\psi\|\ffi)<\infty$ for some $\alpha\in(1,2z]$. We may assume that
$\ffi$ is faithful. We utilize the function $f$ on $S$ given in \eqref{F-5.16}, whose values are in $L^{2z}_L$
as seen from the proof of Theorem \ref{T-5.7}. Since $f$ is analytic in a neighborhood of $1/\alpha$, we have
the expansion
\[
f(w)=f\biggl({1\over\alpha}\biggr)+\biggl(w-{1\over\alpha}\biggr)h+o\biggl(w-{1\over\alpha}\biggr),
\]
where $h\in L^{2z}_L$ is the derivative of $f$ at $w=1/\alpha$ and $\|o(\zeta)\|_{L^{2z}_L}/|\zeta|\to0$ as
$|\zeta|\to0$. For each $\alpha'\in(1,\alpha)$ it follows that
\[
f\biggl({\alpha'\over\alpha}\biggr)=f\biggl({1\over\alpha}\biggr)+{\alpha'-1\over\alpha}\,h
+o\biggl({\alpha'-1\over\alpha}\biggr)\quad\mbox{as $\alpha'\searrow1$}.
\]
Furthermore,  since $f(\alpha'/\alpha)\in L^{2z}_L$, we have
$f(\alpha'/\alpha)=h_\psi^{\alpha'\over2z}h_\ffi^{1-{\alpha'\over2z}}=y'h_\ffi^{2z-1\over2z}$ for some
$y'\in L^{2z}(\Me)$, so that $h_\psi^{\alpha'\over 2z}=y'h_\ffi^{\alpha'-1\over 2z}$ and 
$Q_{\alpha',z}(\psi\|\ffi)=\|y'\|_{2z}^{2z}=\|f(\alpha'/\alpha)\|_{L^{2z}_L}$.

Now let us recall that $L^{2z}_L$ is uniformly convex thanks to $2z>1$ (see \cite{haagerup1979lpspaces},
\cite[Theorem 4.2]{kosaki1984applications}), so that the norm $\|\cdot\|_{L^{2z}_L}$ is uniformly Fr\'echet
differentiable (see, e.g., \cite[Part 3, Chap.~II]{beauzamy1982introduction}). We set
$a_0\in L^{2z\over2z-1}_L$ with the unit norm by
\[
a_0:=\biggl({h_\psi\over\psi(\1)}\biggr)^{2z-1\over2z}h_\ffi^{1\over2z},
\]
so that $\<a_0,f(1/\alpha)\>=\|f(1/\alpha)\|_{L^{2z}_L}$, where the dual pairing of $L^{2z\over2z-1}_L$
and $L^{2z}_L$ is given in \eqref{F-C.5} in Appendix \ref{Sec-C} with $p={2z\over2z-1}$. Then
the uniform Fr\'echet differentiability of $\|\cdot\|_{L^{2z}_L}$ at $1/\alpha$ implies that
\begin{align}\label{F-6.11}
\<a_0,h\>=\lim_{\alpha'\searrow1}{\|f(\alpha'/\alpha)\|_{L^{2z}_L}-\|f(1/\alpha)\|_{L^{2z}_L}
\over{\alpha'-1\over\alpha}}
\end{align}
and also
\begin{align}
\<a_0,h\>&=\lim_{t\to0}{1\over it}\<a_0,f((1/\alpha)+it)-f(1/\alpha)\> \nonumber\\
&=\psi(\1)^{-{2z-1\over2z}}\lim_{t\to0}{1\over it}\Bigl\<h_\psi^{2z-1\over2z}h_\ffi^{1\over2z},
h_\psi^{1\over2z}\Bigl(h_{\psi_0}^{{\alpha\over2z}it}h_\ffi^{-{\alpha\over2z}it}-\1\Bigr)
h_\ffi^{2z-1\over2z}\Bigr\> \nonumber\\
&=\psi(\1)^{-{2z-1\over2z}}\lim_{t\to0}{1\over it}\Tr\Bigl[h_\psi^{2z-1\over2z}h_\psi^{1\over2z}
\bigl(h_{\psi_0}^{{\alpha\over2z}it}h_\ffi^{-{\alpha\over2z}it}-\1\bigr)\Bigr] \nonumber\\
&=\psi(\1)^{-{2z-1\over2z}}{\alpha\over2z}
\lim_{t\to0}{1\over it}\Tr\bigl[h_\psi\bigl(h_{\psi_0}^{it}h_\ffi^{-it}-\1\bigr)\bigr] \nonumber\\
&=\psi(\1)^{-{2z-1\over2z}}{\alpha\over2z}\,D(\psi\|\ffi), \label{F-6.12}
\end{align}
where we have used \eqref{F-C.5} for the third equality and \cite[Theorem 5.7]{ohya1993quantum}
for the last equality. Since $\psi(\1)=\|f(1/\alpha)\|_{L^{2z}_L}^{2z}$, it follows from \eqref{F-6.11} and
\eqref{F-6.12} that
\begin{align*}
D_{\alpha',z}(\psi\|\ffi)
&={\log Q_{\alpha',z}(\psi\|\ffi)-\log\psi(\1)\over\alpha'-1}
={2z\log\|f(\alpha'/\alpha)\|_{L^{2z}_L}-2z\log\|f(1/\alpha)\|_{L^{2z}_L}\over\alpha'-1} \\
&=\biggl({\log\|f(\alpha'/\alpha)\|_{L^{2z}_L}-\log\|f(1/\alpha)\|_{L^{2z}_L}\over
\|f(\alpha'/\alpha)\|_{L^{2z}_L}-\|f(1/\alpha)\|_{L^{2z}_L}}\biggr)
{2z\over\alpha}\biggl({\|f(\alpha'/\alpha)\|_{L^{2z}_L}-\|f(1/\alpha)\|_{L^{2z}_L}\over
{\alpha'-1\over\alpha}}\biggr) \\
&\to{1\over\psi(\1)^{1\over2z}}\,{2z\over\alpha}\,\psi(\1)^{-{2z-1\over2z}}{\alpha\over2z}\,D(\psi\|\ffi)
={D(\psi\|\ffi)\over\psi(\1)}=D_1(\psi\|\ffi)
\end{align*}
as $\alpha'\searrow1$, proving the statement.
\end{proof}

\section{Concluding remarks}\label{Sec-7}

We have advanced, after the paper \cite{kato2023onrenyi} by Kato, the study of $\alpha$-$z$-R\'enyi
divergences $D_{\alpha,z}(\psi\|\ffi)$ ($\alpha,z>0$, $\alpha\ne1$) for normal positive functionals $\psi,\ffi$
on general von Neumann algebras, based on the Haagerup $L^p$-spaces (as well as Kosaki's
interpolation $L^p$-spaces for the technical side). In Theorems \ref{thm:variational} and \ref{thm:dpi}
we have finished the proofs of the variational expression and the DPI for $D_{\alpha,z}(\psi\|\ffi)$ in the case
$\alpha>1$, while those were proved in \cite{kato2023onrenyi} only in the case $0<\alpha<1$. Since the
DPI bounds of $(\alpha,z)$ in Theorem \ref{thm:dpi} are the same as those characterized in
\cite{zhang2020fromwyd} in the finite dimensional case, our DPI theorem for $D_{\alpha,z}$ is best possible.
On the other hand, as noted in Remark \ref{remark:variational}, the variational expressions in Theorem
\ref{thm:variational} were proved in both finite and infinite dimensional $\mathcal{B}(\mathcal{H})$ settings
in \cite{zhang2020fromwyd} and \cite{mosonyi2023thestrong} with no restriction on $z>0$,
while we need the assumption $z\ge\alpha/2$ to prove the variational expression in the case
$\alpha>1$. Thus it may be expected that Theorem \ref{thm:variational}(ii) can be also extended
to all $z>0$.

The main theorem of the paper says that the equality
$D_{\alpha,z}(\psi\circ\gamma\|\ffi\circ\gamma)=D_{\alpha,z}(\psi\|\ffi)<\infty$ in the DPI under a normal
$2$-positive unital map $\gamma$ between von Neumann algebras implies the sufficiency (the reversibility)
of $\gamma$ with respect to $\{\psi,\ffi\}$. Although, as noted in Remark \ref{rem:conditions}, some
other necessary conditions for the equality case of the DPI were given in \cite{zhang2020equality}
in the finite dimensional setting, our sufficiency theorem is stated in quite a proper form in the sense
of Definition \ref{defi:reversible}, that seems new even in the finite dimensional case. Moreover,
the theorem seems best possible since it holds for all $(\alpha,z)$ in the DPI bounds and thus includes
those for the Petz-type and the sandwiched R\'enyi divergences proved in
\cite[Theorem 6.19]{hiai2021quantum} and \cite{jencova2018renyi,jencova2021renyi}.

In the second half of the paper, we have discussed the monotonicity properties of $D_{\alpha,z}$ in the
parameters $\alpha,z$. In the finite dimensional case, the monotonicity of $D_{\alpha,z}$ in all $z>0$ for
any fixed $\alpha\in(0,\infty)$, $\alpha\ne1$, as well as the limit $\lim_{\alpha\to1}D_{\alpha,z}=D_1$ for
any $z>0$, was shown in \cite{lin2015investigating} (also \cite{mosonyi2023somecontinuity}).
Proposition \ref{P-5.5} is the extension of this to the finite von Neumann algebra case. Furthermore,
in the general von Neumann algebra setting, the monotonicity of $D_{\alpha,z}$ in $z$ was shown in
\cite{kato2023onrenyi} (as Theorems \ref{T-5.1}) for any fixed $\alpha\in(0,1)$, and in Theorem \ref{T-5.7}
we have presented that for $\alpha>1$ with a restriction $z\ge\alpha/2$. It is left open to remove this
restriction when $\alpha>1$. On the other hand, concerning the monotonicity property in $\alpha>0$ for
a fixed $z>0$, it seems that nothing has been known before the present paper and
\cite{hiai2024log-majorization} even in the finite dimensional case. In the general von Neumann algebra
setting, in Theorems \ref{T-6.1} and \ref{T-6.6} we have presented the monotonicity of $D_{\alpha,z}$ in
$\alpha>0$ with $z$ fixed, though with a restriction $\alpha\le2z$ when $\alpha>1$. 
It is interesting to find whether or not the monotonicity in $\alpha$ holds without this restriction;
note that this is true in the finite dimensional setting \cite[Theorem
3.1(ii)]{hiai2024log-majorization}. In Theorems \ref{T-6.7} and
\ref{T-6.9} we have finally shown the limits $\lim_{\alpha\nearrow1}D_{\alpha,z}=D_1$ and
$\lim_{\alpha\searrow1}D_{\alpha,z}=D_1$ (under the assumption that $D_{\alpha,z}(\psi\|\ffi)<\infty$ for
some $\alpha\in(1,2z]$ for the latter) with $z$ fixed, though with a restriction $z>1/2$ for the latter limit.
It is again left open to remove this restriction.

\subsection*{Acknowledgments}

The authors are grateful to the anonymous referee for very careful reading of the manuscript and
for helpful suggestions. The present work started when the authors met at Nagoya University in
November 2023 and had the opportunity to discuss some open problems in \cite{kato2023onrenyi} with
Shinya Kato. The authors thank Yoshimichi Ueda for his arranging the meeting, and are also grateful to
Ueda and Kato for discussions. AJ wishes to thank Francesco Buscemi for the invitation to Nagoya,
which enabled the meeting, and his kind hospitality during the visit. AJ also acknowledges the support by
the grant VEGA 2/0128/24 and the  Science and Technology Assistance Agency under the
contract No.\ APVV-20-0069.

\bigskip

\noindent
\textbf{Data availability}\enspace  No datasets were generated or analyzed during
the current study.

\medskip 

\noindent
\textbf{Competing interests}\enspace The authors have no competing interests to declare.

\appendix

\section{Haagerup  $L^p$-spaces}\label{Sec-A}

Let $\Me$ be a von Neumann algebra on a Hilbert space $\mathcal{H}$. Let $\omega$ be
a faithful normal semi-finite weight $\omega$ on $\Me$ and $\sigma_t^\omega$, $t\in\bR$, be the
associated \emph{modular automorphism group} on $\Me$. Then we have the crossed product
$\cR:=\Me\rtimes_{\sigma^\omega}\bR$, which is a semi-finite von Neumann algebra on the Hilbert
space $\mathcal{H}\otimes L^2(\bR)$ with the canonical trace $\tau$.
Let $\theta_s$, $s\in\bR$, be the \emph{dual action} on $\cR$ having the $\tau$-scaling property
$\tau\circ\theta_s=e^{-s}\tau$, $s\in\bR$; see \cite[Chap.~X]{takesaki2003theoryof} (also
\cite[Chap.~8]{hiai2021lectures}). Let $\widetilde\cR$ denote the space of $\tau$-measurable operators
affiliated with $\cR$; see \cite{fack1986generalized} (also \cite[Chap.~4]{hiai2021lectures}). For $0<p\le\infty$,
the \emph{Haagerup $L^p$-space} $L^p(\Me)$ \cite{haagerup1979lpspaces,terp1981lpspaces} (also
\cite[Chap.~9]{hiai2021lectures}) is defined by
\[
L^p(\Me):=\{a\in\widetilde\cR:\theta_s(a)=e^{-s/p}a,\ s\in\bR\}.
\]
In particular, $\Me=L^\infty(\Me)$ and we have an order isomorphism $\Me_*\cong L^1(\Me)$ given as
$\psi\in\Me_*\leftrightarrow h_\psi\in L^1(\Me)$, so that $\Tr\,h_\psi=\psi(\1)$, $\psi\in L^1(\Me)$, defines a
positive linear functional $\Tr$ on $L^1(\Me)$. For $0<p<\infty$ the $L^p$-norm (quasi-norm for $0<p<1$)
of $a\in L^p(\Me)$ is defined by $\|a\|_p:=(\Tr\,|a|^p)^{1/p}$, and the $L^\infty$-norm $\|\cdot\|_\infty$ is the
operator norm $\|\cdot\|$ on $\Me$. For $1\le p<\infty$, $L^p(\Me)$ is a Banach space whose dual Banach
space is $L^q(\Me)$, where $1/p+1/q=1$, by the duality pairing
\begin{align}\label{F-A.1}
(a,b)\in L^p(\Me)\times L^q(\Me)\mapsto\Tr(ab)=\Tr(ba).
\end{align}

We will next recall the extensions of conditional expectations to the
Haagerup $L^p$-spaces, obtained in \cite{junge2003noncommutative}.
Here, as in \cite{junge2003noncommutative} we assume that $\Me$ is a $\sigma$-finite von
Neumann algebra and $\omega$ is a faithful normal state on $\Me$.
Let $\Ne\subseteq \Me$ be a subalgebra such that
$\sigma_t^\omega(\Ne)\subseteq \Ne$, $t\in\bR$. Equivalently, there is a faithful normal conditional
expectation $\cE$ on $\Me$ with range $\Ne$. We then have
$\sigma^{\omega|_\Ne}=\sigma^\omega|_\Ne$ for the modular group of the restriction
$\omega|_\Ne$ and we may identify $\cS:=\Ne \rtimes_{\sigma^{\omega|_\Ne}}\bR$
with a subalgebra in $\cR$. It can be seen that 
\[
\tilde \cE:=\cE\otimes \mathrm{id}_{L^2(\bR)}
\]
defines a conditional expectation of $\cR$ onto $\cS$. Further, let $\tau_0$ be the
canonical trace for $\cS$, then we have $\tau_0=\tau|_\cS$ and
$\tau=\tau\circ \tilde \cE=\tau_0\circ \tilde \cE$. It follows that $\tilde{\cS}$
can be identified with the subspace in $\tilde \cR$ of elements affiliated with $\cS$ and
since the dual action for $\cS$ is just a restriction of $\theta_s$, $s\in \bR$, we have a
similar identification of $L^p(\Ne)$ with a subspace in $L^p(\Me)$. By the results of
\cite[Sec.~2]{junge2003noncommutative}, $\cE$ extends to a contractive projection
$\cE_p$ from $L^p(\Me)$ onto $L^p(\Ne)$ for any $1\le p\le \infty$, where 
\[
\cE_\infty=\cE,\qquad \cE_1=\cE_*.
\]
Moreover, for any $x\in L^p(\Me)$,
\[
\cE_p(x)^*=\cE_p(x^*),\qquad  x\ge 0 \implies \cE_p(x)\ge 0
\]
and for $1\le p,q,r\le \infty$ such that $\tfrac1p+\tfrac1q+\tfrac1r=\tfrac1s\le 1$, we have
\begin{equation}\label{eq:cond}
\cE_s(axb)=a\cE_r(x)b,\qquad a\in L^p(\Ne),\ b\in L^q(\Ne),\ x\in L^r(\Me)
\end{equation}
(see \cite[Proposition 2.3]{junge2003noncommutative}).

We finish this section by some well-known lemmas. The proofs are given for completeness.

\begin{lemma}\label{lemma:cone}
For any $0<p<\infty$ and $\varphi\in \Me_*^+$, 
$h_\varphi^{\frac1{2p}}\Me^+h_\varphi^{\frac1{2p}}$ is dense in $L^p(\Me)^+$ with respect
to the (quasi)-norm $\|\cdot\|_p$.
\end{lemma}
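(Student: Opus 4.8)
The plan is to read the statement with $\varphi$ faithful: since every element of $h_\varphi^{1/2p}\Me^+h_\varphi^{1/2p}$ has support dominated by $e:=s(\varphi)$, the general case reduces to this one by restricting to the reduced algebra $e\Me e$ and working inside $L^p(e\Me e)^+=eL^p(\Me)^+e$ (which is the form actually used in the applications, e.g.\ in Theorem \ref{thm:variational}(ii)). So I would assume $s(\varphi)=\1$ and write $h:=h_\varphi$.

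The key step is a multiplication/Hölder reduction, which has the virtue of producing \emph{positive} approximants automatically and of treating all $0<p<\infty$ at once. I claim the lemma follows once one knows that $h^{1/2p}\Me$ is dense in $L^{2p}(\Me)$. Indeed, given $k\in L^p(\Me)^+$, set $c:=k^{1/2}\in L^{2p}(\Me)^+$; choosing $b_n\in\Me$ with $c_n:=h^{1/2p}b_n\to c$ in $L^{2p}(\Me)$ and putting $a_n:=b_nb_n^*\in\Me^+$, we get $h^{1/2p}a_nh^{1/2p}=c_nc_n^*$. By the Hölder inequality $\|xy\|_p\le\|x\|_{2p}\|y\|_{2p}$ the multiplication map $L^{2p}(\Me)\times L^{2p}(\Me)\to L^p(\Me)$ is jointly continuous, so, using $cc^*=c^2=k$,
\[
\|c_nc_n^*-k\|_p=\|c_nc_n^*-cc^*\|_p\le\|c_n\|_{2p}\|c_n^*-c^*\|_{2p}+\|c_n-c\|_{2p}\|c^*\|_{2p}\longrightarrow 0 .
\]
Hence $h^{1/2p}a_nh^{1/2p}\to k$ with $a_n\in\Me^+$, as required.

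It then remains to establish the one-sided density of $h^{1/q}\Me$ in $L^q(\Me)$ for $q:=2p\in(0,\infty)$ (equivalently, by taking adjoints, of $\Me h^{1/q}$). For $q\ge 1$ this is standard: it is the density of $\Me$ in Kosaki's right interpolation $L^q$-space $L^q(\Me,\varphi)_R$, obtained by complex interpolation of the compatible pair $(\Me,\Me_*)$, in which $\Me$ is dense and is identified with $\Me h^{1/q}$ (see Appendix \ref{Sec-C} and \cite{kosaki1984applications}); alternatively it follows by a duality argument against $L^{q'}(\Me)$. The delicate regime, which I expect to be the main obstacle, is $0<q<1$ (i.e.\ $p<1/2$): here $L^q(\Me)$ is only a quasi-Banach space and no duality is available. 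In that case I would argue directly: for $k\in L^q(\Me)$ with polar decomposition $k=u|k|$, the spectral truncations $k_n:=u\,e_{[1/n,n]}(|k|)\,|k|$ converge to $k$ in $L^q$ and are bounded with finite-trace support, since $\Tr\,e_{[1/n,n]}(|k|)\le n^q\|k\|_q^q<\infty$; this reduces the claim to bounded $k$ lying in $L^1(\Me)\cap\cR$. One then approximates the unbounded $h^{-1/q}$ by bounded Borel functions of $h$ and, crucially, restores $\theta$-invariance (genuine membership in $\Me$ rather than in $\cR$) by a $\sigma^\varphi$-analytic averaging, exactly as in the realization of the Tomita algebra inside $L^q(\Me)$. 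The commutative model $\Me=L^\infty(X,\mu)$ with $h=w>0$, where $\{w^{1/q}f:f\in\Me\}$ is dense in $L^q$ by dominated convergence, confirms that the statement is correct in all cases and indicates that the only difficulty is the noncommutativity handled by the averaging.
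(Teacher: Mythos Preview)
Your core reduction --- pass to faithful $\varphi$, take $k^{1/2}\in L^{2p}(\Me)$, approximate one-sidedly by $h^{1/2p}b_n$, and multiply back via H\"older --- is exactly the paper's argument. The only difference is that the paper does not attempt to prove the one-sided density of $\Me h_\varphi^{1/2p}$ in $L^{2p}(\Me)$; it simply cites \cite[Lemma 1.1]{junge2003noncommutative}, where this is established for \emph{all} $0<p<\infty$ (so the regime $2p<1$ that you flag as delicate is already covered there, and your split into $q\ge1$ versus $q<1$ is unnecessary).

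Your sketch for $0<q<1$ has a genuine gap. For $|k|\in L^q(\Me)^+$ the spectral projection $e_{[1/n,n]}(|k|)$ lies in $\cR$ but is \emph{not} $\theta$-invariant: since $\theta_s(|k|)=e^{-s/q}|k|$, one has $\theta_s\bigl(e_{[1/n,n]}(|k|)\bigr)=e_{[e^{s/q}/n,\,e^{s/q}n]}(|k|)$, so the projection is not in $\Me$. Consequently your truncation $k_n=u\,e_{[1/n,n]}(|k|)\,|k|$ is not $\theta$-homogeneous of degree $-1/q$ and hence not in $L^q(\Me)$; the assertion ``$k_n\to k$ in $L^q$'' therefore has no direct meaning in the Haagerup $L^q$-norm. (There is no cheap rescue via $L^q(\cR,\tau)$ either, since nonzero elements of $L^q(\Me)$ have $\mu_t(a)=t^{-1/q}\|a\|_q$ and thus infinite $\tau$-$L^q$-norm.) Also, the displayed estimate should use the canonical trace $\tau$ on $\cR$, not the $\Tr$-functional on $L^1(\Me)$. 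The $\sigma^\varphi$-analytic averaging repair you allude to may well be salvageable, but as written it is not a proof; the clean route is simply to quote Junge--Xu.
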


\begin{proof} We may assume that $\varphi$ is faithful. By \cite[Lemma 1.1]{junge2003noncommutative}, $\Me
h_\varphi^{\frac1{2p}}$ is dense in $L^{2p}(\Me)$ for any $0<p<\infty$. Let $y\in L^p(\Me)^+$, then
$y^{\frac12}\in L^{2p}(\Me)$, hence there is a sequence $a_n\in \Me$ such that
$\|a_nh^{\frac1{2p}}_\varphi-y^{\frac12}\|_{2p}\to 0$. Then also 
\[
\Big\|h^{\frac1{2p}}_\varphi a_n^*-y^{\frac12}\Big\|_{2p}
=\Big\|(a_nh^{\frac1{2p}}_\varphi-y^{\frac12})^*\Big\|_{2p}
=\Big\|a_nh^{\frac1{2p}}_\varphi-y^{\frac12}\Big\|_{2p}\to 0
\]
and 
\[
\Big\|h^{\frac1{2p}}_\varphi a_n^*a_nh^{\frac1{2p}}_\varphi-y\Big\|_p
=\Big\|(h^{\frac1{2p}}_\varphi a_n^*-y^{\frac12})a_nh^{\frac1{2p}}_\varphi
+y^{\frac12}(a_nh^{\frac1{2p}}_\varphi-y^{\frac12})\Big\|_p.
\]
Since $\|\cdot\|_p$ is a (quasi)-norm, the above expression goes to 0 by the H\"older
inequality.
\end{proof}

\begin{lemma}\label{lemma:order1}
Let $0<p\le \infty$ and let $h,k\in L^p(\Me)^+$ be such
that $h\le k$. Then 
$\|h\|_p\le \|k\|_p$. Moreover, if $1\le p<\infty$, then 
\[
\|k-h\|_p^p\le \|k\|_p^p-\|h\|_p^p.
\]
\end{lemma}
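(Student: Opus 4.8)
The plan is to treat the two assertions separately, obtaining the first from the theory of generalized $s$-numbers and the second from a factorization combined with the Araki--Lieb--Thirring inequality. For the first inequality, recall that every $a\in L^p(\Me)$ satisfies $\mu_t(a)=t^{-1/p}\|a\|_p$ for $t>0$ (\cite[Lemma 4.8]{fack1986generalized}), so in particular $\|a\|_p=\mu_1(a)$. Since $0\le h\le k$ as positive $\tau$-measurable operators in $\widetilde\cR$, the monotonicity of the generalized $s$-numbers (\cite{fack1986generalized}) gives $\mu_t(h)\le\mu_t(k)$ for all $t>0$, and evaluating at $t=1$ yields $\|h\|_p\le\|k\|_p$. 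For $p=\infty$ this is just monotonicity of the operator norm.

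For the second inequality, put $g:=k-h\in L^p(\Me)^+$, so that the claim $\|k-h\|_p^p\le\|k\|_p^p-\|h\|_p^p$ becomes the superadditivity
\[
\|h\|_p^p+\|g\|_p^p\le\|h+g\|_p^p=\|k\|_p^p
\]
on the positive cone, valid for $p\ge1$. First I would reduce to the reduced algebra $e\Me e$ with $e:=s(k)$, which is legitimate since $0\le h\le k$ and $0\le g\le k$ force $s(h),s(g)\le e$. Next, since $h\le k$, a Douglas-type factorization in the Haagerup $L^p$-spaces (\cite[Lemma A.58]{hiai2021quantum}) produces $b\in\Me$ with $\|b\|\le1$ and $h^{1/2}=bk^{1/2}$; setting $S:=b^*b$ gives $0\le S\le\1$ and $h=k^{1/2}Sk^{1/2}$. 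Applying the same to $g$ yields $0\le T\le\1$ with $g=k^{1/2}Tk^{1/2}$, and from $h+g=k=k^{1/2}ek^{1/2}$ together with $\overline{\mathrm{ran}}(k^{1/2})=e\mathcal H$ one concludes $S+T=\1$ in $e\Me e$.

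It then remains to compare $\|h\|_p^p=\Tr\bigl((k^{1/2}Sk^{1/2})^p\bigr)$ with $\Tr(k^pS^p)$. By the Araki--Lieb--Thirring inequality, in the form $\Tr\bigl((k^{1/2}Sk^{1/2})^p\bigr)\le\Tr(k^pS^p)$ valid for $p\ge1$, and likewise for $g$, I obtain
\[
\|h\|_p^p+\|g\|_p^p\le\Tr\bigl(k^p(S^p+T^p)\bigr)\le\Tr(k^p)=\|k\|_p^p,
\]
where the last step uses $S^p\le S$ and $T^p\le T$ (since $0\le S,T\le\1$ and $p\ge1$), hence $S^p+T^p\le S+T=\1$, together with $k^p\ge0$. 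The main obstacle is the Araki--Lieb--Thirring step: although this inequality is classical for a semifinite trace (\cite{kosaki1992aninequality}), here it must be invoked for the Haagerup trace functional $\Tr$ on $L^1(\Me)$, which is \emph{not} the canonical trace $\tau$ on $\cR$. I would justify it either by reducing to the tracial case through Haagerup's reduction theorem (Appendix \ref{Sec-B}) or by citing the corresponding inequality directly for the Haagerup $L^p$-norms; the factorization and support bookkeeping of the previous step are routine by comparison.
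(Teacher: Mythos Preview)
Your first assertion is correct and coincides with the paper's proof: both use the monotonicity of generalized $s$-numbers \cite[Lemma~2.5(iii)]{fack1986generalized} together with $\mu_t(a)=t^{-1/p}\|a\|_p$ from \cite[Lemma~4.8]{fack1986generalized}.

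For the second inequality the paper simply cites \cite[Lemma~5.1]{fack1986generalized}. Your route via the Douglas factorization $h=k^{1/2}Sk^{1/2}$, $k-h=k^{1/2}Tk^{1/2}$ with $S+T=\1$, followed by Araki--Lieb--Thirring and $S^p+T^p\le S+T$, is genuinely different and more hands-on. It does work, and the obstacle you flag can be resolved --- but not by Haagerup reduction, which is heavy machinery and risks circularity for such a foundational lemma. The clean fix is Kosaki's log-majorization form of ALT: for $\tau$-measurable $A,B\ge0$ and $r\ge1$ one has $\Lambda_t(|AB|)^r\le\Lambda_t(|A^rB^r|)$ for all $t>0$. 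Taking $A=S^{1/2}$, $B=k^{1/2}$, $r=p$, and using that $S^{1/2}k^{1/2}\in L^{2p}(\Me)$ and $S^{p/2}k^{p/2}\in L^2(\Me)$ both have $\mu_t$-profiles of the special shape $t^{-1/q}\|\cdot\|_q$, the $\Lambda_t$-inequality collapses (the $\int_0^t\log s\,ds$ contributions on both sides cancel) to exactly $\Tr\bigl((k^{1/2}Sk^{1/2})^p\bigr)\le\Tr\bigl(k^{p/2}S^pk^{p/2}\bigr)$. Your approach makes the role of $p\ge1$ transparent through both the ALT direction and $S^p\le S$; the paper's citation buys brevity and avoids having to handle the Haagerup-$\Tr$ subtlety at all.
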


\begin{proof} The first statement follows from \cite[Lemmas 2.5(iii) and 4.8]{fack1986generalized}.
The second statement is from \cite[Lemma 5.1]{fack1986generalized}.
\end{proof}

\begin{lemma}\label{lemma:order}
Let $\psi,\varphi\in \Me_*^+$ with $\psi\le \varphi$.
Then for any $a\in \Me$ and $p\in [1,\infty)$,
\[
\Tr\Bigl(\bigl(a^*h_\psi^{1/p}a\bigr)^p\Bigr)\le\Tr\Bigl(\bigl(a^*h_\varphi^{1/p}a\bigr)^p\Bigr).
\]
\end{lemma}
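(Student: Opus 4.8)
The plan is to reduce the inequality to the order relation $h_\psi\le h_\varphi$ in $L^1(\Me)^+$, combined with operator monotonicity of the power $t\mapsto t^{1/p}$ and the first assertion of Lemma \ref{lemma:order1}. First I would record that the order isomorphism $\Me_*\cong L^1(\Me)$ is order preserving, so the hypothesis $\psi\le\varphi$ translates into $h_\psi\le h_\varphi$ as positive $\tau$-measurable operators affiliated with $\cR$. Since $p\ge1$, the exponent $1/p$ lies in $(0,1]$, and the function $t\mapsto t^{1/p}$ is operator monotone on $[0,\infty)$ by the L\"owner--Heinz theorem; this property extends to (generally unbounded) $\tau$-measurable operators, exactly as already used in the proof of Theorem \ref{thm:variational}(i) (see \cite[Lemma B.7]{hiai2021quantum} and \cite[Lemma 3.2]{hiai2021connections}). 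Hence $h_\psi^{1/p}\le h_\varphi^{1/p}$ in $L^p(\Me)^+$.

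Next I would conjugate by $a$. Because $a\in\Me=L^\infty(\Me)$, the map $x\mapsto a^*xa$ preserves the order on self-adjoint $\tau$-measurable operators, so $a^*h_\psi^{1/p}a\le a^*h_\varphi^{1/p}a$; moreover both of these elements lie in $L^p(\Me)^+$ by the H\"older inequality ($h_\chi^{1/p}\in L^p(\Me)$ and $a\in L^\infty(\Me)$ for $\chi\in\{\psi,\varphi\}$). Applying Lemma \ref{lemma:order1} to these two positive elements of $L^p(\Me)$ yields $\|a^*h_\psi^{1/p}a\|_p\le\|a^*h_\varphi^{1/p}a\|_p$. Finally, since $a^*h_\psi^{1/p}a$ and $a^*h_\varphi^{1/p}a$ are positive, one has $\|a^*h_\chi^{1/p}a\|_p^p=\Tr\bigl((a^*h_\chi^{1/p}a)^p\bigr)$ for $\chi\in\{\psi,\varphi\}$, so raising the norm inequality to the $p$-th power gives precisely the asserted inequality. (The case $p=1$ is the degenerate one where $1/p=1$ and the power step is vacuous.)

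The only point requiring genuine care is the operator-monotonicity step: one must know that $h_\psi\le h_\varphi$ forces $h_\psi^{1/p}\le h_\varphi^{1/p}$ even though $h_\psi,h_\varphi$ need not be bounded. This is standard for $\tau$-measurable operators and is invoked elsewhere in the paper, so I do not expect it to be a real obstacle; the remaining steps are routine order-preservation and the definition $\|x\|_p^p=\Tr(|x|^p)$.
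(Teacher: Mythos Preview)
Your proof is correct and follows essentially the same route as the paper: deduce $h_\psi^{1/p}\le h_\varphi^{1/p}$ from operator monotonicity of $t\mapsto t^{1/p}$ (citing the same references \cite[Lemma B.7]{hiai2021quantum} and \cite[Lemma 3.2]{hiai2021connections}), conjugate by $a$ to preserve the order, and then apply Lemma~\ref{lemma:order1}. The paper's version is simply more terse, omitting the explicit remarks about the order isomorphism and the H\"older step.
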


\begin{proof} Since $1/p\in (0,1]$, it follows (see \cite[Lemma B.7]{hiai2021quantum} and
\cite[Lemma 3.2]{hiai2021connections}) that $h_\psi^{1/p}\le h_\varphi^{1/p}$.  Hence
$a^*h_\psi^{1/p}a\le a^*h_\varphi^{1/p}a$. Therefore, by Lemma \ref{lemma:order1} we have
the statement.
\end{proof}

\section{Haagerup's reduction theorem}\label{Sec-B}

In this appendix let us recall Haagerup's reduction theorem, which was presented in
\cite[Sec.~2]{haagerup2010areduction} (a compact survey is also found in
\cite[Sec.~2.5]{fawzi2023asymptotic}). Let $\Me$ be a general $\sigma$-finite von Neumann algebra.
Let $\omega$ be a faithful normal state of $\Me$ and $\sigma_t^\omega$ ($t\in\bR$) be the associated
modular automorphism group. Consider the discrete additive group $G:=\bigcup_{n\in\bN}2^{-n}\bZ$ and
define $\hat\Me:=\Me\rtimes_{\sigma^\omega}G$, the crossed product of $\Me$ by the action
$\sigma^\omega|_G$. Then the dual weight $\hat\omega$ is a faithful normal state of $\hat\Me$, and
we have $\hat\omega=\omega\circ E_\Me$, where $E_\Me:\hat\Me\to\Me$ is the canonical conditional
expectation (see, e.g., \cite[Sec.~8.1]{hiai2021lectures}, also \cite[Sec.~2.5]{fawzi2023asymptotic}).

Haagerup's reduction theorem is summarized as follows:

\begin{theorem}[\cite{haagerup2010areduction}]\label{T-B.1}
In the above setting, there exists an increasing sequence $\{\Me_n\}_{n\ge1}$ of von Neumann
subalgebras of $\hat\Me$, containing the unit of $\hat\Me$, such that the following hold:
\begin{itemize}
\item[(i)] Each $\Me_n$ is finite with a faithful normal tracial state $\tau_n$.
\item[(ii)] $\left(\bigcup_{n\ge1}\Me_n\right)''=\hat\Me$.
\item[(iii)] For every $n$ there exists a (unique) faithful normal conditional expectation
$E_{\Me_n}:\hat\Me\to\Me_n$ satisfying
\[
\hat\omega\circ E_{\Me_n}=\hat\omega,\qquad
\sigma_t^{\hat\omega}\circ E_{\Me_n}=E_{\Me_n}\circ\sigma_t^{\hat\omega},\quad t\in\bR.
\]
Moreover, for any $x\in\hat\Me$, $E_{\Me_n}(x)\to x$ in the $\sigma$-strong topology.
\end{itemize}
\end{theorem}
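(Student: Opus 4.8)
The plan is to realise $\hat\Me$ as the crossed product $\Me\rtimes_{\sigma^\omega}G$ over the countable dyadic group $G=\bigcup_n2^{-n}\bZ$ and to produce the subalgebras $\Me_n$ as centralizers of a carefully chosen sequence of perturbed states. First I would record the basic structure of the dual state: $\hat\omega=\omega\circ E_\Me$ is a faithful normal state (a state, not merely a weight, because $G$ is discrete), $\hat\Me$ is $\sigma$-finite since $\Me$ is and $G$ is countable, and the modular automorphism group satisfies $\sigma_t^{\hat\omega}|_\Me=\sigma_t^\omega$ while fixing each implementing unitary $\lambda(g)$, $g\in G$. In particular every $\lambda(g)$ lies in the centralizer $\hat\Me_{\hat\omega}$.

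The heart of the construction is the perturbation. For each $n$ I would let $b_n$ be the self-adjoint element affiliated with the abelian algebra $\{\lambda(2^{-n})\}''\subseteq\hat\Me_{\hat\omega}$ determined by $\lambda(2^{-n})=\exp(i2^{-n}b_n)$ with the spectrum of $b_n$ normalized to a fixed interval, chosen compatibly in $n$ via $\lambda(2^{-n})=\lambda(2^{-(n+1)})^2$ so that the $b_n$ are nested. I would then define the perturbed faithful normal state $\hat\omega_n$ with Connes cocycle $(D\hat\omega_n:D\hat\omega)_s=e^{-isb_n}$, so that $\sigma_s^{\hat\omega_n}(x)=e^{-isb_n}\sigma_s^{\hat\omega}(x)e^{isb_n}$. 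Using $\lambda(2^{-n})x\lambda(2^{-n})^*=\sigma_{2^{-n}}^\omega(x)$ one checks directly that $\sigma^{\hat\omega_n}$ is $2^{-n}$-periodic on both $\Me$ and the $\lambda(g)$, hence on all of $\hat\Me$. Set $\Me_n:=\hat\Me_{\hat\omega_n}$. Because $b_n$ lies in the $\hat\omega$-centralizer, $\sigma^{\hat\omega}$ and $\sigma^{\hat\omega_n}$ commute, which makes $\Me_n$ globally $\sigma^{\hat\omega}$-invariant and will be used repeatedly below.

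Given this, (i) is automatic: the restriction of any faithful normal state to its own centralizer is a faithful normal trace, so $\tau_n:=\hat\omega_n|_{\Me_n}$ witnesses finiteness of $\Me_n$. For (iii) I would invoke Takesaki's theorem: since $\Me_n$ is invariant under $\sigma^{\hat\omega}$, there is a unique $\hat\omega$-preserving faithful normal conditional expectation $E_{\Me_n}\colon\hat\Me\to\Me_n$, and its uniqueness together with the commutation $\sigma_t^{\hat\omega}\sigma_s^{\hat\omega_n}=\sigma_s^{\hat\omega_n}\sigma_t^{\hat\omega}$ forces $\sigma_t^{\hat\omega}\circ E_{\Me_n}=E_{\Me_n}\circ\sigma_t^{\hat\omega}$; concretely $E_{\Me_n}$ is the average $2^n\int_0^{2^{-n}}\sigma_s^{\hat\omega_n}(\,\cdot\,)\,ds$ over one period of the flow, and $\hat\omega\circ E_{\Me_n}=\hat\omega$ follows from the $\hat\omega$-invariance of $\sigma^{\hat\omega_n}$ (which in turn holds because the cocycle $e^{-isb_n}$ is a centralizer unitary). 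Finally (ii) and the $\sigma$-strong convergence $E_{\Me_n}(x)\to x$ come together: as $n\to\infty$ the averaging window $2^{-n}\to0$ and $\sigma_s^{\hat\omega_n}\to\mathrm{id}$ uniformly on it, so $E_{\Me_n}(x)\to x$ on a dense set and then everywhere by the uniform boundedness of the contractions $E_{\Me_n}$, whence $\bigl(\bigcup_n\Me_n\bigr)''=\hat\Me$.

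The main obstacle, to which I would devote the most care, is the middle step: arranging the spectral normalization of the generators $b_n$ so that the perturbations are genuinely nested, that is $\Me_n\subseteq\Me_{n+1}$, while simultaneously keeping each $\sigma^{\hat\omega_n}$ periodic and keeping the cocycle inside the $\hat\omega$-centralizer. Everything else — finiteness via the trace on a centralizer, existence and uniqueness of $E_{\Me_n}$ via Takesaki's theorem, and the density together with the convergence — then follows from standard modular theory once this single family of compatible periodic perturbations is in hand.
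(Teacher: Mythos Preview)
The paper does not prove Theorem~\ref{T-B.1}; it is stated as a citation of \cite{haagerup2010areduction} and used as a black box in Appendix~\ref{Sec-B}. So there is no proof in the paper to compare against. That said, your outline is a faithful sketch of the Haagerup--Junge--Xu construction: realise $\hat\Me=\Me\rtimes_{\sigma^\omega}G$, choose self-adjoint generators $b_n$ of the unitaries $\lambda(2^{-n})$ inside the $\hat\omega$-centralizer, perturb $\hat\omega$ by the cocycle $e^{-isb_n}$ to obtain states $\hat\omega_n$ with $2^{-n}$-periodic modular groups, and take $\Me_n$ to be the centralizers $\hat\Me_{\hat\omega_n}$; (i) is then the trace-on-centralizer fact, (iii) follows from Takesaki's theorem and the periodic-averaging formula you wrote, and (ii) plus the convergence follow from shrinking the averaging window.

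You have correctly flagged the genuine technical point, namely the nesting $\Me_n\subseteq\Me_{n+1}$. In the original proof this is not automatic from ``compatible'' logarithms alone: one must show that $b_{n+1}-2b_n$ (with the appropriate normalization) lies in $\Me_n$, which is what forces the inclusion of centralizers. Your phrase ``chosen compatibly in $n$ via $\lambda(2^{-n})=\lambda(2^{-(n+1)})^2$'' gestures at this but does not yet pin it down; the branch of the logarithm has to be fixed (e.g.\ spectrum of $b_n$ in $[0,2^n\cdot 2\pi)$) and then a short computation with the spectral calculus on the abelian algebra generated by $\lambda(2^{-(n+1)})$ shows the required containment. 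Apart from that, the convergence argument at the end would benefit from the observation that $\sigma_s^{\hat\omega_n}(x)=e^{-isb_n}\sigma_s^{\hat\omega}(x)e^{isb_n}$ with $\|b_n\|$ controlled, so the averaged integrand tends to $x$ uniformly on $[0,2^{-n}]$ for $\hat\omega$-analytic $x$, which are $\sigma$-strongly dense. With these two points filled in, your sketch is essentially the proof in \cite{haagerup2010areduction}.
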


Furthermore, for any $\psi\in\Me_*^+$ define $\hat\psi:=\psi\circ E_\Me$. Then by
\cite[Theorem 4]{hiai1984strong} we have $\hat\psi\circ E_{\Me_n}\to\hat\psi$ in the norm. Here
we give the next lemma, which is used in Sec.~\ref{Sec-5.2}.

\begin{lemma}\label{L-B.2}
In the above situation, for any $\psi,\ffi\in\Me_*^+$ with $\psi\ne0$ let $\hat\psi:=\psi\circ E_\Me$
and $\hat\ffi:=\ffi\circ E_\Me$. If $\alpha$ and $z$ satisfy the DPI bounds (i.e., condition (i) or (ii) in
Theorem \ref{thm:dpi}), then we have
\begin{align}
D_{\alpha,z}(\psi\|\ffi)&=D_{\alpha,z}(\hat\psi\|\hat\ffi)
=\lim_{n\to\infty}D_{\alpha,z}(\hat\psi|_{\Me_n}\|\hat\ffi|_{\Me_n})\quad\mbox{increasingly},
\label{F-6.16}\\
D_1(\psi\|\ffi)&=D_1(\hat\psi\|\hat\ffi)
=\lim_{n\to\infty}D_1(\hat\psi|_{\Me_n}\|\hat\ffi|_{\Me_n})\quad\mbox{increasingly}. \label{F-6.17}
\end{align}
\end{lemma}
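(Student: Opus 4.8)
The plan is to prove each of the two displayed chains in two independent steps: a \emph{reduction step} relating the divergence of $\psi,\ffi$ on $\Me$ to that of $\hat\psi,\hat\ffi$ on $\hat\Me$, and a \emph{martingale step} relating the latter to the increasing limit over $\{\Me_n\}$. The argument is identical for $D_{\alpha,z}$ and for $D_1$, up to replacing the DPI and martingale convergence for $D_{\alpha,z}$ by their classical counterparts for the relative entropy.

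For the reduction step I would use the DPI in both directions. Writing $\iota:\Me\hookrightarrow\hat\Me$ for the canonical embedding, both $\iota$ and the conditional expectation $E_\Me:\hat\Me\to\Me$ are normal, completely positive, unital maps, and $E_\Me\circ\iota=\mathrm{id}_\Me$; hence $\hat\psi\circ\iota=\psi\circ E_\Me\circ\iota=\psi$ and likewise $\hat\ffi\circ\iota=\ffi$. Applying Theorem \ref{thm:dpi} to $\gamma=\iota$ gives $D_{\alpha,z}(\psi\|\ffi)=D_{\alpha,z}(\hat\psi\circ\iota\|\hat\ffi\circ\iota)\le D_{\alpha,z}(\hat\psi\|\hat\ffi)$, while applying it to $\gamma=E_\Me$ gives $D_{\alpha,z}(\hat\psi\|\hat\ffi)=D_{\alpha,z}(\psi\circ E_\Me\|\ffi\circ E_\Me)\le D_{\alpha,z}(\psi\|\ffi)$; the two inequalities yield the first equality in \eqref{F-6.16}. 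Since $E_\Me$ is unital we have $\psi(\1)=\hat\psi(\1)$, so the identical pair of inequalities for the Araki relative entropy (monotone under normal unital $2$-positive maps) gives $D(\psi\|\ffi)=D(\hat\psi\|\hat\ffi)$ and hence the first equality in \eqref{F-6.17}.

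For the martingale step, observe that $\hat\Me$ is $\sigma$-finite because $\hat\omega$ is a faithful normal state on it, and that by Theorem \ref{T-B.1} the family $\{\Me_n\}$ is an increasing net of unital von Neumann subalgebras with $\bigl(\bigcup_n\Me_n\bigr)''=\hat\Me$. Thus Theorem \ref{thm:martingale}, applied to $\hat\Me$, $\{\Me_n\}$ and the functionals $\hat\psi,\hat\ffi$ (whose parameters satisfy the DPI bounds by assumption), yields precisely $D_{\alpha,z}(\hat\psi\|\hat\ffi)=\lim_nD_{\alpha,z}(\hat\psi|_{\Me_n}\|\hat\ffi|_{\Me_n})$ increasingly, completing \eqref{F-6.16}. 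For \eqref{F-6.17} I would invoke the analogous martingale convergence for the relative entropy over an increasing generating net of subalgebras; because $\1\in\Me_n$ forces $\hat\psi|_{\Me_n}(\1)=\hat\psi(\1)=\psi(\1)$, the normalization is constant along the net, so the normalized quantity $D_1$ inherits the increasing convergence. If a direct citation is not preferred, this can be reproduced along the lines of the proof of Theorem \ref{thm:martingale}, using the Hiai--Tsukada norm convergence $\hat\psi\circ E_{\Me_n}\to\hat\psi$ together with lower semicontinuity and monotonicity of the relative entropy.

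The only point requiring genuine care is the reduction step, namely checking that the maps in play belong to the class for which the DPI holds; this is immediate here, since $\iota$ and $E_\Me$ are normal completely positive unital, hence a fortiori normal positive unital, and so Theorem \ref{thm:dpi} applies with no restriction beyond the assumed DPI bounds on $(\alpha,z)$, while for the relative entropy the corresponding monotonicity is classical. No new estimate is needed: the lemma is essentially a repackaging of the DPI (Theorem \ref{thm:dpi}) and the martingale convergence (Theorem \ref{thm:martingale}) in the reduction setting.
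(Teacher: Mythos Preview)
Your proposal is correct and follows essentially the same approach as the paper: apply the DPI (Theorem~\ref{thm:dpi}) to both the inclusion $\Me\hookrightarrow\hat\Me$ and the conditional expectation $E_\Me$ to get the first equality, then invoke the martingale convergence (Theorem~\ref{thm:martingale}) for the second, with the analogous classical facts for $D_1$. Your write-up is slightly more explicit than the paper's, spelling out the two-sided DPI argument and the $\sigma$-finiteness of $\hat\Me$, but the content is the same.
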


\begin{proof}
Apply the DPI for $D_{\alpha,z}$ proved in Theorem \ref{thm:dpi} to the injection
$\Me\hookrightarrow\hat\Me$ and to the conditional expectation $E_\Me:\hat\Me\to\Me$. We then have
the first equality in \eqref{F-6.16}. By Theorem \ref{T-B.1} we can apply the martingale convergence
in Theorem \ref{thm:martingale} to obtain the latter equality in \eqref{F-6.16} with increasing convergence.
The assertion of $D_1$ in \eqref{F-6.17} is included in \cite[Proposition 2.2]{fawzi2023asymptotic},
while this is an immediate consequence of the well-known martingale convergence and the DPI
of the relative entropy \cite{kosaki1986relative}.
\end{proof}

\section{Kosaki's interpolation $L^p$-spaces}\label{Sec-C}

Assume that $\Me$ is a $\sigma$-finite von Neumann algebra and let faithful
$\psi_0,\ffi_0\in\Me_*^+$ be given. For each $\eta\in[0,1]$ consider an embedding
$\Me\hookrightarrow L^1(\Me)$ by $x\mapsto h_{\psi_0}^\eta xh_{\ffi_0}^{1-\eta}$. Defining
$\|h_{\psi_0}^\eta xh_{\ffi_0}^{1-\eta}\|_\infty:=\|x\|$ (the operator norm of $x$) on
$h_{\psi_0}^\eta\Me h_{\ffi_0}^{1-\eta}$ we have a pair
$\bigl(h_{\psi_0}^\eta\Me h_{\ffi_0}^{1-\eta},L^1(\Me)\bigr)$ of compatible Banach spaces (see, e.g.,
\cite{bergh1976interpolation}). For $1<p<\infty$, \emph{Kosaki's interpolation $L^p$-space}
with respect to $\psi_0,\ffi_0$ and $\eta$ \cite{kosaki1984applications} (also see
\cite[Sec.~9.3]{hiai2021lectures} for a compact survey) is defined as the complex interpolation
Banach space:
\begin{align}\label{F-C.1}
L^p(\Me,\psi_0,\ffi_0)_\eta:=C_{1/p}\bigl(h_{\psi_0}^\eta\Me h_{\ffi_0}^{1-\eta},L^1(\Me)\bigr)
\end{align}
equipped with the interpolation norm $\|\cdot\|_{p,\psi_0,\ffi_0,\eta}:=\|\cdot\|_{C_{1/p}}$. Then,
 Kosaki's theorem \cite[Theorem 9.1]{kosaki1984applications} says that for every $\eta\in[0,1]$ and
 $p\in(1,\infty)$ with $1/p+1/q=1$,
\begin{align*}
&L^p(\Me,\psi_0,\ffi_0)_\eta=h_{\psi_0}^{\eta/q}L^p(\Me)h_{\ffi_0}^{(1-\eta)/q}\ (\subset L^1(\Me)), \\
&\|h_{\psi_0}^{\eta/q}ah_{\ffi_0}^{(1-\eta)/q}\|_{p,\psi_0,\ffi_0,\eta}=\|a\|_p,\qquad a\in L^p(\Me),
\end{align*}
that is, $L^p(\Me)\cong L^p(\Me,\psi_0,\ffi_0)_\eta$ by the isometry
$a\mapsto h_{\psi_0}^{\eta/q}ah_{\ffi_0}^{(1-\eta)/q}$. In the main body of this paper we use
the special cases where $\eta=0,1$, that is,
\begin{align}
L^p(\Me,\ffi_0)_L&:=C_{1/p}\bigl(\Me h_{\ffi_0},L^1(\Me)\bigr)=L^p(\Me)h_{\ffi_0}^{1/q},
\label{F-C.2}\\
L^p(\Me,\psi_0)_R&:=C_{1/p}\bigl(h_{\psi_0}\Me,L^1(\Me)\bigr)=h_{\psi_0}^{1/q}L^p(\Me),
\label{F-C.3}
\end{align}
which are called Kosaki's \emph{left and right $L^p$-spaces}, respectively. Another special case
we use is the \emph{symmetric $L^p$-space} $L^p(\Me,\ffi_0)$ where $\eta=1/2$ and $\psi_0=\ffi_0$, i.e.,
\begin{align}\label{F-C.4}
L^p(\Me,\ffi_0)=C_{1/p}\bigl(h_{\ffi_0}^{1/2}\Me h_{\ffi_0}^{1/2},L^1(\Me)\bigr)
=h_{\ffi_0}^{1/2q}L^p(\Me)h_{\ffi_0}^{1/2q},
\end{align}
whose interpolation norm is denoted by $\|\cdot\|_{p,\ffi_0}$. The $L^p$-$L^q$ duality of Kosaki's
$L^p$-spaces can be given by transforming the duality paring in \eqref{F-A.1}; in particular,
the duality pairing between $L^p(\Me,\ffi_0)_L$ and $L^q(\Me,\ffi_0)_L$ for $1\le p<\infty$ and
$1/p+1/q=1$ is written as
\begin{align}\label{F-C.5}
\<ah_{\ffi_0}^{1/q},bh_{\ffi_0}^{1/p}\>=\Tr(ab),\qquad
a\in L^p(\Me),\ b\in L^q(\Me).
\end{align}
Kosaki's non-commutative Stein--Weiss interpolation theorem \cite[Theorem 11.1]{kosaki1984applications}
says that for each $\eta\in(0,1)$ and $p\in(1,\infty)$, Kosaki's $L^p$-space $L^p(\Me,\psi_0,\ffi_0)_\eta$
given in \eqref{F-C.1} is the complex interpolation space of the left and right $L^p$-spaces in \eqref{F-C.2}
and \eqref{F-C.3} with equal norms, that is,
\begin{align}\label{F-C.6}
L^p(\Me,\psi_0,\ffi_0)_\eta
=C_{1/p}\bigl(h_{\psi_0}^\eta\Me h_{\ffi_0}^{1-\eta},L^1(\Me)\bigr)
=C_\eta(L^p(\Me,\ffi_0)_L,L^p(\Me,\psi_0)_R).
\end{align}

%\bibliography{alphaz}
%\bibliographystyle{abbrvnat}

\addcontentsline{toc}{section}{References}

\end{document}